\algnewcommand{\LineComment}[1]{\Statex \hskip\ALG@thistlm {\color{gray}\texttt{// #1}}}
\title{Fast algorithms for Vizing's theorem \\ on bounded degree graphs}
\date{}
\author{\lsstyle Anton~Bernshteyn}
\email{bernshteyn@math.ucla.edu}
\author{\lsstyle Abhishek~Dhawan}
\email{adhawan2@illinois.edu}
\address{\normalfont{}(AB) \textls{Department of Mathematics, University of California, Los Angeles, CA, USA}}
\address{\normalfont{}(AD) \textls{Department of Mathematics, University of Illinois Urbana-Champaign, Urbana, IL, USA}}
\thanks{AB's research was partially supported by the NSF grant DMS-2045412 and the NSF CAREER grant DMS-2528522.
AD's research was partially supported by the NSF CAREER grant DMS-2239187 (PI: Anton Bernshteyn) and the NSF RTG grant DMS-1937241.}
\newtheoremstyle{bfnote}%
{}{}%
{\slshape}{}%
{\bfseries}{\bfseries.}%
{ }%
{\thmname{#1}\thmnumber{ #2}\thmnote{ \ep{\normalfont{}#3}}}
\newtheoremstyle{claim}%
{}{}%
{\slshape}{}%
{\itshape}{.}%
{ }%
{\thmname{#1}\thmnumber{ #2}\thmnote{ \ep{\normalfont{}#3}}}
\theoremstyle{bfnote}
\newtheorem{theo}{Theorem}[section]
\newtheorem*{theo*}{Theorem}
\newtheorem{Lemma}[theo]{Lemma}
\newtheorem{conj}[theo]{Conjecture}
\newtheorem*{corl*}{Corollary}
\theoremstyle{definition}
\newtheorem{defn}[theo]{Definition}
\newtheorem*{defn*}{Definition}
\newtheorem{ques}[theo]{Question}
\newtheorem*{exmp*}{Example}
\theoremstyle{remark}
\newtheorem*{ques*}{Question}
\newtheorem*{remk*}{Remark}
\theoremstyle{claim}
\newcounter{ForClaims}[section]
\newtheorem*{claim*}{Claim}
\newcommand{\neutralize}[1]{\expandafter\let\csname c@#1\endcsname\count@}
\newenvironment{claimproof}{\noindent$\rhd$\hspace{1em}}{\hfill$\lhd$}
\newcommand{\0}{\varnothing}
\newcommand{\set}[1]{\{#1\}}
\newcommand{\N}{{\mathbb{N}}}
\newcommand{\Z}{\mathbb{Z}}
\renewcommand{\P}{\mathbb{P}}
\newcommand{\E}{\mathbb{E}}
\renewcommand{\epsilon}{\varepsilon}
\newcommand{\eps}{\epsilon}
\renewcommand{\phi}{\varphi}
\renewcommand{\theta}{\vartheta}
\renewcommand{\leq}{\leqslant}
\renewcommand{\geq}{\geqslant}
\newcommand{\defeq}{\coloneqq}
\newcommand{\bemph}[1]{{\normalfont#1}} 
\newcommand{\ep}[1]{\bemph{(}#1\bemph{)}} 
\newcommand{\pto}{\dashrightarrow}
\newcommand{\emphdef}[1]{\textbf{\textit{{#1}}}}
\newcommand{\blank}{\mathsf{blank}}
\newcommand{\dom}{\mathsf{dom}}
\newcommand{\Shift}{\mathsf{Shift}}
\newcommand{\vend}{\mathsf{vEnd}}
\newcommand{\vstart}{\mathsf{vStart}}
\numberwithin{equation}{section}
\newcommand{\emphd}[1]{\emphdef{#1}}
\newcommand{\poly}{\mathsf{poly}}
\newcommand{\Start}{\mathsf{Start}}
\newcommand{\End}{\mathsf{End}}
\newcommand{\Pivot}{\mathsf{Pivot}}
\newcommand{\length}{\mathsf{length}}
\newcommand{\visited}{\mathsf{visited}}
\newcommand{\aug}{\mathsf{Aug}}
\newcommand{\LOCAL}{$\mathsf{LOCAL}$\xspace}
\newcommand{\bbone}{\mathbbm{1}}
\newcommand{\IE}{E_{\mathsf{int}}}
\newcommand{\IV}{V_{\mathsf{int}}}
\newcommand{\val}{\mathsf{val}}
\newcommand{\wt}{\mathsf{wt}}
\newcommand{\graphrand}{\Gamma}
\newcommand{\graphdet}{\mathcal{H}}
\newcommand{\algsize}{\small}
\titleformat{\section}[block]{\scshape}{\thesection.}{1ex}{}
\titleformat{\subsection}[block]{\bfseries}{\thesubsection.}{1ex}{}
\titleformat{\subsection}[block]{\bfseries}{\thesubsection.}{1ex}{}
\titleformat{\subsubsection}[runin]{\itshape}{\bfseries\upshape\thesubsubsection.}{1ex}{}[.---]
\titlespacing*{\section}{0pt}{*3}{*1}
\titlespacing*{\subsection}{0pt}{*3}{*1}
\titlespacing*{\subsubsection}{0pt}{*1.5}{*0}
\setlist{topsep=3pt,itemsep=3pt}
\newenvironment{breakablealgorithm}
  {
   \begin{center}
     \refstepcounter{algorithm}
     \hrule height.8pt depth0pt \kern2pt
     \renewcommand{\caption}[2][\relax]{
       {\raggedright\textbf{\ALG@name~\thealgorithm} ##2\par}%
       \ifx\relax##1\relax 
         \addcontentsline{loa}{algorithm}{\protect\numberline{\thealgorithm}##2}%
       \else 
         \addcontentsline{loa}{algorithm}{\protect\numberline{\thealgorithm}##1}%
       \fi
       \kern2pt\hrule\kern2pt
     }
  }{
     \kern2pt\hrule\relax
   \end{center}
  }
\renewenvironment{shaded}{%
  \MakeFramed{\advance\hsize-\width \FrameRestore\FrameRestore}}%
 {\endMakeFramed}
\definecolor{shadecolor}{gray}{0.9}
\begin{document}

\maketitle


\begin{abstract}
    Vizing's theorem states that every graph $G$ of maximum degree $\Delta$ can be properly edge-colored using $\Delta + 1$ colors. The fastest currently known $(\Delta+1)$-edge-coloring algorithm for general graphs is due to Sinnamon and runs in time $O(m\sqrt{n})$, where $n \defeq |V(G)|$ and $m \defeq |E(G)|$. We investigate the case when $\Delta$ is constant, i.e., $\Delta = O(1)$. In this regime, the runtime of Sinnamon's algorithm is $O(n^{3/2})$, which can be improved to $O(n \log n)$, as shown by Gabow, Nishizeki, Kariv, Leven, and Terada. Here we give an algorithm whose running time is only $O(n)$, which is obviously best possible. Prior to this work, no linear-time $(\Delta+1)$-edge-coloring algorithm was known for any $\Delta \geq 4$. Using some of the same ideas, we also develop new algorithms for $(\Delta+1)$-edge-coloring in the $\mathsf{LOCAL}$ model of distributed computation. Namely, when $\Delta$ is constant, we design a deterministic \LOCAL algorithm with running time $\tilde{O}(\log^5 n)$ and a randomized \LOCAL algorithm with running time $O(\log ^2 n)$. Although our focus is on the constant $\Delta$ regime, our results remain interesting for $\Delta$ up to $\log^{o(1)} n$, since the dependence of their running time on $\Delta$ is polynomial. The key new ingredient in our algorithms is a novel application of the entropy compression method.
\end{abstract}

\vspace{40pt}

\section{Introduction}

    
    Edge-coloring is one of the most fundamental and well-studied problems in graph theory, having both great theoretical significance and a wide array of applications. Virtually every graph theory textbook contains a chapter on edge-coloring, see, e.g., \cite[\S17]{BondyMurty}, \cite[\S5.3]{Diestel}. A thorough review of the subject can be found in the monograph \cite{EdgeColoringMonograph} by Stiebitz, Scheide, Toft, and Favrholdt. The reader may also enjoy the brief historical survey by Toft and Wilson \cite{ToftWilsonSurvey} enhanced with personal memories of some of the key figures in the edge-coloring theory.
    
    In this paper we investigate edge-coloring from an algorithmic standpoint. To begin with, we need a few definitions. Unless explicitly stated otherwise, all graphs in this paper are 
    undirected and simple. For a positive integer $q \in \N^+$, we use the standard notation $[q] \defeq \set{1,2,\ldots, q}$.
    
    \begin{defn}
        Given a graph $G$, we say that two edges $e$, $f \in E(G)$ are \emphd{adjacent}, or \emphd{neighbors} of each other, if $e \neq f$ and they share an endpoint. A \emphd{proper $q$-edge-coloring} of $G$ is a mapping $\phi \colon E(G) \to [q]$ such that $\phi(e) \neq \phi(f)$ for every pair of adjacent edges $e$, $f$. The \emphd{chromatic index} of $G$, denoted by $\chi'(G)$, is the minimum $q$ such that $G$ admits a proper $q$-edge-coloring.
    \end{defn}
    
    If $G$ is a graph of maximum degree $\Delta$, then $\chi'(G) \geq \Delta$, since all edges incident to a vertex of degree $\Delta$ must receive distinct colors. Vizing famously showed that $\chi'(G)$ is always at most $1$ away from this obvious lower bound:
    
    \begin{theo}[{Vizing \cite{Vizing}}]\label{theo:Vizing}
        If $G$ is a graph of maximum degree $\Delta$, then $\chi'(G) \leq \Delta + 1$.
    \end{theo}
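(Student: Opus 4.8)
The plan is to prove Vizing's theorem by the standard Vizing fan / Kempe chain argument, colouring the edges one at a time. We proceed by induction on $m = |E(G)|$. The empty graph is trivially $(\Delta+1)$-edge-colourable, so suppose $G$ has an edge $xy_0$, and let $\phi$ be a proper $(\Delta+1)$-edge-colouring of $G - xy_0$ (which exists by the inductive hypothesis, since $G - xy_0$ still has maximum degree at most $\Delta$). Our task is to modify $\phi$ so that $xy_0$ can also be coloured. For a vertex $v$, call a colour $\alpha \in [\Delta+1]$ \emph{missing} at $v$ if no edge incident to $v$ has colour $\alpha$; since $\deg_G(v) \leq \Delta$ and we have $\Delta+1$ colours available, every vertex misses at least one colour in the partial colouring $\phi$.

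**The core construction is a \emph{Vizing fan} centred at $x$.** Build a maximal sequence of distinct neighbours $y_0, y_1, y_2, \ldots, y_k$ of $x$ such that $xy_0$ is uncoloured and, for each $i \geq 1$, the edge $xy_i$ has colour $\alpha_{i-1}$, where $\alpha_{i-1}$ is some fixed colour missing at $y_{i-1}$. Let $\beta$ be a colour missing at $x$. Two cases arise. (1) If some $\alpha_k$ (a colour missing at $y_k$) is also missing at $x$, then we can ``rotate'' the fan: for $i = 0, 1, \ldots, k$ recolour $xy_i$ with $\alpha_i$; this shifts the colours down the fan, frees up the needed colour, and completes a proper colouring of all of $G$. (2) Otherwise, consider the maximal alternating path $P$ starting at $y_k$ in colours $\alpha_k$ and $\beta$ (the $\alpha_k/\beta$-Kempe chain). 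Swapping these two colours along $P$ yields another proper colouring of $G - xy_0$; one then argues, using maximality of the fan and a careful case analysis of where $P$ can end (in particular, $P$ cannot end at $x$, and it cannot terminate at \emph{both} $y_k$ and the vertex $y_j$ where $\alpha_k$ was reused — this is where maximality is invoked), that after the swap one can rotate an appropriate initial segment of the fan and colour $xy_0$.

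**The main obstacle** is the bookkeeping in case (2): one must show that the Kempe swap does not re-create a conflict elsewhere in the fan and that exactly one of two sub-fans can be rotated afterward. The clean way to handle this is to let $j$ be the least index such that $\alpha_k$ is missing at $y_j$ or $\alpha_k = \alpha_{j-1}$ (i.e., the colour missing at $y_k$ first ``appears'' in the fan at position $j$), and to run the Kempe chain argument from $y_k$; since $x$ has $\beta$ missing and each internal $y_i$ is chosen so the fan is as long as possible, the chain from $y_k$ cannot reach $x$ through $\beta$, and comparing it with the chain from $y_{j-1}$ forces one of them to avoid $x$, after which the corresponding sub-fan rotation goes through. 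All the steps are local recolourings of a path or a fan, each affecting $O(\Delta)$ edges, which is also why this argument underlies the algorithmic results that follow.
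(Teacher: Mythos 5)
Your proposal is the classical Vizing fan plus Kempe-chain argument, which is essentially the same approach the paper relies on: the paper cites Vizing for the theorem itself and encodes exactly this fan-rotation/alternating-path proof algorithmically in Algorithms \ref{alg:first_fan} and \ref{alg:viz_chain} together with Lemma \ref{lemma:first_fan_ber}. Apart from one loosely worded parenthetical (the $\alpha_k\beta$-path from $y_k$ \emph{can} end at $x$, via the $\alpha_k$-coloured edge $xy_j$ — that is precisely the case in which one instead swaps the chain from $y_{j-1}$ and rotates the sub-fan, as your final paragraph correctly states), the argument is correct.
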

    
    See \cite[\S{}A.1]{EdgeColoringMonograph} for an English translation of Vizing's original paper and \cites[\S17.2]{BondyMurty}[\S5.3]{Diestel} for modern textbook presentations. As shown by Holyer \cite{Holyer}, it is \textsf{NP}-hard to determine whether $\chi'(G)$ is equal to $\Delta$ or $\Delta + 1$, even when $\Delta = 3$.
    
    A natural algorithmic problem arises: Given a graph $G$ of maximum degree $\Delta$, how efficiently can we find a proper $(\Delta+1)$-edge-coloring of $G$? As is customary in computer science, we use the RAM model of computation to analyze the time complexity of algorithms on graphs and assume that the input graph is given in the form of an adjacency list \cite[\S{}VI]{Algorithms}. The original proof of Theorem~\ref{theo:Vizing} due to Vizing is essentially algorithmic. Several variations and simplifications of Vizing's original algorithm appeared later, notably by Bollob\'as \cite[94]{Bollobas}, Rao and Dijkstra \cite{RD}, and Misra and Gries \cite{MG}. The running time of all these algorithms is $O(mn)$, where $n \defeq |V(G)|$ and $m \defeq |E(G)|$. All these algorithms iteratively color the edges of $G$ one at a time. In order to color each next edge, the algorithm may have to modify at most $O(n)$ of the edges that have been colored already, which leads to the bound $O(mn)$ on the total running time.
    
    The $O(mn)$ bound was improved to $O(m \sqrt{n \log n})$ by Arjomandi \cite{Arjomandi} and, independently, by Gabow, Nishizeki, Kariv, Leven, and Terada \cite{GNKLT} \ep{however, see the comments in \cite[41]{GNKLT} regarding a possible gap in Arjomandi's analysis}. The fastest currently known algorithm for general graphs, due to Sinnamon, has running time $O(m \sqrt{n})$:
    
    \begin{theo}[{Sinnamon \cite{Sinnamon}}]\label{theo:Sinnamon}
        There is a deterministic sequential algorithm that, given a graph $G$ with $n$ vertices, $m$ edges, and maximum degree $\Delta$, outputs a proper $(\Delta + 1)$-edge-coloring of $G$ in time $O(m \sqrt{n})$.
    \end{theo}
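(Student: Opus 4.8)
The plan is to follow Vizing's own proof of Theorem~\ref{theo:Vizing}, which colors the edges of $G$ one at a time, and to make its bottleneck subroutine fast. Recall how one more edge gets colored. Let $e = uv$ be uncolored. Build at $u$ a \emph{fan}: a maximal sequence of edges $uv = uv_0, uv_1, \dots, uv_k$ such that the color of $uv_{i+1}$ is missing at $v_i$ for every $i$. Pick a color $\alpha$ missing at $u$ and a color $\beta$ missing at $v_k$; if $\alpha = \beta$ one rotates the fan and is done, and otherwise one flips the colors along the $\alpha\beta$-alternating path $P$ emanating from $v_k$, after which a suitable rotation of the fan frees a color for $e$ (this case analysis is exactly the proof of Theorem~\ref{theo:Vizing}). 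A single such step can be implemented in time $O(k + |P|)$: maintain for every vertex $w$ a doubly linked list of the colors currently missing at $w$, backed by a length-$(\Delta+1)$ array of pointers into it (so a missing color is produced in $O(1)$ time and the structure is repaired in $O(1)$ time when one incident edge changes color), together with a length-$(\Delta+1)$ array whose $\gamma$-th entry is the edge at $w$ of color $\gamma$, if one exists. Then each fan step and each step along $P$ costs $O(1)$. Since $k \le \Delta$ and $|P|$ can be $\Theta(n)$, this alone gives only the classical $O(m(\Delta+n)) = O(mn)$.

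The first new idea is to cap the amount of the Vizing chain that is ever processed by a threshold $t \defeq \lceil \sqrt{n} \rceil$. Grow the fan and walk along $P$ one edge at a time; if the chain terminates successfully within $t$ steps in Vizing's case analysis, finish coloring $e$ in time $O(t) = O(\sqrt{n})$. Otherwise, after $t$ steps we sit at some vertex $x$ on the partially revealed chain; now \emph{reroute}: carry out the color swap corresponding only to the first $t$ steps --- this colors $e$ but leaves some edge $e'$ at $x$ uncolored --- and continue the algorithm with $e'$ in place of $e$. Each reroute, like each successful coloring, rewrites only $O(\sqrt{n})$ edges and costs $O(\sqrt{n})$.

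What remains, and what I expect to be the heart of the matter, is to bound the total number of operations --- successful colorings plus reroutes --- by $O(m)$, so that the running time is $O(m\sqrt{n})$. A careless execution of reroutes can cycle, so the truncation vertex $x$ and the new edge $e'$ must be chosen to make measurable progress; the natural device is a potential $\Phi$ that stays $O(m\sqrt{n})$ throughout, rises only when an edge becomes permanently colored, and strictly drops at every reroute --- for instance a potential recording, for each still-uncolored edge, the length of the canonical Vizing chain it would generate, with the reroute arranged so that the chain forced for $e'$ is a proper truncation of what remained of the old chain. Charging the $t$ edges rewritten by a reroute against this decrease, and using that two color classes meeting along an alternating path together span at most $n$ edges, should yield that the chains encountered over the whole run have total length $O(m\sqrt{n})$. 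If $\Delta > \sqrt{n}$ one first reduces to the case $\Delta = O(\sqrt{n})$: repeatedly split $E(G)$ into two subgraphs of roughly half the maximum degree via a near-Eulerian partition, recursively color the pieces with disjoint palettes, and merge two colorings by uncoloring their \emph{smallest} surplus color class --- of size at most the local edge count divided by the local maximum degree --- and recoloring it with the bounded-length procedure above; the recursion has depth $O(\log\Delta)$, and the splitting and merging costs telescope to $O(m\log\Delta) = O(m\sqrt{n})$, so the dominant term stays $O(m\sqrt{n})$. Pinning down a potential for which reroutes can always be steered downhill is the step I would expect to require the most ingenuity, and is presumably where Sinnamon's main insight resides.
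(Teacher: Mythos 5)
This statement is quoted from Sinnamon's paper and the present paper gives no proof of it (it is used as a black box), so your attempt has to stand on its own --- and as written it does not. The load-bearing step is missing exactly where you say it is: you never exhibit the potential $\Phi$, and without it nothing bounds the number of reroutes. Note that a reroute does not decrease the number of uncolored edges: it colors $e$ but uncolors $e'$, i.e.\ it merely relocates the uncolored edge along a truncated Vizing chain (the relocation operation itself is sound --- it is the $\Shift$ of an initial segment of a chain, as in \S\ref{sec:notation} of this paper), so with the potential unspecified you have not even shown termination, let alone an $O(m)$ bound on the number of $O(\sqrt{n})$-cost steps. Controlling repeated truncations of Vizing chains is genuinely delicate: this paper needs randomization plus an entropy-compression argument (Theorem~\ref{theo:entropy_compression_dist}) just to get $O(\log n)$ truncation steps per starting edge with high probability, and a deterministic rule that ``steers reroutes downhill'' cannot simply be assumed. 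Moreover, the counting fact you invoke (an edge or vertex lies on at most $\poly(\Delta)$ alternating paths, two color classes span at most $n$ edges) gives, when $s$ edges remain uncolored, an \emph{average} chain length of order $\poly(\Delta)\,n/s$; summing over $s = m, m-1, \ldots, 1$ yields $\Theta(n\log n)$-type totals --- this is precisely the analysis behind Theorem~\ref{theo:nlogn} in \S\ref{sec:nlogn} --- not $O(m\sqrt n)$, so even granting a suitable potential, the arithmetic you sketch does not land on the claimed bound without an additional mechanism (for instance, a separate treatment of the regime with few uncolored edges, or batching).

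The reduction to $\Delta = O(\sqrt n)$ is likewise only asserted. The Euler-partition recursion together with the merge step (``uncolor the smallest surplus color class and recolor it'') is essentially the skeleton of the Gabow--Nishizeki--Kariv--Leven--Terada algorithm, and the difficulty there is exactly to show that merging two $(\Delta_i+1)$-colorings back into a $(\Delta+1)$-coloring can be done within the stated time at every level of the recursion while never exceeding $\Delta+1$ colors overall; claiming the costs ``telescope to $O(m\log\Delta)$'' is not a proof, and this is where the extra logarithmic factors in the earlier algorithms originate. In short, the high-level plan (truncated Vizing chains, amortization, degree splitting for large $\Delta$) is plausible, but the two claims that actually carry the theorem --- the amortized bound on reroutes and the merge analysis --- are both left unproven, so there is a genuine gap.
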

    
    Sinnamon's paper also presents a simpler randomized algorithm for the $(\Delta+1)$-edge-coloring that achieves the same running time (i.e., $O(m \sqrt{n})$) with probability at least $1 - e^{-\sqrt{m}}$.
    
    In this paper we focus on the case when $\Delta$ is constant:
    
    \begin{shaded}
        \begin{quote}\textsl{How fast can we find a proper $(\Delta+1)$-edge-coloring of an $n$-vertex graph of maximum degree $\Delta$ in the regime when $\Delta = O(1)$?}\end{quote}
    \end{shaded}
    
    Our aim will be to optimize the dependence of the running time on $n$ (the dependence on $\Delta$ in our results will be polynomial, but we will not attempt to sharpen it).
    In the $\Delta = O(1)$ regime, \hyperref[theo:Sinnamon]{Sinnamon's algorithm} operates in time $O(n^{3/2})$, since in that case $m = O(n)$.
    It turns out that the dependence on $n$ can be further reduced, and the fastest previously known algorithm, due to Gabow \emph{et al.}\ \cite{GNKLT}, has running time $O(n \log n)$:
    
    \begin{theo}[{Gabow--Nishizeki--Kariv--Leven--Terada \cite{GNKLT}}]\label{theo:parallel-color}
        There is a deterministic sequential algorithm that, given a graph $G$ with $n$ vertices, $m$ edges,
        and maximum degree $\Delta$, outputs a proper $(\Delta + 1)$-edge-coloring of $G$ in time 
        $O(\Delta\, m \log n) = O(\Delta^2 \, n \log n)$.
    \end{theo}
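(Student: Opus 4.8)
The plan is to reduce $G$ to subgraphs of much smaller maximum degree via a fast \emph{splitting} operation, recurse on those, and then reassemble the pieces by Vizing's recoloring procedure, keeping the cost of each recoloring low by storing the color classes in a dynamic data structure. The splitting subroutine is the classical fact, implementable in $O(m)$ time, that the edge set of a graph of maximum degree $\Delta$ can be partitioned into $E_1 \sqcup E_2$ so that each $G_i \defeq (V(G), E_i)$ has maximum degree at most $\lceil \Delta/2 \rceil$: add a temporary vertex joined to all odd-degree vertices so that every degree becomes even, decompose each connected component into a closed trail traversing each of its edges once, walk along each trail assigning its edges alternately to $E_1$ and $E_2$, and finally delete the temporary edges; every vertex then has its incident edges split as evenly as possible.

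I would then set up a recursion that, given a graph $H$ of maximum degree $d$ and a palette of $d+1$ colors, returns a proper edge-coloring of $H$ from that palette, with the base case $d \le 2$ handled directly. For the recursive step, split $H$ into $H_1, H_2$ of maximum degree at most $\lceil d/2 \rceil$ and color each recursively using a palette of $\lceil d/2 \rceil + 1$ colors. The crucial arithmetic observation is that $2(\lceil d/2 \rceil + 1)$ overshoots $d+1$ by only $1$ (when $d$ is even) or $2$ (when $d$ is odd), so the two sub-palettes may be chosen as subsets of the global $(d+1)$-palette that overlap in exactly that many colors. Consequently, in the union $H_1 \cup H_2$ the only monochromatic conflicts occur on the one or two shared colors, and for each shared color $c$ the $c$-colored edges are the union of a matching coming from $H_1$ and a matching coming from $H_2$, hence form a subgraph of maximum degree at most $2$ with only $O(|V(H)|)$ edges altogether.

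To complete the merge, uncolor all conflicting edges and re-insert them one at a time into $H$ by the standard Vizing routine --- build a fan at one endpoint, then flip a two-colored alternating path --- now using the full $(d+1)$-palette, which is valid because the graph always has maximum degree $\le d$. The step I expect to be the genuine obstacle is the path flipping: an alternating path may have $\Theta(n)$ edges, so flipping it naively would push the running time up to $\Theta(\Delta m n)$. One must therefore maintain the coloring so that a flip costs roughly $\tilde{O}(\log n)$ rather than time proportional to the path length --- for instance by keeping, for every pair of colors $\{\alpha, \beta\}$, the subgraph formed by the edges colored $\alpha$ or $\beta$ in a dynamic-connectivity structure such as Euler-tour trees, and exploiting that the path to be flipped is always exactly one path-component of such a subgraph, so that the flip amounts to swapping two color labels. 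Turning this into an honest bound --- in particular reconciling such a relabeling with the many other bicolored structures it disturbs, via lazy propagation or an amortized potential bounding the total length of all flips, and likewise amortizing the fan sizes --- is where the real work lies. Granting it, the splitting contributes only $O(m \log \Delta)$ in total, the recursion tree has $O(\Delta)$ nodes each re-inserting $O(n)$ edges at cost $\tilde{O}(\Delta + \log n)$ per edge, and summing gives the claimed $O(\Delta m \log n)$; the last loose end --- that the recursion accrues one spare color when $d$ is odd --- is removed at the top level by recoloring the at most $n/2$ edges of a single color class.
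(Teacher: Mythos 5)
You should first note that the paper contains no proof of Theorem~\ref{theo:parallel-color}: it is quoted from Gabow, Nishizeki, Kariv, Leven, and Terada \cite{GNKLT}, so the only meaningful comparison is with their published argument. Your skeleton is the classical one associated with that work: an $O(m)$ Euler split of $G$ into two subgraphs of maximum degree at most $\lceil \Delta/2\rceil$, recursion on each half with palettes chosen inside the global $(\Delta+1)$-palette, and a merge step in which the only improper edges lie in the one or two shared colors and form a subgraph of maximum degree at most $2$, hence $O(n)$ edges per merge, to be uncolored and re-inserted by Vizing's fan-plus-alternating-path procedure. Up to that point the bookkeeping is correct.

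The genuine gap is the one you name yourself: the claim that a path flip can be made to cost roughly $\tilde{O}(\log n)$, and the theorem stands or falls with it. With the standard $O(n)$-per-insertion Vizing step, the merges alone cost (number of re-insertions over the recursion tree) times $n$, which is at least the trivial $O(mn)$ bound, so without the fast flip your scheme is no improvement at all; the data structure is therefore the entire content of the proof, and the sketch given does not work as stated. Flipping an $\alpha\beta$-path is not a swap of the labels $\alpha$ and $\beta$ (the other components of $G(\phi,\alpha\beta)$ must keep their colors), and, more fatally, every edge whose color changes between $\alpha$ and $\beta$ must migrate between the bicolored structures $\{\alpha,\gamma\}$ and $\{\beta,\gamma\}$ for every $\gamma\neq\alpha,\beta$, so even a lazily marked flip forces $\Theta(\Delta)$ structure updates per path edge, i.e.\ $\Theta(\Delta L)$ work for a length-$L$ path --- worse than the naive $O(L)$ flip it was meant to replace --- and the fan shift likewise recolors up to $\Delta$ edges at one vertex, each touching $\Theta(\Delta)$ structures. ``Lazy propagation or an amortized potential'' is invoked but no invariant or potential is defined, and making Vizing flips polylogarithmic is essentially the core difficulty in dynamic $(\Delta+1)$-edge-coloring, not a finishing detail. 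Note also that Gabow et al.\ obtain $O(\Delta m\log n)$ with 1985-era techniques, without any polylogarithmic dynamic-connectivity machinery (Euler-tour trees postdate their paper); they keep the recoloring phase cheap by how the recursion and the recoloring work are organized and charged, so your route both diverges from the known proof and leaves its own central step unproved.
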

    
    In \S\ref{sec:nlogn} we show that, by modifying Sinnamon's randomized algorithm and tailoring its analysis to the case $\Delta = O(1)$,
    it is also possible to achieve the running time $O(n \log n)$ with high probability:
    
    \begin{theo}[Simple sequential algorithm]\label{theo:nlogn}
        The randomized algorithm presented in \S\ref{sec:nlogn} \ep{namely Algorithm \ref{alg:seq_log}} finds a proper $(\Delta + 1)$-edge-coloring of an $n$-vertex graph $G$ of maximum degree $\Delta$ in time $O(\Delta^3 \, n \log n)$ with probability at least $1 - 1/\poly(n)$.
    \end{theo}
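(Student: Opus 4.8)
The plan is to follow the classical template of coloring the edges one at a time, but to carry out each extension step via a \emph{randomized multi-step Vizing chain} (a multi-step variant of Sinnamon's random Vizing chains) truncated at length of order $\poly(\Delta)\log n$, with a cheap deterministic fallback in the rare event that the truncation bound is hit. Thus Algorithm~\ref{alg:seq_log} maintains a partial proper $(\Delta+1)$-edge-coloring $\phi$ and, to color an uncolored edge $e$ using fresh independent randomness, makes a sequence of random choices — a uniformly random endpoint of $e$ at which to root the first fan, a uniformly random missing color, a uniformly random truncation length for each alternating-path segment, etc. — and uses them to build a chain $F_1, P_1, F_2, P_2, \dots$ for $e$ relative to $\phi$, where each fan $F_i$ consists of at most $\Delta$ edges and each $P_i$ is an alternating path between two colors (determined by $F_i$ and the random missing color) cut off at length at most $\ell_0 = \poly(\Delta)$. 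The construction runs until the chain \emph{closes} — meaning that rotating the fans and flipping the path segments extends $\phi$ to $e$ — or until the total number of edges in $F_1\cup P_1\cup F_2\cup\cdots$ reaches $T \defeq C\,\poly(\Delta)\log n$ for a large constant $C$, in which case the algorithm abandons the attempt and colors $e$ by the textbook $O(n+\Delta^2)$-time fan-and-alternating-path procedure (Vizing/Misra--Gries).

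Correctness is not the issue: each extension step, whether executed through a closed multi-step chain or through the deterministic fallback, turns $\phi$ into a proper $(\Delta+1)$-edge-coloring with one more colored edge, so the algorithm always halts with a proper $(\Delta+1)$-edge-coloring of $G$. Everything therefore comes down to the running time. Building and applying a multi-step chain with at most $T$ edges costs $\poly(\Delta)\cdot T$ time: $\poly(\Delta)$ work per segment and per recolored edge, provided we store for each vertex its set of missing colors. Each fallback costs $O(\poly(\Delta)\,n)$. Writing $B$ for the (random) set of edges on which the fallback is invoked, and using $m \leq \Delta n/2$, the total running time is $O(\poly(\Delta)\,m\,T) + O(\poly(\Delta)\,n\,|B|) = \poly(\Delta)\,n\log n + O(\poly(\Delta)\,n\,|B|)$; since trivially $|B| \leq m$, this is $\poly(\Delta)\,n\log n$ whenever $B = \varnothing$, and never worse than $O(\poly(\Delta)\,n^2)$.

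So it suffices to prove the following tail bound, the place where Sinnamon's analysis is specialized to the bounded-degree regime. \emph{Key Lemma:} there is a polynomial $p$ such that for every partial proper $(\Delta+1)$-edge-coloring $\phi$ of $G$ and every uncolored edge $e$, the randomized multi-step Vizing chain for $(\phi,e)$ accumulates $p(\Delta)\log n$ edges before closing with probability at most $n^{-3}$ (and the exponent can be pushed to any constant by enlarging $C$). Granting this, take $T = p(\Delta)\log n$; when Algorithm~\ref{alg:seq_log} reaches the $i$-th edge, the current coloring $\phi$ is a deterministic function of $G$ and of the random bits spent on the previous $i-1$ edges, so, conditioning on those bits, the Key Lemma applies verbatim and bounds by $n^{-3}$ the probability that the fallback fires on edge $i$. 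A union bound over the $\leq n^2$ edges gives $\Pr[B \neq \varnothing] \leq n^{-1} = 1/\poly(n)$, and on the complementary event the running time is $\poly(\Delta)\,n\log n$ by the previous paragraph — which is the assertion of the theorem.

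The Key Lemma is the crux, and is the step I expect to be the main obstacle. The plan for it is the now-standard ``random Vizing chain'' argument: show that at each step of the construction, \emph{no matter what the history is}, the chain closes at that step — or else collapses to one of a bounded (depending only on $\Delta$) list of easily-handled configurations — with probability at least $1/q(\Delta)$ for a fixed polynomial $q$. The combinatorial input is that a maximal fan together with the two colors available at its tip always exposes $\Omega(\Delta)$ distinct closing options, each selected by the current step's random endpoint/color choices with probability $\Omega(1/\poly(\Delta))$; the random truncation of the path segment then guarantees that, if the step fails to close, the next fan is rooted at a vertex the argument may treat as fresh. Consequently the number of steps is dominated by a geometric random variable with success probability $1/q(\Delta)$, so it exceeds $3\,q(\Delta)\ln n$ — and the chain thus exceeds $3\,(\ell_0+\Delta)\,q(\Delta)\ln n = \poly(\Delta)\log n$ edges — with probability at most $n^{-3}$. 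The delicate part is making this per-step estimate hold \emph{uniformly} over the history, which forces the precise design of the chain (which fan/path to build next, how the truncation is randomized) to be gotten right. Finally, the factor $q(\Delta)$ in the exponent is what confines the $O(\log n)$-per-edge conclusion to $\Delta = \log^{o(1)}n$, and the fact that we have only a per-edge high-probability bound — which the union bound pays for with a full $\log n$ on every one of the $m$ edges, ignoring that long chains are in fact rare — is exactly what the later entropy-compression argument is needed to circumvent in order to reach running time $O(n)$.
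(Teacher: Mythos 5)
Your overall scaffolding (per-edge randomized chain with a length cap $T=\poly(\Delta)\log n$, deterministic fallback, union bound over edges) would indeed yield the theorem \emph{if} your Key Lemma were available, but the argument you sketch for the Key Lemma does not work, and that lemma is precisely the hard technical core of this paper. Your claim is that at every step of the multi-step construction, uniformly over the history, the chain closes with probability at least $1/q(\Delta)$, so that the number of steps is dominated by a geometric variable. This cannot be true: whether the chain closes at a given step is decided by the length of the bicolored path emanating from the new fan, which is determined by the \emph{current coloring}, not by that step's random choices (the randomness only selects a truncation point when the path is long). In fact a uniform per-step closing probability $1/q(\Delta)$ would imply that, with positive probability, the chain closes within $O(q(\Delta))$ steps, i.e.\ that every uncolored edge admits an augmenting subgraph with only $\poly(\Delta)$ edges --- contradicting the $\Omega(\Delta\log (n/\Delta))$ lower bound of Chang--He--Li--Pettie--Uitto \cite{CHLPU} quoted in \S\ref{subsec:MSVC_history}. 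The true statement you need is exactly Theorem~\ref{theo:entropy_compression_dist}: with segment length $\ell=\poly(\Delta)$ the process terminates within $O(\log n)$ iterations with probability $1-1/\poly(n)$, and its proof requires handling intersections with previously built segments (backtracking) via the entropy compression argument of \S\ref{sec:msva_analysis}; the simpler analysis of \cite{VizingChain} only works with $\ell=\poly(\Delta)\log n$ per segment, which in your scheme would give $\poly(\Delta)\,n\log^2 n$, not $n\log n$. So as written, your proposal assumes the hardest ingredient and supports it with a per-step estimate that is provably false.

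It is also worth noting that the paper's own proof of Theorem~\ref{theo:nlogn} is entirely different and much lighter: Algorithm~\ref{alg:seq_log} uses \emph{single-step} Vizing chains (Algorithm~\ref{alg:viz_chain}), with no truncation, no multi-step structure and no per-edge high-probability bound. Instead, one shows (Lemma~\ref{lemma:expected_runtime}) that when $i-1$ edges have been colored, a uniformly random uncolored edge, endpoint and missing color hit any fixed bicolored path with probability at most $\Delta/|U_i|$, and each colored edge lies on at most $2\Delta$ such paths, so the \emph{expected} path length at iteration $i$ is $O(\Delta^2 m/(m-i+1))$; summing the harmonic series gives $O(\Delta^3 n\log n)$ in expectation, and the multiplicative Azuma inequality of Kuszmaul--Qi (Theorem~\ref{theo:azuma_supermartingale}) upgrades this to a $1-1/\poly(n)$ guarantee. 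In other words, the $n\log n$ bound comes from an aggregate first-moment-plus-concentration argument in which individual chains are allowed to be long, rather than from forcing every chain to be short --- which is why the simple proof avoids the entropy-compression machinery that your route would require.
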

    
    The reason we include Theorem~\ref{theo:nlogn} in this paper, even though it does not yield any improvement on \hyperref[theo:parallel-color]{Gabow \emph{et al.}'s algorithm}, is that 
    both the algorithm itself and its analysis are very simple. Furthermore, some of the ideas in the proof of Theorem~\ref{theo:nlogn} inform the approach we take to obtain our main results.
    
    The algorithms in Theorems~\ref{theo:parallel-color} and \ref{theo:nlogn}
    have running time 
    that, when $\Delta = O(1)$, exceeds the obvious lower bound by a factor of $\log n$. One of the main results of this paper is a randomized algorithm that reduces the running time to \emph{linear} in $n$, which is clearly optimal: 
    
    \begin{theo}[Linear-time algorithm for $\Delta = O(1)$]\label{theo:seq}
        There is a randomized sequential algorithm that, given a graph $G$ with $n$ vertices and maximum degree $\Delta$, outputs a proper $(\Delta + 1)$-edge-coloring of $G$ in time $\poly(\Delta)\,n$ with probability at least $1 - 1/\Delta^n$.
    \end{theo}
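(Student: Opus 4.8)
The plan is to colour the edges of $G$ one at a time, maintaining a proper partial $(\Delta+1)$-edge-colouring $\phi$ together with, for each vertex $v$, the set $M(v)$ of colours missing at $v$ (nonempty, since $\deg_G(v) \leq \Delta < \Delta+1$) and, for each pair $(v,c)$ with $c \notin M(v)$, a pointer to the unique edge of colour $c$ at $v$; all of this is updatable in $\poly(\Delta)$ time per recoloured edge. To extend $\phi$ to an uncoloured edge $e = xy$, I would run a randomized augmentation built from \emph{Vizing fans} and \emph{alternating (Kempe) paths} strung together in the style of Bernshteyn's multi-step Vizing chains, with the internal choices randomized. Fix a threshold $\ell = \poly(\Delta)$, large enough in terms of $\Delta$. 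Begin a fan at $x$; if some rotation of the fan already frees a colour for $e$, colour $e$ and stop. Otherwise the fan ends at a pair of colours $\alpha,\beta$, and one follows the $\alpha/\beta$-alternating path $P$ from its tip: if $|P| \leq \ell$, flip $P$, shift the fan, colour $e$, and stop; if $|P| > \ell$, \emph{truncate} --- make the fan's internal choices uniformly at random, pick a uniformly random vertex $v$ among the first $\ell$ vertices of $P$, and recurse from $v$ with fresh randomness. Each such \emph{link} (one fan plus at most $\ell$ path-edges) has length $\poly(\Delta)$, and once the chain closes, the accumulated shifts and flips can be committed (and later undone) in $\poly(\Delta)$ time per link. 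That a closed chain yields a proper $(\Delta+1)$-colouring extending $\phi$ to $e$ --- and that the procedure terminates --- is the classical Vizing argument applied link by link; correctness does not depend on the random choices, so the whole content is to bound the total number $W$ of links built over the execution.

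The bound on $W$ comes from an entropy-compression (Moser--Tardos style) analysis carried out over the entire execution, not edge by edge. Model the algorithm as reading a uniformly random bit string; building one truncated link consumes exactly $b = \lceil\log_2 \ell\rceil + O(\log\Delta)$ of its bits (to select the pivot among $\leq \ell$ candidates and the random fan choices), while non-truncated links consume none. Suppose that at some moment the algorithm has fully coloured $k$ edges, built $W$ links, and read $R$ bits. I would define a \emph{record} $\Rec$ of length at most $b'W + O(\Delta k)$, where $b' = O(\log\Delta)$, that stores, for each already-coloured edge, the number of links in its chain and, for each link, which terminal case occurred plus $O(\log\Delta)$ bookkeeping bits (colours used, how it closed), and likewise for the chain in progress; and I would prove a \emph{reconstruction claim}: the $R$-bit prefix read so far is recoverable from $\Rec$ together with the current configuration (the colouring $\phi$, the number $k$ of processed edges --- the processing order being fixed --- and the in-progress chain). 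The reason $b' < b$, and the crux of the whole argument, is that the random \emph{pivots} need not be stored in $\Rec$: a Vizing chain is a self-identifying structure inside a partial colouring, so from $\phi$ and the $O(\log\Delta)$ bits of shape per link one can walk the chain and read off each pivot. Reconstruction then proceeds by replaying the execution backwards: fan rotations and Kempe flips are involutions, so from $\Rec$ and $\phi$ one recovers the chain that coloured the last edge, undoes it to recover $\phi$ just before that edge (reading its pivots off the chain), and hence the bits that chain consumed; iterating recovers the whole prefix.

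To finish: a uniformly random bit string is incompressible, so for every $c \geq 0$ the event that the length-$R$ prefix admits a description of length $\leq R-c$ has probability at most $2^{1-c}$. The configuration fits in $S = O(\Delta n \log\Delta)$ bits; since $R \geq b(W-k)$ (only the $W-k$ truncated links cost bits) and $|\Rec| \leq b'W + O(\Delta k)$, the reconstruction claim forces $b(W-k) \leq b'W + O(\Delta k) + S + c + O(\log n)$ whenever the prefix is not $(R-c)$-compressible, i.e. $(b-b')W \leq O(\Delta^2 n\log\Delta) + c$. Choosing $\ell = \poly(\Delta)$ large enough that $b - b' \geq 1$, setting $c \defeq \lceil 1 + n\log_2\Delta\rceil = \Theta(n\log\Delta)$, and using $k = m \leq \Delta n/2$ at termination, we get $W \leq \poly(\Delta)\,n$ except with probability at most $2^{1-c} \leq 1/\Delta^n$ --- exactly the source of the bound $1 - 1/\Delta^n$ in the statement. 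Since each of the $O(W)$ link operations runs in $\poly(\Delta)$ time and the remaining bookkeeping is $\poly(\Delta)\,n$, the total running time is $\poly(\Delta)\,n$ with the stated probability, while the output is always a valid $(\Delta+1)$-edge-colouring.

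I expect the reconstruction claim to be the main obstacle: one must design $\Rec$ and the notion of a chain being ``imprinted'' in $\phi$ so that consecutive chains can genuinely be undone one after another --- respecting the global processing order and any interaction between a chain and the chains coloured after it --- and so that the pivots really are recoverable without spending record bits, which is precisely what produces the linear gap $b - b' \geq 1$. A secondary, routine obstacle is to verify that the missing-colour sets and incident-edge pointers can be maintained under fans and path-flips of length $O(\ell) = \poly(\Delta)$ in $\poly(\Delta)$ time, so that the $O(W)$ link operations translate into $\poly(\Delta)\,n$ total running time; one should also confirm the randomized chain procedure always terminates (a deterministic Vizing-chain fallback after $\poly(\Delta)\cdot n$ links, invoked only on the excluded event, makes this immediate).
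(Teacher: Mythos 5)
Your high-level framework---multi-step Vizing chains whose alternating paths are randomly truncated to length $\poly(\Delta)$, plus an entropy-compression count showing that the total number of links over the whole run is $\poly(\Delta)\,n$ except with probability $1/\Delta^{n}$---is the same as the paper's, and your idea of compressing the \emph{entire execution} is a plausible substitute for the paper's other ingredient, the uniformly random choice of the next uncoloured edge: the paper bounds each call separately by $\frac{2m}{|U_i|}(\poly(\Delta)/\ell)^{t_i/2}$ (Lemma~\ref{lemma:t_i_prob_bound}) and only gets linear total time because the random edge order turns $\prod_i \frac{2m}{|U_i|}$ into $(2m)^m/m!\le (2e)^m$ in \S\ref{sec:sequential}, so with your fixed processing order some global amortization is indeed required. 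The genuine gap is that your chain-building procedure is the forward-only Algorithm Sketch~\ref{inf:MSVC1}: it never checks whether the new fan and path hit a vertex or internal edge of an earlier link, whereas the paper's Algorithm~\ref{alg:multi_viz_chain} detects such intersections and backtracks to the offending step (step~\ref{step:truncate1}), which is what guarantees that the chain is non-intersecting (Definition~\ref{defn:non-int}, Lemma~\ref{lemma:correctness_of_algo}). With $\ell=\poly(\Delta)\ll\log n$ self-intersections cannot be ruled out, and the paper explicitly notes that the earlier analysis of the forward-only procedure breaks down in exactly this regime.

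This omission hits both halves of your argument. Correctness first: ``the classical Vizing argument applied link by link'' is only valid when the links avoid one another; if a link's path runs back into an earlier fan or path, the concatenated chain need not be shiftable and committing the flips need not yield a proper colouring---the paper needs Lemmas~\ref{lemma:valid_input}--\ref{lemma:correctness_of_algo}, all resting on the intersection handling, to establish this, so your claim that correctness is independent of the random choices is unsubstantiated. More decisively, non-intersection is precisely what makes your reconstruction claim true with only $O(\log\Delta)$ record bits per link: in the paper's backward counting (Lemma~\ref{lemma:counting_wD}) the predecessor of a link is pinned down to $\Delta^{4}$ possibilities only because, by Lemma~\ref{lemma:non-intersecting_degrees}, every truncated path is an $\alpha\beta$-path \emph{in the original colouring} $\phi$ and $\vend(F)$ has degree $1$ in $G(\phi,\alpha\beta)$; that is what lets one retrace the path and ``read off the pivot'' without spending bits. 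Without that invariant, the point where a link attaches to the previous (length $\ge\ell$) path is not determined by a few colours and shape bits, and recording it costs about $\log_2\ell$ bits---exactly the $b$ you were trying to save, so the gap $b-b'\ge 1$ disappears. The repair is the paper's: add intersection detection with backtracking, pay $\approx\log_2\ell$ bits only for backward steps, and show these are dominated by the forward steps they cancel via the Catalan-type count of Lemma~\ref{lemma:Dst} combined with Lemma~\ref{lemma:wD_bound}. Your secondary points (maintaining $M(\cdot)$ and colour pointers in $\poly(\Delta)$ time per link, and a deterministic fallback to guarantee termination) are fine.
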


    While we are mostly interested in the case when $\Delta$ is constant, it is worth pointing out that since the dependence of the running time of our algorithm on $\Delta$ is polynomial, it beats the \hyperref[theo:parallel-color]{Gabow \emph{et al.}'s bound} for $\Delta$ up to $\log^{o(1)}n$. The exact bound our proof yields is $O(\Delta^{17} \, n)$. The exponent of $\Delta$ can likely be reduced by a more careful analysis (we made no attempt to optimize it), although significant new ideas would be required replace $\Delta^{17}$ by just $\Delta$.

    Previously, algorithms for $(\Delta + 1)$-edge-coloring with linear dependence of the runtime on $n$ were known for bipartite graphs (due to Schrijver \cite{Schrijver} and improved by Cole, Ost, and Schirra \cite{bipartite}; their algorithms only use $\Delta$ colors), planar graphs with $\Delta \geq 8$ (due to Cole and Kowalik \cite{planar}, with the case $\Delta \geq 19$ handled earlier by Chrobak and Yung \cite{ChrobakYung}; for $\Delta \geq 9$ only $\Delta$ colors are used), graphs of constant treewidth (due to Bodlaender \cite{Bodlaender} and improved by Zhou, Nakano, and Nishizeki \cite{ZNN}; these algorithms use $\chi'(G)$ colors), and for graphs with $\Delta \leq 3$ (due to Skulrattanakulchai \cite{subcubic}; note also that this case of Vizing's theorem follows from Brooks's theorem on vertex-coloring, for which general linear-time algorithms exist \cite{BrooksLovasz, BrooksOther}). Already for $\Delta = 4$, no linear-time algorithm for Vizing's theorem was known before.

    Building on some of the ideas that go into the proof of Theorem~\ref{theo:seq}, we also develop improved distributed algorithms for edge-coloring. We work in the so-called \LOCAL model of distributed computation introduced by Linial in \cite{Linial} \ep{although there are some earlier related results, e.g., by Alon, Babai, and Itai \cite{ABI}, Luby \cite{Luby}, and Goldberg, Plotkin, and Shannon \cite{GPSh}}.
    For an introduction to this subject, see the book \cite{BE} by Barenboim and Elkin. In the \LOCAL model an $n$-vertex graph $G$ abstracts a communication network where each vertex plays the role of a processor and edges represent communication links. The computation proceeds in \emphd{rounds}. During each round, the vertices first perform some local computations and then synchronously broadcast messages to all their neighbors. There are no restrictions on the complexity of the local computations and the length of the messages. After a number of rounds, every vertex must generate its own part of the output of the algorithm. For example, if the goal of the algorithm is to find a proper edge-coloring of $G$, then each vertex must eventually decide on the colors of its incident edges. The efficiency of a \LOCAL algorithm is measured by the number of communication rounds required to produce the output.
    
    A crucial feature of the \LOCAL model is that every vertex of $G$ is executing \emph{the same} algorithm. Therefore, to make this model nontrivial, there must be a way to distinguish the vertices from each other. There are two standard symmetry-breaking approaches, leading to the distinction between deterministic and randomized \LOCAL algorithms:
    \begin{itemize}
        \item In the \emphd{deterministic} version of the \LOCAL model, each vertex $v \in V(G)$ is assigned, as part of its input, an identifier $\mathsf{ID}(v)$, which is a string of $\Theta(\log n)$ bits. It is guaranteed that the identifiers assigned to different vertices are distinct. The algorithm must always output a correct solution to the problem, regardless of the specific assignment of identifiers.
        
        \item In the \emphd{randomized} version of the \LOCAL model, each vertex may generate an arbitrarily long finite string of independent uniformly distributed random bits. The algorithm must yield a correct solution to the problem with probability at least $1 - 1/\poly(n)$.
    \end{itemize}
    
    We remark that, unsurprisingly, the randomized version of the model is more computationally powerful than the deterministic one. To simulate a deterministic \LOCAL algorithm in the randomized model, each vertex can simply generate a random sequence of $O(\log n)$ bits and use it as an identifier---with a suitable choice of the implied constant in the big-$O$ notation, the probability that two identifiers generated in this way coincide can be made less than $1/n^c$ for any given $c > 0$. Furthermore, there exist many instances where a problem's randomized \LOCAL complexity is significantly lower than its deterministic \LOCAL complexity \cite{CKP}.
    
    The study of \LOCAL algorithms for edge-coloring has a long history. We direct the reader to \cite{CHLPU, GKMU} for thorough surveys. We only highlight a few benchmark results here, to put our own work into context.
    
    The $(2\Delta-1)$-edge-coloring problem has been a focus of research since the early days of distributed computing. The reason it is natural to specifically consider $(2\Delta-1)$-edge-colorings is that $2\Delta - 1$ colors are needed for a greedy \ep{sequential} edge-coloring algorithm. Thus, the $(2\Delta-1)$-edge-coloring problem is a special case of the more general $(\Delta + 1)$-\emph{vertex}-coloring problem, which has attracted a lot of attention in its own right. Already in 1988, Goldberg, Plotkin, and Shannon \cite{GPSh} designed a deterministic $(\Delta + 1)$-vertex-coloring algorithm that runs in $O(\Delta^2) + \log^\ast n$ rounds \ep{here $\log^\ast n$ is the iterated logarithm of $n$, i.e., the number of times the logarithm function must be applied to $n$ before the result becomes at most $1$}. Notice that the Goldberg--Plotkin--Shannon algorithm is extremely fast when $\Delta$ is small compared to $n$ (for instance when $\Delta$ is constant, i.e., in the main regime of interest for this paper), but for large $\Delta$ it may be rather slow due to the $O(\Delta^2)$ term. On the other hand, in 1986 Alon, Babai, and Itai \cite{ABI} and independently Luby \cite{Luby} designed a randomized $O(\log n)$-round algorithm for $(\Delta + 1)$-vertex-coloring, thus making the running time independent of $\Delta$. Developing a \emph{deterministic} $(\Delta + 1)$-vertex- or $(2\Delta - 1)$-edge-coloring algorithm that runs in $\poly(\log n)$ rounds regardless of the value of $\Delta$ has been a major challenge in the field \cite[Problems 11.2 and 11.4]{BE}. After a long line of contributions by numerous researchers, such algorithms have been discovered, first by Fischer, Ghaffari, and Kuhn \cite{FGK} for edge-coloring and then by Rozho\v{n} and Ghaffari \cite{RG} for vertex-coloring.
    
    Once the number of colors is reduced below $2\Delta - 1$, the situation starts to depend in subtle ways on the exact number of colors and on whether the model is randomized or deterministic. For example, Chang, He, Li, Pettie, and Uitto \cite{CHLPU} developed a randomized algorithm for $(\Delta + \sqrt{\Delta} \, \poly(\log \Delta))$-edge-coloring that runs in $\poly(\Delta, \log \log n)$ rounds.\footnote{Technically, this bound on the running time is not stated explicitly in \cite{CHLPU}. The algorithm presented there relies on efficiently solving instances of the Lov\'{a}sz Local Lemma, which can be done by a randomized \LOCAL algorithm in $\poly(\Delta, \log \log n)$ rounds due to a recent result of Rozho\v{n} and Ghaffari \cite{RG}.} By way of contrast, in the same paper Chang \emph{et al.}\ showed that every deterministic \LOCAL algorithm for $(2\Delta - 2)$-edge-coloring requires at least $\Omega(\log n/ \log \Delta)$ rounds, even when the input graph $G$ is a tree.
    
    Until very recently, there were no efficient \LOCAL algorithms that allow reducing the number of colors for edge-coloring all the way to \hyperref[theo:Vizing]{Vizing's bound} of $\Delta + 1$, or even to $\Delta + c$ for some $c$ independent of $\Delta$. The first such algorithm was found by Su and Vu \cite{SV}. Namely, they developed a randomized \LOCAL algorithm that finds a proper $(\Delta + 2)$-edge-coloring of $G$ in $\poly(\Delta) \log^3 n$ rounds.
    Thanks to the general derandomization technique of Rozho\v{n} and Ghaffari \cite{RG}, Su and Vu's algorithm can be converted, in a black-box manner, into a deterministic \LOCAL algorithm for $(\Delta+2)$-edge-coloring that runs in $\poly(\Delta, \log n)$ rounds.
    
    The first efficient \LOCAL algorithm for $(\Delta + 1)$-edge-coloring was presented by the first author in \cite{VizingChain}. The running time stated in \cite{VizingChain} is simply $\poly(\Delta, \log n)$, but a more careful examination yields the bounds $\poly(\Delta, \log\log n) \log^{11} n$ in the deterministic \LOCAL model and $\poly(\Delta)\log^5 n$ in the randomized \LOCAL model \ep{the randomized algorithm is not explicitly described in \cite{VizingChain}, but can be extracted from the arguments in \cite[\S\S5 and 6]{VizingChain}}. Very recently, Christiansen reduced the bound in the deterministic setting to $\poly(\Delta, \log\log n) \log^6 n$:
    
    \begin{theo}[{Christiansen \cite{Christ}}]\label{theo:Christ}
        There is a deterministic \LOCAL algorithm that finds a proper $(\Delta + 1)$-edge-coloring of an $n$-vertex graph of maximum degree $\Delta$ in $\poly(\Delta,\log\log n) \log^6 n$ rounds.
    \end{theo}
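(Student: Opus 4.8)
The plan is to follow the \emph{Vizing chain} paradigm that underlies all known efficient \LOCAL algorithms for $(\Delta+1)$-edge-coloring and to sharpen its analysis so as to reduce the number of $\log n$ factors relative to \cite{VizingChain}. As in the classical proof of Theorem~\ref{theo:Vizing}, one maintains a partial proper $(\Delta+1)$-edge-coloring and repeatedly extends it across an uncolored edge $e = uv$ by constructing a fan at $u$ together with an $\alpha\beta$-alternating path --- a \emph{Vizing chain} --- and performing a shift-and-rotate operation along it. The obstruction for the \LOCAL model is that such a chain can have length $\Theta(n)$, so it cannot be discovered within $\poly(\Delta,\log n)$ rounds.

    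The first step is to replace a single long chain by a \emph{multi-step Vizing chain}: a concatenation of $O(\log n)$ short sub-chains, each of length $\poly(\Delta)$, whose pivot vertices are chosen at random. Following \cite{VizingChain}, one shows that with probability $1 - 1/\poly(n)$ the resulting augmenting structure has total length $\poly(\Delta)\log n$ and lies inside a ball of that radius around $e$. The improvement over \cite{VizingChain} recorded in Theorem~\ref{theo:Christ} comes from a more economical construction of the sub-chains: by tracking more tightly the ``slack'' (the set of colors missing at a vertex) that is consumed at each step and by balancing how the augmenting structure branches, one lowers the depth of the recursion used to build the chain, and each level of recursion would otherwise cost a $\log n$ factor in the final running time.

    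The second step is parallelization. Let $U$ be the set of currently uncolored edges, and grow an augmenting structure for every $e \in U$ at once; two structures \emph{conflict} when their supports (edges and colors used) overlap. Since each structure occupies a small ball and involves few colors, a Lov\'asz Local Lemma argument produces a large sub-collection of pairwise non-conflicting structures that can be applied simultaneously, so that $O(\log n)$ phases suffice to color all of $E(G)$. Each phase costs $\poly(\Delta,\log\log n)\,\poly(\log n)$ rounds to grow the chains, detect conflicts, and resolve them. Finally, the random choices (the pivots and the conflict resolution) depend on only a bounded neighborhood, so they can be replaced by deterministic ones through the network-decomposition machinery of Rozho\v{n}--Ghaffari \cite{RG} at the cost of an extra $\poly(\log\log n)\log n$ factor per phase; assembling the exponents gives the stated bound $\poly(\Delta,\log\log n)\log^6 n$.

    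The crux --- and the main obstacle --- is the chain design itself, where three demands must be met simultaneously: the sub-chains must be short enough that the whole augmenting structure fits in a small ball, structured enough that the conflict graph in the parallel step is sparse, and shallow enough in recursion that the derandomization overhead contributes only the desired number of $\log n$ factors. Getting all three at once amounts to a careful accounting of how the color budget is spent along a Vizing chain, and this is precisely where Christiansen's new ideas enter.
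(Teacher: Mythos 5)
This statement is a cited result of Christiansen, and the paper does not reprove it; it only records the route taken in \cite{Christ}: first a counting argument establishing the existence of non-intersecting multi-step Vizing chains of total length $\poly(\Delta)\log n$ (Theorem~\ref{theo:christlog}), then an existence result giving \emph{vertex-disjoint} augmenting subgraphs for a $1/(\poly(\Delta)\log n)$ fraction of the uncolored edges (Theorem~\ref{theo:disjoint_Christ}), and finally an algorithmization of that existence statement via a reduction to hypergraph maximum matching and Harris's deterministic \LOCAL approximation algorithm (Theorem~\ref{theo:hypergraph_matching}), following the scheme of \cite{VizingChain}. Your sketch takes a genuinely different route (randomized chain growth, a Lov\'asz Local Lemma selection step, black-box Rozho\v{n}--Ghaffari derandomization), and the difference is not cosmetic: it is where the gap lies.

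The central unsupported claim is that an LLL argument yields a pairwise non-conflicting sub-collection covering a \emph{constant} fraction of the uncolored edges, so that $O(\log n)$ phases suffice. Nothing in the sketch justifies this, and it is in fact stronger than what Christiansen proves: his disjointness theorem only gives a $1/(\poly(\Delta)\log n)$ fraction per phase, which is precisely why his bound is $\log^6 n$ (roughly $\poly(\Delta)\log^3 n$ phases, each costing $\tilde{O}(\log^3 n)$ rounds once one accounts for the $O(\log n)$ diameter of the augmenting subgraphs and the cost of Harris's matching algorithm). Improving the per-phase fraction to $1/\poly(\Delta)$ is exactly the new contribution of the present paper (Theorem~\ref{theo:disjoint}), and it requires the entropy-compression analysis, not an LLL application; disjointifying overlapping augmenting structures is a matching/independent-set problem on a conflict graph whose degrees are not controlled in the way an LLL criterion would need, and a deterministic \LOCAL LLL solver of the required efficiency is not available to be invoked as a black box. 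Relatedly, your appeal to Rozho\v{n}--Ghaffari ``per phase'' does not reflect how the deterministic algorithm is actually obtained (the selection step is done by Harris's deterministic hypergraph matching algorithm, with each hypergraph round simulated in $O(\log n)$ graph rounds), and the final step ``assembling the exponents gives $\log^6 n$'' is never carried out --- under your own assumptions ($O(\log n)$ phases at $\poly(\log n)$ cost each) it would not land on six logarithmic factors. Your attribution of Christiansen's improvement to a reduced recursion depth is likewise a guess; the actual mechanism is a counting argument showing one can take each alternating path of length only $\poly(\Delta)$ while keeping the chain non-intersecting.
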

    
    After this discussion of the history of the problem, we can state our results, namely faster \LOCAL algorithms for $(\Delta + 1)$-edge-coloring in both the randomized and the deterministic setting:
    
    \begin{theo}[\LOCAL algorithms]\label{theo:dist}
        \mbox{}
    
        \begin{enumerate}[label=\ep{\normalfont\arabic*}]
            \item\label{item:dist_det} There is a deterministic \LOCAL algorithm that finds a proper $(\Delta + 1)$-edge-coloring of an $n$-vertex graph of maximum degree $\Delta$ in $\poly(\Delta, \log\log n)\log^5 n$ rounds.
            
            \item\label{item:dist_rand} There is a randomized \LOCAL algorithm that finds a proper $(\Delta + 1)$-edge-coloring of an $n$-vertex graph of maximum degree $\Delta$ in $\poly(\Delta)\log^2 n$ rounds.
        \end{enumerate}
    \end{theo}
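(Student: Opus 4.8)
The plan is to obtain both algorithms as distributed implementations of the randomized sequential procedure behind Theorem~\ref{theo:seq}. That procedure colors a single uncolored edge $e$ by running a randomized process — a random multi-step Vizing chain — which outputs a connected \emph{augmenting subgraph} $H_e \subseteq G$ containing $e$, together with a recoloring supported on the edges of $H_e$ that extends the current proper partial $(\Delta+1)$-edge-coloring so that $e$ also becomes colored. The entropy compression analysis carried out earlier yields an exponential tail bound of the shape $\mathbb{P}[|H_e| \geq t] \leq \poly(\Delta)\,\gamma^t$ for some $\gamma = \gamma(\Delta) < 1$; in particular $\mathbb{E}|H_e| = \poly(\Delta)$, and if we abort the process once the chain reaches length $\ell \defeq C\,\poly(\Delta)\log n$ for a large constant $C$, then $H_e$ is produced in full except with probability $n^{-\omega(1)}$, and $H_e$ lies inside the ball of radius $\ell$ around $e$ in $G$. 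This ``room to spare'' — exponential tails with only polynomial-in-$\Delta$ mean — is exactly what makes the parallelization below cheaper than the one in \cite{VizingChain}, where the chains were merely guaranteed to be short.

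\textbf{One parallel coloring step.} Starting from a proper partial coloring with uncolored edge set $U$, every $e \in U$ runs its chain process simultaneously and independently, truncated at length $\ell$; this costs $\ell = \poly(\Delta)\log n$ rounds. Call $e$ \emph{good} if its chain finished without truncation and $V(H_e)$ avoids $V(H_{e'})$ for every other $e' \in U$. The recolorings associated with good edges have pairwise disjoint supports, hence commute, so applying them all at once produces a proper partial coloring in which every good edge — and every previously colored edge — is colored. From the tail bound one shows that, in expectation, only $\poly(\Delta)$ edges $e' \in U$ have $V(H_{e'})$ meeting $V(H_e)$; combined with the exponential control on chain lengths, this lets one set up a \LLL instance with one bad event per edge of $U$, roughly ``an above-average number of edges of $U$ near $e$ fail to be good'', whose probabilities are small enough relative to the local dependency degree that the \LLL criterion is met. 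An outcome avoiding all bad events makes a constant fraction of $U$ good, so one step colors $\Omega(|U|)$ edges and $O(\log n)$ steps color all of $G$.

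\textbf{From the step to the theorem.} For part~\ref{item:dist_rand}, each step is realized by running a randomized distributed \LLL algorithm (Moser--Tardos-style, or via \cite{FGK, RG}) on the above instance: it has locality $\poly(\Delta)\log n$, each resampling round is simulated in $\poly(\Delta)\log n$ \LOCAL rounds, and $O(\log n)$ resampling rounds suffice, giving $\poly(\Delta)\log^2 n$ rounds overall. For part~\ref{item:dist_det} we cannot use randomness, so we derandomize: the chain processes and the \LLL of each step are executed deterministically using the Rozho\v{n}--Ghaffari network decomposition machinery \cite{RG}, which on an instance of locality $r$ and polylogarithmic dependency degree runs in $r \cdot \poly(\log n)$ rounds; with $r = \poly(\Delta)\log n$ and $O(\log n)$ steps this yields $\poly(\Delta,\log\log n)\log^5 n$ rounds. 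The improvement over Theorem~\ref{theo:Christ} again comes from the polynomial slack in the \LLL instance, which keeps the network-decomposition overhead down.

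\textbf{Main obstacle.} The delicate point is not the high-level scheme but the analysis of the dependency structure of the \LLL instance: a priori the event ``$e$ is good'' depends on the random choices of \emph{every} edge whose (untruncated) chain could reach within distance $\ell$ of $e$, and there can be $\Delta^{\Theta(\ell)} = n^{\poly(\Delta)}$ such edges, far too many for any \LLL criterion. Controlling this requires exploiting that chains are short \emph{with overwhelming probability}, not merely in the worst case after truncation — so that the genuine dependency degree is only $\poly(\Delta,\log n)$ — and carefully designing the bad events (as density events over nearby uncolored edges rather than the raw event ``$e$ is not good'') so that their probabilities beat that degree. Making these two quantities line up, uniformly over all $O(\log n)$ steps and down to the last handful of uncolored edges, is the technical heart of the argument.
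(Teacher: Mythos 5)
Your high-level skeleton (run truncated multi-step Vizing chains in parallel from every uncolored edge, keep a vertex-disjoint subfamily, augment simultaneously, iterate) is indeed the paper's strategy, but the engine you propose for each stage --- an LLL instance over ``density'' bad events, solved by a distributed LLL algorithm in the randomized case and by Rozho\v{n}--Ghaffari derandomization in the deterministic case --- is both unnecessary and, as you yourself concede, not actually established: the event ``$e$ is good'' genuinely depends on the random seeds of every uncolored edge within distance $\ell = \poly(\Delta)\log n$, of which there may be polynomially many in $n$, and replacing the true dependency degree by a ``typical'' one because chains are usually short is not a valid LLL dependency structure without a substantial additional argument. The paper sidesteps this entirely: per stage it only needs an \emph{expected}-progress bound. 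It proves (Theorem~\ref{theo:disjoint_with_bounds}) that one can output in $\poly(\Delta)\log n$ rounds a family of pairwise vertex-disjoint augmenting chains covering an expected $\Omega(|U|/\Delta^{20})$ fraction of the uncolored edges, by forming the conflict graph on the successful chains and taking a one-round random-priority independent set (Lemma~\ref{lemma:alg_indep_set}); the randomized algorithm then simply iterates $\poly(\Delta)\log n$ stages and applies Markov's inequality to $\E[|U_T|] \leq (1-\Omega(\Delta^{-20}))^T m$ at the very end --- no per-stage concentration and no LLL. (Note also that your claim of coloring $\Omega(|U|)$ edges per step cannot hold as stated: the conflict graph has average degree $\poly(\Delta)$, so only a $1/\poly(\Delta)$ fraction is achievable, which is still enough for the theorem.)

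Two further gaps. First, your assertion that ``from the tail bound'' only $\poly(\Delta)$ other chains meet a given chain in expectation does not follow from the tail bound on the number of iterations: what is needed is a per-vertex bound on $\sum_{e'} \P[x \in V(C_{e'})]$, and a priori many chains could route through a single hub vertex. The paper proves this separately (Lemma~\ref{lemma:edges_bound_Gamma_r}) via an additional entropy-compression count showing that the expected number of chains whose $j$-th Vizing step contains a fixed vertex decays geometrically in $j$; this decay is exactly what beats the growth in the number of candidate starting edges at distance $\Theta(j\ell)$, and it is a genuinely separate lemma, not a corollary of the length tail. Second, the paper's deterministic algorithm is not a derandomization of the randomized one via network decomposition: it converts the expected-size bound into an existence statement, namely $\mu(\graphdet) = \Omega(|U|/\Delta^{20})$ for the auxiliary hypergraph whose hyperedges are vertex sets of connected augmenting subgraphs of size $O(\Delta^{16}\log n)$, and then applies Harris's deterministic \LOCAL hypergraph-matching approximation (Theorem~\ref{theo:hypergraph_matching}), giving $\tilde{O}(\poly(\Delta)\log^3 n)$ rounds per stage and $O(\poly(\Delta)\log^2 n)$ stages, hence the $\log^5 n$ bound. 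Your Rozho\v{n}--Ghaffari sketch both rests on the unestablished LLL step and leaves the $\log^5 n$ accounting unjustified, so as written the proposal does not yet constitute a proof of either part of the theorem.
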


    As in Theorem~\ref{theo:seq}, we made no attempt to optimize the exact dependence of the running time of the algorithms in Theorem~\ref{theo:dist} on $\Delta$, since we are primarily interested in the $\Delta = O(1)$ case. The explicit bounds our proof gives are $\tilde{O}(\Delta^{84}\,\log^5n)$ in the deterministic setting and $O(\Delta^{36}\,\log^2n)$ in the randomized one, but the rather large exponents of $\Delta$ can almost certainly be improved with a little more care. That being said, we doubt the dependence on $\Delta$ can be made, say, linear without substantial modifications to the proof approach.
    
    Let us now say a few words about the significance of Theorem~\ref{theo:dist}. For this discussion, we assume that $\Delta$ is constant. By the result of Chang \emph{et al.} \cite{CHLPU} mentioned earlier, every deterministic \LOCAL algorithm for edge-coloring with fewer than $2\Delta - 1$ colors requires at least $\Omega(\log n)$ rounds. The result of \cite{VizingChain} gave the first polylogarithmic upper bound $\tilde{O}(\log^{11}n)$, which was reduced to $\tilde{O}(\log^6 n)$ by Christiansen (Theorem~\ref{theo:Christ}). Our deterministic \LOCAL algorithm takes $\tilde{O}(\log^5 n)$ rounds, getting one more $\log n$ factor closer to the lower bound. At the same time, the randomized algorithm in Theorem~\ref{theo:dist} is faster than \emph{all} previously known \LOCAL algorithms---deterministic or randomized---for edge-coloring with $\Delta + c$ colors where $c$ is independent of $\Delta$ (including the Su--Vu $(\Delta + 2)$-edge-coloring algorithm \cite{SV}). Moreover, all the currently known distributed algorithms for $(\Delta + 1)$-edge-coloring are based on iteratively extending partial colorings using augmenting subgraphs. As we explain in \S\ref{subsec:overview_disjoint}, $O(\log^2 n)$ appears to be a natural barrier for the round complexity of algorithms following this strategy, so improving our randomized \LOCAL algorithm seems to require a conceptually different approach. 
    Nevertheless, the following natural problem is still open:
    
    \begin{ques}
        Fix a constant $\Delta \in \N$. Is there a deterministic \LOCAL algorithm that finds a proper $(\Delta + 1)$-edge-coloring of an $n$-vertex graph of maximum degree $\Delta$ in $O(\log n)$ rounds? What about a randomized \LOCAL algorithm? A randomized \LOCAL algorithm that only takes $o(\log n)$ rounds?
    \end{ques}

    Since the proofs of our main results are somewhat technical, we shall first give an informal high-level overview of our approach in \S\ref{sec:overview}. Then, in \S\ref{sec:notation}, we introduce some terminology and background facts that will be used throughout the rest of the paper. In \S\ref{sec:nlogn} we prove Theorem~\ref{theo:nlogn}. In \S\ref{sec:algorithms} we describe certain key subroutines used in our algorithms, whose performance is analyzed in \S\ref{sec:msva_analysis}. Finally, \S\S\ref{sec:sequential} and \ref{sec:distributed} contain the proofs of Theorems~\ref{theo:seq} and \ref{theo:dist}, respectively.

    \subsection*{Addendum}

A preliminary version of this article appeared in March 2023.
Since then, significant further progress on edge-coloring algorithms has been achieved, in part building upon the ideas of this paper. For the reader's convenience, we briefly summarize the main highlights of these new results here.

Theorem~\ref{theo:Sinnamon} was improved by Bhattacharya, Carmon, Costa, Solomon, and Zhang \cite{bhattacharya2024faster}, who designed an $O(mn^{1/3})$-time randomized algorithm for Vizing's theorem. That was a major breakthrough, as it gave the first polynomial improvement on the $O(m\sqrt{n \log n})$ time bound for general graphs due to Arjomandi \cite{Arjomandi} and Gabow \emph{et al.} \cite{GNKLT} mentioned earlier. 
Subsequently, Bhattacharya, Costa, Solomon, and Zhang \cite{bhattacharya2025even} improved the bound further to $O(mn^{1/4})$ by employing an approach that relies on so-called \emph{multi-step Vizing chains}---which play a similarly central role in our paper.
Focusing on the ``dense regime,'' Assadi gave an $O(n^2\poly(\log n))$-time randomized algorithm for Vizing's theorem \cite{assadi2025faster}.
Finally, Assadi, Behnezhad, Bhattacharya, Costa, Solomon, and Zhang very recently were able to find an $O(m\log \Delta)$-time randomized algorithm for $(\Delta + 1)$-edge-coloring \cite{assadi2025vizing}. (We note that this result uses our algorithm as a subroutine.)

There has also been a lot of recent work on so-called ``near-Vizing'' edge-coloring algorithms, i.e., $(1+\eps)\Delta$-edge-coloring where $\eps \in (0, 1)$.
Employing the result of this paper as a subroutine, Elkin and Khuzman designed an $O(m(\log (\eps\Delta) + \poly(1/\eps)))$-time randomized algorithm for this problem \cite{elkin2024deterministic}.
For $\Delta = \omega(\log n/\eps)$, Assadi gave an algorithm with expected running time $O(m\log (1/\eps))$, improving upon the Elkin--Khuzman result for large $\Delta$ \cite{assadi2025faster}.
Focusing on simplicity in the $\Delta = \Omega(\log n/\epsilon)$ regime, the second named author gave an $O(m\log^3 n/\eps^3)$-time randomized algorithm \cite{dhawan2024simple}, which he subsequently improved to $O(m\log^3\Delta/\eps^2)$ via a multi-step approach \cite{dhawan2025fast}. 
Taking the ideas of this paper further, the authors recently developed an $O(m\log (1/\eps)/\eps^4)$-time randomized algorithm with no restrictions on $\Delta$ other than $\Delta \geq 1/\eps$ (so that the number of colors is at least as large as in Vizing's theorem) \cite{bernshteyn2024linear}, thus answering a question of \cite{bhattacharya2024nibbling}. For constant $\epsilon$, this is an $O(m)$-time (i.e., truly linear) algorithm. Note that it is the first such linear-time algorithm even for greedy edge-coloring, i.e., using $2\Delta - 1$ colors.

Emboldened by the amazing progress outlined above, we wish to highlight the following two ambitious questions that remain open but no longer seem completely out of reach (although some innovative new ideas would be required to answer them):
\begin{itemize}
    \item Is there an $O(m)$-time (randomized) algorithm for Vizing's theorem? In other words, can the near-linear time bound obtained in \cite{assadi2025vizing} be made truly linear?

    \item What about deterministic edge-coloring algorithms? In spite of the recent progress in the randomized setting, the best known deterministic algorithm for Vizing's theorem on general graphs remains the $O(m\sqrt{n})$-time one due to Sinnamon \cite{Sinnamon} (see Theorem~\ref{theo:Sinnamon}). 
\end{itemize}

Going beyond simple graphs, the second named author generalized the results of this paper to multigraphs \cite{dhawan2024edge}.
The sequential edge-coloring results of that paper were then improved in \cite{assadi2025vizing}.
Finally, related techniques have also appeared in works concerning measurable and Borel edge-coloring \cite{bernshteyn2025borel, grebik2025measurable}.
As evidenced by the results mentioned in this section, we suspect that this circle of ideas will be useful in other facets of edge-coloring as well.

    \section{Informal Overview}\label{sec:overview}

    In this section we give an informal overview of our algorithms and the main ideas employed in their analysis. It should be understood that the presentation in this section deliberately ignores certain minor technical issues, and so the actual algorithms and formal definitions given in the rest of the paper may be slightly different from how they are described here. However, the differences do not affect the general conceptual framework underlying our arguments.
    
    \subsection{Augmenting subgraphs and Vizing chains}\label{subsec:overviewaug}
    
    Throughout the remainder of the paper, $G$ is a graph with $n$ vertices, $m$ edges, and of maximum degree $\Delta$. For brevity, we let $V \defeq V(G)$ and $E \defeq E(G)$. A key role in our algorithms (as well as in most other approaches to \hyperref[theo:Vizing]{Vizing's theorem}) is played by augmenting subgraphs.
    
    \begin{defn}[Augmenting subgraphs for partial colorings]\label{defn:aug}
        Let $\phi \colon E \pto [\Delta + 1]$ be a proper partial $(\Delta + 1)$-edge-coloring with domain $\dom(\phi) \subset E$. A subgraph $H \subseteq G$ is \emphd{$e$-augmenting} for an uncolored edge $e \in E \setminus \dom(\phi)$ if $e \in E(H)$ and there is a proper coloring $\phi' \colon \dom(\phi) \cup \set{e} \to [\Delta + 1]$ that agrees with $\phi$ on the edges that are not in $E(H)$; in other words, by only modifying the colors of the edges of $H$, it is possible to add $e$ to the set of colored edges. We refer to such modification operation as \emphd{augmenting} $\phi$ using $H$.
    \end{defn}
    
    The following template provides a natural model for a $(\Delta+1)$-edge-coloring algorithm:
    
    {
    \floatname{algorithm}{Algorithm Template}
    \begin{algorithm}[H]\algsize
        \caption{A sequential $(\Delta + 1)$-edge-coloring algorithm}\label{temp:seq}
        \begin{flushleft}
            \textbf{Input}: A graph $G = (V,E)$ of maximum degree $\Delta$. \\
            \textbf{Output}: A proper $(\Delta + 1)$-edge-coloring of $G$.
        \end{flushleft}
        \begin{algorithmic}[1]
            \State $\phi \gets$ the empty coloring
            \While{there are uncolored edges}
                \State Pick an uncolored edge $e$.
                \State Find an $e$-augmenting subgraph $H$.
                \State Augment $\phi$ using $H$ (thus adding $e$ to the set of colored edges).
            \EndWhile
            \State \Return $\phi$
        \end{algorithmic}
    \end{algorithm}
    }

    \begin{figure}[t]
		\centering
		\begin{subfigure}[t]{.2\textwidth}
			\centering
			\begin{tikzpicture}
			\node[circle,fill=black,draw,inner sep=0pt,minimum size=4pt] (a) at (0,0) {};
			\node[circle,fill=black,draw,inner sep=0pt,minimum size=4pt] (b) at (1,0) {};
			\path (b) ++(150:1) node[circle,fill=black,draw,inner sep=0pt,minimum size=4pt] (c) {};
			\path (b) ++(120:1) node[circle,fill=black,draw,inner sep=0pt,minimum size=4pt] (d) {};
			\node[circle,fill=black,draw,inner sep=0pt,minimum size=4pt] (e) at (1,1) {};
			\node[circle,fill=black,draw,inner sep=0pt,minimum size=4pt] (f) at (1.8,1.4) {};
			\node[circle,fill=black,draw,inner sep=0pt,minimum size=4pt] (g) at (1,1.8) {};
			\node[circle,fill=black,draw,inner sep=0pt,minimum size=4pt] (h) at (1.8,2.2) {};
			\node[circle,fill=black,draw,inner sep=0pt,minimum size=4pt] (i) at (1,2.6) {};
			\node[circle,fill=black,draw,inner sep=0pt,minimum size=4pt] (j) at (1.8,3) {};
			\node[circle,fill=black,draw,inner sep=0pt,minimum size=4pt] (k) at (1,3.4) {};
			\node[circle,fill=black,draw,inner sep=0pt,minimum size=4pt] (l) at (1.8,3.8) {};
			\node[circle,fill=black,draw,inner sep=0pt,minimum size=4pt] (m) at (1,4.2) {};
			\node[circle,fill=black,draw,inner sep=0pt,minimum size=4pt] (n) at (1.8,4.6) {};

			\draw[ thick,dotted] (a) to node[midway,anchor=north] {$e$} (b);
			\draw[ thick] (b) -- (c) (b) -- (d) (b) -- (e) -- (f) -- (g) -- (h) -- (i) -- (j) -- (k) -- (l) -- (m) -- (n);

			\draw[decoration={brace,amplitude=10pt,mirror},decorate] (2,0.9) -- node [midway,below,sloped,yshift=-7pt] {path} (2,4.7);
			
			\draw[decoration={brace,amplitude=10pt},decorate] (-0.3,0) -- node [midway,above,sloped,yshift=8pt] {fan} (1,1.3);
			\end{tikzpicture}
			\caption{A Vizing chain.}\label{fig:chains:Vizing}
		\end{subfigure}%
		\qquad%
		\begin{subfigure}[t]{.3\textwidth}
			\centering
			\begin{tikzpicture}
			\node[circle,fill=black,draw,inner sep=0pt,minimum size=4pt] (a) at (0,0) {};
			\node[circle,fill=black,draw,inner sep=0pt,minimum size=4pt] (b) at (1,0) {};
			\path (b) ++(150:1) node[circle,fill=black,draw,inner sep=0pt,minimum size=4pt] (c) {};
			\path (b) ++(120:1) node[circle,fill=black,draw,inner sep=0pt,minimum size=4pt] (d) {};
			\node[circle,fill=black,draw,inner sep=0pt,minimum size=4pt] (e) at (1,1) {};
			\node[circle,fill=black,draw,inner sep=0pt,minimum size=4pt] (f) at (1.8,1.4) {};
			\node[circle,fill=black,draw,inner sep=0pt,minimum size=4pt] (g) at (1,1.8) {};
			\node[circle,fill=black,draw,inner sep=0pt,minimum size=4pt] (h) at (1.8,2.2) {};
			\node[circle,fill=black,draw,inner sep=0pt,minimum size=4pt] (i) at (1,2.6) {};
			\node[circle,fill=black,draw,inner sep=0pt,minimum size=4pt] (j) at (1.8,3) {};
			\node[circle,fill=black,draw,inner sep=0pt,minimum size=4pt] (k) at (1,3.4) {};
			\node[circle,fill=black,draw,inner sep=0pt,minimum size=4pt] (l) at (1,4.4) {};

			\path (k) ++(60:1) node[circle,fill=black,draw,inner sep=0pt,minimum size=4pt] (m) {};
			\path (k) ++(30:1) node[circle,fill=black,draw,inner sep=0pt,minimum size=4pt] (n) {};
			\path (k) ++(0:1) node[circle,fill=black,draw,inner sep=0pt,minimum size=4pt] (o) {};
			
			\node[circle,fill=black,draw,inner sep=0pt,minimum size=4pt] (p) at (2.4,4.2) {};
			\node[circle,fill=black,draw,inner sep=0pt,minimum size=4pt] (q) at (2.8,3.4) {};
			\node[circle,fill=black,draw,inner sep=0pt,minimum size=4pt] (r) at (3.2,4.2) {};
			\node[circle,fill=black,draw,inner sep=0pt,minimum size=4pt] (s) at (3.6,3.4) {};
			\node[circle,fill=black,draw,inner sep=0pt,minimum size=4pt] (t) at (4,4.2) {};
			\node[circle,fill=black,draw,inner sep=0pt,minimum size=4pt] (u) at (4.4,3.4) {};
			
			\draw[ thick,dotted] (a) to node[midway,anchor=north] {$e$} (b);
			\draw[ thick] (b) -- (c) (b) -- (d) (b) -- (e) -- (f) -- (g) -- (h) -- (i) -- (j) -- (k) -- (l);
			
			\draw[ thick] (k) -- (m) (k) -- (n) (k) -- (o) -- (p) -- (q) -- (r) -- (s) -- (t) -- (u);
			\end{tikzpicture}
			\caption{A two-step Vizing chain.}\label{fig:chains:two}
		\end{subfigure}%
		\qquad%
		\begin{subfigure}[t]{.3\textwidth}
			\centering
			\begin{tikzpicture}
			\node[circle,fill=black,draw,inner sep=0pt,minimum size=4pt] (a) at (0,0) {};
			\node[circle,fill=black,draw,inner sep=0pt,minimum size=4pt] (b) at (1,0) {};
			\path (b) ++(150:1) node[circle,fill=black,draw,inner sep=0pt,minimum size=4pt] (c) {};
			\path (b) ++(120:1) node[circle,fill=black,draw,inner sep=0pt,minimum size=4pt] (d) {};
			\node[circle,fill=black,draw,inner sep=0pt,minimum size=4pt] (e) at (1,1) {};
			\node[circle,fill=black,draw,inner sep=0pt,minimum size=4pt] (f) at (1.8,1.4) {};
			\node[circle,fill=black,draw,inner sep=0pt,minimum size=4pt] (g) at (1,1.8) {};
			\node[circle,fill=black,draw,inner sep=0pt,minimum size=4pt] (h) at (1.8,2.2) {};
			\node[circle,fill=black,draw,inner sep=0pt,minimum size=4pt] (i) at (1,2.6) {};
			\node[circle,fill=black,draw,inner sep=0pt,minimum size=4pt] (j) at (1.8,3) {};
			\node[circle,fill=black,draw,inner sep=0pt,minimum size=4pt] (k) at (1,3.4) {};
			\node[circle,fill=black,draw,inner sep=0pt,minimum size=4pt] (l) at (1,4.4) {};
			
			\path (k) ++(60:1) node[circle,fill=black,draw,inner sep=0pt,minimum size=4pt] (m) {};
			\path (k) ++(30:1) node[circle,fill=black,draw,inner sep=0pt,minimum size=4pt] (n) {};
			\path (k) ++(0:1) node[circle,fill=black,draw,inner sep=0pt,minimum size=4pt] (o) {};
			
			\node[circle,fill=black,draw,inner sep=0pt,minimum size=4pt] (p) at (2.4,4.2) {};
			\node[circle,fill=black,draw,inner sep=0pt,minimum size=4pt] (q) at (2.8,3.4) {};
			\node[circle,fill=black,draw,inner sep=0pt,minimum size=4pt] (r) at (3.2,4.2) {};
			\node[circle,fill=black,draw,inner sep=0pt,minimum size=4pt] (s) at (3.6,3.4) {};
			\node[circle,fill=black,draw,inner sep=0pt,minimum size=4pt] (t) at (4,4.2) {};

			\node[circle,fill=black,draw,inner sep=0pt,minimum size=4pt] (v) at (5,4.2) {};
			\path (t) ++(-30:1) node[circle,fill=black,draw,inner sep=0pt,minimum size=4pt] (w) {};
			\path (t) ++(-60:1) node[circle,fill=black,draw,inner sep=0pt,minimum size=4pt] (x) {};
			\path (t) ++(-90:1) node[circle,fill=black,draw,inner sep=0pt,minimum size=4pt] (y) {};

			\node[circle,fill=black,draw,inner sep=0pt,minimum size=4pt] (z) at (4.8,2.8) {};
			\node[circle,fill=black,draw,inner sep=0pt,minimum size=4pt] (aa) at (4,2.4) {};
			\node[circle,fill=black,draw,inner sep=0pt,minimum size=4pt] (ab) at (4.8,2) {};
			\node[circle,fill=black,draw,inner sep=0pt,minimum size=4pt] (ac) at (4,1.6) {};
			\node[circle,fill=black,draw,inner sep=0pt,minimum size=4pt] (ad) at (4.8,1.2) {};
			\node[circle,fill=black,draw,inner sep=0pt,minimum size=4pt] (ae) at (4.8,0.2) {};
			
			\path (ad) ++(-120:1) node[circle,fill=black,draw,inner sep=0pt,minimum size=4pt] (ag) {};
			\path (ad) ++(-150:1) node[circle,fill=black,draw,inner sep=0pt,minimum size=4pt] (ah) {};
			\path (ad) ++(-180:1) node[circle,fill=black,draw,inner sep=0pt,minimum size=4pt] (ai) {};
			
			\node[circle,fill=black,draw,inner sep=0pt,minimum size=4pt] (aj) at (3.4,0.4) {};
			\node[circle,fill=black,draw,inner sep=0pt,minimum size=4pt] (ak) at (3,1.2) {};
			\node[circle,fill=black,draw,inner sep=0pt,minimum size=4pt] (al) at (2.6,0.4) {};
			\node[circle,fill=black,draw,inner sep=0pt,minimum size=4pt] (am) at (2.2,1.2) {};

			\draw[ thick,dotted] (a) to node[midway,anchor=north] {$e$} (b);
			\draw[ thick] (b) -- (c) (b) -- (d) (b) -- (e) -- (f) -- (g) -- (h) -- (i) -- (j) -- (k) -- (l);
			
			\draw[ thick] (k) -- (m) (k) -- (n) (k) -- (o) -- (p) -- (q) -- (r) -- (s) -- (t) -- (v);
			
			\draw[ thick] (t) -- (w) (t) -- (x) (t) -- (y) -- (z) -- (aa) -- (ab) -- (ac) -- (ad) -- (ae);
			
			\draw[ thick] (ad) -- (ag) (ad) -- (ah) (ad) -- (ai) -- (aj) -- (ak) -- (al) -- (am);
			\end{tikzpicture}
			\caption{A multi-step Vizing chain.}\label{fig:chains:multi}
		\end{subfigure}%
		\caption{Types of $e$-augmenting subgraphs.}\label{fig:typesofchains}
	\end{figure}
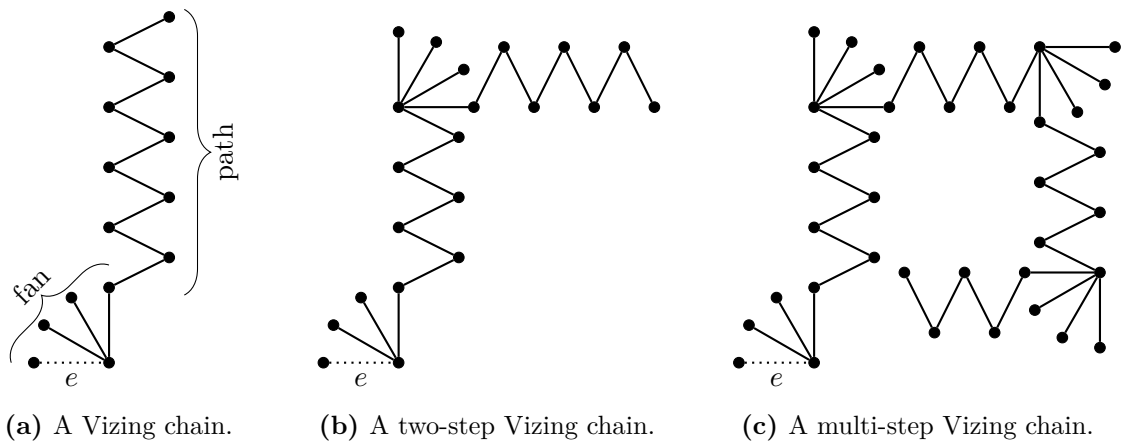
    
    The usual proofs of \hyperref[theo:Vizing]{Vizing's theorem} essentially follow this scheme, by arguing that it is always possible to find an $e$-augmenting subgraph $H$ of a certain special form, called a \emph{Vizing chain} \ep{see Fig.~\ref{fig:chains:Vizing} for an illustration}. A Vizing chain $C$ consists of a \emph{fan}---i.e., a set of edges that share the same common vertex---and a \emph{path} that is \emph{alternating} in the sense that the sequence of colors of its edges has the form $\alpha$, $\beta$, $\alpha$, $\beta$, \ldots{} for some $\alpha$, $\beta \in [\Delta + 1]$. The precise way in which the fan and the alternating path fit together varies slightly across different sources in the literature. For a formal definition of the type of Vizing chains used in this paper, see \S\ref{subsec:Vizingdefn}.
    
    The running time of an algorithm that follows Template~\ref{temp:seq} depends on how quickly we can find an $e$-augmenting subgraph $H$ and how long it takes to augment $\phi$ using $H$. Both these parameters can typically be made proportional to the number of edges in $H$. For example, the alternating path in a Vizing chain may, in principle, contain as many as $\Theta(n)$ edges. If each iteration of the \textsf{while} loop in Template~\ref{temp:seq} takes $O(n)$ time, then the entire algorithm would run in time $O(mn) = O(\Delta n^2)$. This is the core of the algorithmic analysis of \hyperref[theo:Vizing]{Vizing's theorem} given by Bollob\'as \cite[94]{Bollobas}, Rao and Dijkstra \cite{RD}, and Misra and Gries \cite{MG}.
    
    To improve this bound, we need to somehow find augmenting subgraphs with significantly fewer than $\Theta(n)$ edges. An observation that helps us here is that, although the Vizing chain for a \emph{particular} uncolored edge $e$ may contain up to $\Theta(n)$ edges, if many edges are uncolored, then it is impossible for \emph{all} of their Vizing chains to have size linear in $n$. Specifically, one can show that a vertex of $G$ may belong to at most $\poly(\Delta)$ Vizing chains. This implies that, if $s \geq 1$ edges are uncolored, the average size of a Vizing chain corresponding to an uncolored edge is at most $\poly(\Delta) n/s$, which is considerably less than $n$ when $s$ is large. This motivates the following variation on Template~\ref{temp:seq}:
    
    {
    \floatname{algorithm}{Algorithm Sketch}
    \begin{algorithm}[H]\algsize
        \caption{A randomized sequential $(\Delta + 1)$-edge-coloring algorithm using Vizing chains; see Algorithm~\ref{alg:seq_log} for the formal details}\label{inf:simple_seq}
        \begin{flushleft}
            \textbf{Input}: A graph $G = (V,E)$ of maximum degree $\Delta$. \\
            \textbf{Output}: A proper $(\Delta + 1)$-edge-coloring of $G$.
        \end{flushleft}
        \begin{algorithmic}[1]
            \State $\phi \gets$ the empty coloring
            \While{there are uncolored edges}
                \State Pick an uncolored edge $e$ uniformly at random.
                \State Construct an $e$-augmenting Vizing chain $C$.
                \State Augment $\phi$ using $C$.
            \EndWhile
            \State \Return $\phi$
        \end{algorithmic}
    \end{algorithm}
    }
    
    Algorithm~\ref{inf:simple_seq} is very similar to the randomized edge-coloring algorithm considered by Sinnamon in \cite{Sinnamon}. Since the number of uncolored edges decreases by $1$ with each iteration of the \textsf{while} loop, the expected running time of this algorithm is bounded by
    \[
        \poly(\Delta) \,\sum_{i=1}^{m} \frac{n}{m - i + 1} \,\leq\, \poly(\Delta) n \log n.
    \]
    (Here we use that $1 + 1/2 + 1/3 + \cdots + 1/m \approx \log m$ and $m \leq \Delta n/2$.) At this point, we remark that by standard arguments using Markov's inequality, repeating Algorithm~\ref{inf:simple_seq} independently $O(\log n)$ times yields an algorithm that succeeds in time at most $\poly(\Delta) n \log^2 n$ with probability at least $1 - 1/\poly(n)$. In fact, using a version of Azuma's inequality for supermartingales due to Kuszmaul and Qi \cite{azuma}, we are able to show that the running time of Algorithm~\ref{inf:simple_seq} itself is already bounded by $\poly(\Delta) n \log n$ with high probability, so the extra $\log n$ factor is not needed. 
    This is, roughly, our proof of Theorem~\ref{theo:nlogn}; the details are given in \S\ref{sec:nlogn}.

    \subsection{Multi-step Vizing chains}\label{subsec:MSVC_history}
    
    To improve on the $\poly(\Delta) n \log n$ running time given by Theorems~\ref{theo:parallel-color} and \ref{theo:nlogn}, we turn away from Vizing chains and investigate ways of building smaller augmenting subgraphs. In \cite{CHLPU}, Chang, He, Li, Pettie, and Uitto showed that, in general, given an uncolored edge $e$, there may not exist an $e$-augmenting subgraph of diameter less than $\Omega(\Delta \log(n/\Delta)) = \poly(\Delta) \log n$. When $\Delta \ll n$, this lower bound is much smaller than the $O(n)$ upper bound given by Vizing chains; nevertheless, better upper bounds have only been discovered recently. A breakthrough was achieved by Greb\'ik and Pikhurko \cite{GP}, who in the course of their study of measurable edge-colorings invented a technique for building $e$-augmenting subgraphs with just $\poly(\Delta) \sqrt{n}$ edges.\footnote{This bound is not explicit in \cite{GP}, since that paper is concerned with infinite graphs. The derivation of the $\poly(\Delta) \sqrt{n}$ bound using the Greb\'ik--Pikhurko approach is sketched in \cite[\S3]{VizingChain}.} Their idea was to consider so-called \emph{two-step Vizing chains} 
    obtained by ``gluing'' two Vizing chains \ep{see Fig.\ \ref{fig:chains:two}}.
    
    Developing this idea further, the first author introduced \emph{multi-step Vizing chains} in \cite{VizingChain} \ep{see Fig.~\ref{fig:chains:multi}}. A multi-step Vizing chain has the form $C = F_0 + P_0 + F_1 + P_1 + \cdots + F_k + P_k$, where $F_0$, $F_1$, \ldots, $F_k$ are fans and $P_0$, $P_1$, \ldots, $P_k$ are alternating paths that fit together in a specified way \ep{see \S\ref{subsec:Vizingdefn} for a precise definition}. It turns out to be important for the analysis to prohibit intersections of certain types between $F_i + P_i$ and $F_j + P_j$ for $i < j$; 
    we call multi-step Vizing chains that avoid such intersections \emph{non-intersecting} (the precise definition is given in Definition~\ref{defn:non-int}). A central result of \cite{VizingChain} is that one can always find an $e$-augmenting non-intersecting multi-step Vizing chain $C = F_0 + P_0 + \cdots + F_k + P_k$ with $k \leq \poly(\Delta) \log n$ and $\length(P_i) \leq \poly(\Delta) \log n$ for all $i$. This implies that the total number of edges in $C$ is at most $\poly(\Delta) \log^2 n$. A somewhat stronger version of this fact was used in \cite{VizingChain} to develop a \LOCAL algorithm for $(\Delta+1)$-edge-coloring that runs in $\poly(\Delta, \log n)$ rounds.
    
    Very recently, Christiansen \cite{Christ} proved that one can actually find $e$-augmenting non-intersecting multi-step Vizing chains with just $\poly(\Delta) \log n$ edges, matching the lower bound of Chang, He, Li, Pettie, and Uitto modulo the exact dependence on $\Delta$:
    
    \begin{theo}[{Christiansen \cite{Christ}}]\label{theo:christlog}
        Let $G$ be an $n$-vertex graph of maximum degree $\Delta$ and let $\phi$ be a proper partial $(\Delta + 1)$-edge-coloring of $G$. Then for each uncolored edge $e$, there exists an $e$-augmenting subgraph with at most $\poly(\Delta) \log n$ edges.
    \end{theo}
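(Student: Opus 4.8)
The plan is to exhibit the required $e$-augmenting subgraph as a non-intersecting multi-step Vizing chain $C = F_0 + P_0 + F_1 + P_1 + \cdots + F_k + P_k$ in the sense of Definition~\ref{defn:non-int}, refining the bookkeeping behind Bernshteyn's $\poly(\Delta)\log^2 n$ bound so that the \emph{total} edge count $\sum_{i=0}^{k}\bigl(|E(F_i)| + \length(P_i)\bigr)$ is $\poly(\Delta)\log n$. Since $|E(F_i)| \le \Delta$ for each fan, this forces a strong joint control on the number of stages $k$ and on the lengths $\length(P_i)$ — roughly, their product must stay $\poly(\Delta)\log n$, whereas the naive construction only gives $k = \poly(\Delta)\log n$ stages each of length up to $\ell \defeq \poly(\Delta)\log n$, hence $\poly(\Delta)\log^2 n$ in total.

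The construction is the familiar iterative one. Put $e_0 \defeq e$ and $\phi_0 \defeq \phi$; at stage $i$, from the uncolored edge $e_i$ and the current partial coloring $\phi_i$ build a Vizing fan $F_i$ at an endpoint of $e_i$ together with a maximal alternating path $P_i$ attached to it. If $\length(P_i)$ is below the current threshold, then $F_0 + P_0 + \cdots + F_i + P_i$ is $e$-augmenting and we stop; otherwise pick a \emph{pivot} edge $e_{i+1}$ in a prescribed initial window of $P_i$, shift the colors along the prefix of $P_i$ up to $e_{i+1}$ to get $\phi_{i+1}$, and recurse. For the final simultaneous augmentation to be legal the pieces $F_0 + P_0,\ldots,F_k + P_k$ must be pairwise non-intersecting; I would enforce this by choosing each pivot $e_{i+1}$ uniformly at random in its window and then showing, by a union bound, that some choice avoids every forbidden intersection with the portion of $C$ built so far. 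The relevant input here is the structural fact recalled in \S\ref{subsec:overviewaug}, that any fixed vertex of $G$ lies on at most $\poly(\Delta)$ Vizing chains, which bounds the number of ``bad'' pivots; since the theorem only asks for existence, it is enough to win with positive probability, so the window need only modestly exceed the bad count.

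The crux, and the step I expect to be the main obstacle, is to bound the overall size of $C$ by $\poly(\Delta)\log n$ — in effect, to show that the iteration closes off after essentially a constant (in $n$) amount of ``pivoting work'', rather than after $\poly(\Delta)\log n$ full-length stages. The natural line of attack is a potential argument coupled to the randomness of the pivots: track a potential $\Phi_i$, such as a weighted count of the vertices of $G$ not yet incident to the constructed part of $C$, argue that a stage which fails to terminate exposes an alternating path through $\Omega(\length(P_i))$ vertices that — by the non-intersecting property — were previously untouched, and combine this with a lower bound of the form ``a uniformly random pivot on a long path $P_i$ produces, with probability at least $1/\poly(\Delta)$, a stage whose own alternating path is already short''. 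Making that last estimate precise is where the genuine difficulty lies, since it requires a finer understanding of how a freshly created uncolored edge and its fan interact with the rest of the graph than is needed for the $\poly(\Delta)\log^2 n$ bound; this is also the point at which one calibrates the truncation rule so that the surviving paths $P_i$ sum to $\poly(\Delta)\log n$ while the non-intersecting union bound of the previous paragraph still goes through. Granting these estimates, the resulting chain $C$ is $e$-augmenting by construction and has at most $\poly(\Delta)\log n$ edges, which is the assertion of the theorem; the remaining ingredients — existence of the fans, legality of the color shifts along a non-intersecting chain, and $|E(F_i)| \le \Delta$ — are routine Vizing-chain bookkeeping.
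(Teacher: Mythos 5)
There is a genuine gap, and it sits exactly where you flag it: the ``crux'' paragraph is not a proof but a wish list, and the two estimates you would need there are precisely the content of the theorem. Your plan needs a per-stage claim of the form ``a uniformly random pivot on a long path produces, with probability at least $1/\poly(\Delta)$, a stage whose own alternating path is already short,'' or a potential function that certifies termination after a total of $\poly(\Delta)\log n$ edges. Nothing in the Vizing-chain toolkit you invoke gives such a bound: the length of the next alternating path is governed by the global structure of the two-colored components of $G(\phi,\alpha\beta)$ and can be long for \emph{every} admissible pivot, so no local argument yields a uniform $1/\poly(\Delta)$ success probability per stage. This paper does not prove any such per-stage estimate either; its proof of the statement runs the Multi-Step Vizing Algorithm with truncation length $\ell = \poly(\Delta)$, \emph{allows} intersections, backtracks to the first intersected stage when they occur, and bounds the number of iterations of the loop by $O(\log n)$ with positive (indeed high) probability via an entropy-compression argument (Theorem~\ref{theo:entropy_compression_dist}, proved through Lemmas~\ref{lemma:counting_wD}--\ref{lemma:Dst}): one counts how many random input sequences are compatible with a given record of forward/backward moves and a terminus, and shows that long executions would over-compress the $\ell^t$ random inputs. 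That global counting is what replaces the missing per-stage estimate, and it is not recoverable from your sketch.

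The second, related problem is your mechanism for non-intersection. You propose choosing each pivot at random in a window and union-bounding over ``bad'' pivots, with the window ``only modestly exceeding the bad count.'' But the chain must be non-intersecting with respect to \emph{everything} built so far, which in the target regime has $\Theta(\log n)$ stages and $\Theta(\poly(\Delta)\log n)$ vertices; the fact that a fixed vertex lies on only $\poly(\Delta)$ Vizing chains does not prevent the set of forbidden pivots in a given window from growing with the accumulated chain. So either your windows (hence the surviving path lengths $\length(P_i)$) must scale like $\poly(\Delta)\log n$, which returns you to the known $\poly(\Delta)\log^2 n$ bound of \cite{VizingChain}, or you cannot guarantee a collision-free choice with positive probability at each stage. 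Escaping this dichotomy is exactly why the paper abandons ``avoid intersections by a union bound'' in favor of ``intersect, backtrack, and pay for it in the entropy-compression ledger''; Christiansen's original proof avoids it by a different, non-constructive counting argument. As written, your argument establishes neither the size bound nor the existence of a valid non-intersecting chain in the $\ell = \poly(\Delta)$ regime.
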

    
    Theorem~\ref{theo:christlog} plays a crucial role in Christiansen's proof of Theorem~\ref{theo:Christ}.
    
    \subsection{Algorithmic construction of short multi-step Vizing chains}\label{subsec:MSVC_inf}
    
    Christiansen proved Theorem~\ref{theo:christlog} using a clever counting argument. The downside of this approach, from our perspective, is that it is not constructive: although it demonstrates the existence of a small $e$-augmenting subgraph, it does not yield an efficient algorithm to find it \ep{or any algorithm at all other than exhaustive search}. Here we remedy this problem and develop a randomized algorithm that finds an $e$-augmenting subgraph with at most $\poly(\Delta) \log n$ edges in time $\poly(\Delta) \log n$ with high probability.
    
    Recall that we are looking for an $e$-augmenting subgraph of a special form, namely a multi-step Vizing chain $C = F_0 + P_0 + \cdots + F_k + P_k$, where each $F_i$ is a fan and each $P_i$ is an alternating path. To control the number of edges in $C$, we would like to make sure that the length of each path $P_i$ is at most $\ell$, where $\ell$ is some parameter fixed in advance. To achieve this, if the length of $P_i$ exceeds $\ell$, we shorten it to a random length $\ell' \leq \ell$ by only keeping the first $\ell'$ of its edges.\footnote{As mentioned earlier, we ignore certain technicalities in this informal discussion. For example, it turns out to be more convenient to shorten $P_i$ to a random length between $\ell$ and $2\ell - 1$, so that it is neither too long nor too short.}
    This idea is captured in Algorithm Sketch \ref{inf:MSVC1}.

\begin{figure}[b]
    \centering
    \begin{tikzpicture}[xscale = 0.7,yscale=0.6]
            \node[circle,fill=black,draw,inner sep=0pt,minimum size=4pt] (a) at (0,0) {};
        	\path (a) ++(-45:1) node[circle,fill=black,draw,inner sep=0pt,minimum size=4pt] (b) {};
        	\path (a) ++(0:1) node[circle,fill=black,draw,inner sep=0pt,minimum size=4pt] (c) {};
        	\path (a) ++(45:1) node[circle,fill=black,draw,inner sep=0pt,minimum size=4pt] (d) {};
        	\path (a) ++(90:1) node[circle,fill=black,draw,inner sep=0pt,minimum size=4pt] (e) {};

            \path (c) ++(0:3) node[circle,fill=black,draw,inner sep=0pt,minimum size=4pt] (f) {};
            
        	\path (f) ++(-45:1) node[circle,fill=black,draw,inner sep=0pt,minimum size=4pt] (g) {};
        	\path (f) ++(0:1) node[circle,fill=black,draw,inner sep=0pt,minimum size=4pt] (h) {};
        	\path (f) ++(45:1) node[circle,fill=black,draw,inner sep=0pt,minimum size=4pt] (i) {};
        	\path (f) ++(90:1) node[circle,fill=black,draw,inner sep=0pt,minimum size=4pt] (j) {};
        	
        	\path (h) ++(0:3) node[circle,fill=black,draw,inner sep=0pt,minimum size=4pt] (k) {};

            \path (k) ++(-45:1) node[circle,fill=black,draw,inner sep=0pt,minimum size=4pt] (l) {};
            \path (k) ++(0:1) node[circle,fill=black,draw,inner sep=0pt,minimum size=4pt] (m) {};
        	\path (k) ++(45:1) node[circle,fill=black,draw,inner sep=0pt,minimum size=4pt] (n) {};
        	\path (k) ++(90:1) node[circle,fill=black,draw,inner sep=0pt,minimum size=4pt] (o) {};
        	
        	\path (m) ++(0:7) node[circle,fill=black,draw,inner sep=0pt,minimum size=4pt] (p) {};

        	\draw[thick,dotted] (a) to node[font=\fontsize{8}{8},midway,inner sep=1pt,outer sep=1pt,minimum size=4pt,fill=white] {$e$} (b);
        	
        	\draw[thick, decorate,decoration=zigzag] (e) to[out=10,in=135] (f) (j) to[out=10,in=135] (k) (o) to[out=10,in=160] node[font=\fontsize{8}{8},midway,inner sep=1pt,outer sep=1pt,minimum size=4pt,fill=white] {$P$} (p);
        	
        	\draw[thick] (a) -- (c) (a) -- (d) (a) -- (e) (f) -- (g) (f) -- (h) (f) -- (i) (f) -- (j) (k) -- (l) (k) -- (m) (k) to node[font=\fontsize{8}{8},midway,inner sep=1pt,outer sep=1pt,minimum size=4pt,fill=white] {$F$} (n) (k) -- (o);

            \draw[decoration={brace,amplitude=10pt},decorate] (8,1.7) -- node [font=\fontsize{8}{8},midway,above,sloped,yshift=7pt] {$> \ell$} (16,0.5);

        \begin{scope}[yshift=-2.5cm]
            \draw[-{Stealth[length=3mm,width=2mm]},very thick,decoration = {snake,pre length=3pt,post length=7pt,},decorate] (9,1) -- (9,-1);
        \end{scope}
        
        \begin{scope}[yshift=-6cm]
            \node[circle,fill=black,draw,inner sep=0pt,minimum size=4pt] (a) at (0,0) {};
        	\path (a) ++(-45:1) node[circle,fill=black,draw,inner sep=0pt,minimum size=4pt] (b) {};
        	\path (a) ++(0:1) node[circle,fill=black,draw,inner sep=0pt,minimum size=4pt] (c) {};
        	\path (a) ++(45:1) node[circle,fill=black,draw,inner sep=0pt,minimum size=4pt] (d) {};
        	\path (a) ++(90:1) node[circle,fill=black,draw,inner sep=0pt,minimum size=4pt] (e) {};

            \path (c) ++(0:3) node[circle,fill=black,draw,inner sep=0pt,minimum size=4pt] (f) {};
            
        	\path (f) ++(-45:1) node[circle,fill=black,draw,inner sep=0pt,minimum size=4pt] (g) {};
        	\path (f) ++(0:1) node[circle,fill=black,draw,inner sep=0pt,minimum size=4pt] (h) {};
        	\path (f) ++(45:1) node[circle,fill=black,draw,inner sep=0pt,minimum size=4pt] (i) {};
        	\path (f) ++(90:1) node[circle,fill=black,draw,inner sep=0pt,minimum size=4pt] (j) {};
        	
        	\path (h) ++(0:3) node[circle,fill=black,draw,inner sep=0pt,minimum size=4pt] (k) {};

            \path (k) ++(-45:1) node[circle,fill=black,draw,inner sep=0pt,minimum size=4pt] (l) {};
            \path (k) ++(0:1) node[circle,fill=black,draw,inner sep=0pt,minimum size=4pt] (m) {};
        	\path (k) ++(45:1) node[circle,fill=black,draw,inner sep=0pt,minimum size=4pt] (n) {};
        	\path (k) ++(90:1) node[circle,fill=black,draw,inner sep=0pt,minimum size=4pt] (o) {};
        	
        	\path (m) ++(0:3) node[circle,fill=black,draw,inner sep=0pt,minimum size=4pt] (p) {};
        	
        	\path (p) ++(-45:1) node[circle,fill=black,draw,inner sep=0pt,minimum size=4pt] (q) {};

        	\draw[thick,dotted] (a) to node[font=\fontsize{8}{8},midway,inner sep=1pt,outer sep=1pt,minimum size=4pt,fill=white] {$e$} (b);
        	
        	\draw[thick, decorate,decoration=zigzag] (e) to[out=10,in=135] (f) (j) to[out=10,in=135] (k) (o) to[out=10,in=135] node[font=\fontsize{8}{8},midway,inner sep=1pt,outer sep=1pt,minimum size=4pt,fill=white] {$P_k$} (p);
        	
        	\draw[thick] (a) -- (c) (a) -- (d) (a) -- (e) (f) -- (g) (f) -- (h) (f) -- (i) (f) -- (j) (k) -- (l) (k) -- (m) (k) to node[font=\fontsize{8}{8},midway,inner sep=1pt,outer sep=1pt,minimum size=4pt,fill=white] {$F_k$} (n) (k) -- (o) (p) -- (q);
        	
        	\draw[decoration={brace,amplitude=10pt},decorate] (8.5,1.7) -- node [font=\fontsize{8}{8},midway,above,sloped,yshift=7pt] {$\ell'$} (12.5,0.1);
        \end{scope}

        \begin{scope}[yshift=6cm]
            \node[circle,fill=black,draw,inner sep=0pt,minimum size=4pt] (a) at (0,0) {};
        	\path (a) ++(-45:1) node[circle,fill=black,draw,inner sep=0pt,minimum size=4pt] (b) {};
        	\path (a) ++(0:1) node[circle,fill=black,draw,inner sep=0pt,minimum size=4pt] (c) {};
        	\path (a) ++(45:1) node[circle,fill=black,draw,inner sep=0pt,minimum size=4pt] (d) {};
        	\path (a) ++(90:1) node[circle,fill=black,draw,inner sep=0pt,minimum size=4pt] (e) {};

            \path (c) ++(0:3) node[circle,fill=black,draw,inner sep=0pt,minimum size=4pt] (f) {};
            
        	\path (f) ++(-45:1) node[circle,fill=black,draw,inner sep=0pt,minimum size=4pt] (g) {};
        	\path (f) ++(0:1) node[circle,fill=black,draw,inner sep=0pt,minimum size=4pt] (h) {};
        	\path (f) ++(45:1) node[circle,fill=black,draw,inner sep=0pt,minimum size=4pt] (i) {};
        	\path (f) ++(90:1) node[circle,fill=black,draw,inner sep=0pt,minimum size=4pt] (j) {};
        	
        	\path (h) ++(0:3) node[circle,fill=black,draw,inner sep=0pt,minimum size=4pt] (k) {};

            \path (k) ++(-45:1) node[circle,fill=black,draw,inner sep=0pt,minimum size=4pt] (l) {};
        	
        	\draw[thick,dotted] (a) to node[font=\fontsize{8}{8},midway,inner sep=1pt,outer sep=1pt,minimum size=4pt,fill=white] {$e$} (b);
        	
        	\draw[thick, decorate,decoration=zigzag] (e) to[out=10,in=135] (f) (j) to[out=10,in=135] (k);
        	
        	\draw[thick] (a) -- (c) (a) -- (d) (a) -- (e) (f) -- (g) (f) -- (h) (f) -- (i) (f) -- (j) (k) -- (l);
        \end{scope}

        \begin{scope}[yshift=3.5cm]
            \draw[-{Stealth[length=3mm,width=2mm]},very thick,decoration = {snake,pre length=3pt,post length=7pt,},decorate] (9,1) -- (9,-1);
        \end{scope}
        \end{tikzpicture}
    \caption{An iteration of the \textsf{while} loop in Algorithm \ref{inf:MSVC1} when $\length(P) > \ell$.}
    \label{fig:infiteration1}
\end{figure}
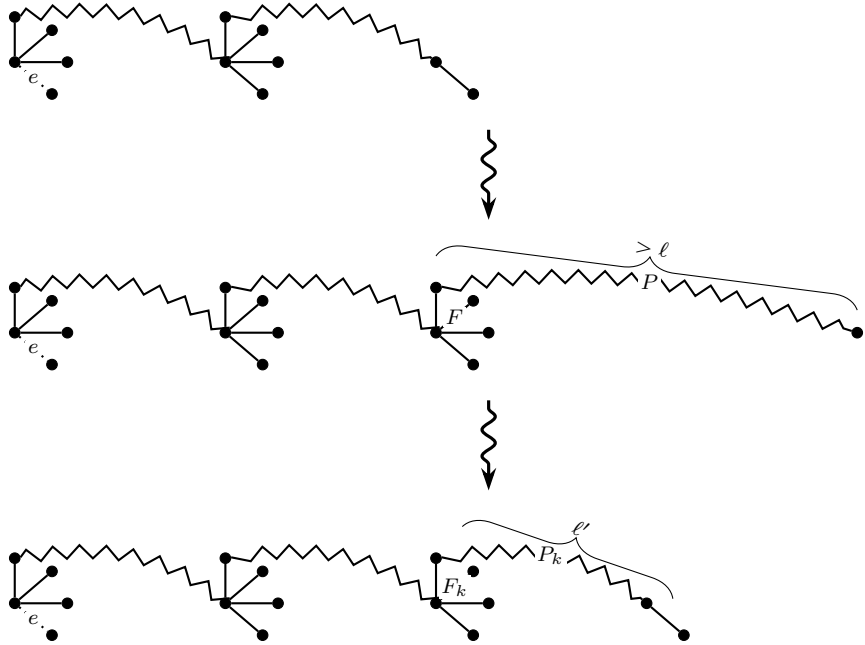

{
    \floatname{algorithm}{Algorithm Sketch}
    \begin{algorithm}[h]\algsize
        \caption{A randomized algorithm for multi-step Vizing chains: first attempt}\label{inf:MSVC1}
        \begin{flushleft}
            \textbf{Input}: A graph $G = (V,E)$ of maximum degree $\Delta$, a proper partial $(\Delta+1)$-edge-coloring $\phi$, an uncolored edge $e$, and a parameter $\ell \in \N$. \\
            \textbf{Output}: A multi-step Vizing chain $C = F_0 + P_0 + \cdots + F_k + P_k$ with $\length(P_i) \leq \ell$ for all $i$.
        \end{flushleft}
        \begin{algorithmic}[1]
            \State $e_0 \gets e$, \quad $k \gets 0$
            \While{true}
                \State $F+P \gets$ a Vizing chain starting from $e_k$
                \If{$\length(P) \leq \ell$}
                    \State $F_k \gets F$, \quad $P_k \gets P$
                    \State \Return $C = F_0 + P_0 + \cdots + F_k + P_k$
                \Else
                    \State Pick a random natural number $\ell' \leq \ell$.
                    \State $F_k \gets F$, \quad $P_k \gets$ the first $\ell'$ edges of $P$ \Comment{Randomly shorten the path} 
                    \State $e_{k+1} \gets$ the last edge of $P_k$, \quad $k \gets k+1$ \Comment{Move on to the next step} 
                \EndIf
            \EndWhile
        \end{algorithmic}
    \end{algorithm}
    }
    
    The execution of the \textsf{while} loop of Algorithm \ref{inf:MSVC1} is illustrated in Fig.~\ref{fig:infiteration1}. This is essentially the algorithm
    studied in \cite{VizingChain}, where it is proved that for $\ell = \poly(\Delta) \log n$, it terminates after at most $\poly(\Delta) \log n$ iterations of the \textsf{while} loop with high probability and returns a chain $C$ with at most $\ell \cdot \poly(\Delta) \log n \leq \poly(\Delta) \log^2 n$ edges. The assumption that $\ell = \poly(\Delta) \log n$ is invoked in \cite{VizingChain} to show that with high probability, the chain $C$ is non-intersecting; the rest of the argument goes through with just the bound $\ell \geq \poly(\Delta)$. Unfortunately, if we cannot guarantee that $C$ is non-intersecting, then the entire analysis of Algorithm~\ref{inf:MSVC1} given in \cite{VizingChain} breaks down. Thus, we are faced with the problem: \textsl{Is it possible to ensure that $C$ is non-intersecting while only assuming $\ell \geq \poly(\Delta)$?} To this end, we modify Algorithm~\ref{inf:MSVC1} in a way that always results in a non-intersecting chain by going back to an earlier stage of the process whenever an intersection occurs. This idea is presented in Algorithm {Sketch}~\ref{inf:MSVC}.

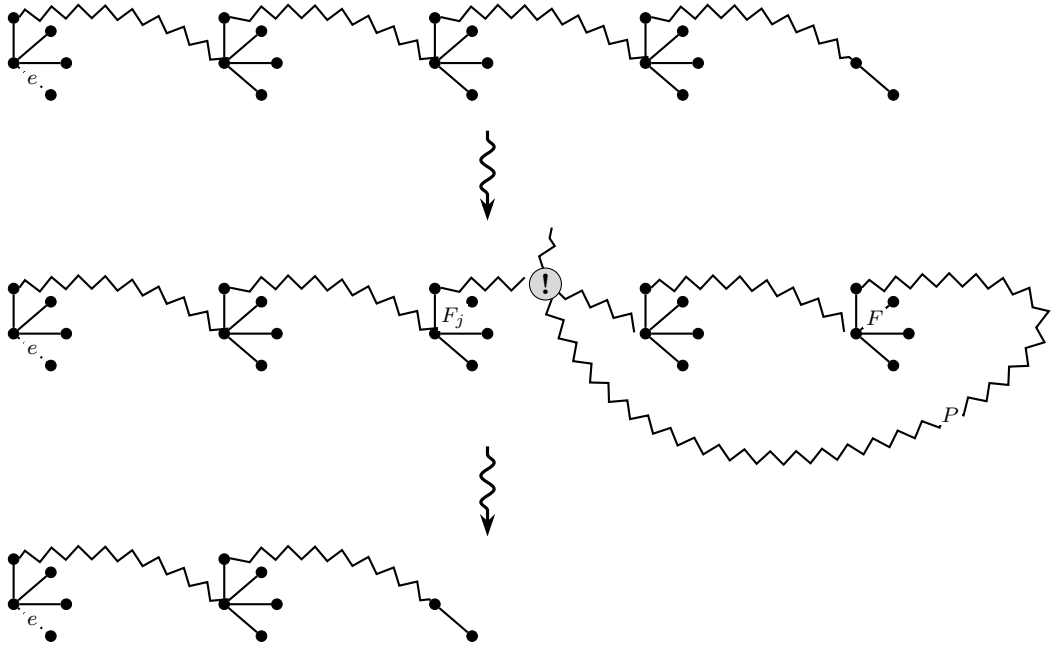
\begin{figure}[b]
    \centering
    \begin{tikzpicture}[xscale = 0.7,yscale=0.6]
            \clip (-0.5,-7) rectangle (20,7.5);
    
            \node[circle,fill=black,draw,inner sep=0pt,minimum size=4pt] (a) at (0,0) {};
        	\path (a) ++(-45:1) node[circle,fill=black,draw,inner sep=0pt,minimum size=4pt] (b) {};
        	\path (a) ++(0:1) node[circle,fill=black,draw,inner sep=0pt,minimum size=4pt] (c) {};
        	\path (a) ++(45:1) node[circle,fill=black,draw,inner sep=0pt,minimum size=4pt] (d) {};
        	\path (a) ++(90:1) node[circle,fill=black,draw,inner sep=0pt,minimum size=4pt] (e) {};

            \path (c) ++(0:3) node[circle,fill=black,draw,inner sep=0pt,minimum size=4pt] (f) {};
            
        	\path (f) ++(-45:1) node[circle,fill=black,draw,inner sep=0pt,minimum size=4pt] (g) {};
        	\path (f) ++(0:1) node[circle,fill=black,draw,inner sep=0pt,minimum size=4pt] (h) {};
        	\path (f) ++(45:1) node[circle,fill=black,draw,inner sep=0pt,minimum size=4pt] (i) {};
        	\path (f) ++(90:1) node[circle,fill=black,draw,inner sep=0pt,minimum size=4pt] (j) {};
        	
        	\path (h) ++(0:3) node[circle,fill=black,draw,inner sep=0pt,minimum size=4pt] (k) {};

            \path (k) ++(-45:1) node[circle,fill=black,draw,inner sep=0pt,minimum size=4pt] (l) {};
        	\path (k) ++(0:1) node[circle,fill=black,draw,inner sep=0pt,minimum size=4pt] (m) {};
        	\path (k) ++(45:1) node[circle,fill=black,draw,inner sep=0pt,minimum size=4pt] (n) {};
        	\path (k) ++(90:1) node[circle,fill=black,draw,inner sep=0pt,minimum size=4pt] (o) {};
        	
        	\path (m) ++(0:3) node[circle,fill=black,draw,inner sep=0pt,minimum size=4pt] (p) {};
        	
        	\path (p) ++(-45:1) node[circle,fill=black,draw,inner sep=0pt,minimum size=4pt] (q) {};
        	\path (p) ++(0:1) node[circle,fill=black,draw,inner sep=0pt,minimum size=4pt] (r) {};
        	\path (p) ++(45:1) node[circle,fill=black,draw,inner sep=0pt,minimum size=4pt] (s) {};
        	\path (p) ++(90:1) node[circle,fill=black,draw,inner sep=0pt,minimum size=4pt] (t) {};
        	
        	\path (r) ++(0:3) node[circle,fill=black,draw,inner sep=0pt,minimum size=4pt] (u) {};
        	
        	\path (u) ++(-45:1) node[circle,fill=black,draw,inner sep=0pt,minimum size=4pt] (v) {};sep=0pt,minimum size=4pt] (q) {};
        	\path (u) ++(0:1) node[circle,fill=black,draw,inner sep=0pt,minimum size=4pt] (w) {};
        	\path (u) ++(45:1) node[circle,fill=black,draw,inner sep=0pt,minimum size=4pt] (x) {};
        	\path (u) ++(90:1) node[circle,fill=black,draw,inner sep=0pt,minimum size=4pt] (y) {};
        	
        	\node[circle,inner sep=1pt] (z) at (10, 1.1) {\textbf{!}};
        	\path (z) ++(80:1.5) node (aa) {};

        	\draw[thick,dotted] (a) to node[font=\fontsize{8}{8},midway,inner sep=1pt,outer sep=1pt,minimum size=4pt,fill=white] {$e$} (b);
        	
        	\draw[thick, decorate,decoration=zigzag] (e) to[out=10,in=135] (f) (j) to[out=10,in=135] (k) (o) to[out=10,in=180] (z) to[out=-20,in=135] (p) (t) to[out=10,in=135] (u) (y) to[out=10,in=-70, looseness=4] node[font=\fontsize{8}{8},midway,inner sep=1pt,outer sep=1pt,minimum size=3pt,fill=white] {$P$} (z) -- (aa);
        	
        	\draw[thick] (a) -- (c) (a) -- (d) (a) -- (e) (f) -- (g) (f) -- (h) (f) -- (i) (f) -- (j) (k) -- (l) (k) -- (m) (k) to node[font=\fontsize{8}{8},midway,inner sep=1pt,outer sep=1pt,minimum size=4pt,fill=white] {$F_j$} (n) (k) -- (o) (p) -- (q) (p) -- (r) (p) -- (s) (p) -- (t) (u) -- (v) (u) -- (w) (u) to node[font=\fontsize{8}{8},midway,inner sep=1pt,outer sep=1pt,minimum size=4pt,fill=white] {$F$} (x) (u) -- (y);

            \node[circle,fill=gray!30,draw,inner sep=1.3pt] at (10.1, 1.1) {\textbf{!}};
        
        \begin{scope}[yshift=-3.5cm]
            \draw[-{Stealth[length=3mm,width=2mm]},very thick,decoration = {snake,pre length=3pt,post length=7pt,},decorate] (9,1) -- (9,-1);
        \end{scope}
        
        \begin{scope}[yshift=-6cm]
            \node[circle,fill=black,draw,inner sep=0pt,minimum size=4pt] (a) at (0,0) {};
        	\path (a) ++(-45:1) node[circle,fill=black,draw,inner sep=0pt,minimum size=4pt] (b) {};
        	\path (a) ++(0:1) node[circle,fill=black,draw,inner sep=0pt,minimum size=4pt] (c) {};
        	\path (a) ++(45:1) node[circle,fill=black,draw,inner sep=0pt,minimum size=4pt] (d) {};
        	\path (a) ++(90:1) node[circle,fill=black,draw,inner sep=0pt,minimum size=4pt] (e) {};

            \path (c) ++(0:3) node[circle,fill=black,draw,inner sep=0pt,minimum size=4pt] (f) {};
            
        	\path (f) ++(-45:1) node[circle,fill=black,draw,inner sep=0pt,minimum size=4pt] (g) {};
        	\path (f) ++(0:1) node[circle,fill=black,draw,inner sep=0pt,minimum size=4pt] (h) {};
        	\path (f) ++(45:1) node[circle,fill=black,draw,inner sep=0pt,minimum size=4pt] (i) {};
        	\path (f) ++(90:1) node[circle,fill=black,draw,inner sep=0pt,minimum size=4pt] (j) {};
        	
        	\path (h) ++(0:3) node[circle,fill=black,draw,inner sep=0pt,minimum size=4pt] (k) {};

            \path (k) ++(-45:1) node[circle,fill=black,draw,inner sep=0pt,minimum size=4pt] (l) {};

        	\draw[thick,dotted] (a) to node[font=\fontsize{8}{8},midway,inner sep=1pt,outer sep=1pt,minimum size=4pt,fill=white] {$e$} (b);
        	
        	\draw[thick, decorate,decoration=zigzag] (e) to[out=10,in=135] (f) (j) to[out=10,in=135] (k);
        	
        	\draw[thick] (a) -- (c) (a) -- (d) (a) -- (e) (f) -- (g) (f) -- (h) (f) -- (i) (f) -- (j) (k) -- (l);
        \end{scope}

        \begin{scope}[yshift=6cm]
            \node[circle,fill=black,draw,inner sep=0pt,minimum size=4pt] (a) at (0,0) {};
        	\path (a) ++(-45:1) node[circle,fill=black,draw,inner sep=0pt,minimum size=4pt] (b) {};
        	\path (a) ++(0:1) node[circle,fill=black,draw,inner sep=0pt,minimum size=4pt] (c) {};
        	\path (a) ++(45:1) node[circle,fill=black,draw,inner sep=0pt,minimum size=4pt] (d) {};
        	\path (a) ++(90:1) node[circle,fill=black,draw,inner sep=0pt,minimum size=4pt] (e) {};

            \path (c) ++(0:3) node[circle,fill=black,draw,inner sep=0pt,minimum size=4pt] (f) {};
            
        	\path (f) ++(-45:1) node[circle,fill=black,draw,inner sep=0pt,minimum size=4pt] (g) {};
        	\path (f) ++(0:1) node[circle,fill=black,draw,inner sep=0pt,minimum size=4pt] (h) {};
        	\path (f) ++(45:1) node[circle,fill=black,draw,inner sep=0pt,minimum size=4pt] (i) {};
        	\path (f) ++(90:1) node[circle,fill=black,draw,inner sep=0pt,minimum size=4pt] (j) {};
        	
        	\path (h) ++(0:3) node[circle,fill=black,draw,inner sep=0pt,minimum size=4pt] (k) {};

            \path (k) ++(-45:1) node[circle,fill=black,draw,inner sep=0pt,minimum size=4pt] (l) {};
        	\path (k) ++(0:1) node[circle,fill=black,draw,inner sep=0pt,minimum size=4pt] (m) {};
        	\path (k) ++(45:1) node[circle,fill=black,draw,inner sep=0pt,minimum size=4pt] (n) {};
        	\path (k) ++(90:1) node[circle,fill=black,draw,inner sep=0pt,minimum size=4pt] (o) {};
        	
        	\path (m) ++(0:3) node[circle,fill=black,draw,inner sep=0pt,minimum size=4pt] (p) {};
        	
        	\path (p) ++(-45:1) node[circle,fill=black,draw,inner sep=0pt,minimum size=4pt] (q) {};
        	\path (p) ++(0:1) node[circle,fill=black,draw,inner sep=0pt,minimum size=4pt] (r) {};
        	\path (p) ++(45:1) node[circle,fill=black,draw,inner sep=0pt,minimum size=4pt] (s) {};
        	\path (p) ++(90:1) node[circle,fill=black,draw,inner sep=0pt,minimum size=4pt] (t) {};
        	
        	\path (r) ++(0:3) node[circle,fill=black,draw,inner sep=0pt,minimum size=4pt] (u) {};
        	
        	\path (u) ++(-45:1) node[circle,fill=black,draw,inner sep=0pt,minimum size=4pt] (v) {};
        	
        	\draw[thick,dotted] (a) to node[font=\fontsize{8}{8},midway,inner sep=1pt,outer sep=1pt,minimum size=4pt,fill=white] {$e$} (b);
        	
        	\draw[thick, decorate,decoration=zigzag] (e) to[out=10,in=135] (f) (j) to[out=10,in=135] (k) (o) to[out=10,in=135] (p) (t) to[out=10,in=135] (u);
        	
        	\draw[thick] (a) -- (c) (a) -- (d) (a) -- (e) (f) -- (g) (f) -- (h) (f) -- (i) (f) -- (j) (k) -- (l) (k) -- (m) (k) -- (n) (k) -- (o) (p) -- (q) (p) -- (r) (p) -- (s) (p) -- (t) (u) -- (v);
        \end{scope}

        \begin{scope}[yshift=3.5cm]
            \draw[-{Stealth[length=3mm,width=2mm]},very thick,decoration = {snake,pre length=3pt,post length=7pt,},decorate] (9,1) -- (9,-1);
        \end{scope}
        \end{tikzpicture}
    \caption{An iteration of the \textsf{while} loop in Algorithm \ref{inf:MSVC} with an intersection.}
    \label{fig:infiteration2}
\end{figure}

\newpage
    {
    \floatname{breakablealgorithm}{Algorithm Sketch}
    \begin{breakablealgorithm}\algsize
        \caption{A randomized algorithm for non-intersecting multi-step Vizing chains; see Algorithm~\ref{alg:multi_viz_chain} for the formal details}\label{inf:MSVC}
        \begin{flushleft}
            \textbf{Input}: A graph $G = (V,E)$ of maximum degree $\Delta$, a proper partial $(\Delta+1)$-edge-coloring $\phi$, an uncolored edge $e$, and a parameter $\ell \in \N$. \\
            \textbf{Output}: A non-intersecting chain $C = F_0 + P_0 + \cdots + F_k + P_k$ with $\length(P_i) \leq \ell$ for all $i$.
        \end{flushleft}
        \begin{algorithmic}[1]
            \State $e_0 \gets e$, \quad $k \gets 0$
            \While{true}
                \State $F + P \gets$ a Vizing chain starting from $e_k$
                \If{$F + P$ intersects $F_j + P_j$ for some $j < k$}
                    \State $k \gets j$ \label{line:back} \Comment{Return to step $j$}
                \ElsIf{$\length(P) \leq \ell$}
                        \State $F_k \gets F$, \quad $P_k \gets P$
                        \State \Return $C = F_0 + P_0 + \cdots + F_k + P_k$
                \Else
                    \State Pick a random natural number $\ell' \leq \ell$. \label{line:forward}\label{line:short}
                    \State $F_k \gets F$, \quad $P_k \gets $ the first $\ell'$ edges of $P$ \Comment{Randomly shorten the path}
                    \State $e_{k+1} \gets $ the last edge of $P_k$, \quad $k \gets k+1$ \Comment{Move on to the next step}
                \EndIf
            \EndWhile
        \end{algorithmic}
    \end{breakablealgorithm}
    }
\vspace{0.1in}

    An iteration of the \textsf{while} loop in Algorithm~\ref{inf:MSVC} when $F+P$ intersects some $F_j + P_j$ is illustrated in Fig.~\ref{fig:infiteration2}.
    The crucial technical result of this paper, namely Theorem~\ref{theo:entropy_compression_dist}, is that with $\ell \geq \poly(\Delta)$, Algorithm~\ref{inf:MSVC} (or, more accurately, its slightly modified implementation presented as Algorithm~\ref{alg:multi_viz_chain}) terminates after $\log n$ iterations of the \textsf{while} loop with probability at least $1 - 1/\poly(n)$.

    In order to analyze Algorithm~\ref{inf:MSVC} we employ the \emph{entropy compression method}. This method was introduced by Moser and Tardos in their seminal paper \cite{MT} in order to prove an algorithmic version of the Lov\'asz Local Lemma \ep{the name ``entropy compression method'' was given to the Moser--Tardos technique by Tao~\cite{Tao}}. Later it was discovered \ep{first by Grytczuk, Kozik, and Micek in their study of nonrepetitive sequences \cite{Grytczuk}} that the entropy compression method can sometimes lead to improved combinatorial results if applied directly, with no explicit mention of the Local Lemma. Nevertheless, even in such applications there is usually a corresponding weaker result for which the Local Lemma is sufficient. In this regard, our application is somewhat atypical, as there does not seem to be a natural way to substitute an appeal to the Local Lemma in place of our analysis.

    The entropy compression method is used to prove that a given randomized algorithm terminates after a specified time with high probability. The idea of the method is to encode the execution process of the algorithm in such a way that the original sequence of random inputs can be recovered from the resulting encoding.
    In our application, for each iteration of the \textsf{while} loop, we include the following information in the encoding:
    \begin{itemize}
        \item whether the algorithm went to line \ref{line:back} or \ref{line:forward} on that iteration,
        \item if the algorithm went to line \ref{line:back}, the value $k - j$ (i.e., how many steps back were taken).
    \end{itemize}
    We also record the edge $e_k$ after the last iteration. Although this is insufficient to reconstruct the execution process uniquely, we prove an upper bound on the number of different processes that can result in the same record. We then use this to show that there are at most $\poly(\Delta)^t n$ sequences of random inputs for which the \textsf{while} loop is iterated at least $t$ times. On the other hand, the total number of input sequences for $t$ iterations is approximately $\ell^t$ \ep{the random inputs in Algorithm~\ref{inf:MSVC} are the numbers $\ell'$ generated in line~\ref{line:short}}. Therefore, for $\ell \geq \poly(\Delta)$ and $t \geq \log n$, the algorithm will terminate after fewer than $t$ iterations for most of the random input strings. The details of this analysis are presented in \S\ref{sec:msva_analysis}.

    \subsection{Linear-time sequential algorithm}

    We can now sketch our linear-time $(\Delta + 1)$-edge-coloring algorithm: 

    {
    \floatname{algorithm}{Algorithm Sketch}
    \begin{algorithm}[H]\algsize
        \caption{A randomized sequential $(\Delta + 1)$-edge-coloring algorithm using multi-step Vizing chains and a random sequence of edges; see Algorithm~\ref{alg:seq} for the formal details}\label{inf:seq}
        \begin{flushleft}
            \textbf{Input}: A graph $G = (V,E)$ of maximum degree $\Delta$. \\
            \textbf{Output}: A proper $(\Delta + 1)$-edge-coloring of $G$.
        \end{flushleft}
        \begin{algorithmic}[1]
            \State $\phi \gets$ the empty coloring
            \While{there are uncolored edges}
                \State Pick an uncolored edge $e$ uniformly at random.
                \State Construct an $e$-augmenting multi-step Vizing chain $C$ using Algorithm \ref{inf:MSVC} with $\ell = \poly(\Delta)$.
                \State Augment $\phi$ using $C$.
            \EndWhile
            \State \Return $\phi$
        \end{algorithmic}
    \end{algorithm}
    }

    As discussed in \S\ref{subsec:overviewaug}, Algorithm~\ref{inf:MSVC} finds an $e$-augmenting multi-step Vizing chain in time $\poly(\Delta) \log n$ with high probability. It follows that Algorithm~\ref{inf:seq} obtains a proper $(\Delta + 1)$-edge-coloring of $G$ in time $\poly(\Delta) n \log n$ with high probability---which coincides with the running time of Algorithm~\ref{inf:simple_seq}. It turns out that a better upper bound on the running time of Algorithm~\ref{inf:seq} can be derived by utilizing the randomness of the order in which the uncolored edges are processed. Specifically, it turns out that if $s \geq 1$ edges are currently uncolored, then for a randomly chosen uncolored edge $e$, the expected running time of Algorithm~\ref{inf:MSVC} is at most $\poly(\Delta) \log (n/s)$ rather than $\poly(\Delta) \log n$. Since the number of uncolored edges decreases by $1$ with each iteration, the expected running time of Algorithm~\ref{inf:seq} is bounded by
    \[
        \poly(\Delta) \sum_{i = 1}^m \log\left(\frac{n}{m- i + 1}\right) \,=\, \poly(\Delta) \log\left(\frac{n^m}{m!}\right) \,\leq\, \poly(\Delta) n,
    \]
    where we use that $m \leq \Delta n/2$ and $\log(m!) = m \log m - m + O(\log m)$ by Stirling's formula. With a somewhat more careful analysis, we will show in \S\ref{sec:sequential} that the running time of Algorithm~\ref{inf:seq} does not exceed $\poly(\Delta) n$ with probability $1 - 1/\Delta^n$.

    \subsection{Disjoint augmenting subgraphs and distributed algorithms}\label{subsec:overview_disjoint}

    An additional challenge we face when designing distributed algorithms for edge-coloring is that in the distributed setting, uncolored edges cannot be processed one at a time; instead, at every iteration a substantial fraction of the uncolored edges need to be colored simultaneously. To achieve this, we need to be able to find a ``large'' collection of vertex-disjoint augmenting subgraphs, which can then be simultaneously used to augment the current partial coloring without creating any conflicts. This approach was employed in \cite{VizingChain} to design the first \LOCAL algorithm for $(\Delta + 1)$-edge-coloring with running time $\poly(\Delta, \log n)$, see \cite[Theorem 1.3]{VizingChain}. The result of \cite{VizingChain} was sharpened by Christiansen as follows:

    \begin{theo}[{Christiansen \cite{Christ}}]\label{theo:disjoint_Christ}
        Let $\phi \colon E \pto [\Delta + 1]$ be a proper partial $(\Delta + 1)$-edge-coloring with domain $\dom(\phi) \subset E$ and let $U \defeq E \setminus \dom(\phi)$ be the set of uncolored edges. Then there exists a set $W \subseteq U$ of size $|W| \geq |U|/(\poly(\Delta) \log n)$ such that it is possible to assign a connected $e$-augmenting subgraph ${H_e}$ to each $e \in W$ so that:
        \begin{itemize}
            \item the graphs ${H_e}$, $e \in W$ are pairwise vertex-disjoint, and
            \item $|E({H_e})| \leq \poly(\Delta) \log n$ for all $e \in W$.
        \end{itemize}
    \end{theo}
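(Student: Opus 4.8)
The plan is to derive Theorem~\ref{theo:disjoint_Christ} from the constructive machinery developed in this paper, specifically the fact that Algorithm~\ref{inf:MSVC} (in its formal incarnation, Algorithm~\ref{alg:multi_viz_chain}) produces, for every uncolored edge $e$ and with $\ell \geq \poly(\Delta)$, a non-intersecting $e$-augmenting multi-step Vizing chain $C_e = F_0 + P_0 + \cdots + F_k + P_k$ with $k \leq O(\log n)$ steps and $\length(P_i) \leq \ell$ for each $i$, hence with at most $\poly(\Delta)\log n$ edges, and does so in $O(\log n)$ iterations of the \textsf{while} loop with probability at least $1 - 1/\poly(n)$ by Theorem~\ref{theo:entropy_compression_dist}. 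The subgraph $H_e \defeq C_e$ is connected and $e$-augmenting by construction, so both bullet points of the theorem are already guaranteed \emph{individually}; the only thing left to arrange is that we can select a subset $W \subseteq U$ of the required size $|W| \geq |U|/(\poly(\Delta)\log n)$ on which the chains $H_e$ are pairwise \emph{vertex-disjoint}.

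First I would run Algorithm~\ref{alg:multi_viz_chain} independently (with fresh randomness) for every uncolored edge $e \in U$, obtaining a random connected augmenting subgraph $H_e$ with $|V(H_e)| \leq |E(H_e)| + 1 \leq \poly(\Delta)\log n$ with high probability. For the disjointness I would pass to an auxiliary conflict graph $\mathcal{C}$ on vertex set $U$, placing an edge between $e$ and $f$ whenever $V(H_e) \cap V(H_f) \neq \0$, and then extract $W$ as a large independent set in $\mathcal{C}$. The standard greedy bound gives an independent set of size at least $|U|/(D+1)$, where $D$ is the maximum degree of $\mathcal{C}$, so it suffices to bound $D$ by $\poly(\Delta)\log n$. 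The key local-sparsity input here is the same counting fact invoked informally in \S\ref{subsec:overviewaug}: a fixed vertex of $G$ lies on only $\poly(\Delta)$ Vizing chains emanating from uncolored edges, which bootstraps (since a multi-step chain is a concatenation of $O(\log n)$ ordinary Vizing chains, each of bounded length) to the statement that a fixed vertex lies on at most $\poly(\Delta)\log n$ of the subgraphs $H_e$; summing over the $\leq \poly(\Delta)\log n$ vertices of a single $H_f$ gives $D \leq \poly(\Delta)\log^2 n$, and hence $|W| \geq |U|/(\poly(\Delta)\log^2 n)$. To recover the sharper $|U|/(\poly(\Delta)\log n)$ of the theorem statement one extra idea is needed: rather than a worst-case degree bound in $\mathcal{C}$, argue that the \emph{total} number of incidences $\sum_{e} |V(H_e)|$ is at most $\poly(\Delta)\log n \cdot |U|$, so that a randomized or LP-rounding selection (e.g.\ include each $e$ independently with probability $p \asymp 1/(\poly(\Delta)\log n)$ and delete one endpoint of each surviving conflict) keeps an $\Omega(1/(\poly(\Delta)\log n))$ fraction of $U$ in expectation; a Turán-type averaging then yields $W$ deterministically.

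I would organize the write-up as follows: (i)~record that $H_e \defeq C_e$ satisfies connectedness, the augmenting property, and $|E(H_e)| \leq \poly(\Delta)\log n$, citing Theorem~\ref{theo:entropy_compression_dist} and the structural definition of multi-step Vizing chains in \S\ref{subsec:Vizingdefn}; (ii)~prove the vertex-incidence lemma stating that every $v \in V$ belongs to at most $\poly(\Delta)\log n$ of the chains $\set{H_e : e \in U}$, using the "a vertex lies on $\poly(\Delta)$ Vizing chains" fact together with the $O(\log n)$ bound on the number of steps $k$; (iii)~convert (ii) into the incidence bound $\sum_{e\in U}|V(H_e)| \leq \poly(\Delta)\log n \cdot |U|$; and (iv)~extract $W$ by the averaging/independent-set argument above. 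The main obstacle I anticipate is step~(ii): the "$\poly(\Delta)$ Vizing chains through a vertex" statement must be formulated carefully, because which edges/colors a Vizing chain uses depends on the partial coloring $\phi$, and a multi-step chain's later fans $F_i$ are rooted at endpoints of earlier (randomly shortened) paths, so the relevant count is over the $\leq \ell \leq \poly(\Delta)$ possible shortened lengths and $O(\Delta^2)$ color pairs at each of $O(\log n)$ steps—one has to check this does not blow up super-polynomially in $\Delta$ or super-logarithmically in $n$. Getting the clean $|U|/(\poly(\Delta)\log n)$ bound (rather than $\log^2 n$) is the secondary subtlety, handled by passing from a worst-case degree bound to an incidence-sum bound before selecting $W$.
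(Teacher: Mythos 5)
Your overall architecture (run Algorithm~\ref{alg:multi_viz_chain} with independent randomness from every uncolored edge, form a conflict graph on $U$, and extract a large independent set) is exactly the strategy the paper uses in \S\ref{subsec:many_aug_subgraphs} to prove the \emph{stronger} Theorem~\ref{theo:disjoint_with_bounds} (the statement you are proving is quoted from Christiansen and proved there by a different, non-constructive counting argument). However, your step~(ii) contains a genuine gap. You assert, essentially deterministically, that a fixed vertex lies on at most $\poly(\Delta)\log n$ of the chains $H_e$, by ``bootstrapping'' the fact that a vertex lies on only $\poly(\Delta)$ single-step Vizing chains over the $O(\log n)$ steps. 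That bootstrap fails for multi-step chains: the location of the $j$-th fan and path of $C_e$ depends on the random shortenings $\ell_1,\dots,\ell_{j-1}$, and the number of starting edges $e$ for which \emph{some} choice of this randomness routes the $j$-th link through a fixed vertex $x$ grows like $\poly(\Delta)^j$ (this is precisely the backward-reconstruction count underlying Lemma~\ref{lemma:counting_wD}), i.e.\ it can be $n^{\Theta(\log \Delta)}$ when $j = \Theta(\log n)$. What is true, and what the paper proves in Lemma~\ref{lemma:edges_bound_Gamma_r} by the entropy-compression counting, is a bound on the \emph{expected} number of chains whose $j$-th link passes through $x$, namely $\E[c_j(x)] \leq 5\ell\left(4\Delta^4/\ell\right)^j$, which decays geometrically in $j$ once $\ell \geq \poly(\Delta)$ is large enough; summing over $j$ gives an expected per-vertex load of $O(\Delta^4)$ with no $\log n$ factor at all. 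Without this expectation bound (or Christiansen's own counting argument) your incidence bound in step~(iii) is unsupported.

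There is a second, smaller gap in step~(iv): even granting $\sum_{e \in U}|V(H_e)| \leq \poly(\Delta)\log n\,|U|$, your sampling-and-deletion or Tur\'an extraction needs control of the number of conflicting \emph{pairs}, i.e.\ of $|E(\mathcal{C})|$, which behaves like $\sum_x \binom{c(x)}{2}$; a bound on $\sum_x c(x)$ does not bound $\sum_x c(x)^2$ without a per-vertex maximum or an expectation argument. The paper resolves this by conditioning on $C_e$, using the independence of the randomness for different uncolored edges, and summing the per-vertex expected counts over the $O(\ell)$ vertices of each link of $C_e$, yielding $\E[|E(\graphrand)|] \leq 240\,\ell\,\Delta^4|U|$; combined with the average-degree independent-set bound of Lemma~\ref{lemma:alg_indep_set} and a Markov/Jensen step this gives $\E[|W|] = \Omega(|U|/\Delta^{20})$, which is stronger than the statement you set out to prove. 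Finally, since the $H_e$ are random, the deterministic existence claim must be extracted by the probabilistic method, fixing an outcome in which simultaneously enough chains succeed (e.g.\ $|S| \geq |U|/2$) and the conflict graph is sparse; your write-up should include this combination step explicitly.
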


    Note that Theorem~\ref{theo:disjoint_Christ} is an existence result, i.e., its proof does not yield an algorithm to find the set $W$ and the subgraphs ${H_e}$. Nevertheless, it is possible to ``algorithmize'' Theorem~\ref{theo:disjoint_Christ} by reducing the problem to finding a large matching in a certain auxiliary hypergraph and applying the \LOCAL approximation algorithm for hypergraph maximum matching due to Harris \cite{harrisdistributed} (this idea was introduced in \cite[\S2]{VizingChain}). This is how Christiansen proves Theorem~\ref{theo:Christ}, i.e., the existence of a $\poly(\Delta, \log \log n) \log^6 n$-round deterministic \LOCAL algorithm for $(\Delta+1)$-edge-coloring.

    In \S\ref{subsec:many_aug_subgraphs}, we establish the following further improvement to Theorem~\ref{theo:disjoint_Christ}:

    \begin{theo}[Many disjoint augmenting subgraphs]\label{theo:disjoint}
        Let $\phi \colon E \pto [\Delta + 1]$ be a proper partial $(\Delta + 1)$-edge-coloring with domain $\dom(\phi) \subset E$ and let $U \defeq E \setminus \dom(\phi)$ be the set of uncolored edges. There exists a randomized \LOCAL algorithm that in $\poly(\Delta) \log n$ rounds outputs a set $W \subseteq U$ of expected size $\E[|W|] \geq |U|/\poly(\Delta)$ and an assignment of 
        connected $e$-augmenting subgraphs ${H_e}$ to the edges $e \in W$ such that:
        \begin{itemize}
            \item the graphs ${H_e}$, $e \in W$ are pairwise vertex-disjoint, and
            \item $|E({H_e})| \leq \poly(\Delta) \log n$ for all $e \in W$.
        \end{itemize}
    \end{theo}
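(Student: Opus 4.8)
The plan is to let every uncolored edge build a short augmenting subgraph of its own, in parallel, and then discard a small fraction of them to destroy all overlaps. Write $U \defeq E \setminus \dom(\phi)$ and $s \defeq |U|$, and fix $\ell$ to be the $\poly(\Delta)$ value furnished by Theorem~\ref{theo:entropy_compression_dist}. \emph{Parallel chain construction.} For every $e \in U$ simultaneously, run a \LOCAL implementation of Algorithm~\ref{alg:multi_viz_chain} on input $(G, \phi, e, \ell)$ with a private stream of random bits for each $e$. A multi-step Vizing chain $F_0 + P_0 + \cdots + F_k + P_k$ with $\length(P_i) \leq \ell$ for all $i$ is contained in the ball of radius $O(k\ell\Delta)$ around $e$; since Theorem~\ref{theo:entropy_compression_dist} guarantees that with probability at least $1 - 1/\poly(n)$ the run for $e$ halts within $O(\log n)$ iterations, the edge $e$ can complete its run after collecting its radius-$\poly(\Delta)\log n$ neighborhood, i.e., in $\poly(\Delta)\log n$ rounds, outputting an $e$-augmenting connected subgraph $C_e$ with $|E(C_e)| \leq k\cdot\poly(\Delta) \leq \poly(\Delta)\log n$ (here it matters that $\ell = \poly(\Delta)$ rather than $\poly(\Delta)\log n$). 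Call $e$ \emph{spoiled} if its run fails to halt within the allotted number of iterations; then $\Pr[e\text{ spoiled}] \leq 1/\poly(n)$, so discarding spoiled edges costs at most $s/\poly(n) < 1$ in expectation and may be ignored.

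\emph{Disjointification.} Let $\mathcal{C}$ be the conflict graph on the unspoiled edges of $U$, with $e$ and $f$ adjacent whenever $V(C_e) \cap V(C_f) \neq \emptyset$; such $e$, $f$ lie within distance $\poly(\Delta)\log n$ of each other in $G$, so one round of communication along $\mathcal{C}$ costs $\poly(\Delta)\log n$ rounds in $G$. We cannot afford a $\Theta(\log n)$-round maximal independent set computation in $\mathcal{C}$, as that would require $\poly(\Delta)\log^2 n$ rounds; instead we assign i.i.d.\ uniform priorities to the edges in $U$ and let $W$ be the set of edges $e$ whose priority is smaller than that of every $\mathcal{C}$-neighbor. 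Then $W$ is an independent set in $\mathcal{C}$, the subgraphs $H_e \defeq C_e$ ($e \in W$) are pairwise vertex-disjoint with $|E(H_e)| \leq \poly(\Delta)\log n$, and the whole procedure takes $\poly(\Delta)\log n$ rounds. Conditioning on the chains and then taking expectations, $\Pr[e \in W] = \E\!\left[1/(\deg_{\mathcal{C}}(e)+1)\right] \geq 1/(\E[\deg_{\mathcal{C}}(e)]+1)$ by Jensen's inequality, so $\E[|W|] \geq \sum_{e \in U} 1/(\E[\deg_{\mathcal{C}}(e)]+1)$.

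The remaining --- and main --- task is to control the expected conflict degree. Because the runs for distinct edges use independent randomness, $V(C_e)$ and $V(C_f)$ are independent random sets, so $\E[\deg_{\mathcal{C}}(e)] = \sum_{f \neq e}\Pr[V(C_e) \cap V(C_f) \neq \emptyset] \leq \sum_{v}\Pr[v \in V(C_e)]\,\Lambda_v$, where $\Lambda_v \defeq \sum_{f \in U}\Pr[v \in V(C_f)]$ is the expected load of $v$. The first fan--path segment $F_0 + P_0$ of a chain $C_f$ contributes a \emph{deterministic} amount $\poly(\Delta)$ to $\Lambda_v$: a fixed $v$ lies on at most $\poly(\Delta)$ maximal $\phi$-alternating paths (one per ordered color pair), and on each of them only the $O(\ell) = \poly(\Delta)$ edges within distance $2\ell$ of $v$ can serve as the uncolored germ of a Vizing chain through $v$ --- this is the standard bound on the number of Vizing chains containing a fixed vertex. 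For a later segment $F_i + P_i$ ($i \geq 1$), the uniform truncations performed in line~\ref{line:short} of Algorithm~\ref{inf:MSVC}, together with the non-intersecting constraint of Definition~\ref{defn:non-int}, force the anchor edges $e_i$ to be sufficiently dispersed; taking the $\poly(\Delta)$ hidden in $\ell$ large enough makes the probability that the $i$-th segment of a given chain meets $v$ decay geometrically in $i$, so that the contribution of the later segments to $\Lambda_v$ is again only $\poly(\Delta)$, and, moreover, the bulk of a long chain $C_e$ lies in regions that are visited by few other chains. Combining these estimates with the global bound $\sum_{e \in U}\E[|V(C_e)|] \leq \poly(\Delta)\,n$ --- the distributed analogue of the $\poly(\Delta) n$ total-work bound behind Algorithm~\ref{inf:seq}, which rests on $s\log(2n/s) = O(n)$ --- and carefully averaging over $U$ yields $\E[\deg_{\mathcal{C}}(e)] \leq \poly(\Delta)$ for all but a $(1 - 1/\poly(\Delta))$-fraction of $e \in U$, and hence $\E[|W|] \geq s/\poly(\Delta)$.

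The hard part is the second half of the load estimate: unlike $F_0 + P_0$, the segments $F_i + P_i$ with $i \geq 1$ are anchored at random edges produced by the truncation process, and one must show that the interplay of the uniform truncation lengths with the ban on self-intersections keeps the random multi-step chains from concentrating through any single vertex and, in particular, keeps the expected overlap of two chains at $\poly(\Delta)$ rather than $\poly(\Delta)\log n$. This is precisely the point at which it is essential that $\ell$ may be taken to be a sufficiently large --- but still bounded-degree --- polynomial in $\Delta$, and it is what lets us dispense with the extra $\log n$ factor in Theorem~\ref{theo:disjoint_Christ}.
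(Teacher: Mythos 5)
Your architecture coincides with the paper's proof of Theorem~\ref{theo:disjoint_with_bounds}: run Algorithm~\ref{alg:multi_viz_chain} in parallel from every uncolored edge with $\ell = \poly(\Delta)$, truncate at $t = \Theta(\log n)$ iterations and discard the failed edges via Theorem~\ref{theo:entropy_compression_dist}, build the conflict graph on the surviving chains, and extract an independent set by the one-round random-priority rule (Algorithm~\ref{alg:independent_set}, Lemma~\ref{lemma:alg_indep_set}), finishing with Jensen. All of that is fine and matches the paper. The problem is that the entire quantitative core --- showing the expected conflict degree is $\poly(\Delta)$ rather than $\poly(\Delta)\log n$ --- is asserted, not proved, and you say as much yourself ("the hard part is the second half of the load estimate").

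Concretely, two things are missing. First, the claim that the probability that the $i$-th segment of a chain passes through a fixed vertex decays geometrically in $i$ is exactly the bound $\E[c_j(x)] \leq 5\ell\,(4\Delta^4/\ell)^j$ in the proof of Lemma~\ref{lemma:edges_bound_Gamma_r}; it is not a soft consequence of "the truncations and the non-intersecting constraint disperse the anchors," but is obtained by re-running the entropy-compression counting (Lemmas~\ref{lemma:counting_wD}, \ref{lemma:wD_bound}, \ref{lemma:Dst}) over records whose partial sum equals $j-1$, together with a reconstruction argument showing that, given $x$ on the $j$-th segment, there are only $O(\Delta^4)$ possibilities for that segment's starting edge and pivot. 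Nothing in your proposal supplies this. Second, even granting a uniform load bound $\Lambda_v \leq \poly(\Delta)$, your decomposition $\E[\deg_{\mathcal{C}}(e)] \leq \sum_v \Pr[v\in V(C_e)]\,\Lambda_v$ only yields $\E[|V(C_e)|]\cdot\poly(\Delta) = \poly(\Delta)\log n$ for an individual edge, which recovers Theorem~\ref{theo:disjoint_Christ} but not the improvement claimed. The paper gets around this by an asymmetric accounting: in $N^+(e)$, a vertex of the $j$-th segment of $C_e$ is only charged against segments of index $j' \geq j$ of the other chains, so the double sum $\sum_j\sum_{j'\geq j}\E[c_{j'}(x)]$ collapses to $\poly(\Delta)$ by the geometric decay. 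Your substitute for this step --- a global bound $\sum_{e\in U}\E[|V(C_e)|] \leq \poly(\Delta)\,n$ plus "carefully averaging over $U$" --- is also not established: the sequential $s\log(n/s)$ heuristic you cite rests on processing one uniformly random uncolored edge at a time against the current coloring, whereas here all uncolored edges run simultaneously against the same fixed $\phi$, and the averaging step that would turn such a global bound into $\E[\deg_{\mathcal{C}}(e)] \leq \poly(\Delta)$ for a constant fraction of edges is never carried out. So the proposal reproduces the paper's framework but omits its key lemma.
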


    Theorem~\ref{theo:disjoint} strengthens Theorem~\ref{theo:disjoint_Christ} in two separate ways. First, it gives a better lower bound on $|W|$: $|U|/ \poly(\Delta)$ instead of $|U|/(\poly(\Delta) \log n)$. Using this improved bound and utilizing Harris's hypergraph matching algorithm, we obtain a deterministic \LOCAL $(\Delta+1)$-edge-coloring algorithm that runs in $\poly(\Delta, \log\log n) \log^5 n$ rounds, improving the running time of Christiansen's algorithm in Theorem~\ref{theo:Christ} by a $\log n$ factor (this is Theorem~\ref{theo:dist}\ref{item:dist_det}). Second, Theorem~\ref{theo:disjoint} is constructive, in the sense that it yields a $\poly(\Delta) \log n$-round randomized \LOCAL algorithm. This means that in the randomized setting, there is no need to employ Harris's hypergraph matching algorithm. Instead we can start with the empty coloring and iterate the algorithm in Theorem~\ref{theo:disjoint} $t$ times, which would reduce the expected number of remaining uncolored edges to at most
    \[
        \left(1 - \frac{1}{\poly(\Delta)}\right)^t m \,\leq\, e^{-t/\poly(\Delta)} m.
    \]
    In particular, if $t = \poly(\Delta) \log n$, the expected number of uncolored edges will become less than $1/\poly(n)$, which implies that all edges will be colored with high probability. Since each iteration takes $\poly(\Delta) \log n$ rounds, the total running time of this randomized \LOCAL algorithm is at most $\poly(\Delta) \log^2 n$, as claimed in Theorem~\ref{theo:dist}\ref{item:dist_rand}.

    As mentioned in the introduction, when $\Delta$ is constant, $O(\log^2 n)$ appears to be a natural barrier for the round complexity of \LOCAL $(\Delta + 1)$-edge-coloring algorithms based on iteratively extending partial colorings using augmenting subgraphs. Indeed, as mentioned in \S\ref{subsec:MSVC_history}, Chang \emph{et al.}~\cite{CHLPU} showed that the smallest diameter of an augmenting subgraph for a given uncolored edge may be as large as $\Omega(\log n)$. This means that we should not expect a single iteration to take fewer than $\Omega(\log n)$ rounds. Furthermore, since the augmenting subgraphs used at every iteration must be vertex-disjoint, the edges that become colored after each iteration form a matching. Without some strong assumptions on how the uncolored edges are distributed in the graph, such a matching can, at best, include a constant fraction of the uncolored edges. Therefore, we will need to perform at least $\Omega(\log n)$ iterations to color every edge, with the total round complexity of the algorithm being at least $\Omega(\log^2 n)$. Although this is a non-rigorous, heuristic analysis that ignores certain subtle aspects of the problem (for example, while there are still more than $n^{1-o(1)}$ uncolored edges, a large fraction of them will have augmenting subgraphs of size $o(\log n)$), it seems to indicate that a conceptually different approach is needed to bring the running time to below $O(\log^2 n)$.

\section{Notation and Preliminaries on Augmenting Chains}\label{sec:notation}

\subsection{Chains: general definitions}

As mentioned in \S\ref{sec:overview}, throughout the remainder of the paper, $G$ is a graph with $n$ vertices, $m$ edges, and of maximum degree $\Delta$, and we write $V \defeq V(G)$ and $E \defeq E(G)$. 
For a vertex $v$, we let $N_G(v)$ denote the neighborhood of $v$, and $N_G[v] \defeq N_G(v) \cup \set{v}$ denote the closed neighborhood of $v$.
Without loss of generality, we shall assume that $\Delta \geq 2$.
In all our algorithms, we tacitly treat $G$, $n$, and $\Delta$ as global variables included in the input.
We call a function $\phi\,:\, E\to [\Delta + 1]\cup \{\blank\}$ a \emphd{partial $(\Delta+1)$-edge-coloring} (or simply a \emphd{partial coloring}) of $G$. Here $\phi(e) = \blank$ indicates that the edge $e$ is uncolored.

Given a partial coloring $\phi$ and $x\in V$, we let \[M(\phi, x) \defeq [\Delta+1]\setminus\{\phi(xy)\,:\, xy \in E\}\] be the set of all the \emphd{missing} colors at $x$ under the coloring $\phi$. Since there are $\Delta+1$ available colors, $M(\phi, x)$ is always nonempty. An uncolored edge $xy$ is \emphd{$\phi$-happy} if $M(\phi, x)\cap M(\phi, y)\neq \0$. If $e = xy$ is $\phi$-happy, we can extend the coloring $\phi$ by assigning any color in $M(\phi, x)\cap M(\phi, y)$ to $e$.

\begin{figure}[t]
	\centering
	\begin{tikzpicture}
	\begin{scope}
	\node[circle,fill=black,draw,inner sep=0pt,minimum size=4pt] (a) at (0,0) {};
	\node[circle,fill=black,draw,inner sep=0pt,minimum size=4pt] (b) at (-1.25,0.625) {};
	\node[circle,fill=black,draw,inner sep=0pt,minimum size=4pt] (c) at (0,1.625) {};
	\node[circle,fill=black,draw,inner sep=0pt,minimum size=4pt] (d) at (1.25,0.625) {};
	\node[circle,fill=black,draw,inner sep=0pt,minimum size=4pt] (e) at (2.5,0) {};
	\node[circle,fill=black,draw,inner sep=0pt,minimum size=4pt] (f) at (3.75,0.625) {};
	\node[circle,fill=black,draw,inner sep=0pt,minimum size=4pt] (g) at (3.75,-1) {};
	\node[circle,fill=black,draw,inner sep=0pt,minimum size=4pt] (h) at (5,0) {};
	
	\draw[thick,dotted] (a) to node[midway,inner sep=0pt,minimum size=4pt] (i) {} (b);
	\draw[thick] (a) to node[midway,inner sep=1pt,outer sep=1pt,minimum size=4pt,fill=white] (j) {$\alpha$} (c);
	\draw[thick] (a) to node[midway,inner sep=1pt,outer sep=1pt,minimum size=4pt,fill=white] (k) {$\beta$} (d);
	\draw[thick] (d) to node[midway,inner sep=1pt,outer sep=1pt,minimum size=4pt,fill=white] (l) {$\gamma$} (e);
	\draw[thick] (f) to node[midway,inner sep=1pt,outer sep=1pt,minimum size=4pt,fill=white] (m) {$\delta$} (e);
	\draw[thick] (f) to node[midway,inner sep=1pt,outer sep=1pt,minimum size=4pt,fill=white] (n) {$\epsilon$} (g);
	\draw[thick] (f) to node[midway,inner sep=1pt,outer sep=1pt,minimum size=4pt,fill=white] (o) {$\zeta$} (h);
	
	\draw[-{Stealth[length=1.6mm]}] (j) to[bend right] (i);
	\draw[-{Stealth[length=1.6mm]}] (k) to[bend right] (j);
	\draw[-{Stealth[length=1.6mm]}] (l) to[bend left] (k);
	\draw[-{Stealth[length=1.6mm]}] (m) to[bend right] (l);
	\draw[-{Stealth[length=1.6mm]}] (n) to[bend left] (m);
	\draw[-{Stealth[length=1.6mm]}] (o) to[bend left] (n);
	\end{scope}
	
	\draw[-{Stealth[length=1.6mm]},very thick,decoration = {snake,pre length=3pt,post length=7pt,},decorate] (5.4,0.3125) -- (6.35,0.3125);
	
	\begin{scope}[xshift=8cm]
	\node[circle,fill=black,draw,inner sep=0pt,minimum size=4pt] (a) at (0,0) {};
	\node[circle,fill=black,draw,inner sep=0pt,minimum size=4pt] (b) at (-1.25,0.625) {};
	\node[circle,fill=black,draw,inner sep=0pt,minimum size=4pt] (c) at (0,1.625) {};
	\node[circle,fill=black,draw,inner sep=0pt,minimum size=4pt] (d) at (1.25,0.625) {};
	\node[circle,fill=black,draw,inner sep=0pt,minimum size=4pt] (e) at (2.5,0) {};
	\node[circle,fill=black,draw,inner sep=0pt,minimum size=4pt] (f) at (3.75,0.625) {};
	\node[circle,fill=black,draw,inner sep=0pt,minimum size=4pt] (g) at (3.75,-1) {};
	\node[circle,fill=black,draw,inner sep=0pt,minimum size=4pt] (h) at (5,0) {};
	
	\draw[thick] (a) to node[midway,inner sep=1pt,outer sep=1pt,minimum size=4pt,fill=white] (i) {$\alpha$} (b);
	\draw[thick] (a) to node[midway,inner sep=1pt,outer sep=1pt,minimum size=4pt,fill=white] (j) {$\beta$} (c);
	\draw[thick] (a) to node[midway,inner sep=1pt,outer sep=1pt,minimum size=4pt,fill=white] (k) {$\gamma$} (d);
	\draw[thick] (d) to node[midway,inner sep=1pt,outer sep=1pt,minimum size=4pt,fill=white] (l) {$\delta$} (e);
	\draw[thick] (f) to node[midway,inner sep=1pt,outer sep=1pt,minimum size=4pt,fill=white] (m) {$\epsilon$} (e);
	\draw[thick] (f) to node[midway,inner sep=1pt,outer sep=1pt,minimum size=4pt,fill=white] (n) {$\zeta$} (g);
	\draw[thick,dotted] (f) to node[midway,inner sep=0pt,minimum size=4pt] (o) {} (h);
	\end{scope}
	\end{tikzpicture}
	\caption{Shifting a coloring along a chain (Greek letters represent colors).}\label{fig:shift}
\end{figure}
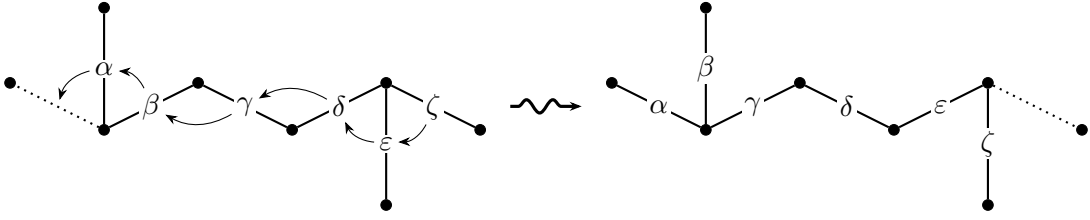

Given a proper partial coloring, we wish to modify it in order to create a partial coloring with a happy edge. To do so, we define the $\Shift$ operation. Given two adjacent edges $e_0 = xy$, $e_1 = xz$ such that $\phi(e_0) = \blank$ and $\phi(e_1) \neq \blank$, we define a new partial coloring $\psi \defeq \Shift(\phi, e_0, e_1)$ as follows:
\[\psi(e) \defeq\left\{\begin{array}{cc}
    \phi(e_1) & e = e_0; \\
    \blank & e = e_1; \\
    \phi(e) & \text{otherwise.}
\end{array}\right.\]
In other words, $\Shift(\phi, e_0, e_1)$ ``shifts'' the color from $e_1$ to $e_0$, leaves $e_1$ uncolored, and keeps the coloring on the other edges unchanged. Such an operation only yields a proper coloring if $\phi(e_1) \in M(\phi, y)$. If this condition holds, we call $(e_0, e_1)$ a \emphd{$\phi$-shiftable pair}. Note that if $(e_0, e_1)$ is $\phi$-shiftable, then $(e_1, e_0)$ is $\Shift(\phi, e_0, e_1)$-shiftable and we have
\begin{align}\label{eqn:shift_equivalence}
    \Shift(\Shift(\phi, e_0, e_1), e_1, e_0) \,=\, \phi.    
\end{align}

A \emphd{chain} of length $k$ is a sequence of edges $C = (e_0, \ldots, e_{k-1})$ such that $e_i$ and $e_{i+1}$ are adjacent for every $0 \leq i < k-1$. Let $\Start(C) \defeq e_0$ and $\End(C) \defeq e_{k-1}$ denote the first and the last edges of $C$ respectively and let $\length(C) \defeq k$ be the length of $C$. We also let $E(C) \defeq \set{e_0, \ldots, e_{k-1}}$ be the edge set of $C$ and $V(C)$ be the set of vertices incident to the edges in $E(C)$. Now we define $\Shift(\phi, C)$ iteratively as follows (see Fig.~\ref{fig:shift} for an example):
\begin{align*}
    \Shift_0(\phi, C) \,&\defeq\, \phi, \\
    \Shift_{i+1}(\phi, C) \,&\defeq\, \Shift(\Shift_{i}(\phi, C), e_{i}, e_{i+1}), \, 0 \leq i < k-1, \\
    \Shift(\phi, C) \,&\defeq\, \Shift_{k-1}(\phi, C).
\end{align*}
We say that $C$ is \emphd{$\phi$-shiftable} if the pair $(e_i,e_{i+1})$ is $\Shift_{i}(\phi, C)$-shiftable for all $0\leq i < k-1$. This in particular implies that the unique uncolored edge in $\set{e_0, \ldots, e_{k-1}}$ is $e_0$. For $C = (e_0, \ldots, e_{k-1})$, we let $C^* = (e_{k-1}, \ldots, e_0)$. By \eqref{eqn:shift_equivalence}, it is clear that if $C$ is $\phi$-shiftable, then $C^*$ is $\Shift(\phi, C)$-shiftable and $\Shift(\Shift(\phi, C), C^*) = \phi$.

The next definition captures the class of chains that can be used to create a happy edge: 

\begin{defn}[Happy chains]
    We say that a chain $C$ is \emphd{$\phi$-happy} for a partial coloring $\phi$ if it is $\phi$-shiftable and the edge $\End(C)$ is $\Shift(\phi, C)$-happy.
\end{defn}

Using the terminology of Definition~\ref{defn:aug}, we observe that a $\phi$-happy chain with $\Start(C) = e$ forms an $e$-augmenting subgraph of $G$ with respect to the partial coloring $\phi$. For a $\phi$-happy chain $C$, we let $\aug(\phi, C)$ be a proper coloring obtained from $\Shift(\phi, C)$ by assigning to $\End(C)$ some valid color. Our aim now becomes to develop algorithms for constructing $\phi$-happy chains for a given partial coloring $\phi$.

For a chain $C = (e_0, \ldots, e_{k-1})$ and $1 \leq j \leq k$, we define the \emphd{initial segment} of $C$ as
\[C|j \,\defeq\, (e_0, \ldots, e_{j-1}).\]
If $C$ is $\phi$-shiftable, then $C|j$ is also $\phi$-shiftable and $\Shift(\phi, C|j) = \Shift_{j-1}(\phi, C)$. Given two chains $C = (e_0, \ldots, e_{k-1})$ and $C' = (f_0, \ldots, f_{j-1})$ with $e_{k-1} = f_0$, we define their \emphd{sum} as follows:
\[C+C' \,\defeq\, (e_0, \ldots, e_{k-1} = f_0, \ldots, f_{j-1}).\]
Note that if $C$ is $\phi$-shiftable and $C'$ is $\Shift(\phi, C)$-shiftable, then $C+C'$ is $\phi$-shiftable.

With these definitions in mind, we are now ready to describe the types of chains that will be used as building blocks in our algorithms.

\subsection{Path chains}\label{subsec:pathchains}

The first special type of chains we will consider are path chains:

\begin{defn}[Path chains]
    A chain $P = (e_0, \ldots, e_{k-1})$ is a \emphd{path chain} if the edges $e_1$, \ldots, $e_{k-1}$ form a path in $G$, i.e., if there exist distinct vertices $x_1$, \ldots, $x_k$ such that $e_i = x_ix_{i+1}$ for all $1 \leq i \leq k-1$. We let $x_0$ be the vertex in $e_0 \setminus e_1$ and let $\vstart(P) \defeq x_0$, $\vend(P) \defeq x_k$ denote the first and last vertices on the path chain respectively. (Note that the vertex $x_0$ may coincide with $x_i$ for some $3 \leq i \leq k$; see Fig.~\ref{subfig:unsucc} for an example.) Note that the vertices $\vstart(P)$ and $\vend(P)$ are uniquely determined unless $P$ is a single edge.
\end{defn}

    \begin{defn}[Internal vertices and edges]\label{defn:internal}
        An edge of a path chain $P$ is \emphd{internal} if it is distinct from $\Start(P)$ and $\End(P)$. We let $\IE(P)$ denote the set of all internal edges of $P$. A vertex of $P$ is \emphd{internal} if it is not incident to $\Start(P)$ or $\End(P)$. We let $\IV(P)$ denote the set of all internal vertices of $P$.
    \end{defn}

We are particularly interested in path chains containing at most $2$ colors, except possibly at their first edge. Specifically, given a proper partial coloring $\phi$ and $\alpha$, $\beta \in [\Delta+1]$, we say that a path chain $P$ is an \emphd{$\alpha\beta$-path} under $\phi$ if all edges of $P$ except possibly $\Start(P)$ are colored $\alpha$ or $\beta$. To simplify working with $\alpha\beta$-paths, we introduce the following notation.
Let $G(\phi, \alpha\beta)$ be the spanning subgraph of $G$ with 
\[E(G(\phi, \alpha\beta)) \,\defeq\, \{e\in E\,:\, \phi(e) \in \{\alpha, \beta\}\}.\]
Since $\phi$ is proper, the maximum degree of $G(\phi, \alpha\beta)$ is at most $2$. Hence, the connected components of $G(\phi, \alpha\beta)$ are paths or cycles (an isolated vertex is a path of length $0$).
For $x\in V$, let $G(x;\phi, \alpha\beta)$ denote the connected component of $G(\phi, \alpha\beta)$ containing $x$ and $\deg(x; \phi, \alpha\beta)$ denote the degree of $x$ in $G(\phi, \alpha\beta)$. We say that $x$, $y \in V$ are \emphd{$(\phi, \alpha\beta)$-related} if $G(x;\phi, \alpha\beta) = G(y;\phi, \alpha\beta)$, i.e., if $y$ is reachable from $x$ by a path in $G(\phi, \alpha\beta)$.

\begin{defn}[Hopeful and successful edges]
    Let $\phi$ be a proper partial coloring of $G$ and let $\alpha$, $\beta \in [\Delta + 1]$. Let $xy \in E$ be an edge such that $\phi(xy) = \blank$. We say that $xy$ is \emphd{$(\phi, \alpha\beta)$-hopeful} if $\deg(x;\phi, \alpha\beta) < 2$ and $\deg(y;\phi, \alpha\beta) < 2$. Further, we say that $xy$ is \emphd{$(\phi, \alpha\beta)$-successful} if it is $(\phi, \alpha\beta)$-hopeful and $x$ and $y$ are not $(\phi, \alpha\beta)$-related.
\end{defn}

    Let $\phi$ be a proper partial coloring and let $\alpha$, $\beta \in [\Delta + 1]$. Consider a $(\phi, \alpha\beta)$-hopeful edge $e = xy$. Depending on the degrees of $x$ and $y$ in $G(\phi, \alpha\beta)$, the following two situations are possible:

\begin{figure}[t]
	\centering
	\begin{subfigure}[t]{.4\textwidth}
		\centering
		\begin{tikzpicture}
		\node[draw=none,minimum size=2.5cm,regular polygon,regular polygon sides=7] (P) {};

		\node[circle,fill=black,draw,inner sep=0pt,minimum size=4pt] (x) at (P.corner 4) {};
		\node[circle,fill=black,draw,inner sep=0pt,minimum size=4pt] (y) at (P.corner 5) {};
		\node[circle,fill=black,draw,inner sep=0pt,minimum size=4pt] (a) at (P.corner 6) {};
		\node[circle,fill=black,draw,inner sep=0pt,minimum size=4pt] (b) at (P.corner 7) {};
		\node[circle,fill=black,draw,inner sep=0pt,minimum size=4pt] (c) at (P.corner 1) {};
		\node[circle,fill=black,draw,inner sep=0pt,minimum size=4pt] (d) at (P.corner 2) {};
		\node[circle,fill=black,draw,inner sep=0pt,minimum size=4pt] (e) at (P.corner 3) {};
		
		\node[anchor=north] at (x) {$x$};
		\node[anchor=north] at (y) {$y$};
		
		\draw[thick,dotted] (x) to (y);
		\draw[thick] (y) to node[midway,inner sep=1pt,outer sep=1pt,minimum size=4pt,fill=white] {$\alpha$} (a);
		\draw[thick] (a) to node[midway,inner sep=1pt,outer sep=1pt,minimum size=4pt,fill=white] {$\beta$} (b);
		\draw[thick] (b) to node[midway,inner sep=1pt,outer sep=1pt,minimum size=4pt,fill=white] {$\alpha$} (c);
		\draw[thick] (c) to node[midway,inner sep=1pt,outer sep=1pt,minimum size=4pt,fill=white] {$\beta$} (d);
		\draw[thick] (d) to node[midway,inner sep=1pt,outer sep=1pt,minimum size=4pt,fill=white] {$\alpha$} (e);
		\draw[thick] (e) to node[midway,inner sep=1pt,outer sep=1pt,minimum size=4pt,fill=white] {$\beta$} (x);
		\end{tikzpicture}
		\caption{The edge $xy$ is not $(\phi, \alpha \beta)$-successful.}\label{subfig:unsucc}
	\end{subfigure}%
	\qquad%
	\begin{subfigure}[t]{.4\textwidth}
		\centering
		\begin{tikzpicture}
		\node[draw=none,minimum size=2.5cm,regular polygon,regular polygon sides=7] (P) {};

		\node[circle,fill=black,draw,inner sep=0pt,minimum size=4pt] (x) at (P.corner 4) {};
		\node[circle,fill=black,draw,inner sep=0pt,minimum size=4pt] (y) at (P.corner 5) {};
		\node[circle,fill=black,draw,inner sep=0pt,minimum size=4pt] (a) at (P.corner 6) {};
		\node[circle,fill=black,draw,inner sep=0pt,minimum size=4pt] (b) at (P.corner 7) {};
		\node[circle,fill=black,draw,inner sep=0pt,minimum size=4pt] (c) at (P.corner 1) {};
		\node[circle,fill=black,draw,inner sep=0pt,minimum size=4pt] (d) at (P.corner 2) {};
		\node[circle,fill=black,draw,inner sep=0pt,minimum size=4pt] (e) at (P.corner 3) {};
		\node[circle,fill=black,draw,inner sep=0pt,minimum size=4pt] (f) at (-2.2,0) {}; 
		
		\node[anchor=north] at (x) {$x$};
		\node[anchor=north] at (y) {$y$};
		
		\draw[thick,dotted] (x) to (y);
		\draw[thick] (y) to node[midway,inner sep=1pt,outer sep=1pt,minimum size=4pt,fill=white] {$\alpha$} (a);
		\draw[thick] (a) to node[midway,inner sep=1pt,outer sep=1pt,minimum size=4pt,fill=white] {$\beta$} (b);
		\draw[thick] (b) to node[midway,inner sep=1pt,outer sep=1pt,minimum size=4pt,fill=white] {$\alpha$} (c);
		\draw[thick] (c) to node[midway,inner sep=1pt,outer sep=1pt,minimum size=4pt,fill=white] {$\beta$} (d);
		\draw[thick] (d) to node[midway,inner sep=1pt,outer sep=1pt,minimum size=4pt,fill=white] {$\alpha$} (e);
		\draw[thick] (e) to node[midway,inner sep=1pt,outer sep=1pt,minimum size=4pt,fill=white] {$\beta$} (f);
		\end{tikzpicture}
		\caption{The edge $e = xy$ is $(\phi, \alpha \beta)$-successful.}
	\end{subfigure}
	\caption{The chain $P(e; \phi, \alpha\beta)$.}\label{fig:path}
\end{figure}
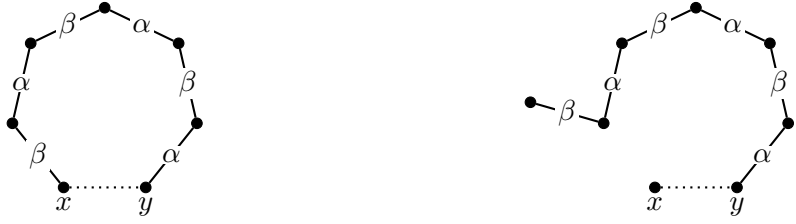

\begin{enumerate}[label=\ep{\textbf{Case \arabic*}},wide]
    \item If $\deg(x;\phi, \alpha\beta) = 0$ or $\deg(y; \phi, \alpha\beta) = 0$ (or both), then either $\alpha$ or $\beta$ is missing at both $x$ and $y$, so
    $e$ is $\phi$-happy.

    \item If $\deg(x;\phi, \alpha\beta) = \deg(y; \phi, \alpha\beta) = 1$, then both $x$ and $y$ miss exactly one of the colors $\alpha$, $\beta$. If they miss the same color, $e$ is $\phi$-happy. Otherwise suppose that, say, $\alpha \in M(\phi, x)$ and $\beta \in M(\phi, y)$.
    We define a path chain $P(e; \phi, \alpha\beta)$ by
    \[
        P(e; \phi, \alpha\beta) \,\defeq\, (e_0=e, e_1, \ldots, e_k),
    \]
    where $(e_1, \ldots, e_k)$ is the maximal path in $G(\phi, \alpha\beta)$ starting at $y$. Note that we have $\phi(e_1) = \alpha$ (in particular, the order of $\alpha$ and $\beta$ in the notation $P(e; \phi, \alpha\beta)$ matters, as $P(e; \phi, \beta\alpha)$ is the maximal path in $G(\phi, \alpha\beta)$ starting at $x$).
    The chain $P(e; \phi, \alpha\beta)$ is $\phi$-shiftable \cite[Fact 4.4]{VizingChain}. Moreover, if $e$ is $(\phi, \alpha\beta)$-successful, then $P(e; \phi, \alpha\beta)$ is $\phi$-happy by \cite[Fact 4.5]{VizingChain} and the maximality of $P(e; \phi, \alpha\beta)$. See Fig.~\ref{fig:path} for an illustration.
\end{enumerate}

\subsection{Fan chains}

    The second type of chains we will be working with are fans:

    \begin{defn}[Fans]
        A \emphd{fan} is a chain of the form $F = (xy_0, \ldots, xy_{k-1})$ where $x$ is referred to as the \emphd{pivot} of the fan and $y_0$, \ldots, $y_{k-1}$ are distinct neighbors of $x$. We let $\Pivot(F) \defeq x$, $\vstart(F) \defeq y_0$ and $\vend(F) \defeq y_{k-1}$ denote the pivot, start, and end vertices of a fan $F$. (This notation is uniquely determined unless $F$ is a single edge.)
    \end{defn}

        The process of shifting a fan is illustrated in Fig.~\ref{fig:fan}.

\begin{figure}[h]
	\centering
	\begin{tikzpicture}[xscale=1.1]
	\begin{scope}
	\node[circle,fill=black,draw,inner sep=0pt,minimum size=4pt] (x) at (0,0) {};
	\node[anchor=north] at (x) {$x$};
	
	\coordinate (O) at (0,0);
	\def\radius{2.6cm}
	
	\node[circle,fill=black,draw,inner sep=0pt,minimum size=4pt] (y0) at (190:\radius) {};
	\node at (195:2.9) {$y = y_0$};
	
	\node[circle,fill=black,draw,inner sep=0pt,minimum size=4pt] (y1) at (165:\radius) {};
	\node at (165:2.9) {$y_1$};
	
	\node[circle,fill=black,draw,inner sep=0pt,minimum size=4pt] (y2) at (140:\radius) {};
	\node at (140:2.9) {$y_2$};

	\node[circle,fill=black,draw,inner sep=0pt,minimum size=4pt] (y4) at (90:\radius) {};
	\node at (90:2.9) {$y_{i-1}$};
	
	\node[circle,fill=black,draw,inner sep=0pt,minimum size=4pt] (y5) at (65:\radius) {};
	\node at (65:2.9) {$y_i$};
	
	\node[circle,fill=black,draw,inner sep=0pt,minimum size=4pt] (y6) at (40:\radius) {};
	\node at (40:3) {$y_{i+1}$};

	\node[circle,fill=black,draw,inner sep=0pt,minimum size=4pt] (y8) at (-10:\radius) {};
	\node at (-10:3.1) {$y_{k-1}$};
	
	\node[circle,inner sep=0pt,minimum size=4pt] at (115:2) {$\ldots$}; 
	\node[circle,inner sep=0pt,minimum size=4pt] at (15:2) {$\ldots$}; 
	
	\draw[thick,dotted] (x) to (y0);
	\draw[thick] (x) to node[midway,inner sep=1pt,outer sep=1pt,minimum size=4pt,fill=white] {$\eta_0$} (y1);
	\draw[thick] (x) to node[midway,inner sep=1pt,outer sep=1pt,minimum size=4pt,fill=white] {$\eta_1$} (y2);
	
	\draw[thick] (x) to node[midway,inner sep=1pt,outer sep=1pt,minimum size=4pt,fill=white] {$\eta_{i-2}$} (y4);
	\draw[thick] (x) to node[pos=0.75,inner sep=1pt,outer sep=1pt,minimum size=4pt,fill=white] {$\eta_{i-1}$} (y5);
	\draw[thick] (x) to node[midway,inner sep=1pt,outer sep=1pt,minimum size=4pt,fill=white] {$\eta_i$} (y6);
	
	\draw[thick] (x) to node[midway,inner sep=1pt,outer sep=1pt,minimum size=4pt,fill=white] {$\eta_{k-2}$} (y8);
	\end{scope}
	
	\draw[-{Stealth[length=1.6mm]},very thick,decoration = {snake,pre length=3pt,post length=7pt,},decorate] (2.9,1) to node[midway,anchor=south]{$\Shift$} (5,1);
	
	\begin{scope}[xshift=8.3cm]
	\node[circle,fill=black,draw,inner sep=0pt,minimum size=4pt] (x) at (0,0) {};
	\node[anchor=north] at (x) {$x$};
	
	\coordinate (O) at (0,0);
	\def\radius{2.6cm}
	
	\node[circle,fill=black,draw,inner sep=0pt,minimum size=4pt] (y0) at (190:\radius) {};
	\node at (195:2.9) {$y = y_0$};
	
	\node[circle,fill=black,draw,inner sep=0pt,minimum size=4pt] (y1) at (165:\radius) {};
	\node at (165:2.9) {$y_1$};
	
	\node[circle,fill=black,draw,inner sep=0pt,minimum size=4pt] (y2) at (140:\radius) {};
	\node at (140:2.9) {$y_2$};

	\node[circle,fill=black,draw,inner sep=0pt,minimum size=4pt] (y4) at (90:\radius) {};
	\node at (90:2.9) {$y_{i-1}$};
	
	\node[circle,fill=black,draw,inner sep=0pt,minimum size=4pt] (y5) at (65:\radius) {};
	\node at (65:2.9) {$y_i$};
	
	\node[circle,fill=black,draw,inner sep=0pt,minimum size=4pt] (y6) at (40:\radius) {};
	\node at (40:3) {$y_{i+1}$};

	\node[circle,fill=black,draw,inner sep=0pt,minimum size=4pt] (y8) at (-10:\radius) {};
	\node at (-10:3.1) {$y_{k-1}$};
	
	\node[circle,inner sep=0pt,minimum size=4pt] at (115:2) {$\ldots$}; 
	\node[circle,inner sep=0pt,minimum size=4pt] at (15:2) {$\ldots$}; 
	
	\draw[thick] (x) to node[midway,inner sep=1pt,outer sep=1pt,minimum size=4pt,fill=white] {$\eta_0$} (y0);
	\draw[thick] (x) to node[midway,inner sep=1pt,outer sep=1pt,minimum size=4pt,fill=white] {$\eta_1$} (y1);
	\draw[thick] (x) to node[midway,inner sep=1pt,outer sep=1pt,minimum size=4pt,fill=white] {$\eta_2$} (y2);
	
	\draw[thick] (x) to node[midway,inner sep=1pt,outer sep=1pt,minimum size=4pt,fill=white] {$\eta_{i-1}$} (y4);
	\draw[thick] (x) to node[pos=0.75,inner sep=1pt,outer sep=1pt,minimum size=4pt,fill=white] {$\eta_i$} (y5);
	\draw[thick] (x) to node[midway,inner sep=1pt,outer sep=1pt,minimum size=4pt,fill=white] {$\eta_{i+1}$} (y6);
	
	\draw[thick, dotted] (x) to (y8);
	\end{scope}
	\end{tikzpicture}
	\caption{The process of shifting a fan.}\label{fig:fan}
\end{figure}
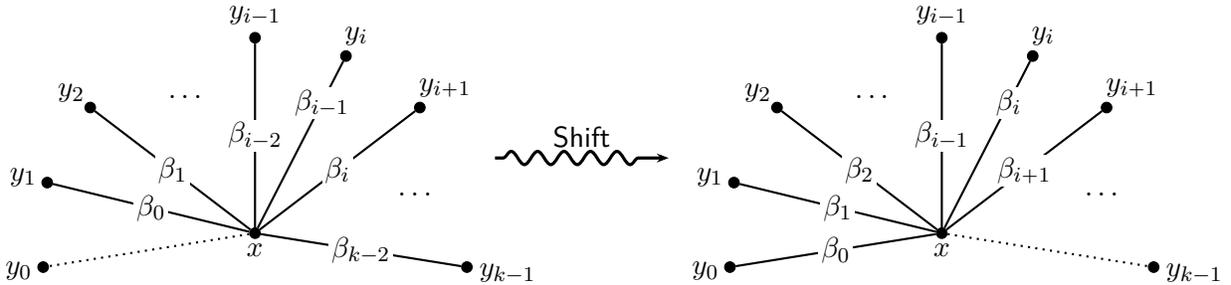

\vspace*{-9pt}

    \begin{defn}[Hopeful, successful, and disappointed fans]\label{defn:hsf}
        Let $\phi$ be a proper partial coloring and let $\alpha$, $\beta \in [\Delta+1]$. Let $F$ be a $\phi$-shiftable fan with $x \defeq \Pivot(F)$ and $y \defeq \vend(F)$ and suppose that $F$ is not $\phi$-happy (which means that the edge $xy$ is not $\Shift(\phi, F)$-happy). We say that $F$ is:
        \begin{itemize}
            \item \emphd{$(\phi, \alpha\beta)$-hopeful} if $\deg(x;\phi, \alpha\beta) < 2$ and $\deg(y;\phi, \alpha\beta) < 2$;
            \item \emphd{$(\phi, \alpha\beta)$-successful} if it is $(\phi, \alpha\beta)$-hopeful and $x$ and $y$ are not $(\Shift(\phi, F), \alpha\beta)$-related;
            \item \emphd{$(\phi, \alpha\beta)$-disappointed} if it is $(\phi, \alpha\beta)$-hopeful but not $(\phi, \alpha\beta)$-successful.
        \end{itemize}
    \end{defn}

Note that, in the setting of Definition~\ref{defn:hsf}, we have $M(\phi, x) = M(\Shift(\phi, F), x)$ and $M(\phi, y) \subseteq M(\Shift(\phi, F), y)$. Therefore, if $F$ is $(\phi, \alpha\beta)$-hopeful (resp.~successful), then
\[\deg(y;\Shift(\phi, F), \alpha\beta) \leq \deg(y;\phi, \alpha\beta) < 2, \quad \deg(x;\phi, \alpha\beta) = \deg(x;\Shift(\phi, F), \alpha\beta) < 2,\]
and so $xy$ is $(\Shift(\phi, F), \alpha\beta)$-hopeful (resp. successful) \cite[Fact 4.7]{VizingChain}.

\subsection{Vizing chains}\label{subsec:Vizingdefn}

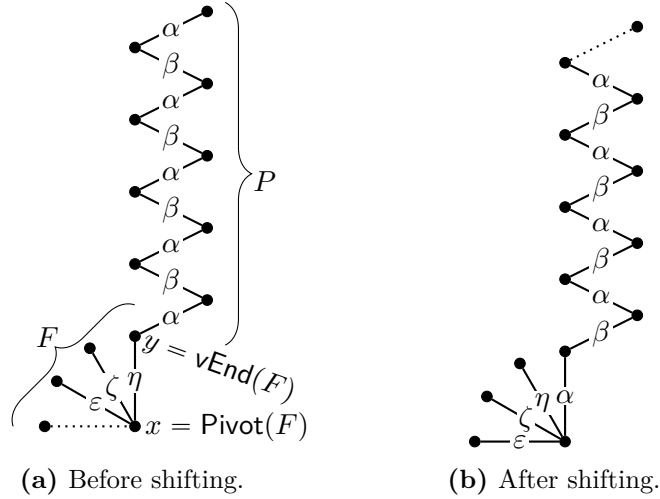
\begin{figure}[t]
	\centering
	\begin{subfigure}[t]{.2\textwidth}
		\centering
		\begin{tikzpicture}[scale=1.2]
		\node[circle,fill=black,draw,inner sep=0pt,minimum size=4pt] (a) at (0,0) {};
		\node[circle,fill=black,draw,inner sep=0pt,minimum size=4pt] (b) at (1,0) {};
		\path (b) ++(150:1) node[circle,fill=black,draw,inner sep=0pt,minimum size=4pt] (c) {}; 
		\path (b) ++(120:1) node[circle,fill=black,draw,inner sep=0pt,minimum size=4pt] (d) {}; 
		\node[circle,fill=black,draw,inner sep=0pt,minimum size=4pt] (e) at (1,1) {};
		\node[circle,fill=black,draw,inner sep=0pt,minimum size=4pt] (f) at (1.8,1.4) {};
		\node[circle,fill=black,draw,inner sep=0pt,minimum size=4pt] (g) at (1,1.8) {};
		\node[circle,fill=black,draw,inner sep=0pt,minimum size=4pt] (h) at (1.8,2.2) {};
		\node[circle,fill=black,draw,inner sep=0pt,minimum size=4pt] (i) at (1,2.6) {};
		\node[circle,fill=black,draw,inner sep=0pt,minimum size=4pt] (j) at (1.8,3) {};
		\node[circle,fill=black,draw,inner sep=0pt,minimum size=4pt] (k) at (1,3.4) {};
		\node[circle,fill=black,draw,inner sep=0pt,minimum size=4pt] (l) at (1.8,3.8) {};
		\node[circle,fill=black,draw,inner sep=0pt,minimum size=4pt] (m) at (1,4.2) {};
		\node[circle,fill=black,draw,inner sep=0pt,minimum size=4pt] (n) at (1.8,4.6) {};

		\draw[ thick,dotted] (a) -- (b);
		\draw[ thick] (b) to node[midway,inner sep=1pt,outer sep=1pt,minimum size=4pt,fill=white] {$\epsilon$} (c) (b) to node[midway,inner sep=0.5pt,outer sep=0.5pt,minimum size=4pt,fill=white] {$\zeta$} (d) (b) to node[midway,inner sep=1pt,outer sep=1pt,minimum size=4pt,fill=white] {$\eta$} (e) to node[midway,inner sep=1pt,outer sep=1pt,minimum size=4pt,fill=white] {$\alpha$} (f) to node[midway,inner sep=1pt,outer sep=1pt,minimum size=4pt,fill=white] {$\beta$} (g) to node[midway,inner sep=1pt,outer sep=1pt,minimum size=4pt,fill=white] {$\alpha$} (h) to node[midway,inner sep=1pt,outer sep=1pt,minimum size=4pt,fill=white] {$\beta$} (i) to node[midway,inner sep=1pt,outer sep=1pt,minimum size=4pt,fill=white] {$\alpha$} (j) to node[midway,inner sep=1pt,outer sep=1pt,minimum size=4pt,fill=white] {$\beta$} (k) to node[midway,inner sep=1pt,outer sep=1pt,minimum size=4pt,fill=white] {$\alpha$} (l) to node[midway,inner sep=1pt,outer sep=1pt,minimum size=4pt,fill=white] {$\beta$} (m) to node[midway,inner sep=1pt,outer sep=1pt,minimum size=4pt,fill=white] {$\alpha$} (n);
		
		\draw[decoration={brace,amplitude=10pt,mirror},decorate] (2,0.9) -- node [midway,below,xshift=15pt,yshift=5pt] {$P$} (2,4.7);
		
		\draw[decoration={brace,amplitude=10pt},decorate] (-0.35,0) -- node [midway,above,yshift=3pt,xshift=-10pt] {$F$} (1,1.35);
		
		\node[anchor=west, rotate=-20] at (e) {$y = \vend(F)$};
		\node[anchor=west] at (b) {$x = \Pivot(F)$};
		\end{tikzpicture}
		\caption{Before shifting.}\label{fig:Viz_col:initial}
	\end{subfigure}%
	\qquad\qquad\qquad%
	\begin{subfigure}[t]{.2\textwidth}
		\centering
		\begin{tikzpicture}[scale=1.2]
		\node[circle,fill=black,draw,inner sep=0pt,minimum size=4pt] (a) at (0,0) {};
		\node[circle,fill=black,draw,inner sep=0pt,minimum size=4pt] (b) at (1,0) {};
		\path (b) ++(150:1) node[circle,fill=black,draw,inner sep=0pt,minimum size=4pt] (c) {}; 
		\path (b) ++(120:1) node[circle,fill=black,draw,inner sep=0pt,minimum size=4pt] (d) {}; 
		\node[circle,fill=black,draw,inner sep=0pt,minimum size=4pt] (e) at (1,1) {};
		\node[circle,fill=black,draw,inner sep=0pt,minimum size=4pt] (f) at (1.8,1.4) {};
		\node[circle,fill=black,draw,inner sep=0pt,minimum size=4pt] (g) at (1,1.8) {};
		\node[circle,fill=black,draw,inner sep=0pt,minimum size=4pt] (h) at (1.8,2.2) {};
		\node[circle,fill=black,draw,inner sep=0pt,minimum size=4pt] (i) at (1,2.6) {};
		\node[circle,fill=black,draw,inner sep=0pt,minimum size=4pt] (j) at (1.8,3) {};
		\node[circle,fill=black,draw,inner sep=0pt,minimum size=4pt] (k) at (1,3.4) {};
		\node[circle,fill=black,draw,inner sep=0pt,minimum size=4pt] (l) at (1.8,3.8) {};
		\node[circle,fill=black,draw,inner sep=0pt,minimum size=4pt] (m) at (1,4.2) {};
		\node[circle,fill=black,draw,inner sep=0pt,minimum size=4pt] (n) at (1.8,4.6) {};

		\draw[ thick] (a) to node[midway,inner sep=1pt,outer sep=1pt,minimum size=4pt,fill=white] {$\epsilon$} (b);
		\draw[ thick] (b) to node[midway,inner sep=0.5pt,outer sep=0.5pt,minimum size=4pt,fill=white] {$\zeta$} (c) (b) to node[midway,inner sep=1pt,outer sep=1pt,minimum size=4pt,fill=white] {$\eta$} (d) (b) to node[midway,inner sep=1pt,outer sep=1pt,minimum size=4pt,fill=white] {$\alpha$} (e) to node[midway,inner sep=1pt,outer sep=1pt,minimum size=4pt,fill=white] {$\beta$} (f) to node[midway,inner sep=1pt,outer sep=1pt,minimum size=4pt,fill=white] {$\alpha$} (g) to node[midway,inner sep=1pt,outer sep=1pt,minimum size=4pt,fill=white] {$\beta$} (h) to node[midway,inner sep=1pt,outer sep=1pt,minimum size=4pt,fill=white] {$\alpha$} (i) to node[midway,inner sep=1pt,outer sep=1pt,minimum size=4pt,fill=white] {$\beta$} (j) to node[midway,inner sep=1pt,outer sep=1pt,minimum size=4pt,fill=white] {$\alpha$} (k) to node[midway,inner sep=1pt,outer sep=1pt,minimum size=4pt,fill=white] {$\beta$} (l) to node[midway,inner sep=1pt,outer sep=1pt,minimum size=4pt,fill=white] {$\alpha$} (m);
  
        \draw[ thick,dotted] (m) -- (n);
		
		\end{tikzpicture}
		\caption{After shifting.}\label{fig:Viz_col:shifted}
	\end{subfigure}%
	\caption{A Vizing chain $C = F + P$ before and after shifting.}\label{fig:Viz_col}
\end{figure}

    The third type of chain we will be working with are Vizing chains, which are formed by combining a fan and a two-colored path:

\begin{defn}[Vizing chains]
    A \emphd{Vizing chain} in a proper partial coloring $\phi$ is a chain of the form $F + P$, where $F$ is a $(\phi, \alpha\beta)$-hopeful fan for some $\alpha$, $\beta \in [\Delta + 1]$ and $P$ is an initial segment of the path chain $P(\End(F);  \Shift(\phi, F), \alpha\beta)$ with $\vstart(P) = \Pivot(F)$.
    In particular, letting $x \defeq \Pivot(F)$ and $y \defeq \vend(F)$, the chain $P$ consists of the edge $xy$ followed by a (not necessarily maximal) path starting at $y$ whose edges are colored $\alpha$ and $\beta$ under the coloring $\Shift(\phi, F)$ such that $\beta \in M(\Shift(\phi, F), y)$. See Fig.~\ref{fig:Viz_col} for an illustration. Note that we allow $P$ to comprise only the single edge $\End(F)$, in which case the Vizing chain coincides with the fan $F$.
\end{defn}

    It follows from standard proofs of \hyperref[theo:Vizing]{Vizing's theorem} that for any uncolored edge $e$, one can find a $\phi$-happy Vizing chain $C = F+P$ with $\Start(F) = e$ (see Algorithm~\ref{alg:viz_chain} below for a construction).

Combining several Vizing chains yields a multi-step Vizing chain:

\begin{defn}[Multi-step Vizing chains]
    A \emphd{$k$-step Vizing chain} is a chain of the form \[C \,=\, C_0 + \cdots + C_{k-1},\] where $C_i = F_i + P_i$ is a Vizing chain in the coloring $\Shift(\phi, C_0 + \cdots + C_{i-1})$ such that $\vend(P_i) = \vstart(F_{i+1})$ for all $0 \leq i < k-1$. See Fig.~\ref{fig:multi_Viz_chain} for an illustration.
\end{defn}

\begin{figure}[t]
    \begin{subfigure}[t]{.4\textwidth}
    	\centering
    	\begin{tikzpicture}[scale=1.3]
    	\node[circle,fill=black,draw,inner sep=0pt,minimum size=4pt] (a) at (0,0) {};
    	\node[circle,fill=black,draw,inner sep=0pt,minimum size=4pt] (b) at (1,0) {};
    	\path (b) ++(150:1) node[circle,fill=black,draw,inner sep=0pt,minimum size=4pt] (c) {};
    	\path (b) ++(120:1) node[circle,fill=black,draw,inner sep=0pt,minimum size=4pt] (d) {};
    	\node[circle,fill=black,draw,inner sep=0pt,minimum size=4pt] (e) at (1,1) {};
    	\node[circle,fill=black,draw,inner sep=0pt,minimum size=4pt] (f) at (1.8,1.4) {};
    	\node[circle,fill=black,draw,inner sep=0pt,minimum size=4pt] (g) at (1,1.8) {};
    	\node[circle,fill=black,draw,inner sep=0pt,minimum size=4pt] (h) at (1.8,2.2) {};
    	\node[circle,fill=black,draw,inner sep=0pt,minimum size=4pt] (i) at (1,2.6) {};
    	\node[circle,fill=black,draw,inner sep=0pt,minimum size=4pt] (j) at (1.8,3) {};
    	\node[circle,fill=black,draw,inner sep=0pt,minimum size=4pt] (k) at (1,3.4) {};
    	\node[circle,fill=black,draw,inner sep=0pt,minimum size=4pt] (l) at (1,4.4) {};
    	
    	\path (k) ++(60:1) node[circle,fill=black,draw,inner sep=0pt,minimum size=4pt] (m) {};
    	\path (k) ++(30:1) node[circle,fill=black,draw,inner sep=0pt,minimum size=4pt] (n) {};
    	\path (k) ++(0:1) node[circle,fill=black,draw,inner sep=0pt,minimum size=4pt] (o) {};
    	
    	\node[circle,fill=black,draw,inner sep=0pt,minimum size=4pt] (p) at (2.4,4.2) {};
    	\node[circle,fill=black,draw,inner sep=0pt,minimum size=4pt] (q) at (2.8,3.4) {};
    	\node[circle,fill=black,draw,inner sep=0pt,minimum size=4pt] (r) at (3.2,4.2) {};
    	\node[circle,fill=black,draw,inner sep=0pt,minimum size=4pt] (s) at (3.6,3.4) {};
    	\node[circle,fill=black,draw,inner sep=0pt,minimum size=4pt] (t) at (4,4.2) {};

    	\node[circle,fill=black,draw,inner sep=0pt,minimum size=4pt] (v) at (5,4.2) {};
    	\path (t) ++(-30:1) node[circle,fill=black,draw,inner sep=0pt,minimum size=4pt] (w) {};
    	\path (t) ++(-60:1) node[circle,fill=black,draw,inner sep=0pt,minimum size=4pt] (x) {};
    	\path (t) ++(-90:1) node[circle,fill=black,draw,inner sep=0pt,minimum size=4pt] (y) {};

    	\node[circle,fill=black,draw,inner sep=0pt,minimum size=4pt] (z) at (4.8,2.8) {};
    	\node[circle,fill=black,draw,inner sep=0pt,minimum size=4pt] (aa) at (4,2.4) {};
    	\node[circle,fill=black,draw,inner sep=0pt,minimum size=4pt] (ab) at (4.8,2) {};
    	\node[circle,fill=black,draw,inner sep=0pt,minimum size=4pt] (ac) at (4,1.6) {};
    	\node[circle,fill=black,draw,inner sep=0pt,minimum size=4pt] (ad) at (4.8,1.2) {};
    	\node[circle,fill=black,draw,inner sep=0pt,minimum size=4pt] (ae) at (4.8,0.2) {};
    	
    	\path (ad) ++(-120:1) node[circle,fill=black,draw,inner sep=0pt,minimum size=4pt] (ag) {};
    	\path (ad) ++(-150:1) node[circle,fill=black,draw,inner sep=0pt,minimum size=4pt] (ah) {};
    	\path (ad) ++(-180:1) node[circle,fill=black,draw,inner sep=0pt,minimum size=4pt] (ai) {};
    	
    	\node[circle,fill=black,draw,inner sep=0pt,minimum size=4pt] (aj) at (3.4,0.4) {};
    	\node[circle,fill=black,draw,inner sep=0pt,minimum size=4pt] (ak) at (3,1.2) {};
    	\node[circle,fill=black,draw,inner sep=0pt,minimum size=4pt] (al) at (2.6,0.4) {};
    	\node[circle,fill=black,draw,inner sep=0pt,minimum size=4pt] (am) at (2.2,1.2) {};
    	
    	\begin{scope}[every node/.style={scale=0.7}]
    	\draw[ thick,dotted] (a) -- (b);
    	\draw[ thick] (b) -- (c) (b) -- (d) (b) -- (e) to node[midway,inner sep=1pt,outer sep=1pt,minimum size=4pt,fill=white] {$\alpha_0$} (f) to node[midway,inner sep=1pt,outer sep=1pt,minimum size=4pt,fill=white] {$\beta_0$} (g) to node[midway,inner sep=1pt,outer sep=1pt,minimum size=4pt,fill=white] {$\alpha_0$} (h) to node[midway,inner sep=1pt,outer sep=1pt,minimum size=4pt,fill=white] {$\beta_0$} (i) to node[midway,inner sep=1pt,outer sep=1pt,minimum size=4pt,fill=white] {$\alpha_0$} (j) to node[midway,inner sep=1pt,outer sep=1pt,minimum size=4pt,fill=white] {$\beta_0$} (k) to node[midway,inner sep=1pt,outer sep=1pt,minimum size=4pt,fill=white] {$\alpha_0$} (l);
    	
    	\draw[ thick] (k) -- (m) (k) -- (n) (k) -- (o) to node[midway,inner sep=1pt,outer sep=1pt,minimum size=4pt,fill=white] {$\alpha_1$} (p) to node[midway,inner sep=1pt,outer sep=1pt,minimum size=4pt,fill=white] {$\beta_1$} (q) to node[midway,inner sep=1pt,outer sep=1pt,minimum size=4pt,fill=white] {$\alpha_1$} (r) to node[midway,inner sep=1pt,outer sep=1pt,minimum size=4pt,fill=white] {$\beta_1$} (s) to node[midway,inner sep=1pt,outer sep=1pt,minimum size=4pt,fill=white] {$\alpha_1$} (t) to node[midway,inner sep=1pt,outer sep=1pt,minimum size=4pt,fill=white] {$\beta_1$} (v);
    	
    	\draw[ thick] (t) -- (w) (t) -- (x) (t) -- (y) to node[midway,inner sep=1pt,outer sep=1pt,minimum size=4pt,fill=white] {$\alpha_2$} (z) to node[midway,inner sep=1pt,outer sep=1pt,minimum size=4pt,fill=white] {$\beta_2$} (aa) to node[midway,inner sep=1pt,outer sep=1pt,minimum size=4pt,fill=white] {$\alpha_2$} (ab) to node[midway,inner sep=1pt,outer sep=1pt,minimum size=4pt,fill=white] {$\beta_2$} (ac) to node[midway,inner sep=1pt,outer sep=1pt,minimum size=4pt,fill=white] {$\alpha_2$} (ad) to node[midway,inner sep=1pt,outer sep=1pt,minimum size=4pt,fill=white] {$\beta_2$} (ae);
    	
    	\draw[thick] (ad) -- (ag) (ad) -- (ah) (ad) -- (ai) to node[midway,inner sep=1pt,outer sep=1pt,minimum size=4pt,fill=white] {$\alpha_3$} (aj) to node[midway,inner sep=1pt,outer sep=1pt,minimum size=4pt,fill=white] {$\beta_3$} (ak) to node[midway,inner sep=1pt,outer sep=1pt,minimum size=4pt,fill=white] {$\alpha_3$} (al) to node[midway,inner sep=1pt,outer sep=1pt,minimum size=4pt,fill=white] {$\beta_3$} (am);
        \end{scope}
    	
    	\end{tikzpicture}
    	\caption{Before shifting}
    \end{subfigure}
    \qquad \qquad
    \begin{subfigure}[t]{.45\textwidth}
    	\centering
    	\begin{tikzpicture}[scale=1.3]
    	\node[circle,fill=black,draw,inner sep=0pt,minimum size=4pt] (a) at (0,0) {};
    	\node[circle,fill=black,draw,inner sep=0pt,minimum size=4pt] (b) at (1,0) {};
    	\path (b) ++(150:1) node[circle,fill=black,draw,inner sep=0pt,minimum size=4pt] (c) {};
    	\path (b) ++(120:1) node[circle,fill=black,draw,inner sep=0pt,minimum size=4pt] (d) {};
    	\node[circle,fill=black,draw,inner sep=0pt,minimum size=4pt] (e) at (1,1) {};
    	\node[circle,fill=black,draw,inner sep=0pt,minimum size=4pt] (f) at (1.8,1.4) {};
    	\node[circle,fill=black,draw,inner sep=0pt,minimum size=4pt] (g) at (1,1.8) {};
    	\node[circle,fill=black,draw,inner sep=0pt,minimum size=4pt] (h) at (1.8,2.2) {};
    	\node[circle,fill=black,draw,inner sep=0pt,minimum size=4pt] (i) at (1,2.6) {};
    	\node[circle,fill=black,draw,inner sep=0pt,minimum size=4pt] (j) at (1.8,3) {};
    	\node[circle,fill=black,draw,inner sep=0pt,minimum size=4pt] (k) at (1,3.4) {};
    	\node[circle,fill=black,draw,inner sep=0pt,minimum size=4pt] (l) at (1,4.4) {};
    	
    	\path (k) ++(60:1) node[circle,fill=black,draw,inner sep=0pt,minimum size=4pt] (m) {};
    	\path (k) ++(30:1) node[circle,fill=black,draw,inner sep=0pt,minimum size=4pt] (n) {};
    	\path (k) ++(0:1) node[circle,fill=black,draw,inner sep=0pt,minimum size=4pt] (o) {};
    	
    	\node[circle,fill=black,draw,inner sep=0pt,minimum size=4pt] (p) at (2.4,4.2) {};
    	\node[circle,fill=black,draw,inner sep=0pt,minimum size=4pt] (q) at (2.8,3.4) {};
    	\node[circle,fill=black,draw,inner sep=0pt,minimum size=4pt] (r) at (3.2,4.2) {};
    	\node[circle,fill=black,draw,inner sep=0pt,minimum size=4pt] (s) at (3.6,3.4) {};
    	\node[circle,fill=black,draw,inner sep=0pt,minimum size=4pt] (t) at (4,4.2) {};

    	\node[circle,fill=black,draw,inner sep=0pt,minimum size=4pt] (v) at (5,4.2) {};
    	\path (t) ++(-30:1) node[circle,fill=black,draw,inner sep=0pt,minimum size=4pt] (w) {};
    	\path (t) ++(-60:1) node[circle,fill=black,draw,inner sep=0pt,minimum size=4pt] (x) {};
    	\path (t) ++(-90:1) node[circle,fill=black,draw,inner sep=0pt,minimum size=4pt] (y) {};

    	\node[circle,fill=black,draw,inner sep=0pt,minimum size=4pt] (z) at (4.8,2.8) {};
    	\node[circle,fill=black,draw,inner sep=0pt,minimum size=4pt] (aa) at (4,2.4) {};
    	\node[circle,fill=black,draw,inner sep=0pt,minimum size=4pt] (ab) at (4.8,2) {};
    	\node[circle,fill=black,draw,inner sep=0pt,minimum size=4pt] (ac) at (4,1.6) {};
    	\node[circle,fill=black,draw,inner sep=0pt,minimum size=4pt] (ad) at (4.8,1.2) {};
    	\node[circle,fill=black,draw,inner sep=0pt,minimum size=4pt] (ae) at (4.8,0.2) {};
    	
    	\path (ad) ++(-120:1) node[circle,fill=black,draw,inner sep=0pt,minimum size=4pt] (ag) {};
    	\path (ad) ++(-150:1) node[circle,fill=black,draw,inner sep=0pt,minimum size=4pt] (ah) {};
    	\path (ad) ++(-180:1) node[circle,fill=black,draw,inner sep=0pt,minimum size=4pt] (ai) {};
    	
    	\node[circle,fill=black,draw,inner sep=0pt,minimum size=4pt] (aj) at (3.4,0.4) {};
    	\node[circle,fill=black,draw,inner sep=0pt,minimum size=4pt] (ak) at (3,1.2) {};
    	\node[circle,fill=black,draw,inner sep=0pt,minimum size=4pt] (al) at (2.6,0.4) {};
    	\node[circle,fill=black,draw,inner sep=0pt,minimum size=4pt] (am) at (2.2,1.2) {};
    	
    	\begin{scope}[every node/.style={scale=0.7}]
    	\draw[ thick] (a) -- (b) -- (c) (b) -- (d) (b) to node[midway,inner sep=1pt,outer sep=1pt,minimum size=4pt,fill=white] {$\alpha_0$} (e) to node[midway,inner sep=1pt,outer sep=1pt,minimum size=4pt,fill=white] {$\beta_0$} (f) to node[midway,inner sep=1pt,outer sep=1pt,minimum size=4pt,fill=white] {$\alpha_0$} (g) to node[midway,inner sep=1pt,outer sep=1pt,minimum size=4pt,fill=white] {$\beta_0$} (h) to node[midway,inner sep=1pt,outer sep=1pt,minimum size=4pt,fill=white] {$\alpha_0$} (i) to node[midway,inner sep=1pt,outer sep=1pt,minimum size=4pt,fill=white] {$\beta_0$} (j) to node[midway,inner sep=1pt,outer sep=1pt,minimum size=4pt,fill=white] {$\alpha_0$} (k);
    	
    	\draw[ thick] (l) -- (k) -- (m) (k) -- (n) (k) to node[midway,inner sep=1pt,outer sep=1pt,minimum size=4pt,fill=white] {$\alpha_1$} (o) to node[midway,inner sep=1pt,outer sep=1pt,minimum size=4pt,fill=white] {$\beta_1$} (p) to node[midway,inner sep=1pt,outer sep=1pt,minimum size=4pt,fill=white] {$\alpha_1$} (q) to node[midway,inner sep=1pt,outer sep=1pt,minimum size=4pt,fill=white] {$\beta_1$} (r) to node[midway,inner sep=1pt,outer sep=1pt,minimum size=4pt,fill=white] {$\alpha_1$} (s) to node[midway,inner sep=1pt,outer sep=1pt,minimum size=4pt,fill=white] {$\beta_1$} (t);
    	
    	\draw[ thick] (v) -- (t) -- (w) (t) -- (x) (t) to node[midway,inner sep=1pt,outer sep=1pt,minimum size=4pt,fill=white] {$\alpha_2$} (y) to node[midway,inner sep=1pt,outer sep=1pt,minimum size=4pt,fill=white] {$\beta_2$} (z) to node[midway,inner sep=1pt,outer sep=1pt,minimum size=4pt,fill=white] {$\alpha_2$} (aa) to node[midway,inner sep=1pt,outer sep=1pt,minimum size=4pt,fill=white] {$\beta_2$} (ab) to node[midway,inner sep=1pt,outer sep=1pt,minimum size=4pt,fill=white] {$\alpha_2$} (ac) to node[midway,inner sep=1pt,outer sep=1pt,minimum size=4pt,fill=white] {$\beta_2$} (ad);
    	
    	\draw[thick] (ae) -- (ad) -- (ag) (ad) -- (ah) (ad) to node[midway,inner sep=1pt,outer sep=1pt,minimum size=4pt,fill=white] {$\alpha_3$} (ai) to node[midway,inner sep=1pt,outer sep=1pt,minimum size=4pt,fill=white] {$\beta_3$} (aj) to node[midway,inner sep=1pt,outer sep=1pt,minimum size=4pt,fill=white] {$\alpha_3$} (ak) to node[midway,inner sep=1pt,outer sep=1pt,minimum size=4pt,fill=white] {$\beta_3$} (al);
            \draw[thick, dotted] (al) -- (am);
        \end{scope}
    	
    	\end{tikzpicture}
    	\caption{After shifting}
    \end{subfigure}
    \caption{A 4-step Vizing chain before and after shifting.}\label{fig:multi_Viz_chain}
\end{figure}
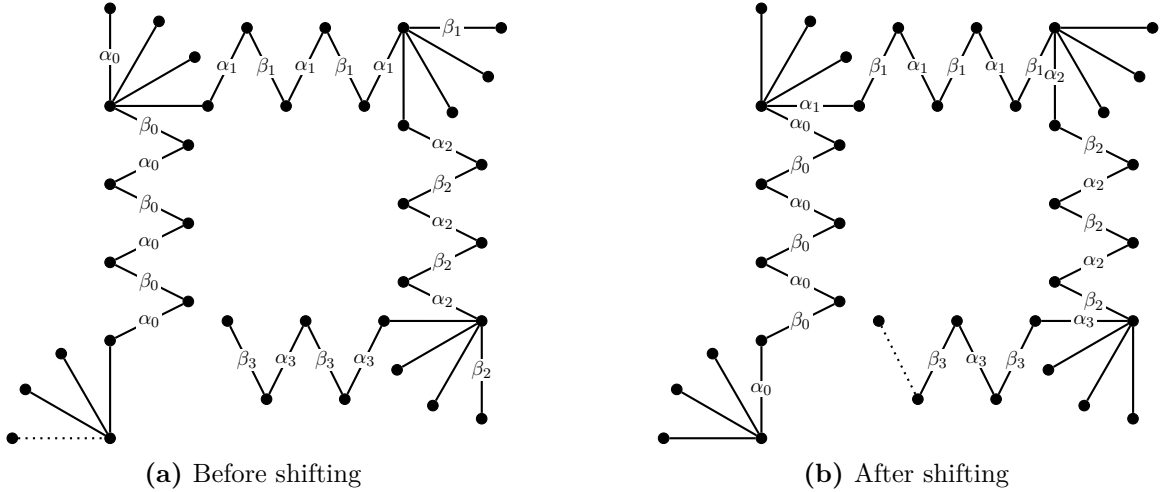

    As discussed in \S\ref{subsec:MSVC_history}, a central result of \cite{VizingChain} is that for each uncolored edge $e \in E$, one can find a $\phi$-happy $k$-step Vizing chain $C = F_0 + P_0 + \cdots + F_{k-1} + P_{k-1}$ with $\Start(C) = e$ such that \[k \,\leq\, \poly(\Delta)\log n \qquad \text{and} \qquad \length(P_i) \,\leq\, \poly(\Delta)\log n \text{ for all } 0 \leq i < k,\] and hence $\length(C) \leq \poly(\Delta) \log^2n$. In \cite{Christ}, Christiansen improved these bounds to \[k \,\leq\, \poly(\Delta)\log n \qquad \text{and} \qquad \length(P_i) \,\leq\, \poly(\Delta) \text{ for all } 0 \leq i < k,\] thus obtaining a $\phi$-happy multi-step Vizing chain $C$ with $\length(C) \leq \poly(\Delta) \log n$. In this paper we will present an efficient randomized algorithm that outputs such a chain.

\subsection{Fan Algorithms}\label{subsec:fan_algos}

We conclude this section by outlining two previously known procedures to construct fan chains, which will be used as subroutines in our main algorithms.

    We start with the \hyperref[alg:first_fan]{First Fan Algorithm}, which is described formally in Algorithm \ref{alg:first_fan}. The algorithm takes as input a proper partial coloring $\phi$, an uncolored edge $e = xy$, and a choice of a pivot vertex $x \in e$.
The output is a tuple $(F, \beta, j)$ such that:
\begin{itemize}
    \item $F$ is a fan with $\Start(F) = e$ and $\Pivot(F) = x$,
    \item $\beta \in [\Delta + 1]$ is a color and $j$ is an index such that $\beta \in M(\phi, \vend(F))\cap M(\phi, \vend(F|j))$.
\end{itemize}
    
We first assign a color $\beta(z) \in M(\phi, z)$ to each $z\in N_G(x)$. 
This assignment can be arbitrary, so we will let $\beta(z)$ be the minimum element in $M(\phi, z)$. To construct $F$, we follow a series of iterations. At the start of each iteration, we have a fan $F = (xy_0, \ldots, xy_k)$ where $xy_0 = e$. 
If $\beta(y_k) \in M(\phi, x)$, then $F$ is $\phi$-happy and we return $(F, \beta(y_k), k+1)$.
If not, let $z \in N_G(x)$ be the unique neighbor of $x$ such that $\phi(xz) = \beta(y_k)$. We now have two cases.
\begin{enumerate}[label=\ep{\textbf{Case \arabic*}},wide]
    \item $z\notin \set{y_0, \ldots, y_k}$. Then we update $F$ to $(xy_0, \ldots, xy_k, xz)$ and continue.
    \item $z = y_j$ for some $0 \leq j \leq k$. Note that $\phi(xy_0) = \blank$ and $\beta(y_k) \in M(\phi, y_k)$, so we must have $1 \leq j < k$. 
    In this case, we return $(F, \beta(y_k), j)$.
\end{enumerate}

\begin{algorithm}[h]\algsize
\caption{First Fan}\label{alg:first_fan}
\begin{flushleft}
\textbf{Input}: A proper partial edge-coloring $\phi$, an uncolored edge $e = xy$, and a vertex $x \in e$. \\
\textbf{Output}: A fan $F$ with $\Start(F) = e$ and $\Pivot(F) = x$, a color $\beta \in [\Delta + 1]$, and an index $j$ such that $\beta \in M(\phi, \vend(F))\cap M(\phi, \vend(F|j))$. 
\end{flushleft}


\begin{algorithmic}[1]
    \State $\mathsf{nbr}(\eta) \gets \blank$ \textbf{for each} $\eta \in [\Delta+1]$, \quad $\beta(z) \gets \blank$ \textbf{for each} $z \in N_G(x)$ \label{step:start}
    \For{$z\in N_G(x)$}\label{step:assign_beta}
        \State $\beta(z) \gets \min M(\phi, z)$
        \State $\mathsf{nbr}(\phi(xz)) \gets z$
    \EndFor
    \medskip
    \State $\mathsf{index}(z) \gets \blank$ \textbf{for each} $z \in N_G(x)$
    \State $F \gets (xy)$, \quad $k \gets 0$, \quad $z \gets y$, \quad $\mathsf{index}(z) \gets k$
    \While{$ k < \deg_G(x)$}
        \State $\eta \gets \beta(z)$
        \If{$\eta \in M(\phi, x)$}
            \State \Return $(F, \eta, k + 1)$
        \EndIf
        \State $z\gets \mathsf{nbr}(\eta)$
        \If{$\mathsf{index}(z) \neq \blank$}
            \State \Return $(F, \eta, \mathsf{index}(z))$
        \EndIf
        \State $k \gets k + 1$, \quad $\mathsf{index}(z) \gets k$
        \State $\mathsf{append}(F, xz)$
    \EndWhile \label{step:end}
\end{algorithmic}
\end{algorithm}

This procedure is identical to the one described and analyzed in \cite[Lemma 4.8]{VizingChain}. In particular, we have the following lemma as a result.

\begin{Lemma}[{\cite[Lemma 4.8]{VizingChain}}]\label{lemma:first_fan_ber}
    Let $(F, \beta, j)$ be the output of Algorithm \ref{alg:first_fan} on input $(\phi, xy, x)$. Then either $\beta \in M(\phi, x)$ and $F$ is $\phi$-happy, or else, for $F' \defeq F|j$ and any $\alpha \in M(\phi, x)$, we have that either $F$ or $F'$ is $(\phi, \alpha\beta)$-successful.
\end{Lemma}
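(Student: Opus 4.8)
The plan is to first pin down what Algorithm~\ref{alg:first_fan} actually outputs, and then run the classical Vizing rotation argument. Write the output fan as $F = (xy_0, \ldots, xy_k)$ with $xy_0 = e$; by construction $\phi(xy_0) = \blank$, and for every $0 \leq i < k$ we have $\phi(xy_{i+1}) = \beta(y_i)$ where $\beta(y_i) = \min M(\phi, y_i) \in M(\phi, y_i)$ and $\beta(y_i) \notin M(\phi, x)$ (otherwise the loop would have returned already). The loop ends in exactly one of two ways: \textbf{(A)} $\beta \defeq \beta(y_k) \in M(\phi, x)$, in which case $j = k+1$ and $F|j = F$; or \textbf{(B)} $\beta \defeq \beta(y_k) \notin M(\phi, x)$ and $\mathsf{nbr}(\beta) = y_j$ for some $1 \leq j \leq k-1$ (the lower bound because $\phi(xy_0) = \blank \neq \beta$, the upper because $\phi(xy_k) = \beta(y_{k-1}) \neq \beta(y_k) = \beta$), so that $\phi(xy_j) = \beta = \beta(y_{j-1})$ and hence $\beta \in M(\phi, y_{j-1})$. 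Since $\beta(y_i) \in M(\phi, y_i)$ for all $i < k$, a direct check shows $F$ and each initial segment $F|i$ is $\phi$-shiftable, and that $\Shift(\phi, F|i)$ merely permutes the colors on the edges at $x$; in particular $M(\Shift(\phi, F|i), x) = M(\phi, x)$.

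\textbf{Case A} yields the first disjunct: $\beta \in M(\phi, x) = M(\Shift(\phi, F), x)$, and since $\End(F) = xy_k$ is uncolored under $\Shift(\phi, F)$ while $xy_k$ is the only edge at $y_k$ that was altered, $M(\Shift(\phi, F), y_k) = M(\phi, y_k) \cup \{\phi(xy_k)\} \ni \beta$. So $\beta$ is missing at both endpoints of $\End(F)$ under $\Shift(\phi, F)$, i.e.\ $F$ is $\phi$-happy.

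For \textbf{Case B}, fix $\alpha \in M(\phi, x)$ and put $F' \defeq F|j$, $y \defeq y_k$, $y' \defeq y_{j-1}$. Then $\deg(x;\phi,\alpha\beta) \leq 1$ (as $\alpha \in M(\phi,x)$) and $\deg(y;\phi,\alpha\beta), \deg(y';\phi,\alpha\beta) \leq 1$ (as $\beta \in M(\phi,y) \cap M(\phi,y')$), so---assuming, as we may, that neither $F$ nor $F'$ is $\phi$-happy, since otherwise the coloring already extends---both $F$ and $F'$ are $(\phi,\alpha\beta)$-hopeful. If $x$ and $y$ are not $(\Shift(\phi,F),\alpha\beta)$-related then $F$ is $(\phi,\alpha\beta)$-successful, so suppose $F$ is $(\phi,\alpha\beta)$-disappointed. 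Now track the two-colored subgraph at the pivot: as $\alpha \in M(\phi,x)$ and $\phi$ is proper, the only $\alpha\beta$-edge at $x$ under $\phi$ is $xy_j$ (colored $\beta$); $\Shift(\phi,F')$ alters only $xy_0,\ldots,xy_{j-1}$, none of which acquires color $\alpha$ or $\beta$, so $G(\Shift(\phi,F'),\alpha\beta) = G(\phi,\alpha\beta)$; and $\Shift(\phi,F)$ turns $xy_j$ into a non-$\alpha\beta$-edge and makes $xy_{j-1}$ the $\beta$-edge at $x$, so $G(\Shift(\phi,F),\alpha\beta) = G(\phi,\alpha\beta) - xy_j + xy_{j-1}$. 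Suppose, for contradiction, that $x$ and $y'$ are $(\phi,\alpha\beta)$-related; since $\deg(x;\phi,\alpha\beta)=1$ and $\deg(y';\phi,\alpha\beta)\leq 1$, the component $G(x;\phi,\alpha\beta)$ is a path from $x$ to $y'$ whose first edge is $xy_j$. On the other hand, $F$ disappointed gives an $\alpha\beta$-path from $x$ to $y$ in $\Shift(\phi,F)$; its first edge must be $xy_{j-1}$, so deleting it leaves an $\alpha\beta$-path from $y'$ to $y$ inside $G(\Shift(\phi,F),\alpha\beta) - xy_{j-1} = G(\phi,\alpha\beta) - xy_j$. But deleting $xy_j$ from the path $G(x;\phi,\alpha\beta)$ splits off the isolated vertex $x$ and leaves a path from $y_j$ to $y'$ avoiding $x$, which must therefore contain $y = y_k$; as $y_k \notin \{y_j, y_{j-1}\}$, the vertex $y_k$ is internal on this path, so $\deg(y_k;\phi,\alpha\beta)=2$, contradicting $\deg(y_k;\phi,\alpha\beta)\leq 1$. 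Hence $x$ and $y'$ are not $(\phi,\alpha\beta)$-related, and since $G(\Shift(\phi,F'),\alpha\beta) = G(\phi,\alpha\beta)$ they are not $(\Shift(\phi,F'),\alpha\beta)$-related either, so $F'$ is $(\phi,\alpha\beta)$-successful.

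The step I expect to be the main obstacle is the two-colored bookkeeping in Case B: verifying precisely that $\Shift(\phi,F')$ leaves $G(\phi,\alpha\beta)$ unchanged while $\Shift(\phi,F)$ swaps exactly one $\alpha\beta$-edge at the pivot, and then converting the ``$x$ and $y'$ are related'' assumption into the forbidden degree-$2$ vertex $y_k$. The description of the algorithm's output, the shiftability claims, and Case A are routine and essentially follow \cite[Lemma~4.8]{VizingChain}.
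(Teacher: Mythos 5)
Your proposal is correct and is essentially the standard argument: the paper does not reprove this lemma but imports it from \cite{VizingChain}*{Lemma 4.8}, and your Case B bookkeeping (that $\Shift(\phi,F')$ leaves $G(\phi,\alpha\beta)$ unchanged while $\Shift(\phi,F)$ swaps $xy_j$ for $xy_{j-1}$ at the pivot, followed by the degree-$2$ contradiction at $y_k$) is exactly the classical Vizing-fan analysis behind that citation, with Case A being the routine happiness check. Your reduction to the situation where neither $F$ nor $F'$ is $\phi$-happy is consistent with Definition~\ref{defn:hsf}, under which ``successful'' is only defined for non-happy fans, so flagging and setting aside that case is not a gap but the reading the paper itself intends.
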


By storing $M(\phi, z)$ as a hash map with key set $V$, we can compute the minimum element of $M(\phi, z)$ in time $O(\Delta)$. 
Therefore, the first loop in Algorithm~\ref{alg:first_fan} can run in $O(\Delta^2)$ time. We store $\beta(\cdot)$ and $\mathsf{nbr}(\cdot)$ as hash maps as well. Each operation in the \textsf{while} loop takes $O(1)$ time, and so the second loop runs in $O(\Delta)$ time, implying the overall runtime is $O(\Delta^2)$.

    The next algorithm we describe is the \hyperref[alg:next_fan]{Next Fan Algorithm}, Algorithm~\ref{alg:next_fan}. It will be used to construct the next fan on every iteration of our procedure for building multi-step Vizing chains;
see \S\ref{sec:algorithms} for details. This algorithm is borrowed from the paper \cite{GP} by Greb\'ik and Pikhurko, where it is employed in the construction of two-step Vizing chains with desirable properties. Algorithm~\ref{alg:next_fan} takes as input a proper partial edge-coloring $\phi$, an uncolored edge $e = xy$ such that $M(\phi, x) \setminus M(\phi, y) \neq \0$, a vertex $x \in e$, and a color $\beta \in M(\phi, y)$.
The output is a tuple $(F, \delta, j)$ such that:
    \begin{itemize}
        \item $F$ is a fan with $\Start(F) = e$ and $\Pivot(F) = x$,
        \item $\delta \in [\Delta + 1]$ is a color and $j$ is an index such that $\delta \in M(\phi, \vend(F))\cap M(\phi, \vend(F|j))$.
    \end{itemize}
The algorithm is similar to Algorithm \ref{alg:first_fan} with a few minor differences. 
First, we use $\delta(\cdot)$ in place of $\beta(\cdot)$ as a notational change.
Second, we have a restriction on the choice of $\delta(y)$, namely that $\delta(y) \neq \beta$; note that as $y$ is incident to an uncolored edge, $|M(\phi, y)| \geq 2$ and so $\delta(y)$ is well-defined. 
Finally, in the loop we additionally check whether $\delta(z) = \beta$, in which case the algorithm returns $(F, \beta, k+1)$.

\vspace{0.1in}
\begin{breakablealgorithm}\algsize
\caption{Next Fan}\label{alg:next_fan}
\begin{flushleft}
\textbf{Input}: A proper partial coloring $\phi$, an uncolored edge $e = xy$, a vertex $x \in e$, a color $\beta \in M(\phi, y)$. \\
\textbf{Output}: A fan $F$ with $\Start(F) = e$ and $\Pivot(F) = x$, a color $\delta \in [\Delta + 1]$, and an index $j$ such that $\delta \in M(\phi, \vend(F)) \cap M(\phi, \vend(F|j))$.
\end{flushleft}
\begin{algorithmic}[1]
    \State $\mathsf{nbr}(\eta) \gets \blank$ \textbf{for each} $\eta \in [\Delta+1]$, \quad $\delta(z) \gets \blank$ \textbf{for each} $z \in N_G(x)\setminus\set{y}$
    \State $\delta(y) \gets \min M(\phi, y)\setminus\{\beta\}$
    \For{$z\in N_G(x)\setminus \{y\}$}\label{step:assign_delta}
        \State $\delta(z) \gets \min M(\phi, z)$
        \State $\mathsf{nbr}(\phi(xz)) \gets z$
    \EndFor
    \medskip
    \State $\mathsf{index}(z) \gets \blank$ \textbf{for each} $z \in N_G(x)$
    \State $F \gets (xy)$, \quad $k \gets 0$, \quad $z \gets y$, \quad $\mathsf{index}(z) \gets k$
    \While{$k < \deg_G(x)$}
        \State $\eta \gets \delta(z)$
        \If{$\eta \in M(\phi, x)$}
            \State \Return $(F, \eta, k+1)$
        \ElsIf{$\eta = \beta$}\label{step:same_colors} 
            \State \Return $(F, \eta, k+1)$
        \EndIf
        \State $z\gets \mathsf{nbr}(\eta)$
        \If{$\mathsf{index}(z) \neq \blank$}
            \State \Return $(F, \eta, \mathsf{index}(z))$
        \EndIf
        \State $k \gets k + 1$, \quad $\mathsf{index}(z) \gets k$
        \State $\mathsf{append}(F, xz)$
    \EndWhile
\end{algorithmic}
\end{breakablealgorithm}
\vspace{0.1in}

This procedure is identical to the one analyzed in the proof of \cite[Lemma 4.9]{VizingChain}, hence: 
\begin{Lemma}[{\cite[Lemma 4.9]{VizingChain}}]\label{lemma:next_fan_ber}
    Let $(F, \delta, j)$ be the output of Algorithm \ref{alg:next_fan} on input $(\phi, xy, x, \beta)$, let $F' \defeq F|j$, and let $\alpha \in M(\phi, x) \setminus M(\phi, y)$ be arbitrary.
    Then no edge in $F$ is colored $\alpha$ or $\beta$ and at least one of the following statements holds:
    \begin{itemize}
        \item $F$ is $\phi$-happy, or
        \item $\delta = \beta$ and the fan $F$ is $(\phi, \alpha\beta)$-hopeful, or
        \item $\delta \neq \beta$ and either $F$ or $F'$ is $(\phi, \gamma\delta)$-successful for all $\gamma \in M(\phi, x) \setminus \{\alpha\}$.
    \end{itemize}
\end{Lemma}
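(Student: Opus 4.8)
Since Algorithm~\ref{alg:next_fan} is, line for line, the procedure analyzed in the proof of \cite[Lemma~4.9]{VizingChain}, the lemma is really a citation; but here is the shape of a self-contained argument that I would give. The plan is to record the \textsf{while}-loop invariant and then read off the trichotomy from the three \texttt{return} statements. The invariant is: after each iteration $F=(xy_0,\dots,xy_k)$ with $y_0=y$ and $\phi(xy_{i+1})=\delta(y_i)$ for every $i$, where $\delta(y_0)=\min\bigl(M(\phi,y)\setminus\{\beta\}\bigr)$ (well defined, since the uncolored edge $xy$ forces $|M(\phi,y)|\geq2$) and $\delta(y_i)=\min M(\phi,y_i)$ for $i\geq1$. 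As $\delta(y_i)\in M(\phi,y_i)$ throughout, an induction computing $M(\Shift_i(\phi,F),y_i)\supseteq M(\phi,y_i)$ shows $F$ is $\phi$-shiftable, and (as in the remark after Definition~\ref{defn:hsf}) $M(\Shift(\phi,F|j),x)=M(\phi,x)$ for every initial segment. The loop never exits by failing its guard $k<\deg_G(x)$: when it does not \texttt{return} it appends the edge joining $x$ to $\mathsf{nbr}(\eta)$, and $\mathsf{nbr}(\eta)\neq\blank$ whenever $\eta\notin M(\phi,x)$, so it keeps appending until it meets an already-indexed neighbor and returns. Finally $xy_0=e$ is uncolored, and each appended edge $xy_{i+1}$ has color $\delta(y_i)$, which at that step was seen to lie neither in $M(\phi,x)\ni\alpha$ nor to equal $\beta$; this is the ``no edge of $F$ is colored $\alpha$ or $\beta$'' clause.

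Two of the returns are easy. If the loop returns $(F,\eta,\length(F))$ at the test $\eta\in M(\phi,x)$, then $F'=F$ and, since $\eta=\delta(\vend(F))\in M(\phi,\vend(F))\subseteq M(\Shift(\phi,F),\vend(F))$ and $\eta\in M(\phi,x)=M(\Shift(\phi,F),x)$, the edge $\End(F)$ is $\Shift(\phi,F)$-happy, so $F$ is $\phi$-happy --- the first bullet. If it returns $(F,\beta,\length(F))$ at the test $\eta=\beta$ (so $\beta\notin M(\phi,x)$, the previous test having failed), then $F'=F$, $\delta=\beta$, and $\vend(F)\neq y$ because $\delta(y)\neq\beta$, so $\beta=\min M(\phi,\vend(F))\in M(\phi,\vend(F))$; if $F$ is $\phi$-happy we are in the first bullet, and otherwise $F$ is $\phi$-shiftable, not $\phi$-happy, with $\deg(x;\phi,\alpha\beta)\leq1$ (as $\alpha\in M(\phi,x)$) and $\deg(\vend(F);\phi,\alpha\beta)\leq1$ (as $\beta\in M(\phi,\vend(F))$), hence $(\phi,\alpha\beta)$-hopeful --- the second bullet.

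The third return is the crux: $(F,\eta,j)$ with $\eta=\delta(\vend(F))\notin M(\phi,x)\cup\{\beta\}$ and $y_j=\mathsf{nbr}(\eta)$ already indexed, which forces $1\leq j\leq k-1$ (one checks $y_j\neq y_0$ since $xy_0$ is uncolored, and $y_j\neq y_k$ since $\delta(y_k)=\min M(\phi,y_k)\in M(\phi,y_k)$ while $\phi(xy_k)=\delta(y_{k-1})\notin M(\phi,y_k)$). Then $F'=F|j$ has $\vend(F')=y_{j-1}$, and $\eta=\phi(xy_j)=\delta(y_{j-1})\in M(\phi,y_{j-1})$, so $\delta=\eta\in M(\phi,\vend(F))\cap M(\phi,\vend(F|j))$ and $\eta\notin\{\alpha,\beta\}$. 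Assume $F$ is not $\phi$-happy and pick $\gamma\in M(\phi,x)\setminus\{\alpha\}$ (possible since $|M(\phi,x)|\geq2$). The governing fact is that $xy_j$ is the \emph{only} edge of $F$ colored $\gamma$ or $\delta$ under $\phi$ (it is colored $\delta$; no fan edge is colored $\gamma\in M(\phi,x)$, and $xy_j$ is $x$'s unique $\delta$-edge), so $G(\Shift(\phi,F'),\gamma\delta)=G(\phi,\gamma\delta)$ while $G(\Shift(\phi,F),\gamma\delta)$ is $G(\phi,\gamma\delta)$ with the edge $xy_j$ deleted and the edge $xy_{j-1}$ (now colored $\delta$) inserted. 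In all these colorings $x$ has degree $\leq1$ in the two-colored subgraph, and $\vend(F)=y_k$, $\vend(F')=y_{j-1}$ each miss $\delta$ and so have degree $\leq1$; together with shiftability and non-happiness this makes both $F$ and $F'$ into $(\phi,\gamma\delta)$-hopeful fans (except in the sub-case that $F'$ is itself $\phi$-happy). Tracing the maximal $\gamma\delta$-paths out of $x$ --- in $G(\Shift(\phi,F'),\gamma\delta)$ it starts $x,y_j,\dots$, in $G(\Shift(\phi,F),\gamma\delta)$ it starts $x,y_{j-1},\dots$, and the two graphs differ only in which of $xy_{j-1},xy_j$ is at $x$ --- and using $y_k\neq y_{j-1}$ together with the fact that neither $y_k$ nor $y_{j-1}$ can be an interior vertex of a $\gamma\delta$-path, one concludes that $x$ cannot be $(\Shift(\phi,F),\gamma\delta)$-related to $y_k$ and $(\Shift(\phi,F'),\gamma\delta)$-related to $y_{j-1}$ simultaneously. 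Hence at least one of $F$, $F'$ is $(\phi,\gamma\delta)$-successful, which is the third bullet. (If $F'$ is $\phi$-happy, one instead takes $\gamma$ to be a common missing color of $x$ and $y_{j-1}$ when such a color $\neq\alpha$ exists, which makes $y_{j-1}$ isolated in $G(\phi,\gamma\delta)$ and forces $F$ to be $(\phi,\gamma\delta)$-successful; the leftover case uses the input hypothesis $\alpha\in M(\phi,x)\setminus M(\phi,y)$.)

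I expect the genuine difficulty to be concentrated in this third return: keeping precise track of how shifting $F$ versus $F'$ perturbs $G(\cdot,\gamma\delta)$ near the pivot (the fan edges themselves carry the color $\delta$, and ``$(\phi,\gamma\delta)$-successful'' is defined via $\Shift(\phi,F)$ rather than $\phi$), and cleanly disposing of the degenerate sub-cases --- in particular adapting the choice of $\gamma$ when $F$ or $F'$ is already $\phi$-happy or $(\phi,\gamma\delta)$-disappointed. This is exactly the bookkeeping performed in \cite[Fact~4.7 and Lemmas~4.8--4.9]{VizingChain}, and since our Algorithm~\ref{alg:next_fan} differs from the procedure there only cosmetically (the switch to $\delta(\cdot)$, the constraint on $\delta(y)$, and the extra early \texttt{return} at the test $\delta(z)=\beta$ on line~\ref{step:same_colors}), the most economical proof simply imports that analysis --- which is what the statement above does by citing \cite[Lemma~4.9]{VizingChain}.
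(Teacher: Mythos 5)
Since the paper offers no internal proof of this lemma --- its entire justification is the sentence ``This procedure is identical to the one analyzed in the proof of \cite[Lemma 4.9]{VizingChain}, hence:'' --- your bottom line (import that analysis wholesale) coincides exactly with the paper's route, and your supplementary sketch of the invariant, the two easy \texttt{return} cases, and the path/cycle dichotomy for the third \texttt{return} is the standard argument and is correct in outline.

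The one place where your self-contained sketch does not actually close is the parenthetical disposing of the sub-case where $F' = F|j$ is $\phi$-happy. Your fallback is to take $\gamma$ to be a common missing color of $x$ and $y_{j-1}$ with $\gamma \neq \alpha$, and you claim the leftover situation is killed by the hypothesis $\alpha \in M(\phi,x)\setminus M(\phi,y)$. That hypothesis only bites when $j = 1$, i.e.\ when $y_{j-1} = y$; for $j \geq 2$ nothing prevents $M(\phi,x)\cap M(\phi,y_{j-1}) = \{\alpha\}$. In that situation every admissible $\gamma \in M(\phi,x)\setminus\{\alpha\}$ misses $y_{j-1}$'s isolation trick, your dichotomy only tells you that at most one of the two relatedness conditions fails, and under Definition~\ref{defn:hsf} a $\phi$-happy $F'$ is by fiat not ``$(\phi,\gamma\delta)$-successful,'' so you cannot conclude the third bullet from what you have written (one can even cook up small configurations with $|M(\phi,x)| = 2$, $F$ $(\phi,\gamma\delta)$-disappointed for the unique $\gamma$, and $F'$ happy only through $\alpha$, which shows this corner is genuinely delicate rather than a routine omission). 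So a truly self-contained proof must either carry out the cited paper's case analysis at this point or argue separately that a $\phi$-happy $F|j$ is an acceptable outcome for the way the lemma is consumed downstream; as a citation-plus-sketch your proposal is fine and matches the paper, but the sketch alone should not be read as a complete proof of the third bullet.
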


The upshot is that Lemma~\ref{lemma:next_fan_ber} gives us extra information about the pair of colors $\gamma$, $\delta$ such that the fan $F$ or $F'$ is $(\phi, \gamma\delta)$-hopeful/successful. Namely, it guarantees that either $\set{\gamma, \delta} = \set{\alpha, \beta}$ or $\set{\gamma, \delta} \cap \set{\alpha, \beta} = \0$. This allows us to control the way the alternating paths in two consecutive steps of the construction may intersect; see Lemma~\ref{lemma:intersection_prev} and Fig.~\ref{fig:intersection_prev}.
Similarly to Algorithm \ref{alg:first_fan}, one can easily bound the runtime of Algorithm \ref{alg:next_fan} by $O(\Delta^2)$.

\section{Proof of Theorem \ref{theo:nlogn}}\label{sec:nlogn}

In this section, we prove Theorem \ref{theo:nlogn}. Specifically, we carefully describe and analyze a randomized edge-coloring algorithm based on the sketch in Algorithm~\ref{inf:simple_seq}  (see \S\ref{sec:overview}). The algorithm takes as input a graph $G$ of maximum degree $\Delta$ and returns a proper $(\Delta + 1)$-edge-coloring of $G$.
At each iteration, the algorithm picks a random uncolored edge and colors it by finding a Vizing chain.

We first give an overview of the \hyperref[alg:viz_chain]{Vizing Chain Algorithm}, which is formally presented as Algorithm~\ref{alg:viz_chain}.
It takes as input a proper partial edge-coloring $\phi$, an uncolored edge $e = xy$, and a pivot vertex $x\in e$.
We start by applying the \hyperref[alg:first_fan]{First Fan Algorithm} (Algorithm \ref{alg:first_fan}) as a subprocedure. Let $(F, \beta, j)$ be its output.
If $\beta \in M(\phi, x)$, we return the $\phi$-happy fan $F$.
If not, we pick a color $\alpha \in M(\phi, x)$ uniformly at random.
By Lemma \ref{lemma:first_fan_ber}, either $F$ or $F'\defeq F|j$ is $(\phi, \alpha\beta)$-successful.
Let \[P \,\defeq\, P(\End(F); \Shift(\phi, F), \alpha\beta) \quad \text{and} \quad P' \,\defeq\, P(\End(F'); \Shift(\phi, F'), \alpha\beta)\] (this notation is defined in \S\ref{subsec:pathchains}). If $\vend(P) \neq x$, we return $F + P$; otherwise, we return $F' + P'$.

\vspace{0.1in}
\begin{breakablealgorithm}[H]\algsize
\caption{Vizing Chain}\label{alg:viz_chain}
\begin{flushleft}
\textbf{Input}: A proper partial coloring $\phi$, an uncolored edge $e = xy$, and a vertex $x \in e$. \\
\textbf{Output}: A Vizing chain $F+P$ with $\Start(F) = e$ and $\Pivot(F) = x$. 
\end{flushleft}
\begin{algorithmic}[1]
    \State $(F, \beta, j) \gets \hyperref[alg:first_fan]{\mathsf{FirstFan}}(\phi, xy, x)$ \Comment{Algorithm \ref{alg:first_fan}}
    \If{$\beta \in M(\phi, x)$}
        \State \Return $F$ \label{step:happy}
    \EndIf
    \State Let $\alpha \in M(\phi, x)$ be chosen uniformly at random.
    \State $F' \gets F|j$
    \State $P \gets P(\End(F);\, \Shift(\phi, F),\, \alpha\beta), \quad P' \gets P(\End(F');\, \Shift(\phi, F'),\, \alpha\beta)$
    \If{$\vend(P) \neq x$}
        \State \Return $F+P$
    \Else
        \State \Return $F' + P'$
    \EndIf
\end{algorithmic}
\end{breakablealgorithm}
\vspace{0.1in}

With an appropriate choice of data structures, we can compute $P$ and $P'$ in time $O(\length(P))$ and $O(\length(P'))$ respectively (see, e.g., \cite[\S 3]{bernshteyn2024linear}).
Since the runtime of Algorithm \ref{alg:first_fan} is $O(\Delta^2)$, the runtime for Algorithm \ref{alg:viz_chain} is 
\begin{align}\label{eqn:runtime_viz_chain}
    O(\Delta^2 + \length(P) + \length(P')).
\end{align}
With this subprocedure defined, we can describe Algorithm \ref{alg:seq_log}, which iteratively colors the graph.

\begin{algorithm}[H]\algsize
\caption{Sequential Coloring with Vizing Chains}\label{alg:seq_log}
\begin{flushleft}
\textbf{Input}: A graph $G = (V, E)$ of maximum degree $\Delta$. \\
\textbf{Output}: A proper $(\Delta+1)$-edge-coloring $\phi$ of $G$.
\end{flushleft}
\begin{algorithmic}[1]
    \State $U \gets E$, \quad $\phi(e) \gets \blank$ \textbf{for each} $e \in U$
    \While{$U \neq \0$}
        \State Pick an edge $e \in U$ and a vertex $x \in e$ uniformly at random.
        \State $C \gets \hyperref[alg:viz_chain]{\mathsf{Vizing Chain}}(\phi, e, x)$ \Comment{Algorithm \ref{alg:viz_chain}}
        \State $\phi \gets \aug(\phi, C)$
        \State $U \gets U \setminus \set{e}$
    \EndWhile
    \State \Return $\phi$
\end{algorithmic}
\end{algorithm}

The correctness of this algorithm follows from Lemma \ref{lemma:first_fan_ber}. To assist with our analysis, we define the following variables:
\begin{align*}
    \phi_i &\defeq \text{ the coloring at the start of iteration } i, \\
    e_i &\defeq \text{ the edge chosen at iteration } i, \\
    x_i &\defeq \text{ the vertex chosen from } e_i = x_iy_i, \\
    \alpha_i &\defeq \text{ the color chosen from } M(\phi_i, x_i), \\
    \mathcal{P}_i &\defeq \set{P\,:\, \exists \alpha,\, \beta \in [\Delta+1] \text{ such that } P \text{ is a maximal path-component of } G(\phi_i, \alpha\beta)}, \\ 
    F_i,&\,P_i,\,F_i',\,P_i' \defeq \text{ the fans and paths computed in Algorithm \ref{alg:viz_chain} during iteration } i,  \\
    U_i &\defeq \set{e\in E\,:\, \phi_i(e) = \blank}.
\end{align*}

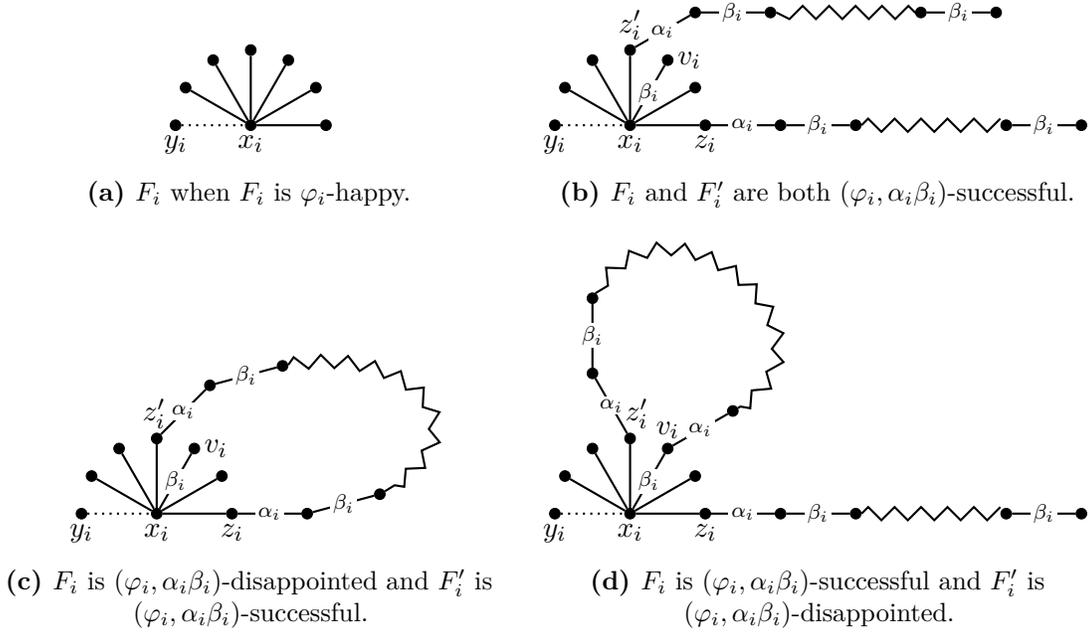
\begin{figure}[t]
    \begin{subfigure}[t]{0.45\textwidth}
        \centering
    	\begin{tikzpicture}
    	    \clip (-0.5, 1.5) rectangle (2.5, -0.5);
    	    \node[circle,fill=black,draw,inner sep=0pt,minimum size=4pt] (a) at (0,0) {};
    		\node[circle,fill=black,draw,inner sep=0pt,minimum size=4pt] (b) at (1,0) {}; \path (b) ++(150:1) node[circle,fill=black,draw,inner sep=0pt,minimum size=4pt] (c) {}; \path (b) ++(120:1) node[circle,fill=black,draw,inner sep=0pt,minimum size=4pt] (d) {}; \path (b) ++(90:1) node[circle,fill=black,draw,inner sep=0pt,minimum size=4pt] (e) {}; \path (b) ++(60:1) node[circle,fill=black,draw,inner sep=0pt,minimum size=4pt] (f) {}; \path (b) ++(30:1) node[circle,fill=black,draw,inner sep=0pt,minimum size=4pt] (g) {}; \path (b) ++(0:1) node[circle,fill=black,draw,inner sep=0pt,minimum size=4pt] (h) {};
    		
    		\draw[thick, dotted] (a) -- (b);
    		
    		\draw[thick] (b) -- (c) (b) -- (d) (b) -- (e) (b) -- (f) (b) -- (g) (b) -- (h);

                \node[anchor=north] at (b) {$x_i$};
                \node[anchor=north] at (a) {$y_i$};

    	\end{tikzpicture}
    	\caption{$F_i$ when $F_i$ is $\phi_i$-happy.}\label{subfig:happy}
    \end{subfigure}
    \begin{subfigure}[t]{0.45\textwidth}
        \centering
    	\begin{tikzpicture}
    	    \clip (-0.2, 2) rectangle (7.5, -0.5);
    	    \node[circle,fill=black,draw,inner sep=0pt,minimum size=4pt] (a) at (0,0) {};
            \node[circle,fill=black,draw,inner sep=0pt,minimum size=4pt] (b) at (1,0) {}; \path (b) ++(150:1) node[circle,fill=black,draw,inner sep=0pt,minimum size=4pt] (c) {}; \path (b) ++(120:1) node[circle,fill=black,draw,inner sep=0pt,minimum size=4pt] (d) {}; \path (b) ++(90:1) node[circle,fill=black,draw,inner sep=0pt,minimum size=4pt] (e) {}; \path (b) ++(60:1) node[circle,fill=black,draw,inner sep=0pt,minimum size=4pt] (f) {}; \path (b) ++(30:1) node[circle,fill=black,draw,inner sep=0pt,minimum size=4pt] (g) {}; \path (b) ++(0:1) node[circle,fill=black,draw,inner sep=0pt,minimum size=4pt] (h) {};

            \path (h) ++(0:1) node[circle,fill=black,draw,inner sep=0pt,minimum size=4pt] (i) {};
            \path (i) ++(0:1) node[circle,fill=black,draw,inner sep=0pt,minimum size=4pt] (j) {};
            \path (j) ++(0:2) node[circle,fill=black,draw,inner sep=0pt,minimum size=4pt] (k) {};
            \path (k) ++(0:1) node[circle,fill=black,draw,inner sep=0pt,minimum size=4pt] (l) {};

            \path (e) ++(30:1) node[circle,fill=black,draw,inner sep=0pt,minimum size=4pt] (m) {};
            \path (m) ++(0:1) node[circle,fill=black,draw,inner sep=0pt,minimum size=4pt] (n) {};
            \path (n) ++(0:2) node[circle,fill=black,draw,inner sep=0pt,minimum size=4pt] (o) {};
            \path (o) ++(0:1) node[circle,fill=black,draw,inner sep=0pt,minimum size=4pt] (p) {};
    		
    		\draw[thick, dotted] (a) -- (b);
    		
    		\draw[thick] (b) -- (c) (b) -- (d) (b) -- (e) (b) to node[font=\fontsize{8}{8},midway,inner sep=1pt,outer sep=1pt,minimum size=4pt,fill=white] {$\beta_i$} (f) (b) -- (g) (b) -- (h) to node[font=\fontsize{8}{8},midway,inner sep=1pt,outer sep=1pt,minimum size=4pt,fill=white] {$\alpha_i$} (i) to node[font=\fontsize{8}{8},midway,inner sep=1pt,outer sep=1pt,minimum size=4pt,fill=white] {$\beta_i$} (j) (k) to node[font=\fontsize{8}{8},midway,inner sep=1pt,outer sep=1pt,minimum size=4pt,fill=white] {$\beta_i$} (l) (e) to node[font=\fontsize{8}{8},midway,inner sep=1pt,outer sep=1pt,minimum size=4pt,fill=white] {$\alpha_i$} (m) to node[font=\fontsize{8}{8},midway,inner sep=1pt,outer sep=1pt,minimum size=4pt,fill=white] {$\beta_i$} (n) (o) to node[font=\fontsize{8}{8},midway,inner sep=1pt,outer sep=1pt,minimum size=4pt,fill=white] {$\beta_i$} (p);
    		
    		\draw[thick, decorate,decoration=zigzag] (j) -- (k) (n) -- (o);

                \node[anchor=north] at (b) {$x_i$};
                \node[anchor=north] at (a) {$y_i$};
                \node[anchor=south] at (e) {$z_i'$};
                \node[anchor=north] at (h) {$z_i$};
                \node[anchor=west] at (f) {$v_i$};

    	\end{tikzpicture}
    	\caption{$F_i$ and $F_i'$ are both $(\phi_i, \alpha_i\beta_i)$-successful.}\label{subfig:successful}
    \end{subfigure}
    
    \begin{subfigure}[t]{0.45\textwidth}
        \centering
    	\begin{tikzpicture}
    	    \clip (-0.5, 2.5) rectangle (5, -0.5);
    	    \node[circle,fill=black,draw,inner sep=0pt,minimum size=4pt] (a) at (0,0) {};
            \node[circle,fill=black,draw,inner sep=0pt,minimum size=4pt] (b) at (1,0) {}; \path (b) ++(150:1) node[circle,fill=black,draw,inner sep=0pt,minimum size=4pt] (c) {}; \path (b) ++(120:1) node[circle,fill=black,draw,inner sep=0pt,minimum size=4pt] (d) {}; \path (b) ++(90:1) node[circle,fill=black,draw,inner sep=0pt,minimum size=4pt] (e) {}; \path (b) ++(60:1) node[circle,fill=black,draw,inner sep=0pt,minimum size=4pt] (f) {}; \path (b) ++(30:1) node[circle,fill=black,draw,inner sep=0pt,minimum size=4pt] (g) {}; \path (b) ++(0:1) node[circle,fill=black,draw,inner sep=0pt,minimum size=4pt] (h) {};

            \path (h) ++(0:1) node[circle,fill=black,draw,inner sep=0pt,minimum size=4pt] (i) {};
            \path (i) ++(15:1) node[circle,fill=black,draw,inner sep=0pt,minimum size=4pt] (j) {};

            \path (e) ++(45:1) node[circle,fill=black,draw,inner sep=0pt,minimum size=4pt] (k) {};
            \path (k) ++(15:1) node[circle,fill=black,draw,inner sep=0pt,minimum size=4pt] (l) {};
    		
    		\draw[thick, dotted] (a) -- (b);
    		
    		\draw[thick] (b) -- (c) (b) -- (d) (b) -- (e) (b) to node[font=\fontsize{8}{8},midway,inner sep=1pt,outer sep=1pt,minimum size=4pt,fill=white] {$\beta_i$} (f) (b) -- (g) (b) -- (h) to node[font=\fontsize{8}{8},midway,inner sep=1pt,outer sep=1pt,minimum size=4pt,fill=white] {$\alpha_i$} (i) to node[font=\fontsize{8}{8},midway,inner sep=1pt,outer sep=1pt,minimum size=4pt,fill=white] {$\beta_i$} (j) (e)  to node[font=\fontsize{8}{8},midway,inner sep=1pt,outer sep=1pt,minimum size=4pt,fill=white] {$\alpha_i$} (k) to node[font=\fontsize{8}{8},midway,inner sep=1pt,outer sep=1pt,minimum size=4pt,fill=white] {$\beta_i$} (l);

            \draw[thick, decorate,decoration=zigzag] (l) to[in=30, out=10, looseness=2] (j);

            \node[anchor=north] at (b) {$x_i$};
            \node[anchor=north] at (a) {$y_i$};
            \node[anchor=south, xshift=-1pt] at (e) {$z_i'$};
            \node[anchor=north] at (h) {$z_i$};
            \node[anchor=west] at (f) {$v_i$};

    	\end{tikzpicture}
    	\caption{$F_i$ is $(\phi_i, \alpha_i\beta_i)$-disappointed and $F_i'$ is $(\phi_i, \alpha_i\beta_i)$-successful.}\label{subfig:prime_successful}
    \end{subfigure}
    \begin{subfigure}[t]{0.45\textwidth}
        \centering
    	\begin{tikzpicture}
    	    \clip (-0.2, 4) rectangle (7.5, -0.5);
    	    \node[circle,fill=black,draw,inner sep=0pt,minimum size=4pt] (a) at (0,0) {};
            \node[circle,fill=black,draw,inner sep=0pt,minimum size=4pt] (b) at (1,0) {}; \path (b) ++(150:1) node[circle,fill=black,draw,inner sep=0pt,minimum size=4pt] (c) {}; \path (b) ++(120:1) node[circle,fill=black,draw,inner sep=0pt,minimum size=4pt] (d) {}; \path (b) ++(90:1) node[circle,fill=black,draw,inner sep=0pt,minimum size=4pt] (e) {}; \path (b) ++(60:1) node[circle,fill=black,draw,inner sep=0pt,minimum size=4pt] (f) {}; \path (b) ++(30:1) node[circle,fill=black,draw,inner sep=0pt,minimum size=4pt] (g) {}; \path (b) ++(0:1) node[circle,fill=black,draw,inner sep=0pt,minimum size=4pt] (h) {};

            \path (h) ++(0:1) node[circle,fill=black,draw,inner sep=0pt,minimum size=4pt] (i) {};
            \path (i) ++(0:1) node[circle,fill=black,draw,inner sep=0pt,minimum size=4pt] (j) {};
            \path (j) ++(0:2) node[circle,fill=black,draw,inner sep=0pt,minimum size=4pt] (k) {};
            \path (k) ++(0:1) node[circle,fill=black,draw,inner sep=0pt,minimum size=4pt] (l) {};

            \path (e) ++(120:1) node[circle,fill=black,draw,inner sep=0pt,minimum size=4pt] (m) {};
            \path (m) ++(90:1) node[circle,fill=black,draw,inner sep=0pt,minimum size=4pt] (n) {};

            \path (f) ++(30:1) node[circle,fill=black,draw,inner sep=0pt,minimum size=4pt] (o) {};
    		
    		\draw[thick, dotted] (a) -- (b);
    		
    		\draw[thick] (b) -- (c) (b) -- (d) (b) -- (e) (b) to node[font=\fontsize{8}{8},midway,inner sep=1pt,outer sep=1pt,minimum size=4pt,fill=white] {$\beta_i$} (f) (b) -- (g) (b) -- (h) to node[font=\fontsize{8}{8},midway,inner sep=1pt,outer sep=1pt,minimum size=4pt,fill=white] {$\alpha_i$} (i) to node[font=\fontsize{8}{8},midway,inner sep=1pt,outer sep=1pt,minimum size=4pt,fill=white] {$\beta_i$} (j) (k) to node[font=\fontsize{8}{8},midway,inner sep=1pt,outer sep=1pt,minimum size=4pt,fill=white] {$\beta_i$} (l) (e) to node[font=\fontsize{8}{8},midway,inner sep=1pt,outer sep=1pt,minimum size=4pt,fill=white] {$\alpha_i$} (m) to node[font=\fontsize{8}{8},midway,inner sep=1pt,outer sep=1pt,minimum size=4pt,fill=white] {$\beta_i$} (n) (f) to node[font=\fontsize{8}{8},midway,inner sep=1pt,outer sep=1pt,minimum size=4pt,fill=white] {$\alpha_i$} (o);

            \draw[thick, decorate,decoration=zigzag] (j) -- (k) (n) to[in=30, out=60, looseness=2] (o);

            \node[anchor=north] at (b) {$x_i$};
            \node[anchor=north] at (a) {$y_i$};
            \node[anchor=south, xshift=2.5pt] at (e) {$z_i'$};
            \node[anchor=north] at (h) {$z_i$};
            \node[anchor=south] at (f) {$v_i$};

    	\end{tikzpicture}
    	\caption{$F_i$ is $(\phi_i, \alpha_i\beta_i)$-successful and $F_i'$ is $(\phi_i, \alpha_i\beta_i)$-disappointed.}\label{subfig:main_successful}
    \end{subfigure}
    \caption{Possible configurations of $F_i + P_i$ and $F_i'+P_i'$.}
    \label{fig:possible_path_chains}
\end{figure}
\noindent
In the case that we reach step~\ref{step:happy}, we let $F_i' = F_i$ and $P_i = P_i' = (\End(F_i))$.
It follows from \eqref{eqn:runtime_viz_chain} that we can bound the runtime of Algorithm \ref{alg:seq_log} as
\[\sum_{i = 1}^mO(\Delta^2 + \length(P_i) + \length(P_i')) \,=\, O(\Delta^2\,m) + O(\sum_{i = 1}^m\length(P_i)) + O(\sum_{i = 1}^m\length(P_i')).\]
Our goal now is to bound the lengths of the path chains $P_i$ and $P_i'$.
Let $\tilde{P}_i$ (resp.{}\ $\tilde{P}_i'$) be the path obtained by removing the first edge on the chain $P_i$ (resp.{}\ $P_i'$).
Note that $\tilde P_i$ (resp.{}\ $\tilde P_i'$) is a bicolored path in $\Shift(\phi_i, F_i)$ (resp.{}\ $\Shift(\phi_i, F_i')$).
Let the second color on these paths be $\beta_i$.
We wish to estimate the lengths of the paths $\tilde P_i$, $\tilde P_i'$ in terms of some paths $Q_i$, $Q_i' \in \mathcal{P}_i$.
As a result of Lemma~\ref{lemma:first_fan_ber}, we have four cases as shown in Fig.~\ref{fig:possible_path_chains}.
Let us consider each case carefully.
\begin{enumerate}[label=\ep{\textbf{Case \arabic*}},wide]
    \item $F_i$ is $\phi_i$-happy (Fig.~\ref{subfig:happy}).
    Here no path chain is computed.
    We let $Q_i$ and $Q_i'$ be both equal to the $\alpha_i\beta_i$-path starting at $\vend(F_i)$.
    We note that in this case $1 = \length(P_i) \leq \length(Q_i) + 1$, and the same holds for $P_i'$ and $Q_i'$.

    \item\label{item:case_successful} $F_i$ and $F_i'$ are both $(\phi_i, \alpha_i\beta_i)$-successful (Fig.~\ref{subfig:successful}).
    As shown in the figure, $\tilde P_i$ is the path starting at $z_i = \vend(F_i)$ and $\tilde P_i'$ is the path starting at $z_i' = \vend(F_i')$.
    In this case, both $\tilde P_i$ and $\tilde P_i'$ are in $\mathcal{P}_i$, so we let $Q_i \defeq \tilde P_i$, $Q_i' \defeq \tilde P_i'$.
    Note that $\length(P_i) = \length(Q_i) + 1$ and the same holds for $P_i'$, $Q_i'$ in this case.

    \item $F_i$ is $(\phi_i, \alpha_i\beta_i)$-disappointed and $F_i'$ is $(\phi_i, \alpha_i\beta_i)$-successful (Fig.~\ref{subfig:prime_successful}).
    Here, $\tilde P_i$ is the path starting at $z_i$ passing through $z_i'$ to end at $x_i$ (as $\Shift(\phi_i, F_i)(x_iz_i') = \beta_i$), and $\tilde P_i'$ is the path starting at $z_i'$ and ending at $z_i$.
    In this case, we define $Q_i = Q_i' \defeq \tilde P_i'$ and note that $\length(P_i) = \length(Q_i) + 2$ and $\length(P_i') = \length(Q_i') + 1$.

    \item\label{item:case_successful_hopeful} $F_i'$ is $(\phi_i, \alpha_i\beta_i)$-disappointed and $F_i$ is $(\phi_i, \alpha_i\beta_i)$-successful (Fig.~\ref{subfig:main_successful}).
    In this case, we note as in \ref{item:case_successful} that both $\tilde P_i$ and $\tilde P_i'$ are in $\mathcal{P}_i$ and so we let $Q_i \defeq \tilde P_i$ and $Q_i' \defeq \tilde P_i'$, implying $\length(P_i) = \length(Q_i) + 1$ and $\length(P_i') = \length(Q_i') + 1$ in this case.
    
\end{enumerate}

After covering all cases, we have now defined $Q_i$, $Q_i' \in \mathcal{P}_i$ such that $\length(P_i) \leq \length(Q_i) + 2$ and $\length(P_i') \leq \length(Q_i') + 2$.
In particular, we can now bound the runtime as
\[O(\Delta^2m) + O(\sum_{i = 1}^m\length(Q_i)) + O(\sum_{i = 1}^m\length(Q_i')).\]
It remains to bound the terms on the right. To this end, let 
\[T_i \defeq \length(Q_i), \quad T_i' \defeq \length(Q_i'), \quad T \defeq \sum_{i = 1}^mT_i, \quad T' \defeq \sum_{i = 1}^mT_i'.\]
We must bound $T$ and $T'$ from above with high probability.
Let us consider $T$. (The analysis for $T'$ is the same, \emph{mutatis mutandis}.)
We have
\[T_i \,=\, \sum_{Q\in \mathcal{P}_i}\length(Q)\,\mathbbm{1}\set{Q_i = Q}.\]
In particular, $T_i$ is determined by $\phi_i$ and the random choices made in the $i$-th iteration.
With this in mind, note the following:
\begin{align}\label{eqn:expectation}
    \E[T_i|T_1, \ldots, T_{i-1}] \,=\, \E[\E[T_i|\phi_i, T_1, \ldots, T_{i-1}]|T_1, \ldots, T_{i-1}] \,=\, \E[\E[T_i|\phi_i]|T_1, \ldots, T_{i-1}],
\end{align}
which follows since $T_i$ is independent of all $T_j$ given $\phi_i$.
In the following lemma, we shall bound the random variable $\E[T_i|\phi_i]$.

\begin{Lemma}\label{lemma:expected_runtime}
    We have
    $\E[T_i|\phi_i] \,\leq\, \frac{2\Delta^2m}{m-i+1}$.
\end{Lemma}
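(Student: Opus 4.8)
The plan is to condition on $\phi_i$ and bound, for each path $Q\in\mathcal P_i$, the probability that $Q_i = Q$ — i.e. that the path chain (or its relevant variant) produced during iteration $i$ coincides with $Q$. The key observation is that $Q$ can arise as $Q_i$ only for a very restricted set of triples $(e_i, x_i, \alpha_i)$: the edge $e_i$ must be uncolored and incident (after the fan construction) to an endpoint of $Q$, and $\alpha_i$ together with $\beta_i$ must be the pair of colors appearing on $Q$. Since $Q$ is a path-component of some $G(\phi_i,\alpha\beta)$, it has exactly two endpoints, and a vertex of $G$ lies in at most $\binom{\Delta+1}{2}$ such bicolored components (one for each pair of colors). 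I will use these facts to show that, conditioned on $\phi_i$, each fixed $Q$ satisfies $\Pr[Q_i = Q \mid \phi_i]\le c\,\Delta^2/(|U_i|\,\Delta)$ for an appropriate constant, where the $|U_i|$ in the denominator comes from the uniform choice of the uncolored edge $e_i$ and the extra $1/\Delta$ from the uniform choice of $\alpha_i\in M(\phi_i,x_i)$; here $|U_i| = m-i+1$.

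Concretely, first I would fix $\phi_i$ and a path $Q\in\mathcal P_i$, say $Q$ is a component of $G(\phi_i,\alpha\beta)$ with endpoints $u$ and $w$. I would enumerate the ways iteration $i$ can produce $Q_i = Q$: by the case analysis in Fig.~\ref{fig:possible_path_chains} (Cases 1--4), $Q_i$ is always obtained from $\tilde P_i$ or $\tilde P_i'$, which is the maximal $\alpha_i\beta_i$-path in $\Shift(\phi_i,F_i)$ (or $\Shift(\phi_i,F_i')$) starting at $\vend(F_i)$ (or $\vend(F_i')$). Shifting along a fan does not change bicolored path-components away from the pivot, so $\tilde P_i$ is (essentially) a component of $G(\phi_i,\alpha_i\beta_i)$ incident to a neighbor of $x_i$; in particular the unordered color pair equals $\{\alpha,\beta\}$, which forces $\{\alpha_i,\beta_i\}=\{\alpha,\beta\}$, and one endpoint of $\tilde P_i$ must be a neighbor of $x_i$. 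This means $x_i$ is adjacent to $u$ or to $w$, the uncolored edge $e_i$ is incident to $x_i$, and $\beta_i$ (determined by the First Fan output on $(\phi_i,e_i,x_i)$) together with the random $\alpha_i$ must equal the pair $\{\alpha,\beta\}$. The number of uncolored edges incident to a neighbor of $u$ or $w$ is at most $2\Delta$ (two vertices, each of degree $\le\Delta$), so there are at most $2\Delta$ choices of $(e_i,x_i)$ — wait, more carefully $x_i$ ranges over $\le 2\Delta$ candidate vertices (neighbors of $u$ and of $w$) and for each, $e_i$ is a specific type of incident uncolored edge; bounding generously, at most $O(\Delta^2)$ pairs $(e_i,x_i)$. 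Since $e_i$ is uniform over $U_i$ and $x_i$ uniform over the two endpoints, $\Pr[(e_i,x_i)\text{ is one of these}\mid\phi_i]\le O(\Delta^2)/|U_i|$; and conditioned on a fixed $(e_i,x_i)$, the First Fan output fixes $\beta_i$, so $\alpha_i$ must take a single prescribed value in $M(\phi_i,x_i)$, an event of probability $1/|M(\phi_i,x_i)|\le 1/1$ — actually $\ge 1$, so this gives at most $1$; to extract the extra $\Delta$ I would instead not fix $(e_i,x_i)$ but sum over $\alpha_i$ as well, noting that across the $\le O(\Delta^2)$ compatible configurations the values of $\alpha_i$ forced are distinct enough that the total probability is $O(\Delta^2)/(|U_i|)\cdot(1/\Delta)$ on average. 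Summing $\length(Q)\Pr[Q_i=Q\mid\phi_i]$ over all $Q\in\mathcal P_i$, and using that the path-components for a fixed color pair are edge-disjoint so $\sum_{Q}\length(Q)\le m$ per pair and there are $\le\Delta^2$ pairs, would give the bound $\E[T_i\mid\phi_i]\le 2\Delta^2 m/(m-i+1)$ after absorbing constants.

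The main obstacle will be making the counting of "configurations $(e_i,x_i,\alpha_i)$ that can produce a given $Q$" both correct and tight enough to land exactly the factor $2\Delta^2$ rather than a larger polynomial in $\Delta$: one has to track carefully, through all four cases of the First Fan / Vizing Chain analysis, which endpoint of $Q$ the fan attaches to, how shifting along $F_i$ or $F_i'$ can relocate a path endpoint to the pivot $x_i$ (Case 3, where $\tilde P_i$ ends at $x_i$), and that these shifts do not create a genuinely new component but only re-root an existing one. A clean way to organize this is to prove a short auxiliary claim: for fixed $\phi_i$ and fixed $Q\in\mathcal P_i$, the number of triples $(e,x,\alpha)$ with $e\in U_i$, $x\in e$, $\alpha\in M(\phi_i,x)$ such that running Algorithm~\ref{alg:viz_chain} on $(\phi_i,e,x)$ with color choice $\alpha$ yields $Q_i=Q$ is at most $2\Delta^2$; this follows because such a triple determines a vertex of $Q$ (an endpoint, or a neighbor of an endpoint) as the pivot, an incident uncolored edge, and then the First Fan deterministically fixes $\beta_i$ and hence $\alpha_i$. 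Dividing by the total number of triples, which is $|U_i|$ for the $e$ choice times $2$ for $x$ times $|M(\phi_i,x)|\ge 1$ for $\alpha$ — and being slightly more careful, $\ge\Delta+1-\deg\ge 1$ is not enough, but the uniform pair $(e,x)$ already contributes $1/(2|U_i|)$ and the uniform $\alpha$ contributes at least... here I would simply bound $\Pr[Q_i=Q\mid\phi_i]\le 2\Delta^2/|U_i|$ without the $\alpha$-saving if the cleaner argument resists, then note $\sum_Q\length(Q)\le\Delta^2 m/2$... this overshoots, so the $\alpha$-averaging is genuinely needed and I would make it rigorous by the auxiliary-claim route above, treating the $1/\Delta$ as coming from: once $(e_i,x_i)$ is fixed, $Q_i$ is a deterministic function of $\alpha_i\in M(\phi_i,x_i)$, hence $\Pr[Q_i=Q\mid\phi_i,e_i,x_i]\le 1/|M(\phi_i,x_i)|\le 1/1$, and then pushing the $|M(\phi_i,x_i)|$ into a refined count of at most $\Delta$ values of $(e_i,x_i)$ per vertex of $Q$. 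Once the claim $\Pr[Q_i=Q\mid\phi_i]\le 2\Delta^2/(|U_i|\cdot\text{(something)})$ is in place, assembling the final inequality is routine.
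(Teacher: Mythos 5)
Your overall strategy is the same as the paper's: condition on $\phi_i$, argue that for a fixed $Q\in\mathcal{P}_i$ with endpoints $u,v$ the event $Q_i=Q$ forces the pivot $x_i$ to lie in $N_G(u)\cup N_G(v)$ and forces $\alpha_i$ to equal the specific color of $Q$ present at that endpoint, and then sum $\length(Q)\,\P[Q_i=Q\mid\phi_i]$ by double counting. But the proof as written has a genuine gap at exactly the step you flag yourself: the cancellation that yields $\P[Q_i=Q\mid\phi_i]\le\Delta/|U_i|$ is never established. The mechanism is elementary but must be stated: the number of uncolored edges incident to a vertex $x$ is at most $|M(\phi_i,x)|$, hence
\[
\P[x_i=x,\ \alpha_i=\gamma]\ \le\ \frac{|M(\phi_i,x)|}{2|U_i|}\cdot\frac{1}{|M(\phi_i,x)|}\ =\ \frac{1}{2|U_i|},
\]
and summing over the at most $2\Delta$ candidate pivots $x\in N_G(u)\cup N_G(v)$ gives $\Delta/|U_i|$. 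Your text instead oscillates between a raw count of $O(\Delta^2)$ compatible pairs $(e_i,x_i)$ (which only gives $O(\Delta^2)/|U_i|$), an admission that a $1/\Delta$ saving from the choice of $\alpha_i$ is needed ``on average,'' and a proposed ``refined count of at most $\Delta$ values of $(e_i,x_i)$ per vertex of $Q$,'' which is not correct as stated: per endpoint of $Q$ there are up to $\Delta$ candidate pivots, each carrying up to $|M(\phi_i,x)|$ uncolored edges, and the saving comes precisely from weighting each pivot by $1/|M(\phi_i,x)|$ as above. Leaving the key claim in the form $\P[Q_i=Q\mid\phi_i]\le 2\Delta^2/(|U_i|\cdot(\text{something}))$ means the quantitative heart of the lemma is missing.

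Moreover, even granting $\P[Q_i=Q\mid\phi_i]\le \Delta/|U_i|$, your concluding summation is a factor of $\Delta$ too lossy for the stated inequality: you bound $\sum_{Q\in\mathcal{P}_i}\length(Q)$ by (at most $\Delta^2$ color pairs) times (at most $m$ edges per pair), i.e.\ by roughly $\Delta^2m$, which yields $\E[T_i\mid\phi_i]=O(\Delta^3 m/|U_i|)$; a factor of $\Delta$ cannot be absorbed into constants. The paper instead counts per colored edge: an alternating path through a colored edge $e$ is determined by $e$ and its color pair, one of whose colors is $\phi_i(e)$, so $e$ lies on at most $\Delta+1\le 2\Delta$ paths of $\mathcal{P}_i$, giving $\sum_Q\length(Q)\le 2\Delta m$ and hence exactly the claimed $2\Delta^2m/(m-i+1)$. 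Both fixes are short, but without them the stated bound is not proved.
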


\begin{proof}
    The random choices made are $e_i$, $x_i$ in Algorithm \ref{alg:seq_log} and $\alpha_i$ in Algorithm \ref{alg:viz_chain}.
    Let $Q \in \mathcal{P}_i$ have endpoints $u \neq v$. Let $\gamma_u \notin M(\phi_i, u)$ and $\gamma_v \notin M(\phi_i, v)$ where $\gamma_u$, $\gamma_v$ are colors on $Q$. For $Q_i$ to take the value $Q$, we must have $x_i \in N_G(u)$ and $\alpha_i = \gamma_u$ or $x_i \in N_G(v)$ and $\alpha_i = \gamma_v$. It follows that
    \begin{align*}
        \P[Q_i = Q|\phi_i] \,&\leq\,  \sum_{x \in N_G(u)}\P[x_i = x, \,\alpha_i = \gamma_u] + \sum_{x \in N_G(v)}\P[x_i = x, \, \alpha_i = \gamma_v] \\
        &\leq\, \sum_{x \in N_G(u)}\frac{|M(\phi_i, x)|}{2|U_i|}\,\frac{1}{|M(\phi_i, x)|} + \sum_{x \in N_G(v)}\frac{|M(\phi_i, x)|}{2|U_i|}\,\frac{1}{|M(\phi_i, x)|} \,\leq\, \frac{\Delta}{|U_i|}.
    \end{align*}
    The second inequality above follows due to the following:
    \begin{align*}
        \P[x_i = x] &\leq \frac{|M(\phi_i, x)|}{|U_i|}\,\frac{1}{2} \qquad\qquad \text{since $x$ is incident to at most $|M(\phi_i, x)|$ uncolored edges}, \\
        \P[\alpha_i = \gamma_u &\mid x_i = x] \leq \frac{1}{|M(\phi_i, x)|}.
    \end{align*}
    Let us count the number of paths in $\mathcal{P}_i$ passing through a specific $\phi_i$-colored edge $e$. An alternating path is uniquely determined by its (at most $2$) colors and a single edge on it. Given $e$, we know one of the colors and there are at most $\Delta$ choices for the second color (if it exists). It follows that $e$ can be in at most $\Delta+1 \leq 2\Delta$ alternating paths.
    With this in mind, we have
    \begin{align*}
        \E[T_i|\phi_i] \,
        \leq\, \sum_{Q \in \mathcal{P}_i}\frac{\Delta}{|U_i|}\,\length(Q) 
        \,&=\, \frac{\Delta}{|U_i|}\sum_{Q \in \mathcal{P}_i}\sum_{e \in E\setminus U_i}\mathbbm{1}\set{e \in E(Q)} \\
        \,&=\, \frac{\Delta}{|U_i|}\sum_{e \in E\setminus U_i}\sum_{Q \in \mathcal{P}_i}\mathbbm{1}\set{e \in E(Q)} 
        \,\leq\, \frac{2\,\Delta^2\,m}{|U_i|}.
    \end{align*}
    This completes the proof as $|U_i| = m - i + 1$.
\end{proof}

From \eqref{eqn:expectation} and Lemma \ref{lemma:expected_runtime}, it follows that
\[\E[T_i|T_1, \ldots, T_{i-1}] \,=\, \E[\E[T_i|\phi_i]|T_1, \ldots, T_{i-1}] \,\leq\, \frac{2\Delta^2\,m}{m-i+1}.\]
To finish the time analysis of Algorithm~\ref{alg:seq_log}, we shall apply the following concentration inequality due to Kuszmaul and Qi \cite{azuma}, which is a special case of their version of multiplicative Azuma's inequality for supermartingales:

\begin{theo}[{Kuszmaul--Qi \cite[Corollary 6]{azuma}}]\label{theo:azuma_supermartingale}
    Let $c > 0$ and let $X_1$, \ldots, $X_n$ be 
    random variables taking values in $[0,c]$.
    Suppose that $\E[X_i|X_1, \ldots, X_{i-1}] \leq a_i$ for all $i$.
    Let $\mu \defeq \sum_{i = 1}^na_i$. Then, for any $\delta > 0$,
    \[\P\left[\sum_{i = 1}^nX_i \geq (1+\delta)\mu\right] \,\leq\, \exp\left(-\frac{\delta^2\mu}{(2+\delta)c}\right).\]
\end{theo}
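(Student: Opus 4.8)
The plan is to run the classical Chernoff--Bernstein exponential-moment argument, with the usual ``product of moment generating functions'' step (which would need independence) replaced by a supermartingale. First I would reduce to the case $c = 1$: replacing each $X_i$ by $X_i/c$ turns $a_i$ into $a_i/c$ and $\mu$ into $\mu/c$, while the event $\{\sum_i X_i \geq (1+\delta)\mu\}$ is unchanged and the target bound $\exp(-\delta^2\mu/((2+\delta)c))$ scales consistently; so from now on $X_i \in [0,1]$. Write $\mathcal F_{i} \defeq \sigma(X_1, \ldots, X_{i})$, with $\mathcal F_0$ trivial, and note that the hypotheses force $a_i \geq \E[X_i \mid \mathcal F_{i-1}] \geq 0$, so the statement is trivial when $\mu = 0$.

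Next I would establish the local exponential-moment estimate. Fix $\lambda > 0$. Since $t \mapsto e^{\lambda t}$ is convex, $e^{\lambda t} \leq 1 + (e^\lambda - 1)t$ on $[0,1]$, so taking conditional expectations and using $e^\lambda - 1 > 0$ together with $\E[X_i \mid \mathcal F_{i-1}] \leq a_i$,
\[
    \E\!\left[e^{\lambda X_i} \mid \mathcal F_{i-1}\right] \,\leq\, 1 + (e^\lambda - 1)\,\E[X_i \mid \mathcal F_{i-1}] \,\leq\, 1 + (e^\lambda - 1)a_i \,\leq\, \exp\!\big((e^\lambda-1)a_i\big).
\]
Consequently the process $Y_i \defeq \exp\!\big(\lambda\sum_{j\leq i}X_j - (e^\lambda-1)\sum_{j\leq i}a_j\big)$, with $Y_0 \defeq 1$, is $\mathcal F_i$-measurable (the $a_j$ being constants) and satisfies $\E[Y_i\mid\mathcal F_{i-1}] \leq Y_{i-1}$; it is a nonnegative supermartingale, hence $\E[Y_n] \leq 1$.

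Then I would finish with Markov's inequality. Since $Y_n = \exp\!\big(\lambda\sum_i X_i - (e^\lambda-1)\mu\big)$, for every $\lambda > 0$
\[
    \P\!\left[\sum_{i=1}^n X_i \geq (1+\delta)\mu\right] \,=\, \P\!\left[Y_n \geq e^{\lambda(1+\delta)\mu-(e^\lambda-1)\mu}\right] \,\leq\, \E[Y_n]\,e^{(e^\lambda-1)\mu-\lambda(1+\delta)\mu} \,\leq\, \exp\!\Big(\mu\big((e^\lambda-1)-\lambda(1+\delta)\big)\Big).
\]
Optimizing over $\lambda$, I take $\lambda = \ln(1+\delta)$, which yields the standard multiplicative bound $\exp\!\big(\mu(\delta - (1+\delta)\ln(1+\delta))\big)$, and then I invoke the elementary scalar inequality $\delta - (1+\delta)\ln(1+\delta) \leq -\delta^2/(2+\delta)$ for $\delta > 0$ --- equivalently $\ln(1+\delta) \geq 2\delta/(2+\delta)$, which holds with equality at $\delta = 0$ and follows by comparing derivatives (or from the continued-fraction lower bound for the logarithm) --- to obtain the claimed exponent.

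The only point requiring genuine care is conceptual rather than computational: the $X_i$ are neither independent nor a martingale difference sequence, and we are given only the one-sided control $\E[X_i\mid\mathcal F_{i-1}]\leq a_i$. The supermartingale $Y_i$ is precisely what lets the Chernoff scheme go through regardless, and the one-sidedness is harmless because the coefficient $e^\lambda - 1$ multiplying the conditional mean is positive, so the inequality direction is preserved. Everything else --- the rescaling to $c=1$, the convexity bound on $e^{\lambda t}$, and the terminal calculus estimate --- is routine.
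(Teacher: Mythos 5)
Your argument is correct and complete: the rescaling to $c=1$ is consistent, the convexity bound $e^{\lambda t}\leq 1+(e^{\lambda}-1)t$ on $[0,1]$ combined with $e^{\lambda}-1>0$ correctly exploits the one-sided hypothesis $\E[X_i\mid\mathcal F_{i-1}]\leq a_i$, the process $Y_i$ is indeed a nonnegative supermartingale so $\E[Y_n]\leq 1$, and the choice $\lambda=\ln(1+\delta)$ together with $\ln(1+\delta)\geq 2\delta/(2+\delta)$ yields exactly the claimed exponent (the degenerate case $\mu=0$ is handled, since then the bound is $1$). Note that the paper itself gives no proof of this statement—it is quoted as Corollary 6 of Kuszmaul and Qi—so there is nothing internal to compare against; your derivation is the standard Chernoff/exponential-supermartingale route, essentially the same scheme used in the cited source, and would serve as a self-contained substitute for the citation.
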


We trivially have $T_i \in [0,n]$. 
From Lemma \ref{lemma:expected_runtime}, it follows that we can apply Theorem \ref{theo:azuma_supermartingale} with 
\[X_i = T_i, \quad c = n, \quad a_i = \frac{2\Delta^2m}{m-i+1}.\]
Assuming $G$ has no isolated vertices, we must have $n/2 \leq m \leq \Delta\,n/2$.
With this in hand, from the bounds on partial sums of the harmonic series \ep{see, e.g., \cite[\S2.4]{bressoud2022radical}}, we have
\[\Delta^2n\log n \,\leq\, \mu \,\leq\, 2\Delta^3n\log n.\]
It follows that for any $\delta > 0$, 
\[\P[T \geq 2\,(1+\delta)\,\Delta^3n\log n] \,\leq\, \exp\left(-\frac{\delta^2\,\Delta^2\,n\log n}{(2+\delta)n}\right).\]
For a suitable choice of $\delta$, it follows that $T = O(\Delta^3\,n\log n)$ with probability at least $1 - 1/\poly(n)$. As mentioned above, exactly the same analysis applies to $T_i'$ in place of $T_i$. Therefore, Algorithm~\ref{alg:seq_log} terminates in $O(\Delta^3n\log n)$ time with probability at least $1 - 1/\poly(n)$, as desired.

\section{The Coloring Procedure}\label{sec:algorithms}

This section is split into two subsections. In the first subsection, we describe the \hyperref[alg:multi_viz_chain]{Multi-Step Vizing Algorithm} (MSVA), which constructs a $\phi$-happy multi-step Vizing chain given a partial coloring $\phi$, an uncolored edge $e$, and a starting vertex $x \in e$.  In later sections we will bound the length of the chain and the runtime of the algorithm in terms of $n$, $\Delta$, and a parameter $\ell$. 
The value for $\ell$ will be specified later; for now, we shall assume that $\ell$ is a polynomial in $\Delta$ of sufficiently high degree.\footnote{To be precise, we will choose $\ell = \Theta(\Delta^{16})$ for our algorithms, but we make no attempt to optimize the degree of the polynomial.}.
In the second subsection, we prove the correctness of the \hyperref[alg:multi_viz_chain]{MSVA} as well as some technical lemmas regarding the chain computed that will be relevant to the remainder of the paper.

\subsection{The Multi-Step Vizing Algorithm}\label{subsec:algo_overview}

Our goal is to build a multi-step Vizing chain $C$. We will split the algorithm into subprocedures.
Essentially, these subprocedures involve the usual constructions of Vizing chains with an added truncation step as described in \S\ref{subsec:MSVC_inf}.
The first subprocedure (described formally in Algorithm~\ref{alg:first_chain}) is used to build the first Vizing chain on $C$. Given a partial coloring $\phi$, an uncolored edge $e = xy$, a vertex $x \in e$, and a parameter $\ell \in \N$, it returns a Vizing chain $F+P$ which satisfies certain properties which will be described and proved in \S\ref{subsec:algo_lemmas}.
The Algorithm is nearly identical to Algorithm \ref{alg:viz_chain} with the only differences that we do not compute the entire paths $P$, $P'$ and we do not pick the color $\alpha$ randomly.

\begin{algorithm}[h]\algsize
\caption{First Chain}\label{alg:first_chain}
\begin{flushleft}
\textbf{Input}: A proper partial edge-coloring $\phi$, an uncolored edge $e = xy$, a vertex $x \in e$, and a parameter $\ell \in \N$. \\
\textbf{Output}: A fan $F$ with $\Start(F) = e$ and $\Pivot(F) = x$ and a path $P$ with $\Start(P) = \End(F)$ and $\vstart(P) = \Pivot(F) = x$. 
\end{flushleft}
\begin{algorithmic}[1]
    \State $(F, \beta, j) \gets \hyperref[alg:first_fan]{\mathsf{FirstFan}}(\phi, xy, x)$ \Comment{Algorithm \ref{alg:first_fan}} \label{step:first_fan}
    \If{$\beta \in M(\phi, x)$}
        \State \Return $F$, $(\End(F))$\label{step:happy_first}
    \EndIf
    \State $\alpha \gets \min M(\phi, x)$
    \State $F' \gets F|j$
    \State $P \gets P(\End(F);\, \Shift(\phi, F),\, \alpha\beta), \quad P' \gets P(\End(F');\, \Shift(\phi, F'),\, \alpha\beta)$
    \If{$\length(P) > 2\ell$ \textbf{or} $\vend(P) \neq x$}\label{step:path_condition}
        \State \Return $F$, $P|2\ell$
    \Else
        \State \Return $F'$, $P'|2\ell$
    \EndIf
\end{algorithmic}
\end{algorithm}

    Notice that in Algorithm~\ref{alg:first_chain}, we do not need to compute the entire paths $P$, $P'$; instead, we only have to compute their initial segments of length $2\ell$, which can be done in time $O(\ell)$. The same reasoning as in the runtime analysis of Algorithm \ref{alg:viz_chain} shows that the running time of Algorithm \ref{alg:first_chain} is $O(\Delta^2+\ell) = O(\ell)$ as our choice for $\ell$ will be large \ep{considerably larger than $\Delta$}.

\begin{algorithm}[h]\algsize
\caption{Next Chain}\label{alg:next_chain}
\begin{flushleft}
\textbf{Input}: A proper partial edge-coloring $\phi$, an uncolored edge $e = xy$, a vertex $x \in e$, a parameter $\ell \in \N$, and a pair of colors $\alpha \in M(\phi, x) \setminus M(\phi, y)$ and $\beta \in M(\phi, y)$. \\
\textbf{Output}: A fan $F$ with $\Start(F) = e$ and $\Pivot(F) = x$ and a path $P$ with $\Start(P) = \End(F)$ and $\vstart(P) = \Pivot(F) = x$. 
\end{flushleft}
\begin{algorithmic}[1]
    \State $(F, \delta, j) \gets \hyperref[alg:next_fan]{\mathsf{NextFan}}(\phi, xy, x, \beta)$ \Comment{Algorithm \ref{alg:next_fan}} \label{step:next_fan}
    \If{$\delta \in M(\phi, x)$}\label{step:happy_next}
        \State \Return $F$, $(\End(F))$
    \EndIf
    \If{$\delta  = \beta$}
        \State $P \gets P(\End(F);\, \Shift(\phi, F),\, \alpha\beta)$
        \State \Return $F$, $P|2\ell$
    \EndIf
    \State $\gamma \gets \min M(\phi, x)\setminus\set{\alpha}$
    \State $F' \gets F|j$
    \State $P \gets P(\End(F); \, \Shift(\phi, F),\, \gamma\delta), \quad P' \gets P(\End(F'); \, \Shift(\phi, F'), \, \gamma\delta)$
    \If{$\length(P) > 2\ell$ \textbf{or} $\vend(P) \neq x$}\label{step:next_path_condition} 
        \State \Return $F$, $P|2\ell$
    \Else
        \State \Return $F'$, $P'|2\ell$
    \EndIf
\end{algorithmic}
\end{algorithm}

    The next algorithm we describe, namely Algorithm~\ref{alg:next_chain},
    is the subprocedure to build the ``next'' Vizing chain on $C$. Given a partial coloring $\phi$, an uncolored edge $e = xy$, a vertex $x \in e$, a parameter $\ell \in \N$, and a pair of colors $\alpha \in M(\phi, x) \setminus M(\phi, y)$, $\beta \in M(\phi, y)$, it returns a chain $F+P$ which satisfies certain properties that will be described and proved in \S\ref{subsec:algo_lemmas}. 
    The colors $\alpha$, $\beta$ represent the colors on the path of the previous Vizing chain and the coloring $\phi$ refers to the shifted coloring with respect to the multi-step Vizing chain computed so far.
    Following a nearly identical argument as for Algorithm \ref{alg:first_chain}, we see that the runtime of Algorithm \ref{alg:next_chain} is bounded by $O(\ell)$ as well.

Before we describe our \hyperref[alg:multi_viz_chain]{Multi-Step Vizing Algorithm}, we define the non-intersecting property of a chain $C$. Recall that $\IE(P)$ and $\IV(P)$ are the sets of internal edges and vertices of a path chain $P$, introduced in Definition~\ref{defn:internal}.

\begin{defn}[Non-intersecting chains]\label{defn:non-int}
    A $k$-step Vizing chain $C = F_0 + P_0 + \cdots + F_{k-1} + P_{k-1}$ is \emphd{non-intersecting} if for all $0\leq i < j < k$, $V(F_i) \cap V(F_j + P_j) = \0$ and $\IE(P_i) \cap E(F_j + P_j) = \0$.
\end{defn}

In our algorithm, we build a non-intersecting multi-step Vizing chain $C$. 
Before we state the algorithm rigorously, we provide an informal overview. To start with, we have a chain $C = (xy)$ containing just the uncolored edge. Using Algorithm~\ref{alg:first_chain}, we find the first Vizing chain $F+P$. 
With this chain defined, we begin iterating.

At the start of each iteration, we have a non-intersecting chain $C = F_0 + P_0 + \cdots + F_{k-1} +P_{k-1}$ and a \emphd{candidate chain} $F+P$ satisfying the following properties for $\psi \defeq \Shift(\phi, C)$:
\begin{enumerate}[label=\ep{\normalfont{}\texttt{Inv}\arabic*},labelindent=15pt,leftmargin=*]
    \item\label{inv:start_F_end_C} $\Start(F) = \End(C)$ and $\vstart(F) = \vend(C)$,
    \item\label{inv:non_intersecting_shiftable} $C + F + P$ is non-intersecting and $\phi$-shiftable, and 
    \item\label{inv:hopeful_length} $F$ is either $\psi$-happy or $(\psi, \alpha\beta)$-hopeful, where $P$ is an $\alpha\beta$-path; furthermore, if $F$ is $(\psi, \alpha\beta)$-disappointed, then $\length(P) = 2\ell$.
\end{enumerate}
In the next subsection, we will prove that these invariants hold. For now, let us take them to be true. Fig.~\ref{fig:iteration_start} shows an example of $C$ (in black) and $F+P$ (in red) at the start of an iteration.
We remark that in general, $C + F + P$ may be self-intersecting as we do allow paths on different chains to intersect at vertices.
For simplicity, we do not illustrate such examples in our figures.

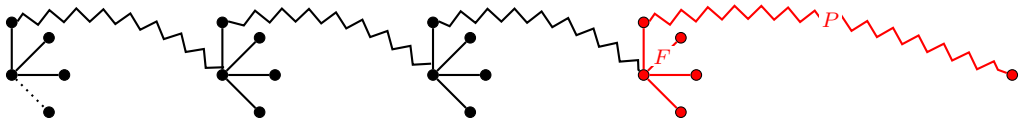
\begin{figure}[H]
    \centering
    \begin{tikzpicture}[xscale = 0.7,yscale=0.7]
        \node[circle,fill=black,draw,inner sep=0pt,minimum size=4pt] (a) at (0,0) {};
        	\path (a) ++(-45:1) node[circle,fill=black,draw,inner sep=0pt,minimum size=4pt] (b) {};
        	\path (a) ++(0:1) node[circle,fill=black,draw,inner sep=0pt,minimum size=4pt] (c) {};
        	\path (a) ++(45:1) node[circle,fill=black,draw,inner sep=0pt,minimum size=4pt] (d) {};
        	\path (a) ++(90:1) node[circle,fill=black,draw,inner sep=0pt,minimum size=4pt] (e) {};

            \path (c) ++(0:3) node[circle,fill=black,draw,inner sep=0pt,minimum size=4pt] (f) {};
            
        	\path (f) ++(-45:1) node[circle,fill=black,draw,inner sep=0pt,minimum size=4pt] (g) {};
        	\path (f) ++(0:1) node[circle,fill=black,draw,inner sep=0pt,minimum size=4pt] (h) {};
        	\path (f) ++(45:1) node[circle,fill=black,draw,inner sep=0pt,minimum size=4pt] (i) {};
        	\path (f) ++(90:1) node[circle,fill=black,draw,inner sep=0pt,minimum size=4pt] (j) {};
        	
        	\path (h) ++(0:3) node[circle,fill=black,draw,inner sep=0pt,minimum size=4pt] (k) {};

            \path (k) ++(-45:1) node[circle,fill=black,draw,inner sep=0pt,minimum size=4pt] (l) {};
        	\path (k) ++(0:1) node[circle,fill=black,draw,inner sep=0pt,minimum size=4pt] (m) {};
        	\path (k) ++(45:1) node[circle,fill=black,draw,inner sep=0pt,minimum size=4pt] (n) {};
        	\path (k) ++(90:1) node[circle,fill=black,draw,inner sep=0pt,minimum size=4pt] (o) {};
        	
        	\path (m) ++(0:3) node[circle,fill=red,draw,inner sep=0pt,minimum size=4pt] (p) {};
        	
        	\path (p) ++(-45:1) node[circle,fill=red,draw,inner sep=0pt,minimum size=4pt] (q) {};
        	\path (p) ++(0:1) node[circle,fill=red,draw,inner sep=0pt,minimum size=4pt] (r) {};
        	\path (p) ++(45:1) node[circle,fill=red,draw,inner sep=0pt,minimum size=4pt] (s) {};
        	\path (p) ++(90:1) node[circle,fill=red,draw,inner sep=0pt,minimum size=4pt] (t) {};
        	
        	\path (r) ++(0:6) node[circle,fill=red,draw,inner sep=0pt,minimum size=4pt] (u) {};

        	\draw[thick,dotted] (a) -- (b);
        	
        	\draw[thick, decorate,decoration=zigzag] (e) to[out=10,in=135] (f) (j) to[out=10,in=135] (k) (o) to[out=10,in=135] (p);
            \draw[thick, decorate,decoration=zigzag, red](t) to[out=10,in=160] node[font=\fontsize{8}{8},midway,inner sep=1pt,outer sep=1pt,minimum size=4pt,fill=white] {$P$} (u);
        	
        	\draw[thick] (a) -- (c) (a) -- (d) (a) -- (e) (f) -- (g) (f) -- (h) (f) -- (i) (f) -- (j) (k) -- (l) (k) -- (m) (k) -- (n) (k) -- (o);
            \draw[thick, red] (p) -- (q) (p) -- (r) (p) to node[font=\fontsize{8}{8},midway,inner sep=1pt,outer sep=1pt,minimum size=4pt,fill=white] {$F$} (s) (p) -- (t);
    	
    \end{tikzpicture}
    \caption{The chain $C$ and candidate chain $F+P$ at the start of an iteration.}
    \label{fig:iteration_start}
\end{figure}
We first check whether $\length(P) < 2\ell$, in which case $F$ is $(\psi, \alpha\beta)$-successful by \ref{inv:hopeful_length}).
If so, we have found a $\phi$-happy multi-step Vizing chain and we return $C+F+P$.
If not, we let $F_k = F$ and consider a random initial segment $P_k$ of $P$ of length between $\ell$ and $2\ell - 1$. 
Using Algorithm \ref{alg:next_chain}, we find a chain $\tilde F + \tilde P$ with $\Start(\tilde F) = \End(P_k)$ and $\vstart(\tilde{F}) = \vend(P_k)$.

At this point, we have two cases to consider. First, suppose $C + F_k + P_k + \tilde F + \tilde P$ is non-intersecting.
We then continue on, updating the chain to be $C+F_k+P_k$ and the candidate chain to be $\tilde F + \tilde P$. Fig.~\ref{fig:non_intersecting_iteration} shows an example of such an update with $\tilde F + \tilde P$ shown in blue.

\begin{figure}[H]
    \centering
        \begin{tikzpicture}[xscale = 0.7,yscale=0.7]
            \node[circle,fill=black,draw,inner sep=0pt,minimum size=4pt] (a) at (0,0) {};
        	\path (a) ++(-45:1) node[circle,fill=black,draw,inner sep=0pt,minimum size=4pt] (b) {};
        	\path (a) ++(0:1) node[circle,fill=black,draw,inner sep=0pt,minimum size=4pt] (c) {};
        	\path (a) ++(45:1) node[circle,fill=black,draw,inner sep=0pt,minimum size=4pt] (d) {};
        	\path (a) ++(90:1) node[circle,fill=black,draw,inner sep=0pt,minimum size=4pt] (e) {};

            \path (c) ++(0:3) node[circle,fill=black,draw,inner sep=0pt,minimum size=4pt] (f) {};
            
        	\path (f) ++(-45:1) node[circle,fill=black,draw,inner sep=0pt,minimum size=4pt] (g) {};
        	\path (f) ++(0:1) node[circle,fill=black,draw,inner sep=0pt,minimum size=4pt] (h) {};
        	\path (f) ++(45:1) node[circle,fill=black,draw,inner sep=0pt,minimum size=4pt] (i) {};
        	\path (f) ++(90:1) node[circle,fill=black,draw,inner sep=0pt,minimum size=4pt] (j) {};
        	
        	\path (h) ++(0:3) node[circle,fill=black,draw,inner sep=0pt,minimum size=4pt] (k) {};

            \path (k) ++(-45:1) node[circle,fill=black,draw,inner sep=0pt,minimum size=4pt] (l) {};
        	\path (k) ++(0:1) node[circle,fill=black,draw,inner sep=0pt,minimum size=4pt] (m) {};
        	\path (k) ++(45:1) node[circle,fill=black,draw,inner sep=0pt,minimum size=4pt] (n) {};
        	\path (k) ++(90:1) node[circle,fill=black,draw,inner sep=0pt,minimum size=4pt] (o) {};
        	
        	\path (m) ++(0:3) node[circle,fill=black,draw,inner sep=0pt,minimum size=4pt] (p) {};
        	
        	\path (p) ++(-45:1) node[circle,fill=black,draw,inner sep=0pt,minimum size=4pt] (q) {};
        	\path (p) ++(0:1) node[circle,fill=black,draw,inner sep=0pt,minimum size=4pt] (r) {};
        	\path (p) ++(45:1) node[circle,fill=black,draw,inner sep=0pt,minimum size=4pt] (s) {};
        	\path (p) ++(90:1) node[circle,fill=black,draw,inner sep=0pt,minimum size=4pt] (t) {};
        	
        	\path (r) ++(0:3) node[circle,fill=red,draw,inner sep=0pt,minimum size=4pt] (u) {};
        	
        	\path (u) ++(-45:1) node[circle,fill=red,draw,inner sep=0pt,minimum size=4pt] (v) {};
        	\path (u) ++(0:1) node[circle,fill=red,draw,inner sep=0pt,minimum size=4pt] (w) {};
        	\path (u) ++(45:1) node[circle,fill=red,draw,inner sep=0pt,minimum size=4pt] (x) {};
        	\path (u) ++(90:1) node[circle,fill=red,draw,inner sep=0pt,minimum size=4pt] (y) {};
        	
        	\path (w) ++(0:6) node[circle,fill=red,draw,inner sep=0pt,minimum size=4pt] (z) {};
        	
        	\draw[thick,dotted] (a) -- (b);
        	
        	\draw[thick, decorate,decoration=zigzag] (e) to[out=10,in=135] (f) (j) to[out=10,in=135] (k) (o) to[out=10,in=135] (p) (t) to[out=10,in=135] node[font=\fontsize{8}{8},midway,inner sep=1pt,outer sep=1pt,minimum size=4pt,fill=white] {$P_k$} (u);
            \draw[thick, decorate,decoration=zigzag, red](y) to[out=10,in=160] node[font=\fontsize{8}{8},midway,inner sep=1pt,outer sep=1pt,minimum size=4pt,fill=white] {$\tilde P$} (z);
        	
        	\draw[thick] (a) -- (c) (a) -- (d) (a) -- (e) (f) -- (g) (f) -- (h) (f) -- (i) (f) -- (j) (k) -- (l) (k) -- (m) (k) -- (n) (k) -- (o) (p) -- (q) (p) -- (r) (p) to node[font=\fontsize{8}{8},midway,inner sep=1pt,outer sep=1pt,minimum size=4pt,fill=white] {$F_k$} (s) (p) -- (t);
            \draw[thick, red] (u) -- (v) (u) -- (w) (u) to node[font=\fontsize{8}{8},midway,inner sep=1pt,outer sep=1pt,minimum size=4pt,fill=white] {$\tilde F$} (x) (u) -- (y);

        \begin{scope}[yshift=5.5cm]
            \node[circle,fill=black,draw,inner sep=0pt,minimum size=4pt] (a) at (0,0) {};
        	\path (a) ++(-45:1) node[circle,fill=black,draw,inner sep=0pt,minimum size=4pt] (b) {};
        	\path (a) ++(0:1) node[circle,fill=black,draw,inner sep=0pt,minimum size=4pt] (c) {};
        	\path (a) ++(45:1) node[circle,fill=black,draw,inner sep=0pt,minimum size=4pt] (d) {};
        	\path (a) ++(90:1) node[circle,fill=black,draw,inner sep=0pt,minimum size=4pt] (e) {};

            \path (c) ++(0:3) node[circle,fill=black,draw,inner sep=0pt,minimum size=4pt] (f) {};
            
        	\path (f) ++(-45:1) node[circle,fill=black,draw,inner sep=0pt,minimum size=4pt] (g) {};
        	\path (f) ++(0:1) node[circle,fill=black,draw,inner sep=0pt,minimum size=4pt] (h) {};
        	\path (f) ++(45:1) node[circle,fill=black,draw,inner sep=0pt,minimum size=4pt] (i) {};
        	\path (f) ++(90:1) node[circle,fill=black,draw,inner sep=0pt,minimum size=4pt] (j) {};
        	
        	\path (h) ++(0:3) node[circle,fill=black,draw,inner sep=0pt,minimum size=4pt] (k) {};

            \path (k) ++(-45:1) node[circle,fill=black,draw,inner sep=0pt,minimum size=4pt] (l) {};
        	\path (k) ++(0:1) node[circle,fill=black,draw,inner sep=0pt,minimum size=4pt] (m) {};
        	\path (k) ++(45:1) node[circle,fill=black,draw,inner sep=0pt,minimum size=4pt] (n) {};
        	\path (k) ++(90:1) node[circle,fill=black,draw,inner sep=0pt,minimum size=4pt] (o) {};
        	
        	\path (m) ++(0:3) node[circle,fill=red,draw,inner sep=0pt,minimum size=4pt] (p) {};
        	
        	\path (p) ++(-45:1) node[circle,fill=red,draw,inner sep=0pt,minimum size=4pt] (q) {};
        	\path (p) ++(0:1) node[circle,fill=red,draw,inner sep=0pt,minimum size=4pt] (r) {};
        	\path (p) ++(45:1) node[circle,fill=red,draw,inner sep=0pt,minimum size=4pt] (s) {};
        	\path (p) ++(90:1) node[circle,fill=red,draw,inner sep=0pt,minimum size=4pt] (t) {};
        	
        	\path (r) ++(0:3) node[circle,fill=blue,draw,inner sep=0pt,minimum size=4pt] (u) {};
        	
        	\path (u) ++(-45:1) node[circle,fill=blue,draw,inner sep=0pt,minimum size=4pt] (v) {};
        	\path (u) ++(0:1) node[circle,fill=blue,draw,inner sep=0pt,minimum size=4pt] (w) {};
        	\path (u) ++(45:1) node[circle,fill=blue,draw,inner sep=0pt,minimum size=4pt] (x) {};
        	\path (u) ++(90:1) node[circle,fill=blue,draw,inner sep=0pt,minimum size=4pt] (y) {};
        	
        	\path (w) ++(0:6) node[circle,fill=blue,draw,inner sep=0pt,minimum size=4pt] (z) {};
        	
        	\draw[thick,dotted] (a) -- (b);
        	
        	\draw[thick, decorate,decoration=zigzag] (e) to[out=10,in=135] (f) (j) to[out=10,in=135] (k) (o) to[out=10,in=135] (p);
            \draw[thick, decorate,decoration=zigzag, red](t) to[out=10,in=135] node[font=\fontsize{8}{8},midway,inner sep=1pt,outer sep=1pt,minimum size=4pt,fill=white] {$P_k$} (u);
            \draw[thick, decorate,decoration=zigzag, blue](y) to[out=10,in=160] node[font=\fontsize{8}{8},midway,inner sep=1pt,outer sep=1pt,minimum size=4pt,fill=white] {$\tilde P$} (z);
        	
        	\draw[thick] (a) -- (c) (a) -- (d) (a) -- (e) (f) -- (g) (f) -- (h) (f) -- (i) (f) -- (j) (k) -- (l) (k) -- (m) (k) -- (n) (k) -- (o);
            \draw[thick, red] (p) -- (q) (p) -- (r) (p) to node[font=\fontsize{8}{8},midway,inner sep=1pt,outer sep=1pt,minimum size=4pt,fill=white] {$F_k$} (s) (p) -- (t);
            \draw[thick, blue] (u) -- (v) (u) -- (w) (u) to node[font=\fontsize{8}{8},midway,inner sep=1pt,outer sep=1pt,minimum size=4pt,fill=white] {$\tilde F$} (x) (u) -- (y);
        \end{scope}

        \begin{scope}[yshift=3cm]
            \draw[-{Stealth[length=3mm,width=2mm]},very thick,decoration = {snake,pre length=3pt,post length=7pt,},decorate] (11.5,1) -- (11.5,-1);
        \end{scope}
        	
        \end{tikzpicture}
    \caption{Example of a non-intersecting update.}
    \label{fig:non_intersecting_iteration}
\end{figure}
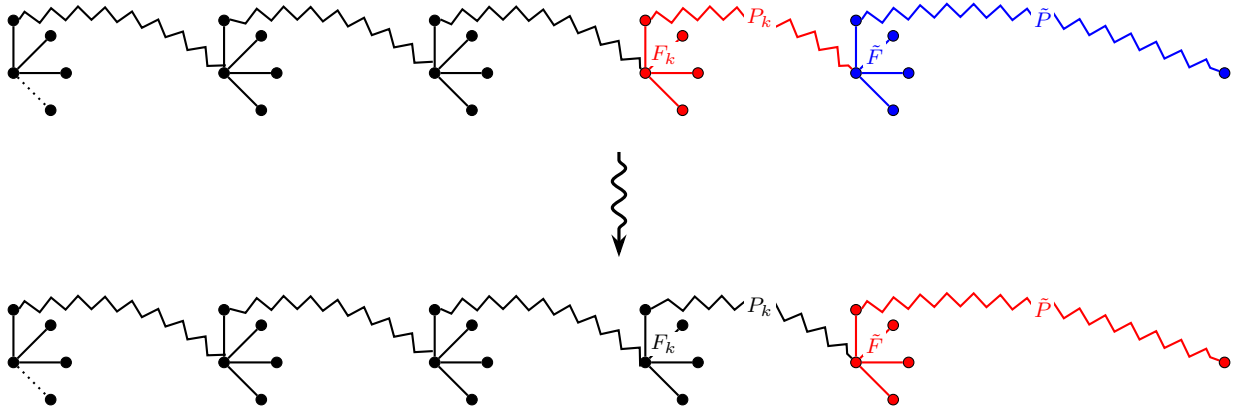

Now suppose $\tilde F + \tilde P$ intersects $C + F_k + P_k$. The edges and vertices of $\tilde{F} + \tilde{P}$ are naturally ordered, and we let $0 \leq j \leq k$ be the index such that the first intersection point of $\tilde{F} + \tilde{P}$ with $C + F_k + P_k$ occurred at $F_j+P_j$. Then we update $C$ to $C' \defeq F_0 + P_0 + \cdots + F_{j-1} + P_{j-1}$ and $F+P$ to $F_j + P'$, where $P'$ is the path of length $2\ell$ from which $P_j$ was obtained as an initial segment. Now, during the next iteration, we will ``resample'' $P_j$ by randomly truncating $P'$ again.
Fig.~\ref{fig:intersecting_iteration} shows an example of such an update.

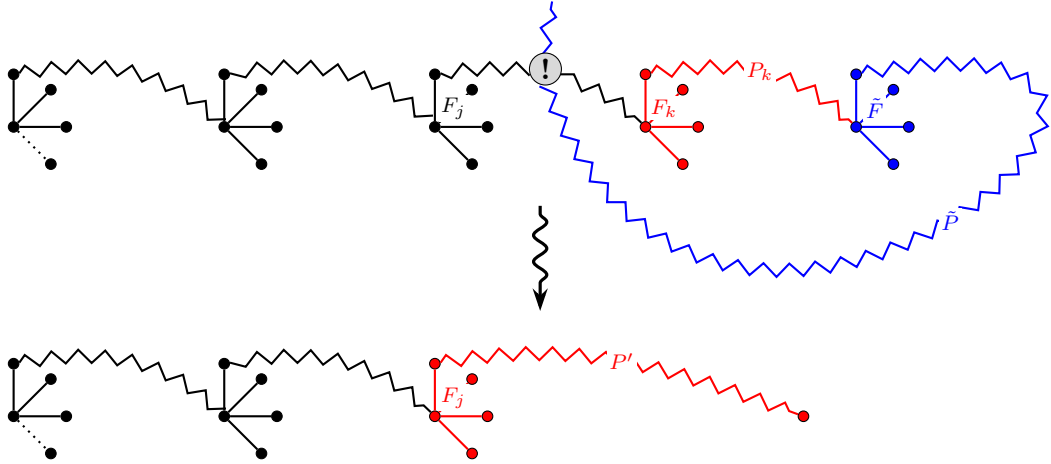
\begin{figure}[H]
    \centering
        \begin{tikzpicture}[xscale = 0.7, yscale=0.7]
            \clip (-0.5,-6.5) rectangle (20,2.5);
    
            \node[circle,fill=black,draw,inner sep=0pt,minimum size=4pt] (a) at (0,0) {};
        	\path (a) ++(-45:1) node[circle,fill=black,draw,inner sep=0pt,minimum size=4pt] (b) {};
        	\path (a) ++(0:1) node[circle,fill=black,draw,inner sep=0pt,minimum size=4pt] (c) {};
        	\path (a) ++(45:1) node[circle,fill=black,draw,inner sep=0pt,minimum size=4pt] (d) {};
        	\path (a) ++(90:1) node[circle,fill=black,draw,inner sep=0pt,minimum size=4pt] (e) {};

            \path (c) ++(0:3) node[circle,fill=black,draw,inner sep=0pt,minimum size=4pt] (f) {};
            
        	\path (f) ++(-45:1) node[circle,fill=black,draw,inner sep=0pt,minimum size=4pt] (g) {};
        	\path (f) ++(0:1) node[circle,fill=black,draw,inner sep=0pt,minimum size=4pt] (h) {};
        	\path (f) ++(45:1) node[circle,fill=black,draw,inner sep=0pt,minimum size=4pt] (i) {};
        	\path (f) ++(90:1) node[circle,fill=black,draw,inner sep=0pt,minimum size=4pt] (j) {};
        	
        	\path (h) ++(0:3) node[circle,fill=black,draw,inner sep=0pt,minimum size=4pt] (k) {};

            \path (k) ++(-45:1) node[circle,fill=black,draw,inner sep=0pt,minimum size=4pt] (l) {};
        	\path (k) ++(0:1) node[circle,fill=black,draw,inner sep=0pt,minimum size=4pt] (m) {};
        	\path (k) ++(45:1) node[circle,fill=black,draw,inner sep=0pt,minimum size=4pt] (n) {};
        	\path (k) ++(90:1) node[circle,fill=black,draw,inner sep=0pt,minimum size=4pt] (o) {};
        	
        	\path (m) ++(0:3) node[circle,fill=red,draw,inner sep=0pt,minimum size=4pt] (p) {};
        	
        	\path (p) ++(-45:1) node[circle,fill=red,draw,inner sep=0pt,minimum size=4pt] (q) {};
        	\path (p) ++(0:1) node[circle,fill=red,draw,inner sep=0pt,minimum size=4pt] (r) {};
        	\path (p) ++(45:1) node[circle,fill=red,draw,inner sep=0pt,minimum size=4pt] (s) {};
        	\path (p) ++(90:1) node[circle,fill=red,draw,inner sep=0pt,minimum size=4pt] (t) {};
        	
        	\path (r) ++(0:3) node[circle,fill=blue,draw,inner sep=0pt,minimum size=4pt] (u) {};
        	
        	\path (u) ++(-45:1) node[circle,fill=blue,draw,inner sep=0pt,minimum size=4pt] (v) {};sep=0pt,minimum size=4pt] (q) {};
        	\path (u) ++(0:1) node[circle,fill=blue,draw,inner sep=0pt,minimum size=4pt] (w) {};
        	\path (u) ++(45:1) node[circle,fill=blue,draw,inner sep=0pt,minimum size=4pt] (x) {};
        	\path (u) ++(90:1) node[circle,fill=blue,draw,inner sep=0pt,minimum size=4pt] (y) {};
        	
        	\node[circle,inner sep=1pt] (z) at (10, 1.1) {\textbf{!}};
        	\path (z) ++(80:1.5) node (aa) {};

        	\draw[thick,dotted] (a) -- (b);
        	
        	\draw[thick, decorate,decoration=zigzag] (e) to[out=10,in=135] (f) (j) to[out=10,in=135] (k) (o) to[out=10,in=180] (z) to[out=-20,in=135] (p);
            \draw[thick,decorate,decoration=zigzag, red] (t) to[out=10,in=135] node[font=\fontsize{8}{8},midway,inner sep=1pt,outer sep=1pt,minimum size=3pt,fill=white] {$P_k$} (u); 
            
            \draw[thick,decorate,decoration=zigzag, blue] (y) to[out=10,in=-70, looseness=4] node[font=\fontsize{8}{8},midway,inner sep=1pt,outer sep=1pt,minimum size=3pt,fill=white] {$\tilde P$} (z) -- (aa);
        	
        	\draw[thick] (a) -- (c) (a) -- (d) (a) -- (e) (f) -- (g) (f) -- (h) (f) -- (i) (f) -- (j) (k) -- (l) (k) -- (m) (k) to node[font=\fontsize{8}{8},midway,inner sep=1pt,outer sep=1pt,minimum size=4pt,fill=white] {$F_j$} (n) (k) -- (o);
            \draw[thick, red] (p) -- (q) (p) -- (r) (p) to node[font=\fontsize{8}{8},midway,inner sep=1pt,outer sep=1pt,minimum size=4pt,fill=white] {$F_k$} (s) (p) -- (t);
            \draw[thick, blue] (u) -- (v) (u) -- (w) (u) to node[font=\fontsize{8}{8},midway,inner sep=1pt,outer sep=1pt,minimum size=4pt,fill=white] {$\tilde F$} (x) (u) -- (y);

            \node[circle,fill=gray!30,draw,inner sep=1.3pt] at (10.1, 1.1) {\textbf{!}};
        
        \begin{scope}[yshift=-2.5cm]
            \draw[-{Stealth[length=3mm,width=2mm]},very thick,decoration = {snake,pre length=3pt,post length=7pt,},decorate] (10,1) -- (10,-1);
        \end{scope}
        
        \begin{scope}[yshift=-5.5cm]
            \node[circle,fill=black,draw,inner sep=0pt,minimum size=4pt] (a) at (0,0) {};
        	\path (a) ++(-45:1) node[circle,fill=black,draw,inner sep=0pt,minimum size=4pt] (b) {};
        	\path (a) ++(0:1) node[circle,fill=black,draw,inner sep=0pt,minimum size=4pt] (c) {};
        	\path (a) ++(45:1) node[circle,fill=black,draw,inner sep=0pt,minimum size=4pt] (d) {};
        	\path (a) ++(90:1) node[circle,fill=black,draw,inner sep=0pt,minimum size=4pt] (e) {};

            \path (c) ++(0:3) node[circle,fill=black,draw,inner sep=0pt,minimum size=4pt] (f) {};
            
        	\path (f) ++(-45:1) node[circle,fill=black,draw,inner sep=0pt,minimum size=4pt] (g) {};
        	\path (f) ++(0:1) node[circle,fill=black,draw,inner sep=0pt,minimum size=4pt] (h) {};
        	\path (f) ++(45:1) node[circle,fill=black,draw,inner sep=0pt,minimum size=4pt] (i) {};
        	\path (f) ++(90:1) node[circle,fill=black,draw,inner sep=0pt,minimum size=4pt] (j) {};
        	
        	\path (h) ++(0:3) node[circle,fill=red,draw,inner sep=0pt,minimum size=4pt] (k) {};

            \path (k) ++(-45:1) node[circle,fill=red,draw,inner sep=0pt,minimum size=4pt] (l) {};
        	\path (k) ++(0:1) node[circle,fill=red,draw,inner sep=0pt,minimum size=4pt] (m) {};
        	\path (k) ++(45:1) node[circle,fill=red,draw,inner sep=0pt,minimum size=4pt] (n) {};
        	\path (k) ++(90:1) node[circle,fill=red,draw,inner sep=0pt,minimum size=4pt] (o) {};
        	
        	\path (m) ++(0:6) node[circle,fill=red,draw,inner sep=0pt,minimum size=4pt] (p) {};

        \draw[thick, dotted] (a) -- (b);
         \draw[thick] (a) -- (c) (a) -- (d) (a) -- (e) (f) -- (g) (f) -- (h) (f) -- (i) (f) -- (j);
         \draw[thick, red] (k) -- (l) (k) -- (m) (k) to node[font=\fontsize{8}{8},midway,inner sep=1pt,outer sep=1pt,minimum size=4pt,fill=white] {$F_j$} (n) (k) -- (o);

         \draw[thick, decorate,decoration=zigzag] (e) to[out=10,in=135] (f) (j) to[out=10,in=135] (k);
         \draw[thick, decorate,decoration=zigzag, red] (o) to[out=10,in=160] node[font=\fontsize{8}{8},midway,inner sep=1pt,outer sep=1pt,minimum size=4pt,fill=white] {$P'$} (p);
        \end{scope}
        \end{tikzpicture}
    \caption{Example of an intersecting update.}
    \label{fig:intersecting_iteration}
\end{figure}

In order to track intersections we define a hash map $\visited$ with key set $V\cup E$. We let $\visited(v) = 1$ if and only if $v$ is a vertex on a fan in the current chain $C$ and $\visited(e) = 1$ if and only if $e$ is an internal edge of a path in the current chain $C$.
The formal statement of our \hyperref[alg:multi_viz_chain]{Multi-Step Vizing Algorithm} is given in Algorithm~\ref{alg:multi_viz_chain}.

\begin{algorithm}[h]\algsize
\caption{Multi-Step Vizing Algorithm (MSVA)}\label{alg:multi_viz_chain}
\begin{flushleft}
\textbf{Input}: A proper partial coloring $\phi$, an uncolored edge $xy$, a vertex $x \in e$, and a parameter $\ell \in \N$. \\
\textbf{Output}: A $\phi$-happy multi-step Vizing chain $C$ with $\Start(C) = xy$.
\end{flushleft}
\begin{algorithmic}[1]
    \State $\visited(e) \gets 0, \quad \visited(v) \gets 0$ \quad \textbf{for each} $e \in E$, $v \in V$
    \State $(F,P) \gets \hyperref[alg:first_chain]{\mathsf{FirstChain}}(\phi, xy, x, \ell)$ \label{step:first_chain} \Comment{Algorithm \ref{alg:first_chain}}
    \State $C\gets (xy), \quad \psi \gets \phi, \quad k \gets 0$
    \medskip
    \While{true}
        \If{$\length(P) < 2\ell$}
            \State \Return $C+F+P$ \label{step:success} \Comment{Success}
        \EndIf
        \State\label{step:random_choice} Let $\ell' \in [\ell,2\ell-1]$ be an integer chosen uniformly at random.
        \State $F_k \gets F,\quad P_k\gets P|\ell'$ \label{step:Pk} \Comment{Randomly shorten the path}
        \State Let $\alpha$, $\beta$ be such that $P_k$ is an $\alpha\beta$-path where $\psi(\End(P_k)) = \beta$.
        \State $\psi \gets \Shift(\psi, F_k+P_k)$ 
        \State $\visited(v) \gets 1$ \textbf{for each} $v \in V(F_k)$
        \State $\visited(e) \gets 1$ \textbf{for each} $e \in \IE(P_k)$
        \State $uv \gets \End(P_k), \quad v \gets \vend(P_k)$
        \State $(\tilde F , \tilde P) \gets \hyperref[alg:next_chain]{\mathsf{NextChain}}(\psi, uv, u, \ell, \alpha, \beta)$ \label{step:alpha_beta_order} \Comment{Algorithm \ref{alg:next_chain}}
        \If{$\visited(v) = 1$ or $\visited(e) = 1$ for some $v\in V(\tilde F + \tilde P)$, $e \in E(\tilde F + \tilde P)$}
            \State Let $0 \leq j \leq k$ be such that the first intersection occurs at $F_j + P_j$.\label{step:choosej}
            \State $\psi \gets \Shift(\psi, (F_j + P_j + \cdots + F_k + P_k)^*)$ \label{step:psi}
            \State $\visited(v) \gets 0$ \textbf{for each} $v \in V(F_j) \cup \ldots \cup V(F_k)$
            \State $\visited(e) \gets 0$ \textbf{for each} $e \in \IE(P_j) \cup \ldots \cup \IE(P_k)$\label{step:visited}
            \State $C\gets F_0 + P_0 + \cdots + F_{j-1} + P_{j-1}, \quad k \gets j$ \label{step:truncate1} \Comment{Return to step $j$}
            \State $F\gets F_j, \quad P \gets P'$ \quad where $P_j$ is an initial segment of $P'$ as described earlier. 
            \label{step:truncate}
        \ElsIf{$2 \leq \length(\tilde P) < 2\ell$ \textbf{and} $\vend(\tilde P) = \Pivot(\tilde F)$}
            \State\label{step:fail_chain} \Return \textbf{\textsf{FAIL}} \Comment{Failure}
        \Else
            \State $ C \gets C + F_k + P_k, 
        \quad F \gets \tilde F, \quad P \gets \tilde P, \quad k \gets k + 1$ \label{step:append} \Comment{Append and move on to the next step}
        \EndIf
    \EndWhile
\end{algorithmic}
\end{algorithm}

A few remarks are in order. 
Note that in steps \ref{step:psi}--\ref{step:visited} of the algorithm, we can update $\visited$ and the missing colors hash map $M(\cdot)$ while simultaneously updating $\psi$. By construction, $\length(P_k) \geq \ell > 2$ for all $k$. This ensures that at least one edge of each color $\alpha$, $\beta$ is on the $\alpha\beta$-path $P_k$ and also guarantees that $V(\End(F_j)) \cap V(\Start(F_{j+1})) = \0$ for all $j$.
In step \ref{step:truncate1} of the algorithm, we truncate the current chain at the {first} vertex $v$ or edge $e$ on $\tilde F + \tilde P$ such that $\visited(v) = 1$ or $\visited(e) = 1$.
These observations will be important for the proofs in the sequel.

\subsection{Proof of Correctness}\label{subsec:algo_lemmas}

In this subsection, we prove the correctness of Algorithm \ref{alg:multi_viz_chain} as well as some auxiliary results on the chain it outputs.
These results will be important for the analysis in later sections.
First, let us consider the output of Algorithm \ref{alg:first_chain}.

\begin{Lemma}\label{lemma:first_chain}
    Let $\phi$ be a proper partial coloring, let $\ell \in \N$, and let $xy$ be an uncolored edge. Let $F$ and $P$ be the fan and path returned by Algorithm \ref{alg:first_chain} on input $(\phi, xy, x, \ell)$, where $P$ is an $\alpha\beta$-path. Then
    \begin{itemize}
        \item either $F$ is $\phi$-happy and $P = (\End(F))$, or
        \item $F$ is $(\phi, \alpha\beta)$-hopeful and $\length(P) = 2\ell$, or
        \item $F$ is $(\phi, \alpha\beta)$-successful.
    \end{itemize}
\end{Lemma}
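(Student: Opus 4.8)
## Proof Plan for Lemma~\ref{lemma:first_chain}

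The plan is to trace through Algorithm~\ref{alg:first_chain} branch by branch and invoke Lemma~\ref{lemma:first_fan_ber} as the workhorse. First, recall that Algorithm~\ref{alg:first_chain} begins by calling $\mathsf{FirstFan}(\phi, xy, x)$, producing a triple $(F, \beta, j)$ with $\beta \in M(\phi, \vend(F)) \cap M(\phi, \vend(F|j))$. There are then three exit points: (a) the case $\beta \in M(\phi, x)$, where we return $F$ together with the trivial path $(\End(F))$; (b) the case where $\length(P) > 2\ell$ or $\vend(P) \neq x$, where we return $F$ and $P|2\ell$; and (c) the case where we return $F'$ and $P'|2\ell$. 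I would handle these three cases in order, matching them to the three bullets in the statement.

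For case (a): when $\beta \in M(\phi, x)$, Lemma~\ref{lemma:first_fan_ber} tells us directly that $F$ is $\phi$-happy, and by construction the returned path is $P = (\End(F))$, so the first bullet holds. Here I should double-check that ``$P$ is an $\alpha\beta$-path'' is vacuously fine for a single-edge path (any pair of colors works, and indeed the definition of an $\alpha\beta$-path only constrains edges other than $\Start(P)$). For cases (b) and (c): since $\beta \notin M(\phi, x)$, pick $\alpha \defeq \min M(\phi, x)$ (as the algorithm does). By Lemma~\ref{lemma:first_fan_ber}, either $F$ or $F' = F|j$ is $(\phi, \alpha\beta)$-successful. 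I would then argue:
\begin{itemize}
    \item If the algorithm reaches the \textbf{else} branch (case (c)) and returns $F'$, $P'|2\ell$, then we are in the situation $\length(P) \leq 2\ell$ \emph{and} $\vend(P) = x$. I need to show that in this subcase $F'$ must be the successful fan. The key observation is Fact 4.4/4.5-style reasoning from \cite{VizingChain}: the path $P(\End(F); \Shift(\phi, F), \alpha\beta)$ terminates at $x$ precisely when $F$ is \emph{disappointed} rather than successful (the path loops back to the pivot), so $\vend(P) = x$ forces $F$ to be $(\phi, \alpha\beta)$-disappointed, whence by Lemma~\ref{lemma:first_fan_ber} the fan $F'$ is $(\phi, \alpha\beta)$-successful. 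Since the returned path $P'|2\ell$ has length $\length(P')$ if $\length(P') \leq 2\ell$, and a successful path is $\phi$-happy (so genuinely an $\alpha\beta$-path ending appropriately), the third bullet holds with $F'$ in the role of $F$. I should be careful that the lemma statement uses the symbol $F$ for the returned fan, so in case (c) this is really $F'$; the three bullets are stated about ``the fan $F$ returned by the algorithm,'' so this is consistent.
    \item If the algorithm reaches the \textbf{if} branch (case (b)) and returns $F$, $P|2\ell$: here $\length(P) > 2\ell$ or $\vend(P) \neq x$. If $\vend(P) \neq x$ and $\length(P) \leq 2\ell$, then (again by the Fact 4.4/4.5 dichotomy) $F$ is $(\phi, \alpha\beta)$-successful and $P|2\ell = P$, giving the third bullet. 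If $\length(P) > 2\ell$, then $P|2\ell$ has length exactly $2\ell$, and $F$ is $(\phi,\alpha\beta)$-hopeful (it is hopeful because $\mathsf{FirstFan}$ only ever outputs hopeful fans in the non-happy case — this follows from Lemma~\ref{lemma:first_fan_ber} combined with the definition of ``successful'' requiring ``hopeful''), giving the second bullet.
\end{itemize}

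The main obstacle I anticipate is pinning down the precise relationship between ``$\vend(P) = x$'' and the disappointed/successful distinction for the fan $F$ — i.e., making rigorous the claim that the $\alpha\beta$-path $P(\End(F); \Shift(\phi, F), \alpha\beta)$ returns to the pivot $x$ exactly when $F$ fails to be successful. This requires carefully unpacking the definition of $(\phi,\alpha\beta)$-related (whether $x$ and $\vend(F)$ lie in the same component of $G(\Shift(\phi,F), \alpha\beta)$) and connecting it to where the maximal two-colored path starting at $\vend(F)$ terminates. The relevant facts are Facts 4.4, 4.5, and 4.7 of \cite{VizingChain}, already cited in \S\ref{subsec:pathchains} and \S\ref{subsec:Vizingdefn} of this excerpt; I would cite them explicitly rather than reprove them. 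A secondary, more bookkeeping-level obstacle is verifying the ``$P$ is an $\alpha\beta$-path'' clause uniformly across the branches — in particular that the color $\beta$ appearing in the lemma statement (the one attached to the path) is the same $\beta$ output by $\mathsf{FirstFan}$ in branches (b) and (c), and that truncating to length $2\ell$ preserves the $\alpha\beta$-path property (which is immediate, since an initial segment of an $\alpha\beta$-path is an $\alpha\beta$-path).
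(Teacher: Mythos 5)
Your proposal is correct and takes essentially the same route as the paper: the paper's proof is a compressed version of exactly this case analysis, invoking Lemma~\ref{lemma:first_fan_ber} and using the observation you flag as the crux (a disappointed fan's path returns to the pivot, so step~\ref{step:path_condition} returns $F$ only when the truncated path has length $2\ell$, and otherwise the $(\phi,\alpha\beta)$-successful fan $F'$ is returned). One small repair: your parenthetical justification that $F$ is $(\phi,\alpha\beta)$-hopeful does not follow from Lemma~\ref{lemma:first_fan_ber} plus ``successful requires hopeful'' (that lemma only guarantees $F$ \emph{or} $F'$ is successful); instead, hopefulness of $F$ follows directly from $\alpha \in M(\phi, x)$ and $\beta \in M(\phi, \vend(F))$, which give $\deg(x; \phi, \alpha\beta) \leq 1$ and $\deg(\vend(F); \phi, \alpha\beta) \leq 1$.
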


\begin{proof}
    Let $(\tilde F, \beta, j)$ be the output of Algorithm~\ref{alg:first_fan} at step~\ref{step:first_fan} of Algorithm~\ref{alg:first_chain}.
    If the algorithm reaches step \ref{step:happy_first}, then $F = \tilde F$ is $\phi$-happy.
    If not, let $\tilde F' \defeq \tilde F|j$.
    By Lemma \ref{lemma:first_fan_ber}, either $\tilde F$ or $\tilde F'$ is $(\phi, \alpha\beta)$-successful.
    In the case when $\tilde F$ is $(\phi, \alpha\beta)$-disappointed, step \ref{step:path_condition} ensures that $P$ is the initial segment of a path of length greater than $2\ell$. 
    If not, we pick the $(\phi, \alpha\beta)$-successful fan $\tilde F'$.
\end{proof}

Note that in the setting of Lemma \ref{lemma:first_chain}, the chain $F+P$ is $\phi$-shiftable.
Furthermore, if $\length(P) < 2\ell$, then $F+P$ is $\phi$-happy.
Let us now consider the output of Algorithm \ref{alg:next_chain}.

\begin{Lemma}\label{lemma:next_chain}
    Suppose that $\phi$ is a proper partial coloring, $xy$ is an uncolored edge, $\ell \in \N$, and $\alpha$, $\beta$ are colors such that $\alpha \in M(\phi, x) \setminus M(\phi, y)$ and $\beta \in M(\phi, y)$. Let $\tilde F$ and $\tilde P$ be the fan and the path returned by Algorithm \ref{alg:next_chain} on input $(\phi, xy, x, \ell, \alpha,\beta)$, where $\tilde P$ is a $\gamma\delta$-path. Then no edge in $\tilde F$ is colored $\alpha$ or $\beta$ and
    \begin{itemize}
        \item either $\tilde F$ is $\phi$-happy and $\tilde P = (\End(\tilde F))$, or
        \item $\tilde F$ is $(\phi, \gamma\delta)$-hopeful and $\{\gamma, \delta\} = \{\alpha, \beta\}$, or
        \item $\tilde F$ is $(\phi, \gamma\delta)$-hopeful, $\{\gamma, \delta\} \cap \{\alpha, \beta\} = \0$, and $\length(\tilde P) = 2\ell$, or
        \item $\tilde F$ is $(\phi, \gamma\delta)$-successful and $\{\gamma, \delta\} \cap \{\alpha, \beta\} = \0$.
    \end{itemize}
\end{Lemma}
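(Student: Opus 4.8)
The strategy is to trace the return statements of Algorithm~\ref{alg:next_chain} one by one, paralleling the proof of Lemma~\ref{lemma:first_chain}. The algorithm first calls $\hyperref[alg:next_fan]{\mathsf{NextFan}}$ to obtain a triple $(F,\delta,j)$, whose behaviour is governed by Lemma~\ref{lemma:next_fan_ber}, and then, depending on $\delta$, either returns $F$ together with the trivial path $(\End(F))$, or attaches an initial segment of a two-coloured path built by the constructions of \S\ref{subsec:pathchains}. Since $\tilde F$ is always equal to $F$ or to the initial segment $F|j$, and by Lemma~\ref{lemma:next_fan_ber} no edge of $F$ is coloured $\alpha$ or $\beta$, the first assertion of the lemma is immediate. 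I will also use two standard facts: the fan $F$ produced by $\hyperref[alg:next_fan]{\mathsf{NextFan}}$ is $\phi$-shiftable (the colours $\delta(\cdot)$ assigned in Algorithm~\ref{alg:next_fan} are missing at the corresponding vertices; cf.\ \cite[\S4]{VizingChain}), so $F|j$ is as well, and by construction $\delta\in M(\phi,\vend(F))\cap M(\phi,\vend(F|j))$; moreover, shifting a fan fixes the missing set at its pivot and can only enlarge the missing set at its end vertex.

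First, if $\delta\in M(\phi,x)$ (line~\ref{step:happy_next}), then $\delta$ lies in $M(\phi,x)\cap M(\phi,\vend(F))$, hence $\delta$ is still missing at both endpoints of $\End(F)$ under $\Shift(\phi,F)$; together with $\phi$-shiftability of $F$ this makes $\tilde F=F$ a $\phi$-happy chain, and the returned path is $(\End(F))=(\End(\tilde F))$, so the first bullet holds. Second, if $\delta\notin M(\phi,x)$ but $\delta=\beta$, then the second alternative of Lemma~\ref{lemma:next_fan_ber} applies: $F$ is $(\phi,\alpha\beta)$-hopeful, and $\tilde P$ is an initial segment of $P(\End(F);\Shift(\phi,F),\alpha\beta)$, whose colours form $\{\alpha,\beta\}$; this is the second bullet.

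Finally, suppose $\delta\notin M(\phi,x)$ and $\delta\neq\beta$; this case runs exactly as the corresponding part of the proof of Lemma~\ref{lemma:first_chain}, now with the colour $\gamma\defeq\min M(\phi,x)\setminus\{\alpha\}$ in place of an arbitrary missing colour. One checks $\{\gamma,\delta\}\cap\{\alpha,\beta\}=\0$: indeed $\gamma\neq\alpha$ by definition, $\delta\notin M(\phi,x)$ while $\alpha,\gamma\in M(\phi,x)$ forces $\delta\notin\{\alpha,\gamma\}$, $\delta\neq\beta$ is the present hypothesis, and $\gamma\neq\beta$ since $\beta\notin M(\phi,x)$. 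Because $\gamma\in M(\phi,x)$ and $\delta\notin M(\phi,x)$ we have $\deg(x;\phi,\gamma\delta)=1$, and $\delta\in M(\phi,\vend(F))\cap M(\phi,\vend(F'))$ gives $\deg(\vend(F);\phi,\gamma\delta)<2$ and $\deg(\vend(F');\phi,\gamma\delta)<2$ for $F'\defeq F|j$; thus both $F$ and $F'$ are $(\phi,\gamma\delta)$-hopeful. Which of the last two bullets we land on is then dictated by the test in line~\ref{step:next_path_condition} applied to $P\defeq P(\End(F);\Shift(\phi,F),\gamma\delta)$: if $\length(P)>2\ell$, the algorithm returns $\tilde F=F$ with $\length(\tilde P)=2\ell$ (third bullet); if $\length(P)\le 2\ell$ but $\vend(P)\neq x$, then $\vend(F)$ and $x$ are not $(\Shift(\phi,F),\gamma\delta)$-related, so $F$ is $(\phi,\gamma\delta)$-successful and $\tilde F=F$ (fourth bullet); and otherwise $F$ is $(\phi,\gamma\delta)$-disappointed, so by the third alternative of Lemma~\ref{lemma:next_fan_ber} it is $F'$ that is $(\phi,\gamma\delta)$-successful, and $\tilde F=F'$ (fourth bullet again).

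The main obstacle is twofold. The branch $\delta=\beta$ has no counterpart in the proof of Lemma~\ref{lemma:first_chain}, and one has to be sure the fan returned there really is $(\phi,\alpha\beta)$-hopeful — in particular that it is not ``accidentally'' $\phi$-happy via a colour outside $\{\alpha,\beta\}$ — which is handled by inspecting exactly when $\hyperref[alg:next_fan]{\mathsf{NextFan}}$ exits at its $\eta=\beta$ check (step~\ref{step:same_colors}), using $\beta\notin M(\phi,x)$. The subtler point is that Algorithm~\ref{alg:next_chain} commits to the specific colour $\gamma=\min M(\phi,x)\setminus\{\alpha\}$, whereas Lemma~\ref{lemma:next_fan_ber} only asserts that $F$ or $F'$ is successful for \emph{some} colour in $M(\phi,x)\setminus\{\alpha\}$; the reconciliation is that $(\phi,\gamma\delta)$-hopefulness holds for \emph{every} admissible $\gamma$ (as computed above), so the only genuine content of that third alternative is which of $F$, $F'$ has non-$(\Shift,\gamma\delta)$-related endpoints, and the tests $\vend(P)\neq x$, $\vend(P')\neq x$ in line~\ref{step:next_path_condition} detect this precisely.
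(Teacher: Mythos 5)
Your proof is correct and follows essentially the same route as the paper's: invoke Lemma~\ref{lemma:next_fan_ber} for the properties of the fan, match the return branches of Algorithm~\ref{alg:next_chain} to the four bullets, and use the test in step~\ref{step:next_path_condition} to force $\length(\tilde P) = 2\ell$ in the disappointed case with $\{\gamma,\delta\}\cap\{\alpha,\beta\}=\0$; the extra verifications you supply (colour disjointness, hopefulness for the specific $\gamma=\min M(\phi,x)\setminus\{\alpha\}$, and the equivalence $\vend(P)\neq x$ iff $\tilde F$ is successful) are details the paper leaves implicit. One small inaccuracy in your closing remark: Algorithm~\ref{alg:next_chain} never tests $\vend(P')\neq x$ --- in the else branch it returns $F'$ unconditionally, so the successfulness of $F'$ there rests on Lemma~\ref{lemma:next_fan_ber} (exactly as in the paper's proof), not on an explicit test of $P'$.
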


\begin{proof}
    Let $(F, \delta, j)$ be the output of Algorithm~\ref{alg:next_fan} at step~\ref{step:next_fan} of Algorithm~\ref{alg:next_chain}.
    By Lemma \ref{lemma:next_fan_ber},
    no edge in $F$ is colored $\alpha$ or $\beta$ and at least one of the following holds:
    \begin{itemize}
        \item $F$ is $\phi$-happy, or
        \item $\delta = \beta$ and the fan $F$ is $(\phi, \alpha\beta)$-hopeful, or
        \item either $F$ or $F'$ is $(\phi, \gamma\delta)$-successful for some $\gamma \in M(\phi, x) \setminus \{\alpha\}$.
    \end{itemize}
    In the first two cases, we let $\tilde F = F$ and the claim follows.
    For $F' \defeq F|j$, by Lemma~\ref{lemma:next_fan_ber}, either $F$ or $F'$ is $(\phi, \gamma\delta)$-successful.
    In the case that $F$ is $(\phi, \gamma\delta)$-disappointed and $\set{\gamma, \delta} \cap \set{\alpha, \beta} = \0$, step \ref{step:next_path_condition} of Algorithm \ref{alg:next_chain} ensures that $\tilde P$ is the initial segment of a path of length greater than $2\ell$.
\end{proof}

Note that Algorithm \ref{alg:multi_viz_chain} outputs a $\phi$-happy chain as long as the algorithm terminates and
\begin{enumerate}[label=\ep{\normalfont{}\texttt{Happy}\arabic*},labelindent=15pt,leftmargin=*]
    \item\label{item:valid_input} the input to Algorithm \ref{alg:next_chain} at step \ref{step:alpha_beta_order} is valid,

    \item\label{item:never_fail} we never reach step \ref{step:fail_chain}, and

    \item\label{item:invariants} the invariants \ref{inv:start_F_end_C}--\ref{inv:hopeful_length} hold for each iteration of the \textsf{while} loop.
    
\end{enumerate}
Let us first show that item \ref{item:valid_input} holds.

\begin{Lemma}\label{lemma:valid_input}
    The input given to Algorithm \ref{alg:next_chain} at step \ref{step:alpha_beta_order} of Algorithm \ref{alg:multi_viz_chain} is valid.
\end{Lemma}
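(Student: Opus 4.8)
The goal is to verify that whenever Algorithm~\ref{alg:multi_viz_chain} reaches step~\ref{step:alpha_beta_order}, the tuple $(\psi, uv, u, \alpha, \beta)$ it feeds to \hyperref[alg:next_chain]{\textsf{NextChain}} satisfies the input requirements of Algorithm~\ref{alg:next_chain}: namely, $uv$ is uncolored under $\psi$, $\alpha \in M(\psi, u) \setminus M(\psi, v)$, and $\beta \in M(\psi, v)$, where $v = \vend(P_k)$. The plan is to trace through how the relevant objects are produced in the body of the \textsf{while} loop just before step~\ref{step:alpha_beta_order}. Immediately prior, the algorithm sets $F_k \gets F$, $P_k \gets P|\ell'$, picks colors $\alpha$, $\beta$ so that $P_k$ is an $\alpha\beta$-path with $\psi(\End(P_k)) = \beta$, updates $\psi \gets \Shift(\psi, F_k + P_k)$, and then sets $uv \gets \End(P_k)$ with $v \gets \vend(P_k)$.

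First I would recall the structural facts about $P$ guaranteed by the invariant \ref{inv:hopeful_length} together with Lemmas~\ref{lemma:first_chain} and \ref{lemma:next_chain}: at the point where step~\ref{step:random_choice} is reached we know $\length(P) = 2\ell$ (since otherwise the algorithm would have returned at step~\ref{step:success}), $F$ is $(\psi_{\mathrm{old}}, \alpha\beta)$-hopeful, and $P$ is an initial segment (of length $2\ell$) of an honest $\alpha\beta$-path rooted appropriately — in particular $P$ has the form $xy$ followed by a genuine bicolored path, with $\beta \in M(\psi_{\mathrm{old}}, \vend(F))$ and $\psi_{\mathrm{old}}(e_1) = \alpha$ on the first path edge. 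Since $\length(P_k) = \ell' \geq \ell > 2$, the truncated path $P_k$ still contains at least one edge of each color $\alpha$ and $\beta$ (this is exactly the ``one edge of each color'' remark made after the statement of the algorithm), and its last edge $uv = \End(P_k)$ is an honest $\psi_{\mathrm{old}}$-edge colored $\alpha$ or $\beta$; because $\psi(\End(P_k)) = \beta$ is how $\beta$ was chosen, the \emph{last} edge is colored $\beta$ and its predecessor is colored $\alpha$.

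Next I would pass to the shifted coloring. After $\psi \gets \Shift(\psi_{\mathrm{old}}, F_k + P_k)$ the chain $F_k + P_k$ is $\psi_{\mathrm{old}}$-shiftable (this follows from the $\phi$-shiftability clauses noted after Lemmas~\ref{lemma:first_chain} and \ref{lemma:next_chain}, and from Fact~4.4/4.5 of \cite{VizingChain} which underlie the $\alpha\beta$-path construction in \S\ref{subsec:pathchains}), so $uv = \End(P_k)$ becomes the unique uncolored edge of $F_k + P_k$ under the new $\psi$; hence $\psi(uv) = \blank$, giving the first requirement. For the missing-color conditions: shifting an $\alpha\beta$-path along itself moves each color one step back, so at the far endpoint $v = \vend(P_k)$ the color $\beta$ (which was on the last edge $uv$) is now missing at $v$, i.e.\ $\beta \in M(\psi, v)$; and since $v$ was an internal vertex of the original maximal $\alpha\beta$-path it had degree exactly $2$ in $G(\psi_{\mathrm{old}}, \alpha\beta)$ with both colors present, so after the shift $\alpha \notin M(\psi, v)$, i.e.\ $\alpha \notin M(\psi, v)$. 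Finally, $\alpha \in M(\psi, u)$: the vertex $u$ — the second-to-last vertex of $P_k$ — loses the edge of color $\alpha$ that previously joined it to $v$ when we shift, and gains one of color $\beta$; one checks $\alpha$ was not otherwise present at $u$ since $\deg(u; \psi_{\mathrm{old}}, \alpha\beta) = 2$ and the two incident $\alpha\beta$-edges were colored $\alpha$ and $\beta$, so after the shift $\alpha$ becomes missing at $u$. Combining, $\alpha \in M(\psi, u) \setminus M(\psi, v)$ and $\beta \in M(\psi, v)$, as required.

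The main obstacle I anticipate is handling the \emph{degenerate endpoint cases} cleanly: the truncated path $P_k$ might, in principle, end right where the underlying $\alpha\beta$-path loops back near $x = \Pivot(F_k)$, or $v$ might coincide with an earlier vertex on the chain, complicating the claim that $v$ has the ``expected'' degrees in $G(\psi_{\mathrm{old}}, \alpha\beta)$. Here the key is to use that $\length(P_k) = \ell' \in [\ell, 2\ell-1]$ is strictly less than $\length(P) = 2\ell$, so $P_k$ is a \emph{proper} initial segment of the longer path $P$; consequently $uv$ is an \emph{internal} edge of $P$ and $v$ is an \emph{internal} vertex of $P$, which forces $\deg(v; \psi_{\mathrm{old}}, \alpha\beta) = 2$ with one edge of each color — this is precisely why the algorithm shortens to length at most $2\ell - 1$ rather than possibly all of $2\ell$. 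I would make this ``$v$ is internal, hence degree $2$ with both colors'' observation the linchpin of the argument, and cite the analogous bookkeeping in \cite[\S4]{VizingChain} rather than redo the elementary shift computations in full.
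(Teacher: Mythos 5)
Your proof is correct and follows essentially the same route as the paper's: $\psi(uv)=\blank$ because $uv=\End(P_k)$ becomes the uncolored edge after shifting, $\beta \in M(\psi,v)$ and $\alpha \in M(\psi,u)$ from the local effect of $\Shift$ along $F_k+P_k$, and $\alpha \notin M(\psi,v)$ from the same key observation the paper uses, namely $\length(P_k)=\ell'\leq 2\ell-1 < 2\ell = \length(P)$, so $v$ is not the endpoint of $P$ and keeps an incident $\alpha$-colored edge untouched by the shift. One minor wording slip: the $\alpha$-colored edge at $u$ is the penultimate edge of $P_k$ (joining $u$ to the preceding vertex), not an edge joining $u$ to $v$ --- the edge $uv$ itself is the $\beta$-colored one --- but your conclusion $\alpha \in M(\psi,u)$ is unaffected.
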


\begin{proof}
    In order for the input to be valid, we must ensure the following:
    \begin{enumerate}[label=\ep{\normalfont{}\texttt{Input}\arabic*},labelindent=15pt,leftmargin=*]
        \item\label{item:uncolored} $\psi(uv) = \blank$, 

        \item\label{item:beta} $\beta \in M(\psi, v)$, and

        \item\label{item:alpha} $\alpha \in M(\psi, u) \setminus M(\psi, v)$.
    \end{enumerate}
    As $uv = \End(P_k)$, it follows that item \ref{item:uncolored} holds. 
    To verify \ref{item:beta} and \ref{item:alpha}, consider the path chains $P$ and $P_k = P|\ell'$ (see Fig.~\ref{fig:next_chain_invariants} for an illustration).
    Let $\psi' \defeq \Shift(\psi, (F_k + P_k)^*)$ be the coloring at the start of this iteration of the \textsf{while} loop.
    Recall that the colors $\alpha$ and $\beta$ are defined so that $\beta = \psi'(\End(P_k))$ and $\alpha$ is the other color on the path $P$.
    As $\psi = \Shift(\psi', F_k + P_k)$, it follows that $\beta \in M(\psi, v)$ and $\alpha \in M(\psi, u)$ as desired. 
    Note that
    \[\length(P_k) \,=\, \ell' \,\leq\, 2\ell - 1 \,<\, 2\ell \,=\, \length(P).\]
    In particular, $v$ is not the endpoint of $P$ and so there is an edge colored $\alpha$ incident to $v$ under $\psi$. 
    Therefore, $\alpha \notin M(\psi, v)$, as desired.
\end{proof}

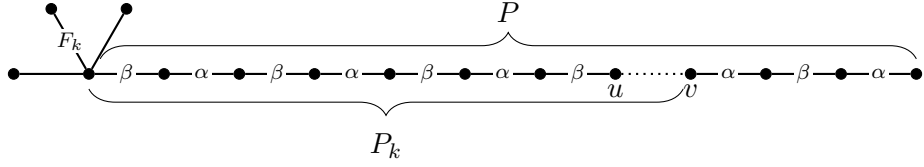
\begin{figure}[t]
    \centering
        \begin{tikzpicture}
            \node[circle,fill=black,draw,inner sep=0pt,minimum size=4pt] (a) at (0,0) {};
            \path (a) ++(0:1) node[circle,fill=black,draw,inner sep=0pt,minimum size=4pt] (b) {};
            \path (b) ++(120:1) node[circle,fill=black,draw,inner sep=0pt,minimum size=4pt] (c) {};
            \path (b) ++(60:1) node[circle,fill=black,draw,inner sep=0pt,minimum size=4pt] (d) {};
            \path (b) ++(0:1) node[circle,fill=black,draw,inner sep=0pt,minimum size=4pt] (e) {};

            \path (e) ++(0:1) node[circle,fill=black,draw,inner sep=0pt,minimum size=4pt] (f) {};
            \path (f) ++(0:1) node[circle,fill=black,draw,inner sep=0pt,minimum size=4pt] (g) {};
            \path (g) ++(0:1) node[circle,fill=black,draw,inner sep=0pt,minimum size=4pt] (h) {};
            \path (h) ++(0:1) node[circle,fill=black,draw,inner sep=0pt,minimum size=4pt] (i) {};
            \path (i) ++(0:1) node[circle,fill=black,draw,inner sep=0pt,minimum size=4pt] (j) {};
            \path (j) ++(0:1) node[circle,fill=black,draw,inner sep=0pt,minimum size=4pt] (k) {};
            \path (k) ++(0:1) node[circle,fill=black,draw,inner sep=0pt,minimum size=4pt] (l) {};
            \path (l) ++(0:1) node[circle,fill=black,draw,inner sep=0pt,minimum size=4pt] (m) {};
            \path (m) ++(0:1) node[circle,fill=black,draw,inner sep=0pt,minimum size=4pt] (n) {};
            \path (n) ++(0:1) node[circle,fill=black,draw,inner sep=0pt,minimum size=4pt] (o) {};

            \node[anchor=north] at (k) {$u$};
            \node[anchor=north] at (l) {$v$};

            \draw[thick] (a) -- (b) to node[font=\fontsize{8}{8},midway,inner sep=1pt,outer sep=1pt,minimum size=4pt,fill=white] {$F_k$} (c) (b) -- (d) (b) to node[font=\fontsize{8}{8},midway,inner sep=1pt,outer sep=1pt,minimum size=4pt,fill=white] {$\beta$} (e) to node[font=\fontsize{8}{8},midway,inner sep=1pt,outer sep=1pt,minimum size=4pt,fill=white] {$\alpha$} (f) to node[font=\fontsize{8}{8},midway,inner sep=1pt,outer sep=1pt,minimum size=4pt,fill=white] {$\beta$} (g) to node[font=\fontsize{8}{8},midway,inner sep=1pt,outer sep=1pt,minimum size=4pt,fill=white] {$\alpha$} (h) to node[font=\fontsize{8}{8},midway,inner sep=1pt,outer sep=1pt,minimum size=4pt,fill=white] {$\beta$} (i) to node[font=\fontsize{8}{8},midway,inner sep=1pt,outer sep=1pt,minimum size=4pt,fill=white] {$\alpha$} (j) to node[font=\fontsize{8}{8},midway,inner sep=1pt,outer sep=1pt,minimum size=4pt,fill=white] {$\beta$} (k) (l) to node[font=\fontsize{8}{8},midway,inner sep=1pt,outer sep=1pt,minimum size=4pt,fill=white] {$\alpha$} (m) to node[font=\fontsize{8}{8},midway,inner sep=1pt,outer sep=1pt,minimum size=4pt,fill=white] {$\beta$} (n) to node[font=\fontsize{8}{8},midway,inner sep=1pt,outer sep=1pt,minimum size=4pt,fill=white] {$\alpha$} (o);

            \draw[thick, dotted] (k) -- (l);

            \draw[decoration={brace,amplitude=10pt,mirror},decorate] (1, -0.2) -- node [midway,above,xshift=0pt,yshift=-30pt] {$P_k$} (8.9,-0.2);

            \draw[decoration={brace,amplitude=10pt},decorate] (1.15, 0.2) -- node [midway,above,xshift=0pt,yshift=10pt] {$P$} (12,0.2);
        \end{tikzpicture}    
    \caption{$F_k$, $P_k$, and $P$ under the coloring $\psi$ at step \ref{step:alpha_beta_order} of Algorithm \ref{alg:multi_viz_chain}.}
    \label{fig:next_chain_invariants}
\end{figure}

Next, we verify item \ref{item:never_fail}, that is, we show that Algorithm \ref{alg:multi_viz_chain} never fails.

\begin{Lemma}\label{lemma:never_fail}
    We never reach step \ref{step:fail_chain} in Algorithm \ref{alg:multi_viz_chain}.
\end{Lemma}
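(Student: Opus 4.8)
The plan is a proof by contradiction. Suppose that on some iteration of the \textsf{while} loop of Algorithm~\ref{alg:multi_viz_chain} we reach step~\ref{step:fail_chain}, and keep the notation $u,v,\alpha,\beta,F_k,P_k,\tilde F,\tilde P$ of the algorithm, writing $\psi$ for the coloring in force when Algorithm~\ref{alg:next_chain} is invoked at step~\ref{step:alpha_beta_order}. Reaching step~\ref{step:fail_chain} forces three things: (i)~the intersection test just before it failed, i.e.\ $\visited(w)=0$ for every $w\in V(\tilde F+\tilde P)$ and $\visited(e)=0$ for every $e\in E(\tilde F+\tilde P)$; (ii)~$2\le\length(\tilde P)<2\ell$; and (iii)~$\vend(\tilde P)=\Pivot(\tilde F)=u$. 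First I would use Lemma~\ref{lemma:valid_input} to guarantee that the input passed to Algorithm~\ref{alg:next_chain} is valid, so that Lemma~\ref{lemma:next_chain} may be applied; write $\gamma,\delta$ for the two colors of $\tilde P$. Condition~(ii) kills the first bullet of Lemma~\ref{lemma:next_chain} (it gives $\length(\tilde P)=1$) and the third (it gives $\length(\tilde P)=2\ell$), so $\tilde F$ is $(\psi,\gamma\delta)$-hopeful and not $\psi$-happy; consequently $\tilde F$ is either $(\psi,\gamma\delta)$-successful or $(\psi,\gamma\delta)$-disappointed. I would also record the small fact that every return statement of Algorithm~\ref{alg:next_chain} outputs a path that is an initial segment of a path chain of the form $P(\cdot;\cdot,\gamma\delta)$, so, as $\length(\tilde P)<2\ell$, the path $\tilde P$ is \emph{not} truncated: its edges after $\End(\tilde F)$ form the maximal $\gamma\delta$-colored path of $G(\Shift(\psi,\tilde F),\gamma\delta)$ emanating from $\vend(\tilde F)$.

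The successful case is quick. If $\tilde F$ is $(\psi,\gamma\delta)$-successful, then $\Pivot(\tilde F)$ and $\vend(\tilde F)$ lie in distinct components of $G(\Shift(\psi,\tilde F),\gamma\delta)$, so the maximal $\gamma\delta$-path from $\vend(\tilde F)$ never reaches $\Pivot(\tilde F)$; in particular its final vertex, which equals $\vend(\tilde P)$ by the previous paragraph, is not $u$. This contradicts~(iii).

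The disappointed case is where the work is, and I expect it to be the main obstacle. Here the fourth bullet of Lemma~\ref{lemma:next_chain} (which asserts successfulness) is excluded, so we are in its second bullet and hence $\{\gamma,\delta\}=\{\alpha,\beta\}$; moreover, since no edge of $\tilde F$ is colored $\alpha$ or $\beta$ (again Lemma~\ref{lemma:next_chain}), shifting $\tilde F$ does not affect the $\alpha\beta$-subgraph, so $G(\Shift(\psi,\tilde F),\alpha\beta)=G(\psi,\alpha\beta)$. The crux is to identify the unique $\alpha\beta$-colored edge at $u$ under $\psi$. Writing $P_k=(p_0,\dots,p_{\ell'-1})$ with $\ell'=\length(P_k)$, its last edge $p_{\ell'-1}$ is $\End(P_k)=uv$, so $p_{\ell'-2}$ is incident to $u$; and because $P_k$ was an $\alpha\beta$-path bearing the color $\beta$ on its last edge in the coloring immediately preceding step~\ref{step:Pk}, the shift $\psi\gets\Shift(\psi,F_k+P_k)$ recolors $p_{\ell'-2}$ with $\beta$ and leaves $uv$ uncolored. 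Since $\tilde F$ is $(\psi,\alpha\beta)$-hopeful, $\deg(u;\psi,\alpha\beta)<2$, so $p_{\ell'-2}$ is the unique $\alpha\beta$-edge at $u$. Since $\tilde F$ is not $\psi$-happy, $\vend(\tilde F)$ also has degree $1$ in $G(\psi,\alpha\beta)$; and since $\tilde F$ is disappointed, $u$ and $\vend(\tilde F)$ belong to a common component $Q$, which---both these (distinct, as $\vend(\tilde F)$ is a $G$-neighbor of $u$) vertices having degree $1$---is a path with endpoints exactly $u$ and $\vend(\tilde F)$. By the first paragraph the part of $\tilde P$ after $\End(\tilde F)$ is precisely $Q$, whence $p_{\ell'-2}\in E(\tilde P)$. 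But $\ell'\ge\ell\ge 3$, so $p_{\ell'-2}$ is an internal edge of $P_k$, and therefore $\visited(p_{\ell'-2})$ was set to $1$ just before step~\ref{step:alpha_beta_order}---contradicting~(i). Since both cases are impossible, step~\ref{step:fail_chain} is never reached.

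The details to be careful about will be the bookkeeping of colorings around step~\ref{step:Pk} (keeping the pre-shift and post-shift versions of $\psi$ distinct) and the precise indexing showing $p_{\ell'-2}\in\IE(P_k)$; everything else is a short deduction from Lemmas~\ref{lemma:valid_input} and~\ref{lemma:next_chain} and the definitions of hopeful, successful, and disappointed fans.
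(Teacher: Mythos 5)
Your proposal is correct and follows essentially the same route as the paper's proof: after eliminating the $\psi$-happy, successful, and $\{\gamma,\delta\}\cap\{\alpha,\beta\}=\0$ cases via Lemma~\ref{lemma:next_chain} and the length constraints, you identify $\End(\tilde P)$ with the penultimate (hence internal) edge of $P_k$, so the $\visited$ check would have sent the algorithm to step~\ref{step:truncate1} before it could reach step~\ref{step:fail_chain}. The only differences are cosmetic—you make the successful case and the ``$\tilde P$ is untruncated'' observation explicit where the paper instead invokes $\vend(\tilde P)=\Pivot(\tilde F)$ directly—so the argument matches the paper's.
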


\begin{proof}
    Suppose we reach step \ref{step:fail_chain} on some iteration of the \textsf{while} loop in Algorithm~\ref{alg:multi_viz_chain}.
    Let $C$, $F+P$ be the chain and the candidate chain at the start of this iteration, where $P$ is an $\alpha\beta$-path. 
    Let $P_k$ be the random truncation of $P$ computed on step \ref{step:Pk} and let $\tilde F$, $\tilde P$ be the output of Algorithm \ref{alg:next_chain} computed on step \ref{step:alpha_beta_order}.
    As we have reached step \ref{step:fail_chain}, $2 \leq \length(\tilde P) < 2\ell$ and $\vend(\tilde P) = \Pivot(\tilde F)$.
    For $\psi \defeq \Shift(\phi, C+F+P_k)$, by Lemma \ref{lemma:next_chain}, it must be that no edge in $\tilde F$ is colored $\alpha$ or $\beta$ and
    \begin{itemize}
        \item either $\tilde F$ is $\psi$-happy and $\tilde P = (\End(\tilde F))$, or
        \item $\tilde F$ is $(\psi, \gamma\delta)$-hopeful and $\{\gamma, \delta\} = \{\alpha, \beta\}$, or
        \item $\tilde F$ is $(\psi, \gamma\delta)$-hopeful, $\{\gamma, \delta\} \cap \{\alpha, \beta\} = \0$, and $\length(\tilde P) = 2\ell$, or
        \item $\tilde F$ is $(\psi, \gamma\delta)$-successful and $\{\gamma, \delta\} \cap \{\alpha, \beta\} = \0$.
    \end{itemize}
    As $\length(\tilde P) \geq 2$, $\tilde F$ cannot be $\psi$-happy.
    Additionally, since $\vend(\tilde P) = \Pivot(\tilde F)$, it must be the case that $\tilde F$ is $(\psi, \gamma\delta)$-disappointed.
    Finally, since $\length(\tilde P) < 2\ell$, we must have  $\set{\gamma, \delta} = \set{\alpha, \beta}$.
    We now claim that $\End(\tilde P) \in \IE(P_k)$, implying we would have reached step \ref{step:truncate1} earlier.

    Let $\psi'\defeq \Shift(\psi, \tilde F)$. 
    By construction, $\deg(\Pivot(\tilde F); \psi, \alpha\beta) = 1$ and so $\deg(\Pivot(\tilde F); \psi', \alpha\beta) = 1$ as well, since $M(\psi', \Pivot(\tilde F)) = M(\psi, \Pivot(\tilde F))$. 
    Furthermore, as $\alpha \in M(\psi, \Pivot(\tilde F))$ and since no edge in $\tilde F$ is colored $\alpha$ or $\beta$, the edge incident to $\Pivot(\tilde F)$ colored $\beta$ is the same under both $\psi$ and $\psi'$.
    This edge $e$ must lie on $P_k$ and must also be $\End(\tilde P)$.
    Finally, as $\End(P_k) = \Start(\tilde F)$, $e$ is the penultimate edge on $P_k$.
    Since $\length(P_k) \geq \ell - 1 > 2$, it follows that $e \neq \Start(P_k)$, implying $\End(\tilde P) \in \IE(P_k)$.
\end{proof}

Now we are ready to verify item \ref{item:invariants}, i.e., prove that the multi-step Vizing chain $C$ and candidate chain $F+P$ at the start of each iteration satisfy \ref{inv:start_F_end_C}--\ref{inv:hopeful_length}.

\begin{Lemma}\label{lemma:correctness_of_algo}
    Consider running Algorithm \ref{alg:multi_viz_chain} on input $(\phi, xy, x, \ell)$.
    Let $C$ and $F + P$ be the multi-step chain and the candidate chain at the beginning of some iteration of the {\upshape\textsf{while}} loop.
    Then $C$ and $F+P$ satisfy the invariants \ref{inv:start_F_end_C}--\ref{inv:hopeful_length}.
\end{Lemma}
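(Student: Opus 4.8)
The plan is to prove Lemma~\ref{lemma:correctness_of_algo} by induction on the iteration number of the \textsf{while} loop, tracking how $C$ and the candidate chain $F+P$ evolve. For the base case, the first iteration begins right after step~\ref{step:first_chain}, where $C = (xy)$ and $F+P$ is the output of Algorithm~\ref{alg:first_chain} on $(\phi, xy, x)$. Here \ref{inv:start_F_end_C} is immediate since $\End(C) = xy = \Start(F)$ and $\vstart(F) = x = \vend(C)$. For \ref{inv:non_intersecting_shiftable}, note that $C$ is trivial, and $F+P$ is $\phi$-shiftable by the remark following Lemma~\ref{lemma:first_chain}; non-intersection holds vacuously (there are no pairs $i<j$). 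For \ref{inv:hopeful_length}, apply Lemma~\ref{lemma:first_chain} directly: either $F$ is $\phi$-happy with $P = (\End(F))$, or $F$ is $(\phi,\alpha\beta)$-hopeful with $\length(P) = 2\ell$, or $F$ is $(\phi,\alpha\beta)$-successful; the first two fall under \ref{inv:hopeful_length}, and in the third case $F$ being successful means it is in particular hopeful, so \ref{inv:hopeful_length} holds (and the extra clause about disappointed fans is vacuous).

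For the inductive step, assume the invariants hold at the start of some iteration and show they hold at the start of the next. There are two cases according to which branch the algorithm takes (having already established via Lemmas~\ref{lemma:valid_input} and~\ref{lemma:never_fail} that the input to Algorithm~\ref{alg:next_chain} is valid and that we never reach step~\ref{step:fail_chain}). \textbf{Non-intersecting update (step~\ref{step:append}).} The new chain is $C' = C + F_k + P_k$ and the new candidate is $\tilde F + \tilde P$, computed by Algorithm~\ref{alg:next_chain} on $(\psi, uv, u, \alpha, \beta)$ with $\psi = \Shift(\phi, C+F_k+P_k)$. Invariant \ref{inv:start_F_end_C} for the new configuration: $\End(C') = \End(P_k) = uv = \Start(\tilde F)$ and $\vstart(\tilde F) = u$; one checks $u = \vend(P_k) = \vend(C')$. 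Wait—more carefully, $uv = \End(P_k)$ and $v = \vend(P_k)$, so $\vstart(\tilde F) = u$ and we need $u = \vend(C')$; here I would invoke the observation in the text that $\Start(\tilde F) = \End(P_k)$ and $\vstart(\tilde F) = \vend(P_k)$ (the new candidate is built to attach at the end of $P_k$)—so the bookkeeping on which endpoint of $\End(P_k)$ serves as pivot needs to be spelled out using the definition of \textsf{NextChain}. Invariant \ref{inv:non_intersecting_shiftable}: $C'$ is non-intersecting because $C$ was and $F_k+P_k$ attaches properly without creating forbidden intersections (using $\length(P_k) \geq \ell > 2$ to guarantee $V(\End(F_{k-1})) \cap V(\Start(F_k)) = \0$, as noted after the algorithm); $C' + \tilde F + \tilde P$ is non-intersecting precisely because we are in the branch where the \textsf{visited} test failed; and $\phi$-shiftability is inherited since $C+F_k+P_k$ is $\phi$-shiftable (by the inductive hypothesis $C+F+P$ was, and $P_k$ is an initial segment of $P$) and $\tilde F + \tilde P$ is $\psi$-shiftable by Lemma~\ref{lemma:next_chain}. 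Invariant \ref{inv:hopeful_length} for $\tilde F + \tilde P$: apply Lemma~\ref{lemma:next_chain}, which gives exactly the four possibilities, all of which imply $\tilde F$ is either $\psi$-happy or $(\psi,\gamma\delta)$-hopeful, with $\length(\tilde P) = 2\ell$ whenever $\tilde F$ is disappointed.

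\textbf{Intersecting update (steps~\ref{step:choosej}--\ref{step:truncate}).} Now $j < k$ is the index where the first intersection occurs, the new chain is $C'' = F_0 + P_0 + \cdots + F_{j-1} + P_{j-1}$, and the new candidate is $F_j + P'$ where $P_j$ is an initial segment of the length-$2\ell$ path $P'$. Invariant \ref{inv:start_F_end_C}: by the inductive hypothesis applied back at the iteration where $F_j$ was created, $\Start(F_j) = \End(C'')$ and $\vstart(F_j) = \vend(C'')$—these were recorded when $F_j$ was appended, and truncating $C$ back to $C''$ restores exactly that configuration. Invariant \ref{inv:non_intersecting_shiftable}: $C''$ is a prefix of a non-intersecting chain hence non-intersecting; for $\phi$-shiftability of $C'' + F_j + P'$, observe that the shift $\psi \gets \Shift(\psi, (F_j + \cdots + F_k + P_k)^*)$ on step~\ref{step:psi} undoes the shifts applied since step $j$, returning $\psi$ to $\Shift(\phi, C'' + F_j + P_j)$, and then $F_j + P'$ is $\Shift(\phi, C'')$-shiftable since $P'$ was a valid $\alpha\beta$-path of length $2\ell$ when originally computed (the coloring on the relevant edges is unchanged); non-intersection of $C'' + F_j + P'$ requires the observation that $F_j + P'$ was non-intersecting with $C''$ back when it was built, and its edge/vertex sets have not changed. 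Invariant \ref{inv:hopeful_length}: when $F_j + P'$ was originally computed we had $\length(P') = 2\ell$ and $F_j$ was $(\psi,\alpha\beta)$-hopeful (it could not have been happy or successful, else we would have returned or not shortened), so \ref{inv:hopeful_length} holds.

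The main obstacle I anticipate is the second case—the intersecting update—specifically the claim that the colorings and chains are faithfully restored when we ``return to step $j$.'' One must argue carefully that $\Shift(\psi, (F_j + P_j + \cdots + F_k + P_k)^*)$ exactly inverts the composition of shifts applied on iterations $j, j+1, \ldots, k$, which relies on the reversibility identity \eqref{eqn:shift_equivalence} extended to chains (the fact that $\Shift(\Shift(\phi, C), C^*) = \phi$ for $\phi$-shiftable $C$) together with the observation that no edge of $F_i + P_i$ for $i \geq j$ has had its color altered by anything outside these chains—which is where the non-intersecting property of $C + F_k + P_k$ does the real work. A secondary subtlety is keeping the \textsf{visited} hash map consistent with the invariant ``$\visited(v)=1$ iff $v\in V(F_i)$ for some fan $F_i$ in the current chain, $\visited(e)=1$ iff $e\in\IE(P_i)$ for some $P_i$'' across both updates; steps~\ref{step:visited}--\ref{step:truncate1} are designed to maintain exactly this, and verifying it is routine but must be done to justify that step~\ref{step:choosej} correctly identifies the first intersection index $j$.
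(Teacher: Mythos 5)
Your proposal is correct and follows essentially the same route as the paper: induction on iterations, with the base case handled by Lemma~\ref{lemma:first_chain}, the append case by Lemma~\ref{lemma:next_chain} combined with Lemma~\ref{lemma:never_fail} (which, as you note but should state explicitly, is what forces $\length(\tilde P)=2\ell$ when $\{\gamma,\delta\}=\{\alpha,\beta\}$ and $\tilde F$ is disappointed---Lemma~\ref{lemma:next_chain} alone does not give this), and the intersecting case by appealing to the induction hypothesis at the earlier iteration $j$, exactly as the paper does. Your extra verification that the shift in step~\ref{step:psi} faithfully restores the iteration-$j$ configuration (via \eqref{eqn:shift_equivalence} and the non-intersecting property) is a correct elaboration of a point the paper leaves implicit.
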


\begin{proof}
    We induct on the iterations. At the first iteration, we have $C = (xy)$ and $(F, P)$ is the output of Algorithm \ref{alg:first_chain},
    so the invariants are satisfied by Lemma \ref{lemma:first_chain}.
    
    Suppose the invariants are satisfied for all iterations up to $i$. Let $C$ and $F+P$ be the multi-step chain and the candidate chain at the start of iteration $i$ such that $P$ is an $\alpha\beta$-path.
    We must show that the resulting multi-step chain $C'$ and the candidate chain $F'+P'$ at the end of the iteration satisfy the invariants as well.
    Let us consider each possible outcome of Algorithm \ref{alg:multi_viz_chain}.

    If we are in the situation of step \ref{step:success}, then $C+F+P$ is $\phi$-happy and the algorithm terminates. 
    Thus, we may assume that $\length(P) = 2\ell$ and step \ref{step:random_choice} yields a path chain $P_k \defeq P|\ell'$ of length $\ell'$.

    Let $(\tilde F, \tilde P)$ be the output of Algorithm \ref{alg:next_chain} on step \ref{step:alpha_beta_order} where $\tilde P$ is a $\gamma\delta$-path.
    By Lemma \ref{lemma:never_fail}, we need only consider the cases where we reach step \ref{step:truncate1} or \ref{step:append}.
    Note that if we reach step \ref{step:truncate1}, then $C'$ is an initial segment of $C$ formed at some iteration $j < i$ and therefore $C'$ and $F'+P'$ satisfy the invariants by the induction hypothesis. 
    If we reach step \ref{step:append}, then $C' = C + F + P_k$, $F' = \tilde F$, and $P' = \tilde P$.
    By our choice of input to Algorithm \ref{alg:next_chain} and since we assume we do not reach step~\ref{step:truncate1}, \ref{inv:start_F_end_C} and \ref{inv:non_intersecting_shiftable} are satisfied. 
    As a result of Lemma \ref{lemma:next_chain}, the only case to verify \ref{inv:hopeful_length} is if $\set{\gamma, \delta} = \set{\alpha, \beta}$ and $\tilde F$ is $(\Shift(\phi, C + F + P_k), \alpha\beta)$-disappointed.
    If $\length(\tilde P) < 2\ell$, we would reach step \ref{step:fail_chain}, which is not possible due to Lemma \ref{lemma:never_fail}, and so $\length(\tilde P) = 2\ell$.
    This covers all the cases and completes the proof.
\end{proof}

We conclude this subsection with two lemmas.
The first describes some properties of non-intersecting chains that will be useful in the proofs presented in the next section.

\begin{Lemma}\label{lemma:non-intersecting_degrees}
    Let $\phi$ be a proper partial coloring and let $xy$ be an uncolored edge. 
    Consider running Algorithm \ref{alg:multi_viz_chain} with input $(\phi, xy, x, \ell)$.
    Let $C = F_0+P_0+\cdots+F_{k-1}+P_{k-1}$ be the multi-step Vizing chain at the beginning of an iteration of the {\upshape\textsf{while}} loop and let $F_k + P_k$ be the chain formed at step \ref{step:Pk} such that, for each $j$, $P_j$ is an $\alpha_j\beta_j$-path in the coloring $\Shift(\phi, F_0+P_0+\cdots +F_{j-1}+P_{j-1})$. Then:
    \begin{enumerate}[label=\ep{\normalfont{}\texttt{Chain}\arabic*},labelindent=15pt,leftmargin=*]
        \item\label{item:degree_end} $\deg(\vend(F_j); \phi, \alpha_j\beta_j) = 1$  for each $0 \leq j \leq k$, and
        
        \item\label{item:related_phi} for each $0 \leq j \leq k$, all edges of $P_j$ except $\Start(P_j)$ are colored $\alpha_j$ or $\beta_j$ under $\phi$.
    \end{enumerate}
\end{Lemma}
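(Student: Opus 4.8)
The plan is to prove both \ref{item:degree_end} and \ref{item:related_phi} simultaneously by induction on the index $j$, tracking how the coloring $\psi_j \defeq \Shift(\phi, F_0 + P_0 + \cdots + F_{j-1} + P_{j-1})$ relates to $\phi$ on the relevant vertices and edges. The base case $j = 0$ is immediate: $\psi_0 = \phi$, the fan $F_0$ and path $P_0$ are constructed by \hyperref[alg:first_chain]{FirstChain} (or, after a truncation, come from an earlier iteration, but in either situation they are governed by Lemma~\ref{lemma:first_chain}), so $P_0$ is an $\alpha_0\beta_0$-path in $\phi$ itself, giving \ref{item:related_phi}, and $\deg(\vend(F_0);\phi,\alpha_0\beta_0)=1$ because $F_0$ is $(\phi,\alpha_0\beta_0)$-hopeful with $\beta_0 \in M(\phi,\vend(F_0))$ and $P_0$ has length $\geq \ell > 2$ so there is an edge colored $\alpha_0$ at $\vend(F_0)$.

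For the inductive step, suppose \ref{item:degree_end} and \ref{item:related_phi} hold for all indices up to $j-1$; I want to deduce them for $j$. The key observation is that the chain $C$ is \emph{non-intersecting} (this is maintained as invariant \ref{inv:non_intersecting_shiftable} by Lemma~\ref{lemma:correctness_of_algo}), so $V(F_j + P_j)$ meets $V(F_i)$ in no vertex and $E(F_j+P_j)$ meets $\IE(P_i)$ in no edge, for every $i < j$. Consequently, when we pass from $\phi$ to $\psi_j$ by shifting along $F_0 + P_0 + \cdots + F_{j-1}+P_{j-1}$, the colors of the \emph{internal} edges of $P_j$ and of all edges of $F_j$ are unchanged, and the sets of missing colors at $\Pivot(F_j) = \vstart(F_j)$ and at $\vend(F_j)$ are affected only through the final edge $\Start(P_j) = \End(F_j)$ shared with the previous step. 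More precisely, $\Start(F_j) = \End(P_{j-1})$, and the only edges of $P_{j-1}$ that can coincide with edges touching $V(F_j)$ are its two endpoints $\Start(P_{j-1})$ and $\End(P_{j-1})$; combined with the fact (noted in the remarks after Algorithm~\ref{alg:multi_viz_chain}) that $\length(P_{j-1}) \geq \ell > 2$ guarantees $V(\End(F_{j-1})) \cap V(\Start(F_j)) = \0$, one shows that the edges of $P_j$ other than $\Start(P_j)$ carry the same colors under $\phi$ as under $\psi_j$, which is exactly \ref{item:related_phi}. Then \ref{item:degree_end} follows: $F_j$ is $(\psi_j, \alpha_j\beta_j)$-hopeful with $\beta_j \in M(\psi_j, \vend(F_j))$, so $\vend(F_j)$ has degree $< 2$ in $G(\psi_j,\alpha_j\beta_j)$, and since $\length(P_j) \geq \ell > 2$ there is an edge colored $\alpha_j$ at $\vend(F_j)$ in $\psi_j$; by \ref{item:related_phi} this edge (being internal to $P_j$) is also colored $\alpha_j$ in $\phi$, so $\deg(\vend(F_j);\phi,\alpha_j\beta_j) \geq 1$, and the reverse inequality $\deg(\vend(F_j);\phi,\alpha_j\beta_j) \leq 1$ follows because $\vend(F_j) \in \IV(P_j) \cup \{\vend(P_j)\}$ and no second $\phi$-edge of color $\alpha_j$ or $\beta_j$ can be incident to it — here I again use that the shift only touched $\Start(P_j)$ among edges at $\vend(F_j)$, so a hypothetical extra $\phi$-colored edge would survive as a $\psi_j$-colored edge and force $\deg(\vend(F_j);\psi_j,\alpha_j\beta_j) = 2$, contradicting hopefulness.

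The main obstacle I anticipate is bookkeeping the interface between consecutive steps carefully: $\End(F_j)$ is genuinely shared between step $j-1$'s path and step $j$'s fan, so its color and the missing-color sets at its endpoints \emph{do} change when shifting, and I must argue that this single edge is precisely $\Start(P_j)$ (hence excluded from the claim \ref{item:related_phi}) and that it does not corrupt the degree count at $\vend(F_j) \neq \vstart(F_j)$. The non-intersecting hypothesis is what makes this local; without it the argument collapses. I would also want to invoke Lemma~\ref{lemma:valid_input} to be sure that at each step the colors $\alpha_j \in M(\psi_j, \Pivot(F_j)) \setminus M(\psi_j, \vend(F_{j-1}))$ and $\beta_j$ are exactly the two colors on $P_j$, and the standard shiftability facts (\cite[Fact 4.4, 4.5]{VizingChain}, restated here) to know that shifting along an initial segment of an $\alpha\beta$-path does not introduce new edges of color $\alpha$ or $\beta$ outside that segment.
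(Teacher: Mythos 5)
Your overall route coincides with the paper's: compare $\phi$ with $\psi_j \defeq \Shift(\phi, F_0+P_0+\cdots+F_{j-1}+P_{j-1})$, note that both properties hold for $\psi_j$ by construction, and use the non-intersecting invariant to transfer them back to $\phi$. Your treatment of \ref{item:related_phi} is in substance the paper's argument (a discrepant edge of $P_j$ would have to be an edge of an earlier fan or an internal edge of an earlier path, both of which the non-intersecting property excludes), though it must be run against \emph{all} $i<j$, not just the interface with $P_{j-1}$.

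The genuine gap is in your argument for \ref{item:degree_end}. You assert that the shifts of steps $0,\dots,j-1$ "only touched $\Start(P_j)$ among edges at $\vend(F_j)$", and that the only edges of $P_{j-1}$ that can meet $V(F_j)$ are its two end edges. Neither claim follows from Definition~\ref{defn:non-int}: the non-intersecting property forbids $\vend(F_j)$ from lying on an earlier \emph{fan} and forbids internal edges of earlier paths from being edges of $F_j+P_j$, but it does not prevent $\vend(F_j)$ from being an \emph{internal vertex} of some earlier path $P_i$, $i<j$. In that situation two edges incident to $\vend(F_j)$ do change color when one shifts along $P_i$, so your inference that "a hypothetical extra $\phi$-colored edge would survive as a $\psi_j$-colored edge" breaks down; the hypothetical edge's color could have been altered en route from $\phi$ to $\psi_j$. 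The repair --- which is the heart of the paper's proof --- is to show $M(\psi_i,\vend(F_j))=M(\psi_{i+1},\vend(F_j))$ for every $i<j$: either $\vend(F_j)\notin V(F_i+P_i)$, or (using non-intersection plus the boundary observations that $\vend(F_j)$ cannot lie on $\End(F_i)$ or on $\Start(F_{i+1})$, the case $i+1=j$ requiring $\length(P_{j-1})\geq\ell$ and the fact that non-happy fans have at least two edges, so $\vend(F_j)\neq\vstart(F_j)$) one has $\vend(F_j)\in\IV(P_i)\setminus V(F_i)$, where the shift merely swaps the colors $\alpha_i\leftrightarrow\beta_i$ on the two incident path edges and hence leaves the missing set unchanged. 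Iterating gives $\deg(\vend(F_j);\phi,\alpha_j\beta_j)=\deg(\vend(F_j);\psi_j,\alpha_j\beta_j)=1$. Your interface bookkeeping also contains slips that matter here: $\Pivot(F_j)\neq\vstart(F_j)$, and the edge shared with the previous step is $\Start(F_j)=\End(P_{j-1})$ (which becomes uncolored under $\psi_j$), not $\Start(P_j)=\End(F_j)$.
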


\begin{proof}

    Let $\psi_0 \defeq \phi$ and $\psi_j \defeq \Shift(\phi, F_0+P_0+\cdots +F_{j-1}+P_{j-1})$ for $1 \leq j \leq k$. 
    By construction, items \ref{item:degree_end}, \ref{item:related_phi} hold with $\phi$ replaced by $\psi_{j}$.
    Note that by Lemma \ref{lemma:correctness_of_algo}, condition \ref{inv:non_intersecting_shiftable} is satisfied, and hence the chain $C' \defeq C + F_k + P_k$ is non-intersecting and $\phi$-shiftable.

    As $\psi_0 = \phi$, \ref{item:degree_end} holds for $\vend(F_0)$.
    Consider $1 \leq j \leq k$. 
    We will show that for $0 \leq i < j$, $M(\psi_i, \vend(F_j)) = M(\psi_{i+1}, \vend(F_j))$, which implies \ref{item:degree_end} since $\deg(\vend(F_j); \psi_j, \alpha_j\beta_j) = 1$.
    As $C'$ is non-intersecting, $\vend(F_j) \notin V(F_i)$, so if $\vend(F_j) \notin V(P_i)$, then clearly $M(\psi_i, \vend(F_j)) = M(\psi_{i+1}, \vend(F_j))$.
    If $\vend(F_j) \in V(P_i)$,
    then in fact $\vend(F_j) \in \IV(P_i) \setminus V(F_i)$, for if not, then it either belongs to $\End(F_i)$ or $\Start(F_{i+1})$, violating the non-intersecting property 
    (for the case that $j = i+1$, we have $\vend(F_j) \notin \Start(F_j)$ by construction).
    If $\vend(F_j) \in \IV(P_i) \setminus V(F_i)$, the only changes in its neighborhood between $\psi_i$ and $\psi_{i+1}$ are that the edges colored $\alpha_i$ and $\beta_i$ swap colors.
    In particular, the missing set of $\vend(F_j)$ is not altered, as desired. 
    
    Now suppose that $0 \leq j \leq k$ is such that some edge of $P_{j}$ (apart from $\Start(P_j)$) is not colored $\alpha_j$ or $\beta_j$ under $\phi$. 
    This implies that $E(P_j) \cap E(F_0+P_0 + \cdots +F_{j-1} + P_{j-1}) \neq \0$.
    By the non-intersecting property, it must be the case that $E(P_j) \cap E(F_i) \neq \0$ for some $i < j$. But then we would have $V(P_j) \cap V(F_i) \neq \0$
    violating the non-intersecting property. 
    So item \ref{item:related_phi} holds as well.
\end{proof}

The next lemma describes an implication of Lemma \ref{lemma:next_chain} on a certain kind of intersection.

    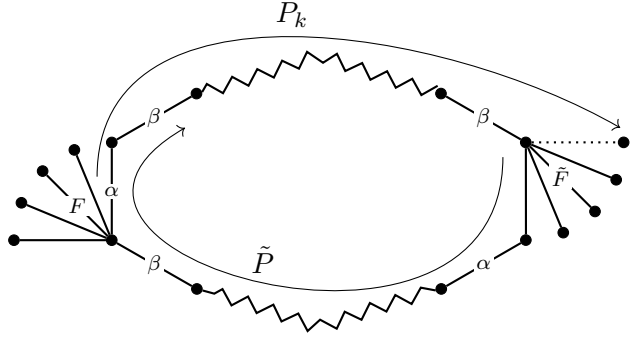
\begin{figure}[t]
        \centering
    	\begin{tikzpicture}[scale=1.3]
            \clip (-0.2,-1) rectangle (6.5,2.5);
     
            \node[circle,fill=black,draw,inner sep=0pt,minimum size=4pt] (a) at (0,0) {};
            \node[circle,fill=black,draw,inner sep=0pt,minimum size=4pt] (b) at (1,0) {};
            \path (b) ++(157.5:1) node[circle,fill=black,draw,inner sep=0pt,minimum size=4pt] (c) {};
            \path (b) ++(135:1) node[circle,fill=black,draw,inner sep=0pt,minimum size=4pt] (d) {};
            \path (b) ++(112.5:1) node[circle,fill=black,draw,inner sep=0pt,minimum size=4pt] (e) {};
            \path (b) ++(90:1) node[circle,fill=black,draw,inner sep=0pt,minimum size=4pt] (f) {};
            \path (f) ++(30:1) node[circle,fill=black,draw,inner sep=0pt,minimum size=4pt] (g) {};
            \path (g) ++(0:2.5) node[circle,fill=black,draw,inner sep=0pt,minimum size=4pt] (h) {};
            \path (h) ++(-30:1) node[circle,fill=black,draw,inner sep=0pt,minimum size=4pt] (i) {};
            \path (i) ++(0:1) node[circle,fill=black,draw,inner sep=0pt,minimum size=4pt] (j) {};
            \path (i) ++(-22.5:1) node[circle,fill=black,draw,inner sep=0pt,minimum size=4pt] (k) {};
            \path (i) ++(-45:1) node[circle,fill=black,draw,inner sep=0pt,minimum size=4pt] (l) {};
            \path (i) ++(-67.5:1) node[circle,fill=black,draw,inner sep=0pt,minimum size=4pt] (m) {};
            \path (i) ++(-90:1) node[circle,fill=black,draw,inner sep=0pt,minimum size=4pt] (n) {};
            \path (n) ++(-150:1) node[circle,fill=black,draw,inner sep=0pt,minimum size=4pt] (o) {};
            \path (o) ++(180:2.5) node[circle,fill=black,draw,inner sep=0pt,minimum size=4pt] (p) {};
            
            \draw[->, decorate] (0.85, 0.65) to[out=90, in=150, looseness=1] node [midway,above,xshift=0pt,yshift=0pt] {$P_k$} (6.2,1.15);
    
            \draw[->, decorate] (5, 0.85) to[out=-90, in=-150, looseness=2] node [midway,above,xshift=0pt,yshift=0pt] {$\tilde P$} (1.75,1.15);
            
            \draw[thick,dotted] (i) -- (j);
            \draw[thick] (a) -- (b) -- (c) (b) to node[font=\fontsize{8}{8},midway,inner sep=1pt,outer sep=1pt,minimum size=4pt,fill=white] {$F$} (d) (e) -- (b) to node[font=\fontsize{8}{8},midway,inner sep=1pt,outer sep=1pt,minimum size=4pt,fill=white] {$\alpha$} (f) to node[font=\fontsize{8}{8},midway,inner sep=1pt,outer sep=1pt,minimum size=4pt,fill=white] {$\beta$} (g) (h) to node[font=\fontsize{8}{8},midway,inner sep=1pt,outer sep=1pt,minimum size=4pt,fill=white] {$\beta$} (i) -- (k) (l) to node[font=\fontsize{8}{8},midway,inner sep=1pt,outer sep=1pt,minimum size=4pt,fill=white] {$\tilde F$} (i) -- (m) (i) -- (n) to node[font=\fontsize{8}{8},midway,inner sep=1pt,outer sep=1pt,minimum size=4pt,fill=white] {$\alpha$} (o) (p) to node[font=\fontsize{8}{8},midway,inner sep=1pt,outer sep=1pt,minimum size=4pt,fill=white] {$\beta$} (b);
            
            \draw[thick, decorate,decoration=zigzag] (g) to[out=15,in=165,looseness=2] (h) (o) to[out=195,in=-15,looseness=2] (p);
            
    	\end{tikzpicture}
        \caption{The situation in Lemma~\ref{lemma:intersection_prev}: $F+P_k+\tilde F+\tilde P$ under $\psi$.}
        \label{fig:intersection_prev}
    \end{figure}

\begin{Lemma}\label{lemma:intersection_prev}
    Suppose we reach step \ref{step:truncate1} during an iteration of the {\upshape\textsf{while}} loop in Algorithm \ref{alg:multi_viz_chain}.
    If $j = k$, then the first intersection occurred at a vertex in $V(F_k)$.
\end{Lemma}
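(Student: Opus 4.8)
We are in the situation where, during some iteration of the \textsf{while} loop, the newly computed candidate chain $\tilde F + \tilde P$ (produced by \hyperref[alg:next_chain]{Next Chain} at step \ref{step:alpha_beta_order}) intersects the current chain $C + F_k + P_k$, and the index $j$ chosen at step \ref{step:choosej} equals $k$. We want to rule out the possibility that this first intersection happens at an internal edge of $P_k$, i.e., that $\visited(e) = 1$ for some $e \in E(\tilde F + \tilde P)$ that lies in $\IE(P_k)$ before any vertex of $\tilde F$ is flagged. The plan is to examine exactly which edge of $\tilde F + \tilde P$ could be the first flagged object and show it cannot be an internal edge of $P_k$. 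Write $\psi$ for the coloring $\Shift(\phi, C + F + P_k) = \Shift(\psi', F_k + P_k)$ used as input to \hyperref[alg:next_chain]{Next Chain}, where $P$ is the $\alpha\beta$-path of the candidate chain and $P_k = P|\ell'$. By Lemma~\ref{lemma:valid_input}, the input $(\psi, uv, u, \alpha, \beta)$ with $u = \vend(P_k)$ is valid, so Lemma~\ref{lemma:next_chain} applies to $\tilde F + \tilde P$.

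**Key steps.** First, I would recall from Lemma~\ref{lemma:next_chain} (via Lemma~\ref{lemma:next_fan_ber}) that \emph{no edge of $\tilde F$ is colored $\alpha$ or $\beta$ under $\psi$}, and that $\tilde P$ is a $\gamma\delta$-path. Now consider the edges of $\IE(P_k)$: every internal edge of $P_k$ is colored $\alpha$ or $\beta$ under $\psi$ — indeed $P_k$ is an $\alpha\beta$-path and, by construction, $\psi(\End(P_k)) = \beta$ while $\Start(P_k) = \End(F_k)$ is the only edge of $P_k$ that might be differently colored; but $\Start(P_k) \notin \IE(P_k)$. So every edge in $\IE(P_k)$ carries color $\alpha$ or $\beta$ under $\psi$. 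Since no edge of $\tilde F$ is colored $\alpha$ or $\beta$, no edge of $\tilde F$ can lie in $\IE(P_k)$. Hence if the first flagged object of $\tilde F + \tilde P$ were an edge $e \in \IE(P_k)$, then $e$ would have to be an edge of $\tilde P$, not of $\tilde F$, and therefore \emph{every} vertex of $\tilde F$ (and the pivot/start vertices of $\tilde P$) would have already been traversed without being flagged. In particular $\Pivot(\tilde F) = u = \vend(P_k)$ would be traversed first; but — here is the crux — consider whether $u$ itself is flagged. We have $u = \End(P_k) \cap \Start(\tilde F)$, and the remark following Algorithm~\ref{alg:multi_viz_chain} notes that $V(\End(F_k)) \cap V(\Start(F_{k+1})) = \0$ and, more to the point, $u$ is an endpoint of $P_k$, not an internal vertex; so $u$ need not be flagged by $P_k$. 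I would then argue, as in the proof of Lemma~\ref{lemma:never_fail}, that the edge of $\tilde P$ incident to $u$ that could re-enter $\IE(P_k)$ must actually be the penultimate edge of $P_k$ (colored $\alpha$, assuming WLOG $\beta \in M(\psi, u)$), and trace the path $\tilde P$ backwards: following the $\alpha\beta$-path from $u$ inside $P_k$, the first vertex of $P_k$ that $\tilde P$ revisits is an \emph{internal vertex} of $P_k$ (since $\length(P_k) = \ell' \geq \ell > 2$), and that vertex lies in $\IV(P_k)$, hence is flagged. Therefore a vertex — not an edge — of $\tilde F + \tilde P$ gets flagged first, and this vertex lies in $V(F_k)$ (it is in $\IV(P_k) \subseteq V(P_k)$; but we must locate it precisely in $V(F_k)$).

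**Reconciling with the statement.** The statement asserts the first intersection occurred at a vertex in $V(F_k)$, not merely in $V(P_k)$. So the remaining step is to show that when $j = k$ the flagged vertex cannot be an internal vertex of $P_k$ either, forcing it into $V(F_k)$. Here I would use that $\visited(e) = 1$ iff $e \in \IE$ of a path in the current chain $C$, while $\visited(v) = 1$ iff $v \in V(F_i)$ for some fan $F_i$ in $C$ — crucially, internal \emph{vertices} of the $P_i$'s are \emph{not} flagged (only their internal edges are). Re-reading the $\visited$ bookkeeping: step \ref{step:truncate1}'s comment and the remark after Algorithm~\ref{alg:multi_viz_chain} say $\visited(v) = 1$ exactly when $v$ lies on a fan of the current chain. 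So a vertex of $\tilde F + \tilde P$ can only be flagged if it lies in some $V(F_i)$. Combined with $j = k$ (the first intersection is at $F_k + P_k$), the flagged vertex lies in $V(F_k)$. Thus the real content is: \emph{when the first flagged object at step $j=k$ is considered, it must be a vertex, and since flagged vertices are fan-vertices, it lies in $V(F_k)$.} The argument that it cannot be an edge of $\IE(P_k)$ is exactly the color-parity argument above (edges of $\tilde F$ avoid $\alpha,\beta$; internal edges of $P_k$ all have color $\alpha$ or $\beta$), plus the Lemma~\ref{lemma:never_fail}-style backward trace showing $\tilde P$ cannot sneak an edge into $\IE(P_k)$ before hitting a fan vertex of $F_k$.

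**Main obstacle.** The delicate point — and where I expect to spend the most care — is the backward-tracing of $\tilde P$ through the region near $u = \vend(P_k)$: showing that the $\gamma\delta$-path $\tilde P$, which starts at $\End(\tilde F)$ and has $\vstart = u$, cannot have an internal edge landing in $\IE(P_k)$ before some vertex of $V(F_k)$ is encountered. This mirrors the computation in Lemma~\ref{lemma:never_fail} (where the penultimate edge of $P_k$, colored $\alpha$, is identified with $\End(\tilde P)$), and I would adapt that computation: because no edge of $\tilde F$ is colored $\alpha$ or $\beta$ and $\deg(\Pivot(\tilde F); \psi, \alpha\beta) = \deg(\Pivot(\tilde F); \Shift(\psi,\tilde F), \alpha\beta) = 1$, the unique $\alpha$-edge at $u$ is the same before and after shifting $\tilde F$; that edge is the penultimate edge of $P_k$, which is internal since $\length(P_k) \geq \ell > 2$, and it is the \emph{only} edge of $\IE(P_k)$ that $\tilde P$ can reach — but to reach it, $\tilde P$ would first have to pass through $\Pivot(\tilde F) = u$, whose adjacency in $P_k$ makes it an endpoint, not a flag; so the first genuinely flagged object along $\tilde F + \tilde P$ is a vertex of $V(F_k)$, which is what $j = k$ records. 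I will need to be careful that the "first intersection point" is interpreted consistently with the ordering of edges and vertices along $\tilde F + \tilde P$ used at step \ref{step:choosej}.
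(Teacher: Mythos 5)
Your first step matches the paper's: since no edge of $\tilde F$ is colored $\alpha$ or $\beta$ under $\psi$ while every edge of $\IE(P_k)$ is, the offending intersection (if it is an edge) must be an edge of $\tilde P$ lying in $\IE(P_k)$. You also correctly identify that the whole content of the lemma is to rule this edge case out, since flagged vertices are exactly fan vertices. However, the argument you give for the crux has a genuine gap. Your key claim --- that the penultimate edge of $P_k$ is ``the only edge of $\IE(P_k)$ that $\tilde P$ can reach,'' and that reaching it would require first passing through $\Pivot(\tilde F)$ --- is borrowed from the special situation of Lemma~\ref{lemma:never_fail} (where $\vend(\tilde P)=\Pivot(\tilde F)$ and $\tilde P$ terminates there) and is false here. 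If $\tilde P$ meets $\IE(P_k)$ at all, then its colors satisfy $\set{\gamma,\delta}=\set{\alpha,\beta}$, and in $G(\psi',\alpha\beta)$ the internal vertices of $P_k$ have both their $\alpha\beta$-edges inside $P_k$ while $\Pivot(\tilde F)$ has degree $1$ with its unique $\alpha\beta$-edge being the penultimate edge of $P_k$. Consequently a bicolored path entering the interior of $P_k$ can only do so through the $\vend(F_k)$ end; it then runs along \emph{many} interior edges and reaches the penultimate edge \emph{last}, arriving at $\Pivot(\tilde F)$ only at the very end. So your geometric picture is reversed. Moreover, even granting your picture, the conclusion does not follow: if the only flagged object $\tilde P$ could reach were an edge of $\IE(P_k)$, reached after passing through the \emph{unflagged} vertex $\Pivot(\tilde F)$, then that edge would be the first intersection --- which would contradict the lemma rather than prove it. (There is also a bookkeeping slip: $\Pivot(\tilde F)=u$ is not $\vend(P_k)$; the algorithm sets $v=\vend(P_k)$ and pivots at the other endpoint $u$ of $\End(P_k)$; and your intermediate assertion that a vertex of $\IV(P_k)$ is ``flagged'' contradicts the $\visited$ convention, as you yourself note later.)

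The missing idea is the paper's: after establishing $\set{\gamma,\delta}=\set{\alpha,\beta}$, one observes that $\deg(\Pivot(\tilde F);\psi',\alpha\beta)=\deg(\vend(\tilde F);\psi',\alpha\beta)=1$, so if the first offending edge $wz\in E(\tilde P)\cap\IE(P_k)$ had $w\in\IV(P_k)$ and $z\neq\vend(F_k)$, then $\Pivot(\tilde F)$ and $\vend(\tilde F)$ would be $(\psi',\alpha\beta)$-related, forcing $\tilde P$ to be an initial segment of the maximal path $P'=P(\End(\tilde F);\psi',\alpha\beta)$ ending at $\Pivot(\tilde F)$; but then the edge of $P'$ immediately preceding $wz$ also lies on $\tilde P$ and also belongs to $\IE(P_k)$, contradicting the minimality of $wz$. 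Hence $z=\vend(F_k)$, and the first intersection is in fact at the vertex $\vend(F_k)\in V(F_k)$. Your proposal never supplies this minimality/preceding-edge argument, so as written it does not close.
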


\begin{proof}
    Consider such an iteration of the \textsf{while} loop.
    Let $C = F_0 + P_0 + \cdots + F_{k-1} + P_{k-1}$ and $F+P$ be the chain and the candidate chain at the start of the iteration such that $P$ is an $\alpha\beta$-path. 
    Let $P_k$ be the random truncation of $P$ computed on step \ref{step:Pk}
    and let $\tilde F$, $\tilde P$ be the output of Algorithm \ref{alg:next_chain} on step \ref{step:alpha_beta_order}.
    Since we reach step \ref{step:truncate1} with $j = k$, we must have
    \begin{itemize}
        \item either $V(\tilde F + \tilde P) \cap V(F) \neq \0$, or
        \item $E(\tilde F + \tilde P) \cap \IE(P_k) \neq \0$.
    \end{itemize}
    Suppose the first intersection is at an edge.
    Let $\psi \defeq \Shift(\phi, C + F + P_k)$, and,
    without loss of generality, let $\alpha \in M(\psi, \Pivot(\tilde F))$.
    By Lemma \ref{lemma:next_chain}, no edge in $\tilde F$ is colored $\alpha$ or $\beta$ under $\psi$ and
    \begin{itemize}
        \item either $\tilde F$ is $\psi$-happy and $\tilde P = (\End(\tilde F))$, or
        \item $\tilde F$ is $(\psi, \gamma\delta)$-hopeful and $\{\gamma, \delta\} = \{\alpha, \beta\}$, or
        \item $\tilde F$ is $(\psi, \gamma\delta)$-hopeful, $\{\gamma, \delta\} \cap \{\alpha, \beta\} = \0$, and $\length(\tilde P) = 2\ell$, or
        \item $\tilde F$ is $(\psi, \gamma\delta)$-successful and $\{\gamma, \delta\} \cap \{\alpha, \beta\} = \0$.
    \end{itemize}
    As no edge in $\tilde F$ is colored $\alpha$ or $\beta$ under $\psi$, $E(\tilde F) \cap E(P_k) = \set{\Start(\tilde F)}$, which is not a violation since $\Start(\tilde F)$ is not an internal edge of $P_k$.
    So we must have $E(\tilde P) \cap \IE(P_k) \neq \0$.
    Let $\psi'\defeq \Shift(\psi, \tilde F)$.
    As $E(\tilde F) \cap \IE(P_k) = \0$, $\psi'(e) = \psi(e) \in \set{\alpha, \beta}$ for all $e \in \IE(P_k)$.
    Therefore, we can conclude that $\set{\gamma, \delta} \cap \set{\alpha, \beta} \neq \0$ and hence $\set{\gamma, \delta} = \set{\alpha, \beta}$.
    See Fig.~\ref{fig:intersection_prev} for an illustration.

    Let $wz \in E(\tilde P) \cap \IE(P_k)$ be the first intersection.
    Note that as $wz \in \IE(P_k)$ and $\length(P_k) \geq \ell > 3$, either $w \in \IV(P_k)$ or $z \in \IV(P_k)$. Without loss of generality, let $w \in \IV(P_k)$. We claim that then $z = \vend(F)$, and hence the first forbidden intersection of $\tilde F + \tilde P$ with $F + P_k$ occurs at the vertex $z \in V(F)$, as desired.
    Suppose not, that is, either $z \in \IV(P_k)$ or $z = \Pivot(\tilde F)$. 
    As $z$ and $\vend(\tilde F)$ both lie on $\tilde P$, they must be $(\psi', \alpha\beta)$-related.
    Similarly, as $z$ and $\Pivot(\tilde F)$ both lie on $P_k$, they must also be $(\psi', \alpha\beta)$-related.
    It follows that $\Pivot(\tilde F)$ and $\vend(\tilde F)$ are $(\psi', \alpha\beta)$-related.
    Furthermore, $\deg(\Pivot(\tilde F); \psi', \alpha\beta) = \deg(\vend(\tilde F); \psi', \alpha\beta) = 1$ by construction.
    We conclude that $\tilde P$ is an initial segment of $P'\defeq P(\End(\tilde F); \psi', \alpha\beta)$, where $\vend(P') = \Pivot(\tilde F)$.
    Note that the edge $f$ that immediately precedes $wz$ on $P'$ must also lie on $\tilde P$.
    But since $w \in \IV(P_k)$ and $z \neq \vend(F)$, the edge $f$ must also belong to $\IE(P_k)$, which contradicts the choice of $wz$.
\end{proof}

\section{Analysis of Algorithm \ref{alg:multi_viz_chain}}\label{sec:msva_analysis}

For this section, we fix a proper partial coloring $\phi$.
The main result of this section is the following:

\begin{theo}\label{theo:entropy_compression_dist}
    Let $e = xy$ be an uncolored edge. For any $t > 0$ and $\ell \geq 1200\Delta^{16}$, Algorithm \ref{alg:multi_viz_chain} with input $(\phi, xy, x, \ell)$ computes an $e$-augmenting multi-step Vizing chain $C$ of length $\length(C) = O(\ell\,t)$ in time $O(\ell\,t)$ with probability at least $1 - 4m(1200\Delta^{15}/\ell)^{t/2}$.
\end{theo}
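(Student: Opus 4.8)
The plan is to apply the entropy compression method, following the template sketched in \S\ref{subsec:MSVC_inf}. The random inputs are the integers $\ell' \in [\ell, 2\ell-1]$ generated on line~\ref{step:random_choice}; there are $\ell$ possible values for each, so after $t$ iterations of the \textsf{while} loop there are exactly $\ell^t$ possible input strings (each equally likely). The goal is to show that the number of input strings of length $t$ for which the algorithm has \emph{not} yet terminated after $t$ iterations is small — specifically at most $(1200\Delta^{15})^{t/2} \cdot (\text{something like } m)$ — so that dividing by $\ell^t$ gives a failure probability bounded by roughly $m(1200\Delta^{15}/\ell)^{t/2}$, with the factor $4$ absorbing lower-order terms and the bookkeeping for $C$ and $C'$ both.

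\textbf{The encoding.} For a run that survives $t$ iterations, I would record, for each iteration: (i) one bit saying whether the algorithm went to line~\ref{step:truncate1} (a ``back'' step, i.e.\ an intersection occurred) or to line~\ref{step:append} (a ``forward'' step); (ii) in the ``back'' case, the value $k-j \geq 0$ of how many steps we retreat. In addition I record the final edge $e_k$ after iteration~$t$ (this costs a $\log m$, hence a factor $m$, additive to the count). The key combinatorial lemma to prove is: \emph{from this record one can reconstruct the entire sequence of random choices $\ell'_1, \dots, \ell'_t$ uniquely, or at least the number of records of length $t$ compatible with at least one surviving run is bounded.} The reconstruction works backwards: knowing $e_k$ at the end and whether the last step was forward or back, together with the structure of the partial coloring $\phi$ (which is fixed and known), and crucially the \emph{non-intersecting} property guaranteed by Lemma~\ref{lemma:correctness_of_algo}, we can run Algorithm~\ref{alg:first_fan}/\ref{alg:next_fan} deterministically to recover the fan $F_k$ and the path $P$ that was truncated, and hence — since $e_{k+1}$ (the new edge, which we have just reconstructed as the current $e$) is the \emph{last} edge of $P_k = P|\ell'_k$ and we can read off where on the deterministically-computable path $P$ that edge sits — the value $\ell'_k$. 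Peeling off iterations one at a time recovers the whole input. Here is where non-intersection is indispensable: it is what makes each $F_i$ and each $P_i$ be recomputable deterministically from the coloring at step $i$ (via Lemmas~\ref{lemma:first_chain}, \ref{lemma:next_chain}, \ref{lemma:non-intersecting_degrees}), since the path $P_i$ really is a bicolored path under the current shifted coloring and is not ``contaminated'' by earlier parts of the chain.

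\textbf{Counting the records.} Over $t$ iterations, the number of ``back'' bits and ``forward'' bits is governed by a ballot-type constraint: $k$ starts at $0$, never goes negative, and a forward step increments it while a back step decrements it by $k-j \geq 1$ (a back step with $j=k$ leaves $k$ fixed — this is the case analyzed in Lemma~\ref{lemma:intersection_prev}, and I should check whether it needs separate treatment). If there are $f$ forward steps and $b$ back steps with $f+b = t$, then the total decrement $\sum (k-j)$ over back steps must equal the total increment $f$ minus the final value $k_{\mathrm{end}} \geq 0$, i.e.\ $\sum(k_i - j_i) \le f$. The number of ways to choose the $(k-j)$-values is then at most the number of compositions of a number $\le f$ into $b$ nonnegative parts, which is $\binom{f + b}{b} \le 2^t$, and the one-bit-per-iteration choices contribute another $2^t$. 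So the number of records is at most $4^t \cdot m$ up to constants. But $4^t$ alone is not enough to beat $\ell^t$ unless we also exploit that each iteration consumes ``entropy'': the point is that a surviving run must have performed at least $t$ random draws, and the above shows the number of distinct $(\text{record})$ values is $\le (\text{poly})^t m$ while each record corresponds to \emph{at most one} input string. The main quantitative work is to verify that the per-iteration multiplicative cost of the encoding is at most $1200\Delta^{15}$ — this must come from more careful accounting than the crude $4^t$: the ``$\Delta^{15}$'' surely arises because, at a back step, reconstructing which vertex or edge of $F_j + P_j$ was the intersection point (needed to pin down $j$ and then to recompute $P'$) costs a factor $\poly(\Delta)$ (a vertex on a fan, or one of $\le \poly(\Delta)$ bicolored paths through it — cf.\ the counting in the proof of Lemma~\ref{lemma:expected_runtime} and the ``$\poly(\Delta)$ Vizing chains through a vertex'' remark in \S\ref{subsec:overviewaug}), and similarly a forward step costs $\poly(\Delta)$ to recover the fan structure. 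Tracking that the product of these constants is $\le \ell/2$ when $\ell \ge 1200\Delta^{16}$, so that $(1200\Delta^{15}/\ell)^{t/2} \le 2^{-t/2}$, and that the exponent is $t/2$ rather than $t$ (likely because each ``real progress'' step pairs with a possible ``back'' step, or because the encoding of one iteration needs two ``half-iterations'' worth of input to reconstruct) is the delicate part.

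\textbf{Main obstacle.} The hard part will be the injectivity/reconstruction argument — proving rigorously that the record plus the fixed coloring $\phi$ determines the run, which requires showing that at every step the fan and path the algorithm would compute are \emph{deterministic functions} of the current shifted coloring (true by construction of Algorithms~\ref{alg:first_chain} and~\ref{alg:next_chain}, which make no random choices internally except the single $\ell'$), and that the shifted coloring at step $i$ is itself recoverable by undoing the \textsf{Shift} operations using \eqref{eqn:shift_equivalence}. The non-intersecting invariant \ref{inv:non_intersecting_shiftable} and Lemma~\ref{lemma:non-intersecting_degrees} are exactly what guarantee the \textsf{Shift}s compose cleanly and can be reversed. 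Once reconstruction is in place, the probability bound is: $\Pr[\text{survive }t\text{ iterations}] \le (\#\text{records of length }t)/\ell^t \le 4m \cdot (1200\Delta^{15})^{t/2}/\ell^{t} \le 4m(1200\Delta^{15}/\ell)^{t/2}$ (using $\ell \ge \ell^{1/2} \cdot (1200\Delta^{16})^{1/2} \ge \ell^{1/2}(1200\Delta^{15})^{1/2}\cdot \Delta^{1/2} \ge \ell^{1/2}(1200\Delta^{15})^{1/2}$, wait — more carefully, we need $\ell^t \ge (1200\Delta^{15})^{t/2}\ell^{t/2}$, i.e.\ $\ell^{t/2} \ge (1200\Delta^{15})^{t/2}$, i.e.\ $\ell \ge 1200\Delta^{15}$, which holds). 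The length and runtime bounds $\length(C) = O(\ell t)$ and time $O(\Delta \ell t)$ are immediate: the final chain has at most $t$ segments each of length $\le 2\ell$, and each iteration runs Algorithms~\ref{alg:first_chain}/\ref{alg:next_chain} in time $O(\Delta\ell)$ plus $O(\poly(\Delta)\ell)$ bookkeeping for \textsf{visited} updates, all of which is $O(\Delta\ell t)$ over $t$ iterations — conditioned on termination within $t$ iterations, which is the high-probability event just established.
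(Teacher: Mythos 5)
Your overall architecture is the same as the paper's: per iteration you record a forward/back indicator plus the retreat distance $k-j$, you record the terminus (final edge) at the end, you count records via a ballot/Catalan argument ($\leq 4^t$, as in Lemma~\ref{lemma:Dst}), and you divide by $\ell^t$. However, the quantitative core of your argument has a genuine gap. First, your claim that ``each record corresponds to at most one input string'' is not correct and is not what the paper proves: the record does \emph{not} determine the run, and the paper instead bounds the number of input sequences per (record, terminus) pair by a product of per-step weights (Lemma~\ref{lemma:counting_wD}). Second, and more importantly, your estimate of the per-step reconstruction multiplicity omits a factor of $\ell$. At a back step, to recover the previous state one must identify where the first intersection of $\tilde F + \tilde P$ with $F_j + P_j$ occurred, and this intersection can sit at any of up to $2\ell$ positions along the path $P_j$; the cost is therefore $\Theta(\ell\,\poly(\Delta))$, not $\poly(\Delta)$ (in the paper, $\val(d_i) = 75\ell\Delta^{11}$ for $d_i < 0$, versus $\Delta^4$ for forward steps). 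Your claimed count $4m(1200\Delta^{15})^{t/2}$ is thus unsubstantiated; the true count carries an extra $\ell^{t/2}$, and recovering the stated probability bound requires the observation — which you guessed at but did not pin down — that back steps with $d_i<0$ number at most about half of all iterations (since each must be paid for by earlier forward steps, $|K| \leq (t-s-|J|)/2$ in the paper's notation), so that the expensive $\ell$-weighted steps contribute only $\ell^{t/2}$; this is precisely where the exponent $t/2$ and the hypothesis $\ell \geq \poly(\Delta)$ come from (Lemma~\ref{lemma:wD_bound}).

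Relatedly, the case you flagged as ``maybe needing separate treatment'' — a back step with $j = k$, i.e.\ $d_i = 0$ — is not optional but essential. Such steps do not decrease $k$, so their number is \emph{not} controlled by the ballot constraint; if each of them cost a factor of $\ell$, the weight bound would collapse no matter how large $\ell$ is. The paper closes this hole with Lemma~\ref{lemma:intersection_prev}, showing that when $j=k$ the first intersection must occur at a vertex of $V(F_k)$, so only $\poly(\Delta)$ choices (no $\ell$ factor) are needed there, whence $\val(0) = 20\Delta^9$. Without this lemma, or an equivalent structural fact, the entropy compression count does not go through. Your reconstruction sketch (recomputing fans and bicolored paths deterministically from the current coloring, using the non-intersecting invariant) is in the right spirit — the paper's Lemma~\ref{lemma:non-intersecting_degrees} is what lets one do the counting with respect to the \emph{fixed} coloring $\phi$ rather than the shifted colorings — but as written your proposal neither establishes the multiplicity bound nor identifies the two structural ingredients (the half-fraction bound on costly back steps, and the vertex-intersection property for $j=k$) that make the advertised bound true.
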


This result yields a constructive proof of Theorem \ref{theo:christlog}.
Indeed, taking 
$t = \Theta(\log n)$ and $\ell = 1200\Delta^{16}$, Theorem~\ref{theo:entropy_compression_dist} asserts that, starting with any uncolored edge $e$, Algorithm \ref{alg:multi_viz_chain} outputs an $e$-augmenting subgraph $C$ with at most $\length(C) = \poly(\Delta)  \log n$ edges in time $\poly(\Delta) \log n$ with probability at least $1 - 1/\poly(n)$.

As mentioned in \S\ref{sec:overview}, to prove Theorem \ref{theo:entropy_compression_dist}, we employ an entropy compression argument. 
Consider running Algorithm \ref{alg:multi_viz_chain} on input $(\phi, f,z, \ell)$, where $f \in E$ and $z \in f$. We emphasize that throughout this section, $\phi$ is assumed to be fixed. Hence, the execution process during the first $t$ iterations of the \textsf{while} loop in Algorithm \ref{alg:multi_viz_chain} is uniquely determined by the \emphd{input sequence} $(f, z, \ell_1, \ldots, \ell_t)$, where $\ell_i \in [\ell, 2\ell-1]$ is the random choice made at step \ref{step:random_choice} during the $i$-th iteration (note that at each iteration, we make precisely one such random choice).
We let $\mathcal{I}^{(t)}$ be the set of all input sequences for which Algorithm \ref{alg:multi_viz_chain} does not terminate within the first $t$ iterations.

\begin{defn}[Records and termini]\label{def:record}
    Let $I = (f, z, \ell_1, \ldots, \ell_t) \in \mathcal{I}^{(t)}$. Consider running Algorithm~\ref{alg:multi_viz_chain} for the first $t$ iterations with this input sequence. We define the \emphd{record} of $I$ to be the tuple $D(I) = (d_1, \ldots, d_t) \in \Z^t$, where for each $i$, $d_i$ is computed at the $i$-th iteration as follows:
\[
    d_i \,\defeq\, \begin{cases}
        1 &\text{if we reach step \ref{step:append}};\\
        j-k &\text{if we reach step \ref{step:choosej}}.
    \end{cases}
\]
Note that in the second case, $d_i \leq 0$. Let $C$ be the multi-step Vizing chain produced after the $t$-th iteration. The \emphd{terminus} of $I$ is the pair $\tau(I) = (\End(C), \vend(C))$
(in the case that $C = (f)$, we define $\vend(C)$ so that $\vend(C) \neq z$).
    \end{defn}
    Let $\mathcal{D}^{(t)}$ denote the set of all tuples $D \in \Z^t$ such that $D = D(I)$ for some $I \in \mathcal{I}^{(t)}$. Given $D \in \mathcal{D}^{(t)}$ and a pair $(uv, u)$ such that $uv \in E$, we let $\mathcal{I}^{(t)}(D, uv, u)$ be the set of all input sequences $I \in \mathcal{I}^{(t)}$ such that $D(I) = D$ and $\tau(I) = (uv, u)$.
The following functions will assist with our proofs:

\begin{align*}
    \val(z) \,&\defeq\, \left\{\begin{array}{cc}
        \Delta^4 & \text{if }z = 1 \\
        20\Delta^9 & \text{if } z = 0 \\
        75\ell\Delta^{11} & \text{if } z < 0
    \end{array}\right., &z \in \Z,\ z\leq 1, \\
    \wt(D) \,&\defeq\, \prod_{i = 1}^t\val(d_i), &D = (d_1, \ldots, d_t) \in \mathcal{D}^{(t)}.
\end{align*}

We are now ready to state the key result underlying our analysis of Algorithm~\ref{alg:multi_viz_chain}.

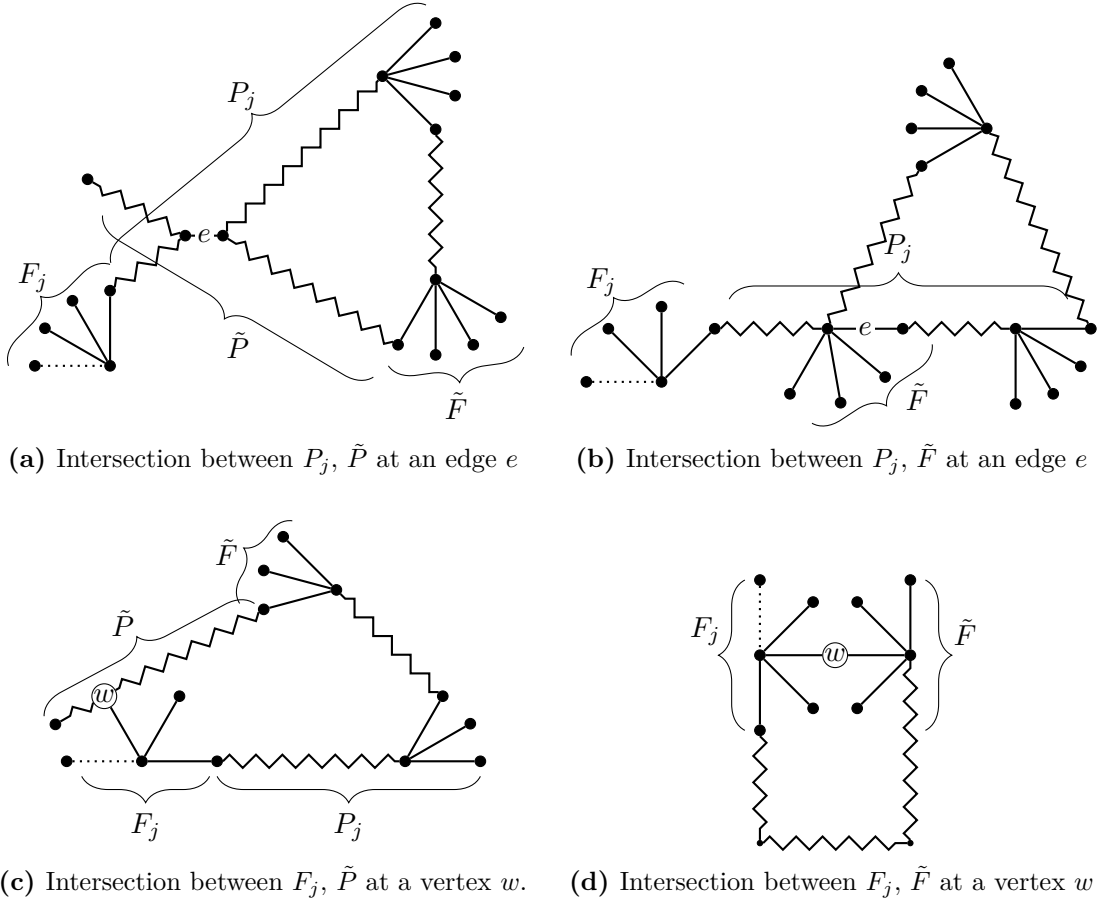
\begin{figure}[t]
    \begin{subfigure}[t]{0.45\textwidth}
        \centering
    	\begin{tikzpicture}
    	    \node[circle,fill=black,draw,inner sep=0pt,minimum size=4pt] (a) at (0,0) {};
    		\node[circle,fill=black,draw,inner sep=0pt,minimum size=4pt] (b) at (1,0) {};
    		\path (b) ++(150:1) node[circle,fill=black,draw,inner sep=0pt,minimum size=4pt] (c) {};
    		\path (b) ++(120:1) node[circle,fill=black,draw,inner sep=0pt,minimum size=4pt] (d) {};
    		\path (b) ++(90:1) node[circle,fill=black,draw,inner sep=0pt,minimum size=4pt] (e) {};
    		
    		\draw[decoration={brace,amplitude=10pt},decorate] (-0.35,0) -- node [midway,above,yshift=5pt,xshift=-10pt] {$F_j$} (1,1.35);
    		
    		\path (b) ++(60:2) node[circle,fill=black,draw,inner sep=0pt,minimum size=4pt] (f) {};
    		\path (f) ++(0:0.5) node[circle,fill=black,draw,inner sep=0pt,minimum size=4pt] (g) {};
    		
    		\draw[decoration={brace,amplitude=10pt},decorate] (1.01,1.37) -- node [midway,above,yshift=5pt,xshift=-10pt] {$P_j$} (5.2,4.8);
    		
    		\path (g) ++(45:3) node[circle,fill=black,draw,inner sep=0pt,minimum size=4pt] (h) {};
    		\path (h) ++(15:1) node[circle,fill=black,draw,inner sep=0pt,minimum size=4pt] (i) {};
    		\path (h) ++(45:1) node[circle,fill=black,draw,inner sep=0pt,minimum size=4pt] (j) {};
    		\path (h) ++(-15:1) node[circle,fill=black,draw,inner sep=0pt,minimum size=4pt] (k) {};
    		\path (h) ++(-45:1) node[circle,fill=black,draw,inner sep=0pt,minimum size=4pt] (l) {};
    		
    		\draw[decoration={brace,amplitude=10pt,mirror},decorate] (4.7,0) -- node [midway,below,yshift=-10pt,xshift=0pt] {$\tilde F$} (6.5,0.25);
    		
    		\path (l) ++(270:2) node[circle,fill=black,draw,inner sep=0pt,minimum size=4pt] (m) {};
    		\path (m) ++(270:1) node[circle,fill=black,draw,inner sep=0pt,minimum size=4pt] (n) {};
    		\path (m) ++(-60:1) node[circle,fill=black,draw,inner sep=0pt,minimum size=4pt] (o) {};
    		\path (m) ++(-30:1) node[circle,fill=black,draw,inner sep=0pt,minimum size=4pt] (p) {};
    		\path (m) ++(-120:1) node[circle,fill=black,draw,inner sep=0pt,minimum size=4pt] (q) {};
    		
    		\draw[decoration={brace,amplitude=10pt,mirror}, decorate] (0.9, 2) -- node [midway,below,xshift=0pt,yshift=-10pt] {$\tilde P$} (4.5,-0.15);
    		
    		\path (f) ++(150:1.5) node[circle,fill=black,draw,inner sep=0pt,minimum size=4pt] (r) {};
    		
    		\draw[thick,dotted] (a) -- (b);
    		\draw[thick] (f) to node[midway,inner sep=1pt,outer sep=1pt,minimum size=4pt,fill=white] {$e$} (g);
    		\draw[thick, snake=zigzag] (e) -- (f) -- (r) (g) -- (h) (l) -- (m) (q) -- (g);
    		\draw[ thick] (b) -- (c) (b) -- (d) (b) -- (e) (h) --  (i) (h) -- (j) (h) - -(k) (h) -- (l) (m) -- (n) (m) -- (o) (m) -- (p) (m) -- (q);
    		
    	\end{tikzpicture}
    	\caption{Intersection between $P_j,\, \tilde P$ at an edge $e$}\label{fig:Viz_path_path_intersect}
    \end{subfigure}
    \begin{subfigure}[t]{0.45\textwidth}
        \centering
    	\begin{tikzpicture}
    	    \node[circle,fill=black,draw,inner sep=0pt,minimum size=4pt] (a) at (0,0) {};
    		\node[circle,fill=black,draw,inner sep=0pt,minimum size=4pt] (b) at (1,0) {};
    		\path (b) ++(135:1) node[circle,fill=black,draw,inner sep=0pt,minimum size=4pt] (c) {};
    		\path (b) ++(90:1) node[circle,fill=black,draw,inner sep=0pt,minimum size=4pt] (d) {};
    		\path (b) ++(45:1) node[circle,fill=black,draw,inner sep=0pt,minimum size=4pt] (e) {};
    		
    		\draw[decoration={brace,amplitude=10pt},decorate] (-0.2,0.35) -- node [midway,above,yshift=5pt,xshift=-10pt] {$F_j$} (1.3, 1.3);
    		
    		\path (e) ++(0:1.5) node[circle,fill=black,draw,inner sep=0pt,minimum size=4pt] (f) {};
    		\path (f) ++(0:1) node[circle,fill=black,draw,inner sep=0pt,minimum size=4pt] (g) {};
    		\path (g) ++(0:1.5) node[circle,fill=black,draw,inner sep=0pt,minimum size=4pt] (h) {};
    		
    		\draw[decoration={brace,amplitude=10pt}, decorate] (1.9, 1.1) -- node [midway,above,xshift=0pt,yshift=10pt] {$P_j$} (6.4,1.1);
    		
    		\path (h) ++(-30:1) node[circle,fill=black,draw,inner sep=0pt,minimum size=4pt] (i) {};
    		\path (h) ++(-60:1) node[circle,fill=black,draw,inner sep=0pt,minimum size=4pt] (j) {};
    		\path (h) ++(-90:1) node[circle,fill=black,draw,inner sep=0pt,minimum size=4pt] (k) {};
    		\path (h) ++(0:1) node[circle,fill=black,draw,inner sep=0pt,minimum size=4pt] (l) {};

    		\path (f) ++(-120:1) node[circle,fill=black,draw,inner sep=0pt,minimum size=4pt] (m) {};
    		\path (f) ++(-80:1) node[circle,fill=black,draw,inner sep=0pt,minimum size=4pt] (n) {};
    		\path (f) ++(-40:1) node[circle,fill=black,draw,inner sep=0pt,minimum size=4pt] (o) {};

    		\draw[decoration={brace,amplitude=10pt,mirror},decorate] (3,-0.55) -- node [midway,yshift=-5pt,xshift=17pt] {$\tilde F$} (4.6, 0.5);
    		
    		\path (f) ++(60:2.5) node[circle,fill=black,draw,inner sep=0pt,minimum size=4pt] (p) {};
    		\path (p) ++(30:1) node[circle,fill=black,draw,inner sep=0pt,minimum size=4pt] (q) {};
    		\path (q) ++(180:1) node[circle,fill=black,draw,inner sep=0pt,minimum size=4pt] (r) {};
    		\path (q) ++(150:1) node[circle,fill=black,draw,inner sep=0pt,minimum size=4pt] (s) {};
    		\path (q) ++(120:1) node[circle,fill=black,draw,inner sep=0pt,minimum size=4pt] (t) {};

    		\draw[thick,dotted] (a) -- (b);
    		\draw[thick] (f) to node[midway,inner sep=1pt,outer sep=1pt,minimum size=4pt,fill=white] {$e$} (g);
    		\draw[thick, snake=zigzag] (e) -- (f) (g) -- (h) (f) -- (p) (q) -- (l);
    		\draw[ thick] (b) -- (c) (b) -- (d) (b) -- (e) (h) -- (i) (h) -- (j) (h) -- (k) (h) -- (l) (f) -- (m) (f) -- (n) (f) -- (o) (p) -- (q) -- (r) (q) -- (s) (q) -- (t);

    	\end{tikzpicture}
    	\caption{Intersection between $P_j,\,\tilde F$ at an edge $e$}\label{fig:Viz_path_fan_intersect}
    \end{subfigure}

    \vspace{15pt}
    
    \begin{subfigure}[b]{0.45\textwidth}
        \centering
    	\begin{tikzpicture}
    	    \node[circle,fill=black,draw,inner sep=0pt,minimum size=4pt] (a) at (0,0) {};
    		\node[circle,fill=black,draw,inner sep=0pt,minimum size=4pt] (b) at (1,0) {};
    		\path (b) ++(120:1) node[circle,draw,inner sep=0pt,minimum size=4pt] (c) {$w$};
    		\path (b) ++(60:1) node[circle,fill=black,draw,inner sep=0pt,minimum size=4pt] (d) {};
    		\path (b) ++(0:1) node[circle,fill=black,draw,inner sep=0pt,minimum size=4pt] (e) {};
    		
    		\draw[decoration={brace,amplitude=10pt,mirror}, decorate] (0.2, -0.2) -- node [midway,below,xshift=0pt,yshift=-10pt] {$F_j$} (1.9,-0.2);
    		
    		\path (e) ++(0:2.5) node[circle,fill=black,draw,inner sep=0pt,minimum size=4pt] (f) {};
    		\path (f) ++(0:1) node[circle,fill=black,draw,inner sep=0pt,minimum size=4pt] (g) {};
    		\path (f) ++(30:1) node[circle,fill=black,draw,inner sep=0pt,minimum size=4pt] (h) {};
    		\path (f) ++(60:1) node[circle,fill=black,draw,inner sep=0pt,minimum size=4pt] (i) {};
    		
    		\draw[decoration={brace,amplitude=10pt,mirror}, decorate] (2, -0.2) -- node [midway,below,xshift=0pt,yshift=-10pt] {$P_j$} (5.5,-0.2);
    		
    		\path (i) ++(135:2) node[circle,fill=black,draw,inner sep=0pt,minimum size=4pt] (j) {};
    		\path (j) ++(135:1) node[circle,fill=black,draw,inner sep=0pt,minimum size=4pt] (k) {};
    		\path (j) ++(165:1) node[circle,fill=black,draw,inner sep=0pt,minimum size=4pt] (l) {};
    		\path (j) ++(195:1) node[circle,fill=black,draw,inner sep=0pt,minimum size=4pt] (m) {};
    		
    		\draw[decoration={brace,amplitude=10pt}, decorate] (2.3, 2.1) -- node [midway,above,xshift=-15pt,yshift=-5pt] {$\tilde F$} (3,3.2);
    		
    		\path (c) ++(-150:0.75) node[circle,fill=black,draw,inner sep=0pt,minimum size=4pt] (n) {};
    		
    		\draw[decoration={brace,amplitude=10pt},decorate] (-0.3,0.6) -- node [midway,above,yshift=5pt,xshift=-10pt] {$\tilde P$} (2.5,2.1);

    		\draw[thick,dotted] (a) to (b);
    		\draw[thick, snake=zigzag] (e) -- (f) (i) -- (j) (m) -- (c) (c) -- (n);
    		\draw[thick] (b) -- (c) (b) -- (d) (b) -- (e) (f) -- (g) (f) -- (h) (f) -- (i) (j) -- (k) (j) -- (l) (j) -- (m);
    		
    	\end{tikzpicture}
    	
    	\caption{Intersection between $F_j,\, \tilde P$ at a vertex $w$.}\label{fig:Viz_fan_path_intersect}
    \end{subfigure}
    \begin{subfigure}[b]{0.45\textwidth}
        \centering
    	\begin{tikzpicture}
    	    \node[circle,fill=black,draw,inner sep=0pt,minimum size=4pt] (a) at (0,0) {};
    		\node[circle,fill=black,draw,inner sep=0pt,minimum size=4pt] (b) at (0,-1) {};
    		\path (b) ++(45:1) node[circle,fill=black,draw,inner sep=0pt,minimum size=4pt] (c) {};
    		\path (b) ++(0:1) node[circle,draw,inner sep=0pt,minimum size=4pt] (d) {$w$};
    		\path (b) ++(-45:1) node[circle,fill=black,draw,inner sep=0pt,minimum size=4pt] (e) {};
    		\path (b) ++(-90:1) node[circle,fill=black,draw,inner sep=0pt,minimum size=4pt] (f) {};
    		
    		\draw[decoration={brace,amplitude=10pt},decorate] (-0.2,-2) -- node [midway,above,yshift=0pt,xshift=-15pt] {$F_j$} (-0.2, 0);
    		
    		\path (f) ++(-90:1.5) node[circle,fill=black,draw,inner sep=0pt,minimum size=2pt] (g) {};
    		\path (g) ++(0:2) node[circle,fill=black,draw,inner sep=0pt,minimum size=2pt] (h) {};
    		\path (h) ++(90:2.5) node[circle,fill=black,draw,inner sep=0pt,minimum size=4pt] (i) {};
    		
    		\draw[decoration={brace,amplitude=10pt},decorate] (2.2,0) -- node [midway,above,yshift=0pt,xshift=15pt] {$\tilde F$} (2.2, -2);
    		
    		\path (i) ++(90:1) node[circle,fill=black,draw,inner sep=0pt,minimum size=4pt] (j) {};
    		\path (i) ++(135:1) node[circle,fill=black,draw,inner sep=0pt,minimum size=4pt] (k) {};
    		\path (i) ++(-135:1) node[circle,fill=black,draw,inner sep=0pt,minimum size=4pt] (l) {};
    		
    		\draw[thick,dotted] (a) to (b);
    		\draw[thick, snake=zigzag] (f) -- (g) -- (h) -- (i);
    		\draw[thick] (b) -- (c) (b) -- (d) (b) -- (e) (b) -- (f) (i) -- (j) (i) -- (k) (i) -- (l) (i) -- (d);
    		
    	\end{tikzpicture}
    	
    	\caption{Intersection between $F_j,\, \tilde F$ at a vertex $w$}\label{fig:Viz_fan_fan_intersect}
    \end{subfigure}
    \caption{Intersecting Vizing Chains at step \ref{step:truncate1}.}
    \label{fig:Viz_intersect}
\end{figure}

\begin{Lemma}\label{lemma:counting_wD}
    Let $D = (d_1, \ldots, d_t) \in \mathcal{D}^{(t)}$ and $uv \in E$.
    Then $|\mathcal{I}^{(t)}(D, uv, u)| \leq \wt(D)$, for $\ell \geq \Delta$.
\end{Lemma}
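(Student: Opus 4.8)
\textbf{Proof proposal for Lemma~\ref{lemma:counting_wD}.}

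The plan is to show that the input sequence $I \in \mathcal{I}^{(t)}(D, uv, u)$ can be reconstructed by running Algorithm~\ref{alg:multi_viz_chain} \emph{backwards}, starting from the terminus $(uv, u)$ and reading the record $D = (d_1, \ldots, d_t)$ from right to left; the only additional data needed at each step is a bounded amount of ``local'' information whose number of possible values is what $\val(d_i)$ counts. First I would set up the reversal: since $\phi$ is fixed, the coloring $\psi$ at the end of iteration $t$ is determined by the chain $C$ produced after iteration $t$ (namely $\psi = \Shift(\phi, C)$), and the pair $\tau(I) = (\End(C), \vstart\text{-data})$ together with $C$'s structure should let us peel off one Vizing chain $F_k + P_k$ at a time. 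For each $i$ from $t$ down to $1$, I reconstruct what happened at iteration $i$ as follows, splitting into cases according to $d_i$.

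\textbf{Case $d_i = 1$ (reached step~\ref{step:append}).} Here iteration $i$ appended $F_k + P_k$ and set the new candidate to $(\tilde F, \tilde P)$ via Algorithm~\ref{alg:next_chain}. Reversing this, I know the current chain ends with the block we just appended; I must recover $\ell_i = \length(P_k)$ and the preceding coloring. The fan $F_k$ is determined by $\phi$ and its pivot/start vertices once those are known (this is how Algorithm~\ref{alg:first_fan} / Algorithm~\ref{alg:next_fan} operate deterministically — all the choices of $\beta(\cdot)$, $\delta(\cdot)$ are minima), so the branching is limited to: which of the $O(\Delta)$ candidate fans $F$ vs.\ $F' = F|j$ was used in Algorithm~\ref{alg:next_chain}, which color pair was used, and the value $\ell_i \in [\ell, 2\ell-1]$ — but crucially $\ell_i$ is \emph{not} free: once we know the endpoint $\End(P_k)$ of the truncated path and the full path $P'$ (of length $2\ell$, reconstructible from $\phi$ and the shift), $\ell_i$ is determined. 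So the count here is the number of ways to pick the fan and the few colors, which I would bound by $\Delta^4$ — matching $\val(1) = \Delta^4$. (I would track carefully that $\Pivot(\tilde F)$ and $\vend(F_k)$ etc.\ are pinned down by \ref{inv:start_F_end_C} and Lemma~\ref{lemma:correctness_of_algo}.)

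\textbf{Case $d_i = j - k \le 0$ (reached step~\ref{step:choosej}).} This is the backtracking case and is the main obstacle. Here iteration $i$ detected that $\tilde F + \tilde P$ intersected $F_j + P_j$, then truncated $C$ back to step $j$, reset $\psi$ via the inverse shift on step~\ref{step:psi}, and set the candidate to $F_j + P'$. To reverse it I must reconstruct the blocks $F_j + P_j, \ldots, F_k + P_k$ that were \emph{discarded} — note $k - j = -d_i$ of them, plus the discarded candidate $\tilde F + \tilde P$. Each discarded $P_{j'}$ has length $\ell_{j'} \in [\ell, 2\ell-1]$, giving roughly $\ell^{-d_i}$ choices; each fan contributes $O(\Delta)$ choices and each color pair $O(\Delta^2)$. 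But I also must reconstruct \emph{where} the intersection happened (the vertex $w$ or edge $e$ of Figures~\ref{fig:Viz_path_path_intersect}--\ref{fig:Viz_fan_fan_intersect}), which pins down how $\tilde F + \tilde P$ attaches. The key observation — exactly what makes entropy compression work here — is that the intersection point lies \emph{on the chain we are rebuilding}, so it is not an independent free choice relative to $\ell$; the genuinely free randomness is the $-d_i$ lengths, contributing $\le \ell^{-d_i}$, and the rest is $\poly(\Delta)$ per discarded block plus an extra $\poly(\Delta)$ for locating the intersection and for the structure of the ``re-issued'' candidate $P'$. Collecting, this should give a bound of the form $(75\ell\Delta^{11})\cdot(\text{something})^{-d_i - 1} \le \val(d_i)^{\,?}$; I would arrange the bookkeeping so the per-step contribution is at most $\val(d_i) = 75\ell\Delta^{11}$ when $d_i < 0$ and $20\Delta^9$ when $d_i = 0$ (the $d_i = 0$ sub-case discards only the candidate, no full blocks, hence no factor of $\ell$).

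\textbf{Assembling.} Multiplying the per-iteration bounds over $i = 1, \ldots, t$ and using that the reconstruction of iteration $i$ depends only on the state after iteration $i$ (which the reversal has already recovered) plus the bounded extra data, I get $|\mathcal{I}^{(t)}(D, uv, u)| \le \prod_{i=1}^t \val(d_i) = \wt(D)$. The hypothesis $\ell \ge \Delta$ is used to absorb lower-order $\poly(\Delta)$ terms into the stated constants (e.g.\ so that $O(\Delta^2 \cdot \text{stuff})$ genuinely fits under $\Delta^4$, and the $\ell$-dependent case fits under $75\ell\Delta^{11}$). I expect the delicate part to be a fully rigorous argument that, in the backtracking case, the intersection locus is determined by $O(\log)$-free additional bits rather than contributing another factor of $\ell$ per discarded block — this is where Lemma~\ref{lemma:intersection_prev} and the non-intersecting invariants \ref{inv:non_intersecting_shiftable} will be needed to control the geometry of Figure~\ref{fig:Viz_intersect}, ensuring the first intersection is ``visible'' from the reconstructed chain and hence cheap to encode.
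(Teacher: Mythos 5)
Your plan goes wrong at the backtracking case, and the failure is structural, not a matter of constants. You propose that when $d_i = j-k \le 0$ the reversal must reconstruct the discarded blocks $F_j+P_j,\ldots,F_k+P_k$ (and the discarded candidate), conceding ``roughly $\ell^{-d_i}$ choices'' for their random lengths, and you then hope to bookkeep this under $\val(d_i) = 75\ell\Delta^{11}$. That inequality is simply false once $-d_i \ge 2$ in the regime where the lemma is used ($\ell \ge \poly(\Delta)$, e.g.\ $\ell \ge 1200\Delta^{16}$ in Theorem~\ref{theo:entropy_compression_dist}), and no rebalancing of the weights can rescue it: for the record consisting of $b$ appends followed by a single backtrack of size $b$, your accounting bounds $|\mathcal{I}^{(t)}(D,uv,u)|$ by roughly $\Delta^{4b}\ell^{b}$ against available randomness $\ell^{b+1}$, so the resulting probability bound is of order $\Delta^{4b}/\ell$, which is useless for large $b$; neither Lemma~\ref{lemma:counting_wD} nor Theorem~\ref{theo:entropy_compression_dist} could be derived this way. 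The idea you are missing is that the paper's induction strips off only the last iteration and tracks nothing but the terminus, so at a backtracking step the discarded blocks are \emph{never reconstructed}: one only needs the triple $(\Start(F),\vstart(F),\ell_t)$ for the candidate at the start of iteration $t$, and this is recovered from the known terminus $(\Start(F_j),\vstart(F_j))$ via the \emph{first} intersection point of $\tilde F+\tilde P$ with $F_j+P_j$. Locating that intersection costs $O(\ell\,\Delta^{O(1)})$ choices --- the single factor of $\ell$ in $\val(d_t)$ is the position along the $\phi$-bicolored path through $\vend(F_j)$ --- independently of $|d_t|$, and from there $\Delta^{O(1)}$ further choices determine $\vstart(\tilde F)=\vend(P_k)$ and then $(\Start(F),\vstart(F),\ell_t)$, with $\ell_t$ forced rather than guessed. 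The randomness spent on the discarded blocks is charged to their own earlier record entries, and the imbalance between $+1$ and backtrack entries is what Lemma~\ref{lemma:wD_bound} exploits.

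A secondary gap: your reversal presumes knowledge of ``$C$'s structure'' and hence of the shifted colorings (e.g.\ you argue the fans are determined by their pivot and start because Algorithms~\ref{alg:first_fan} and \ref{alg:next_fan} are deterministic --- but they run on the current coloring $\psi$, which the reconstruction does not have; it only has the terminus $(\End(C),\vend(C))$). The paper's proof never needs the intermediate colorings: by Lemma~\ref{lemma:non-intersecting_degrees}\ref{item:degree_end} and \ref{item:related_phi}, each $P_i$ is bicolored under the \emph{original} $\phi$ and $\vend(F_i)$ has degree one in the corresponding $G(\phi,\alpha\beta)$, so all the needed vertices are located by following $\phi$-colored paths. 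Your $d_i=1$ count and your appeal to Lemma~\ref{lemma:intersection_prev} for $d_i=0$ are in the right spirit, but without the two points above the per-step bound $\val(d_i)$, and hence the lemma, does not follow.
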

\begin{proof}\stepcounter{ForClaims} \renewcommand{\theForClaims}{\ref{lemma:counting_wD}}
    We will prove this by induction on $t$.    
    Let us first consider $t = 0$.
    In this case, we have 
    \[\mathcal{I}^{(0)} = \set{(e, x)\,:\, e \in E,\, x \in e,\, \phi(e) = \blank}.\]
    In particular, $\mathcal{I}^{(0)}((), uv, u) = \set{(uv, v)}$ contains just $1$ element, where $\wt(D) = \wt(()) = 1$.

    Now assume that $t \geq 1$. Consider any $I = (f, z, \ell_1, \ldots, \ell_t) \in \mathcal{I}^{(t)}(D, uv, u)$  
    and run Algorithm~\ref{alg:multi_viz_chain} with input sequence $I$ for $t$ iterations of the \textsf{while} loop. Let the chain at the start of the $t$-th iteration be $C = F_0 + P_0 + \cdots + F_{k-1} + P_{k-1}$ and let the candidate chain be $F+P$. On step \ref{step:Pk}, we compute a shortened path $P_k \defeq P|\ell_t$ and then compute $\tilde F + \tilde P$ on step \ref{step:alpha_beta_order}.
    Note that for
    \[
        D' \,\defeq\, (d_1, \ldots, d_{t-1}), \qquad v' \,\defeq\, \Pivot(F), \qquad \text{and} \qquad u' \,\defeq\, \vstart(F),
    \]
    we have $(f, z, \ell_1, \ldots, \ell_{t - 1}) \,\in\, \mathcal{I}^{(t-1)}(D', u'v', u')$. 
    Since $|\mathcal{I}^{(t-1)}(D', u'v', u')| \leq \wt(D')$ by the inductive hypothesis, it is enough to show that the number of choices for
    $(u'v', u', \ell_t)$ given $(uv, u)$ and $\phi$ is at most $\val(d_t)$. 

    \begin{claim*}
        Given the vertex $\vend(P_k) = \vstart(\tilde F)$, there are at most $\Delta^4$ options for $(u'v', u', \ell_t)$.
    \end{claim*}
    \begin{claimproof}
       Knowing $\vstart(\tilde F)$, we have at most $\Delta$ choices for $\Pivot(\tilde F)$. Note that $\End(P_k)$ is the edge joining $\Pivot(\tilde F)$ to $\vstart(\tilde F) = \vend(P_k)$. By Lemma~\ref{lemma:non-intersecting_degrees} \ref{item:related_phi}, all edges of $P_k$ except $\Start(P_k)$ are colored with two colors, say $\alpha$ and $\beta$, one of which is $\phi(\End(P_k))$, and there are at most $\Delta$ choices for the other color. By Lemma~\ref{lemma:non-intersecting_degrees} \ref{item:degree_end} and \ref{item:related_phi}, given $\alpha$ and $\beta$, we can locate $\vend(F)$ by considering the maximal $\alpha\beta$-path under the coloring $\phi$ containing $\End(P_k)$ and picking its endpoint on the side of $\Pivot(\tilde{F})$. Since $v' \in N_G(\vend(F))$ and $u' \in N_G(v')$, there are at most $\Delta^2$ choices for $(u'v',u')$ given $\vend(F)$. The path $P_k$ starts at $v'$, goes to $\vend(F)$, and then follows the edges colored $\alpha$ and $\beta$ to $uv$. Thus,
        we can now determine $\ell_t = \length(P_k)$. In conclusion, there are at most $\Delta \cdot \Delta \cdot \Delta^2 = \Delta^4$ possible tuples $(u'v', u', \ell_t)$, as desired.
    \end{claimproof}

    \smallskip

    In view of the above claim, it suffices to show that the number of choices for the vertex $\vend(P_k) = \vstart(\tilde F)$ is at most $\val(d_t)/\Delta^4$. We now split into cases depending on the value of $d_t$.
    
    \begin{enumerate}[label=\ep{\textbf{Case \arabic*}}, wide] 
        \item\label{case:dt=1} $d_t = 1$. Then $u = \vend(P_k)$, so the number of choices for $\vend(P_k)$ is $1 = \val(1)/\Delta^4$. 

        \item\label{case:dtnegative} $d_t < 0$. Then we have $v = \Pivot(F_j)$ and $u = \vstart(F_j)$, where $j = k + d_t$ and the first intersection between $\tilde F + \tilde P$ and $C + F + P_k$ was with $F_j + P_j$. The vertex $\vend(F_j)$ is a neighbor of $v$, so there are at most $\Delta$ choices for $\vend(F_j)$. By Lemma~\ref{lemma:non-intersecting_degrees} \ref{item:related_phi}, all edges of $P_j$ except $\Start(P_j)$ are colored with two colors, say $\gamma$ and $\delta$, for which there are at most $(\Delta + 1)^2$ choices. Let $P' \defeq G(\vend(F_j); \phi, \gamma\delta)$. By Lemma~\ref{lemma:non-intersecting_degrees} \ref{item:degree_end}, $P'$ is a path starting at $\vend(F_j)$, and the chain $P_j$ consists of the edge $\End(F_j)$ followed by an initial segment of $P'$ of length less than $2\ell$.
        
        Recall that we have $V(\tilde F + \tilde P) \cap V(F_j) \neq \0$ or $E(\tilde F + \tilde P) \cap \IE(P_j) \neq \0$. The possible configurations of this intersection are shown in Fig.~\ref{fig:Viz_intersect}. In particular, we must have $V(\tilde F + \tilde P) \cap V(F_j + P_j) \neq \0$. We now consider two subcases.

        \begin{enumerate}[label=\ep{\textit{Subcase 2\alph*}}, wide]
            \item\label{subcase:tildefan} $V(\tilde F) \cap V(F_j + P_j) \neq \0$. If $V(\tilde F) \cap V(F_j) \neq \0$, then $\vstart(\tilde F)$ is at distance at most $3$ from $v$ in $G$, so there are at most $\Delta^3 + 1$ options for it. On the other hand, if $V(\tilde F) \cap V(P_j) \neq \0$, then we have at most $2\ell$ choices for the intersection point on $P'$, and hence at most $2\ell (\Delta^2 + 1)$ choices for $\vstart(\tilde F)$. In total, we have at most $\Delta^3 + 1 + \Delta(\Delta+1)^2\,2\ell (\Delta^2 + 1)$ options for $\vstart(\tilde F)$. 

            \item\label{subcase:tildepath} $V(\tilde F) \cap V(F_j + P_j) = \0$, so $(V(\tilde P) \setminus \Start(\tilde P)) \cap V(F_j + P_j) \neq \0$. Let $w$ be the first vertex in $V(\tilde P) \setminus \Start(\tilde P)$ that belongs to $V(F_j + P_j)$. There are at most $\Delta + 1 + \Delta(\Delta+1)^2\,2\ell$ choices for $w$, because it can be found either in $N_G[v]$ or among the first $2\ell$ vertices on $P'$. 
            Next, there are at most $(\Delta + 1)^2$ choices for the colors $\zeta$ and $\eta$ such that $\tilde P$ is a $\zeta\eta$-path in the coloring $\Shift(\phi, C + F + P_k)$. 
            By the choice of $j$ and since we are not in \ref{subcase:tildefan}, $V(\tilde F) \cap V(F_i) = \0$ for all $i \leq k$. 
            Hence, the same argument as in the proof of Lemma \ref{lemma:non-intersecting_degrees} \ref{item:degree_end} shows that $\deg(\vend(\tilde F); \phi, \zeta\eta) = 1$. 
            Clearly, the vertices $\vend(\tilde F)$ and $w$ are $(\Shift(\phi, C + F + P_k), \zeta\eta)$-related, and  by the choice of $j$ and $w$, 
            they must be $(\phi, \zeta\eta)$-related as well. Therefore, $\vend(\tilde F)$ is an endpoint of the path $G(w; \phi, \zeta \eta)$, and thus there are at most $2$ choices for $\vend(\tilde F)$ given $w$, $\zeta$, and $\eta$. Finally, $\vstart(\tilde{F})$ and $\vend(\tilde{F})$ are joined by a path of length $2$ in $G$, so we have at most $\Delta^2$ choices for $\vstart(\tilde F)$ given $\vend(\tilde F)$. In summary, we have at most $2(\Delta + 1 + \Delta(\Delta+1)^2\,2\ell)(\Delta + 1)^2 \Delta^2$ options for $\vstart(\tilde F)$.
        \end{enumerate}

        Combining the two subcases, we see that the total number of choices for $\vstart(\tilde F)$ is at most
        \begin{align*}
            \Delta^3 + 1 + &\Delta(\Delta+1)^2\,2\ell (\Delta^2 + 1) + 2(\Delta + 1 + \Delta(\Delta+1)^2\,2\ell)(\Delta + 1)^2 \Delta^2 \\
            &=\, \Delta^3 + 1 + 2(\Delta + 1)^3\Delta^2 + \Delta(\Delta+1)^2\,2\ell (\Delta^2 + 1 + 2(\Delta+1)^2 \Delta^2) \\
            &\leq\, 18\Delta^5 + 72\ell \Delta^7 
            \,\leq\,
            \frac{\val(d_t)}{\Delta^4},
        \end{align*}
        where in the last step we use that $\ell \geq \Delta \geq 2$.

        \item $d_t = 0$. This case is almost verbatim the same as \ref{case:dtnegative} with $j = k$. The only difference is that, by Lemma~\ref{lemma:intersection_prev}, the first intersection between $\tilde F + \tilde P$ and $F_k + P_k$ must occur at a vertex in $V(F_k)$. Therefore, the number of choices for $\vstart(\tilde F)$ in \ref{subcase:tildefan} becomes only $\Delta^3 + 1$, while the number of choices for $w$ in \ref{subcase:tildepath} is $\Delta + 1$. With this adjustment, the total number of choices for $\vstart(\tilde F)$ becomes at most
        \[
            \Delta^3 + 1 + 2(\Delta + 1)^3 \Delta^2 \,\leq\, 18\Delta^5 \,\leq\, \frac{\val(0)}{\Delta^4}.
        \]
    \end{enumerate}
    
    This covers all the cases and completes the proof.
\end{proof}

    Next we prove an upper bound on $\wt(D)$ in terms of the sum of the entries of $D$. To this end, let
\[\mathcal{D}_s^{(t)} \defeq \left\{D = (d_1, \ldots, d_t)\in \mathcal{D}^{(t)}\,:\, \sum_{i = 1}^td_i = s\right\}.\]

\begin{Lemma}\label{lemma:wD_bound}
    Let $D \in \mathcal{D}_s^{(t)}$. Then $\wt(D) \leq (75\Delta^{15}\ell)^{t/2}\,(75\Delta^7\ell)^{-s/2}$, for $\ell \geq 6\Delta^3$.
\end{Lemma}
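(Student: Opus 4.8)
The plan is to express $\wt(D)$ purely in terms of the counts of the different types of entries in $D$ and then reduce the claimed inequality to a statement about a single free variable. Let $D = (d_1, \ldots, d_t) \in \mathcal{D}_s^{(t)}$, and partition $[t]$ according to the value of $d_i$: let $a$ be the number of indices with $d_i = 1$, let $b$ be the number of indices with $d_i = 0$, and for the remaining $c \defeq t - a - b$ indices we have $d_i < 0$. Then $t = a + b + c$ and $s = \sum_i d_i = a - q$, where $q \defeq \sum_{i : d_i < 0} |d_i| \geq c$ (since each negative entry contributes at least $1$ in absolute value). From the definition of $\wt$,
\[
    \wt(D) \,=\, (\Delta^4)^a \, (20\Delta^9)^b \, (75\ell\Delta^{11})^c.
\]

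\textbf{Key steps.} First I would bound $b$ and $c$ crudely by $t$ where it helps and otherwise keep them as exponents, and rewrite the target bound $(75\Delta^{15}\ell)^{t/2}(75\Delta^7\ell)^{-s/2}$ using $t = a+b+c$ and $s = a - q$. Since $75\Delta^{15}\ell = (75\Delta^7\ell)\cdot \Delta^8$ and we want the inequality $\wt(D) \le (75\Delta^{15}\ell)^{t/2}(75\Delta^7\ell)^{-s/2}$, taking logarithms reduces everything to checking a linear inequality in $a, b, c, q$. The natural way to organize this: show that each of the three ``kinds'' of entry contributes at most its share. Concretely, I would verify (i) $\Delta^4 \leq (75\Delta^{15}\ell)^{1/2}(75\Delta^7\ell)^{-1/2} = \Delta^4$ — so the $d_i = 1$ entries contribute \emph{exactly} the right amount (this is the tight case and explains the constants); (ii) $20\Delta^9 \leq (75\Delta^{15}\ell)^{1/2}$, which holds once $\ell \geq 6\Delta^3$ since then $(75\Delta^{15}\ell)^{1/2} \geq (450\Delta^{18})^{1/2} > 20\Delta^9$; and (iii) for negative entries, since $s$ decreases by $|d_i|$ for each such entry, one negative entry $d_i$ with $|d_i| = r$ should be ``charged'' a factor $(75\Delta^{15}\ell)^{1/2} \cdot (75\Delta^7\ell)^{r/2}$ on the right-hand side, against a factor $75\ell\Delta^{11}$ on the left. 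Since $r \geq 1$, the right-hand allowance is at least $(75\Delta^{15}\ell)^{1/2}(75\Delta^7\ell)^{1/2} = 75\Delta^{11}\ell$, matching the left exactly. So again the negative case is essentially tight. Multiplying these three per-entry estimates over all $t$ coordinates gives $\wt(D) \le (75\Delta^{15}\ell)^{t/2}(75\Delta^7\ell)^{-(a-q)/2} = (75\Delta^{15}\ell)^{t/2}(75\Delta^7\ell)^{-s/2}$, as desired. Where I used $q \ge c$ rather than $q = c$, I should double-check that the extra negative slack ($q - c \ge 0$) only \emph{helps}, i.e. appears with a favorable sign; indeed a larger $q$ makes the right-hand side larger since $75\Delta^7\ell > 1$, so that is fine.

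\textbf{Main obstacle.} There is no real obstacle here — the lemma is a bookkeeping computation, and the constants $20$, $75$, $1200$, and the exponents $7$, $11$, $15$ were clearly reverse-engineered precisely so that cases (i) and (iii) are equalities and case (ii) holds with room to spare under $\ell \geq 6\Delta^3$. The only thing requiring a little care is getting the exponents of $\Delta$ to line up: one must track that $15 = 7 + 8$, that $\Delta^4 = \Delta^{4}$ (comparing $\Delta^8$ split evenly), and that the per-entry comparison for $d_i < 0$ uses $|d_i| \geq 1$ in the right direction. I would present the argument as the three displayed per-coordinate inequalities followed by one line taking the product, making the appeal to $\ell \geq 6\Delta^3$ explicit only in step (ii).
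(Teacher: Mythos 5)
Your proposal is correct and amounts to the same bookkeeping as the paper's proof: both partition the entries into $d_i=1$, $d_i=0$, $d_i<0$, use $|d_i|\geq 1$ for the negative entries, and invoke $\ell \geq 6\Delta^3$ only to absorb the $d_i=0$ contributions. The only difference is presentational — you charge each coordinate against the factor $(75\Delta^{15}\ell)^{1/2}(75\Delta^7\ell)^{-d_i/2}$ individually, while the paper aggregates via $|K| \leq (t-s-|J|)/2$ and bounds the leftover factor $\bigl(400\Delta^3/(75\ell)\bigr)^{|J|/2}$ by $1$.
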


\begin{proof}
    Let $D = (d_1, \ldots, d_t) \in \mathcal{D}_s^{(t)}$.
    Define
    \begin{align*}
        I\defeq \{i\,:\,d_i = 1\}, \quad J\defeq \{i\,:\,d_i = 0\}, \quad K\defeq [t] \setminus (I\cup J).
    \end{align*}
    Using the definition of $\wt(D)$ and $\val(z)$, we write
    \begin{align*}
        \wt(D) \,&=\, \Delta^{4|I|}\,(20\Delta^9)^{|J|}\, (75\ell\Delta^{11})^{|K|} \\
        &=\, 20^{|J|}\,75^{|K|}\,\ell^{|K|}\,\Delta^{4|I|+9|J|+11|K|} \\
        &=\, 20^{|J|}\,75^{|K|}\,\ell^{|K|}\,\Delta^{4t+5|J|+7|K|},
    \end{align*}
    where we use the fact that $|I| + |J| + |K| = t$.
    Note the following:
    \[s \,=\, |I| - \sum_{k \in K}|d_k| \,\leq\, |I| - |K| \,=\, t - |J| - 2|K|.\]
    It follows that $|K| \leq (t - s - |J|)/2$. With this in mind, we have 
    \begin{align*}
        \wt(D) \,&\leq\, 20^{|J|}\, (75\,\ell)^{(t-s - |J|)/2} \, \Delta^{4t+ 5|J| + 7(t-s-|J|)/2}\\
        &=\, (75\Delta^{15}\ell)^{t/2}\,(75\Delta^7\ell)^{-s/2}\,\left(\frac{400\Delta^3}{75\ell}\right)^{|J|/2} \\
        & \leq\,  (75\Delta^{15}\ell)^{t/2}\,(75\Delta^7\ell)^{-s/2},
    \end{align*}
    for $\ell \geq 6\Delta^3$.
\end{proof}

    Now we bound the size of $\mathcal{D}_s^{(t)}$.

    \begin{Lemma}\label{lemma:Dst}
        $|\mathcal{D}_s^{(t)}|\leq 4^t$.
    \end{Lemma}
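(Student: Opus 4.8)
The plan is to bound $|\mathcal{D}_s^{(t)}|$ by counting the number of tuples $D = (d_1, \ldots, d_t) \in \Z^t$ that can possibly arise as records, using the structural constraint that the partial sums of $D$ must stay nonnegative. Recall from Definition~\ref{def:record} that $d_i = 1$ when the algorithm reaches step~\ref{step:append} and $d_i = j - k \leq 0$ when it reaches step~\ref{step:choosej}, and in the latter case $j - k = -(k-j)$ where $k-j$ counts how many steps back were taken. The key observation is that $k$, the current number of fans in the chain $C$, starts at $0$, increases by exactly $1$ whenever $d_i = 1$, and decreases by exactly $|d_i|$ whenever $d_i \leq 0$ (since after reaching step~\ref{step:choosej} with index $j$, the new value of $k$ is $j = k + d_i$). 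Since $k \geq 0$ is maintained throughout, the sequence of partial sums $S_r \defeq \sum_{i=1}^r d_i$ satisfies $S_r \geq 0$ for all $0 \leq r \leq t$ (with $S_0 = 0$), and moreover each $d_i$ with $d_i \leq 0$ satisfies $d_i \geq -S_{i-1}$, i.e., we cannot step back further than the current chain length.

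First I would set up the encoding: a tuple $D \in \mathcal{D}_s^{(t)}$ is determined by the set $I = \{i : d_i = 1\}$ together with, for each $i \notin I$, the nonnegative integer $|d_i| = -d_i$. The constraint $S_r \geq 0$ means that if we read the entries left to right, the ``downward'' steps are dominated by the ``upward'' steps — this is exactly a lattice-path / ballot-type condition. A clean way to get the $4^t$ bound is to encode each step by at most two bits of information and argue the total encoding fits in $2t$ bits. Concretely: walk through the indices $i = 1, \ldots, t$; at each step either $d_i = 1$ (one symbol) or $d_i \leq 0$, in which case record $|d_i|$ in unary as a block of $|d_i|$ symbols followed by a terminator. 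Because $\sum_{i : d_i \leq 0} |d_i| = |I| - s$ and $|I| \leq t$, the total number of symbols used across all steps is at most $|I| + (|I| - s) + |K| \leq 2t$ (here $|K| = t - |I| - |J|$ is the number of strictly negative entries and each contributes one terminator, while the $|J|$ zero entries each contribute one terminator as well; bounding everything by $t$ gives $\leq 2t$ symbols over a binary alphabet). This yields $|\mathcal{D}_s^{(t)}| \leq 2^{2t} = 4^t$.

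Alternatively — and perhaps more transparently — I would phrase it as: the number of tuples $D \in \Z^t$ with entries in $\{1, 0, -1, -2, \ldots\}$, partial sums always $\geq 0$, and total sum $s$, is at most the number of Łukasiewicz-type paths of length $t$, which is at most $4^t$ by a standard step-counting argument (or even crudely: fix the set $I$ of ``$+1$'' positions in at most $2^t$ ways; the remaining $t - |I|$ entries are a weak composition of $|I| - s$ into $t - |I|$ nonnegative parts, of which there are $\binom{|I| - s + t - |I| - 1}{t - |I| - 1} \leq 2^{(|I|-s) + (t-|I|)} \leq 2^t \cdot 2^{-s} \cdot 2^{|I|} $... ) — I would pick whichever bookkeeping is cleanest; the first (unary-encoding) argument is the most robust and avoids fiddly binomial estimates.

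The main obstacle is purely bookkeeping: making sure the partial-sum constraint $S_r \geq 0$ is genuinely justified from the algorithm's behavior (that $k$ equals the running partial sum and never goes negative, and that a backward step of size $-d_i$ requires $d_i \geq -S_{i-1}$), and then choosing an encoding whose symbol count is provably at most $2t$ without off-by-one slippage from the terminators. I expect no conceptual difficulty — the $4^t$ is deliberately generous — so the proof should be short once the encoding is fixed. I would present the unary-encoding version, state the partial-sum fact as a one-line consequence of how $k$ is updated in steps~\ref{step:choosej}--\ref{step:truncate} and step~\ref{step:append}, and conclude $|\mathcal{D}_s^{(t)}| \leq 4^t$.
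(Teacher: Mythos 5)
Your proof is essentially correct and takes a genuinely different route from the paper. The paper does not count the records directly: it observes that the partial sums $\sum_{i\le j} d_i$ equal the current number of steps $k$ of the chain and are therefore nonnegative, prepends a $1$ to land in the set $N(s+1,t+1)$ of sequences with entries $\le 1$ and positive partial sums, and then cites the Catalan-number bound $|N(s,t)|\le \frac{1}{t}\binom{2(t-1)}{t-1}\le 4^{t-1}$ from Grytczuk--Kozik--Micek. You instead give a direct, self-contained encoding (\L{}ukasiewicz-path) argument, using only that each $d_i\le 1$ and that the total sum $s=\sum_i d_i$ equals the final chain length, hence $s\ge 0$; the ballot-type condition on all partial sums, which you carefully justify, is in fact not even needed for the crude $4^t$ bound. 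Your identification of $k$ with the running partial sum (via steps \ref{step:choosej}--\ref{step:truncate} and \ref{step:append}) is exactly right and matches the paper's observation.

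One small repair is needed in the counting step as you wrote it: your unary encoding (one symbol for $d_i=1$, a block of $|d_i|$ symbols plus a terminator for $d_i\le 0$) has total length $t+|I|-s$, which varies with $|I|$, so you are mapping records into binary strings of length \emph{at most} $2t$; there are $2^{2t+1}-1>4^t$ of those, so ``fits in $2t$ bits'' does not by itself give $4^t$ (and over a binary alphabet your terminator for a block of length $0$ collides with the symbol for an up-step). Both issues disappear with the uniform scheme: encode each step $d_i$ as a ``$1$'' followed by $1-d_i$ zeros. This is uniquely decodable (blocks begin at each ``$1$''), every record in $\mathcal{D}_s^{(t)}$ is sent to a string of the \emph{same} length $\sum_i(2-d_i)=2t-s$, and injectivity gives $|\mathcal{D}_s^{(t)}|\le 2^{2t-s}\le 4^t$ since $s\ge 0$. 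With that adjustment your argument is complete, and it is arguably more elementary than the paper's, at the cost of not reusing the known Catalan-type estimate.
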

    \begin{proof}
        If $D = (d_1, \ldots, d_t) = D(I)$ for some $I \in \mathcal{I}^{(t)}$, then at the end of the $j$-th iteration of the \textsf{while} loop with input sequence $I$, $C$ is a $k(j)$-step Vizing chain, where $k(j) \defeq \sum_{i = 1}^j d_i$.
        Therefore, $k(j)$ must be nonnegative.
        Following the seminal paper of Grytczuk, Kozik, and Micek \cite{Grytczuk}, we consider the set $N(s,t)$ of all sequences of integers $(d_1, \ldots, d_t)$ such that
        \begin{enumerate}
            \item $\sum\limits_{i = 1}^j d_i \geq 1$ for all $1 \leq j \leq t$,
            \item $d_i \leq 1$ for all $1 \leq i \leq t$, and
            \item $\sum\limits_{i = 1}^td_i = s$.
        \end{enumerate}
        In the proof of \cite[Theorem 1]{Grytczuk}, Grytczuk \emph{et al.} observe the connection of the numbers $|N(s,t)|$ to the Catalan numbers. In particular, they show that
        \[|N(s,t)| \,\leq\, \frac{1}{t} \binom{2(t-1)}{t-1} \,\leq\, 4^{t-1}.\]
        Let $D = (d_1, \ldots, d_t) \in \mathcal{D}_s^{(t)}$. Then the sequence $(1, d_1, \ldots, d_t)$ is in $N(s+1, t+1)$. Therefore, 
        \[|\mathcal{D}_s^{(t)}| \,\leq\, |N(s+1, t+1)| \,\leq\, 4^t,\]
        as desired.
    \end{proof}

    Lastly, we need a lemma that bounds the running time of Algorithm~\ref{alg:multi_viz_chain} in terms of the number of iterations of the \textsf{while} loop:

    \begin{Lemma}\label{lemma:runtime}
        The first $t$ iterations of the {\upshape\textsf{while}} loop in Algorithm~\ref{alg:multi_viz_chain} are performed in time $O(\ell \, t)$.
    \end{Lemma}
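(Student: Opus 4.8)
\textbf{Proof plan for Lemma~\ref{lemma:runtime}.}

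The plan is to account for the cost of a single iteration of the \textsf{while} loop and then observe that, although an iteration may trigger a truncation back to step~$j$, the total amount of work over $t$ iterations is still $O(\Delta\,\ell\,t)$. First I would record the per-operation costs established earlier: each call to $\mathsf{FirstChain}$ (Algorithm~\ref{alg:first_chain}) and each call to $\mathsf{NextChain}$ (Algorithm~\ref{alg:next_chain}) runs in time $O(\Delta\,\ell)$, since the fan subroutines cost $O(\Delta^2)$ and computing an initial path segment of length $2\ell$ costs $O(\Delta\,\ell)$, and $\ell$ is a sufficiently large polynomial in $\Delta$. The remaining steps inside the loop body — randomly shortening $P$ to $P_k = P|\ell'$, applying the $\Shift$ operation along $F_k + P_k$, updating the hash maps $\visited$ and $M(\cdot)$ over the vertices of $F_k$ and the internal edges of $P_k$, and scanning $\tilde F + \tilde P$ for an intersection — each touch only $O(\Delta\,\ell)$ vertices, edges, and missing-color entries (a fan has $O(\Delta)$ edges and a shortened path has $O(\ell)$ edges, and each vertex has at most $\Delta+1$ incident colors). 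Hence a single ``forward'' iteration that reaches step~\ref{step:append} costs $O(\Delta\,\ell)$.

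The subtlety I would address next is the truncation branch (steps~\ref{step:psi}--\ref{step:truncate}): when we reach step~\ref{step:choosej} with index $j$, we must reverse the $\Shift$ along $(F_j + P_j + \cdots + F_k + P_k)^*$, reset $\visited$ and $M(\cdot)$ on $V(F_j)\cup\cdots\cup V(F_k)$ and $\IE(P_j)\cup\cdots\cup\IE(P_k)$, and truncate $C$. This undoes the work of iterations $j, j+1, \dots, k$, and naively its cost is $O(\Delta\,\ell\,(k-j+1))$, which is not $O(\Delta\,\ell)$ per iteration. The clean way to handle this is an amortization (potential/charging) argument: each unit of $\Shift$/hash-map work performed when a Vizing chain $F_i + P_i$ was appended at some iteration is ``paid for'' in advance, and when that same chain is later dismantled during a truncation the undoing work is charged back against the original append. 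Since each of the $t$ iterations appends at most one Vizing chain $F_k + P_k$ of size $O(\Delta\,\ell)$, the total append work is $O(\Delta\,\ell\,t)$, and the total truncation work is bounded by the total append work, hence also $O(\Delta\,\ell\,t)$. Adding the $O(\Delta\,\ell)$ cost of the $\mathsf{NextChain}$ (and, on the first iteration, $\mathsf{FirstChain}$) call made in \emph{every} iteration regardless of branch contributes another $O(\Delta\,\ell\,t)$, and these candidate chains are discarded but never need to be undone beyond what is already counted. Summing the three contributions gives the claimed bound $O(\Delta\,\ell\,t)$.

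I expect the main obstacle to be making the amortization bookkeeping precise: one must be careful that the $\visited$ hash map has fixed key set $V\cup E$ so that a lookup or a single-entry update is genuinely $O(1)$, that resetting $\visited$ on the dismantled portion only touches entries that were set to $1$ (so it is proportional to the appended work, not to $|V|+|E|$), and that the $\Shift$ operation along a chain of length $L$ can be applied and reversed in $O(\Delta\,L)$ time (each shift step recolors one edge and adjusts the missing-color sets at its two endpoints). Once these implementation facts are in place, the charging scheme — every dismantling step of a chain $F_i+P_i$ is charged to the iteration at which $F_i+P_i$ was built, and each iteration builds at most one chain of cost $O(\Delta\,\ell)$ — immediately yields that the work over $t$ iterations is $O(\Delta\,\ell\,t)$, completing the proof.
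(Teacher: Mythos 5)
Your proposal is correct and matches the paper's argument in substance: the paper bounds the cost of an iteration with record entry $d_i \leq 0$ by $O(\Delta(|d_i|+1)\ell)$ and then uses that the partial sums of the record are nonnegative, so $\sum_{i:\,d_i\leq 0}|d_i| \leq \sum_{i:\,d_i=1}d_i \leq t$ — which is precisely your charging scheme (dismantling work is paid for by the iteration that appended the chain, and each iteration appends at most one chain of cost $O(\Delta\,\ell)$). The implementation details you flag (hash maps with fixed key set, $O(\Delta)$ per shifted edge for updating $M(\cdot)$, reset of $\visited$ touching only marked entries) are the same ones the paper relies on.
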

    \begin{proof}
        Let $I \in \mathcal{I}^{(t)}$ be an input sequence and let $D = (d_1, \ldots, d_t) \defeq D(I)$. We bound the runtime of the $i$-th iteration in terms of $d_i$. As shown in \S\ref{subsec:algo_overview}, Algorithms \ref{alg:first_chain} and \ref{alg:next_chain} run in $O(\ell)$ time. Also, while updating the coloring $\psi$, we must update the sets of missing colors as well. Each update to $M(\cdot)$ takes $O(\Delta)$ time. Let us now consider two cases depending on whether $d_i$ is positive.
        
    \begin{enumerate}[label=\ep{\textbf{Case \arabic*}}, wide] 
        \item $d_i = 1$. Here, we test for success, shorten the path $P$ to $P_k$, update $\psi$ and the hash map $\visited$,
        call Algorithm \ref{alg:next_chain}, and check that the resulting chain is non-intersecting. All of these steps can be performed in time $O(\ell)$.
        
        \item $d_i \leq 0$. Here we again conduct all the steps mentioned in the previous case, which takes $O(\ell)$ time.
        Once we locate the intersection, we then update $\psi$ and the hash map $\visited$, which takes $O((|d_i| + 1)\ell)$ time (we are adding $1$ to account for the case $d_i = 0$).
    \end{enumerate}
    
    Since $\sum_{i=1}^t d_i \geq 0$, we have
    \[\sum_{i,\, d_i \leq 0}|d_i| \leq \sum_{i,\, d_i = 1}d_i \,\leq\, t.\]
    It follows that the running time of $t$ iterations of the \textsf{while} loop is $O(\ell\,t)$, as desired.
    \end{proof}
    
We are now ready to prove Theorem \ref{theo:entropy_compression_dist}.

\begin{proof}[Proof of Theorem \ref{theo:entropy_compression_dist}]
    Note that, since each $\ell_i$ is chosen independently and uniformly at random from the $\ell$-element set $[\ell, 2\ell-1] \cap \N$, the probability that Algorithm \ref{alg:multi_viz_chain} lasts for more than $t$ iterations of the \textsf{while} loop is bounded above by $|\mathcal{I}^{(t)}|/\ell^t$. We can bound $|\mathcal{I}^{(t)}|$ by considering all possible records and termini. In particular, as a result of Lemmas \ref{lemma:counting_wD}, \ref{lemma:wD_bound}, and \ref{lemma:Dst}, we have
    \begin{align*}
        |\mathcal{I}^{(t)}| \,\leq\, \sum_{\substack{e \in E, u \in e, \\ D \in D^{(t)}}}|\mathcal{I}^{(t)}(D, e, u)| \,&\leq\, \sum_{\substack{e \in E, u \in e, \\ D \in D^{(t)}}}\wt(D) \\
        &\leq\, 2m\sum_{s = 0}^t\sum_{D \in \mathcal{D}_s^{(t)}}\wt(D) \\
        &\leq\, 2m\sum_{s = 0}^t|\mathcal{D}_s^{(t)}|(75\Delta^{15}\ell)^{t/2}\,(75\Delta^7\ell)^{-s/2} \\
        &\leq\, 2m(1200\Delta^{15}\ell)^{t/2}\sum_{s = 0}^t(75\Delta^7\ell)^{-s/2} \\
        &\leq\, 4m(1200\Delta^{15}\ell)^{t/2}.
    \end{align*}
    Therefore, we may conclude that
    \begin{align*}
        \P[\text{Algorithm \ref{alg:multi_viz_chain} does not terminate after } t \text{ iterations}] \,&\leq\, \frac{4m\,(1200\Delta^{15}\ell)^{t/2}}{\ell^t} \,=\, 4m\,\left(\frac{1200\Delta^{15}}{\ell}\right)^{t/2},
    \end{align*}
    as desired. 
    The bound $O(\ell\, t)$ on the runtime of the algorithm follows by Lemma~\ref{lemma:runtime}.
\end{proof}

\section{Proof of Theorem \ref{theo:seq}}\label{sec:sequential}

In this section, we prove Theorem \ref{theo:seq} by describing a randomized algorithm similar to the one in \S \ref{sec:nlogn} and based on Algorithm Sketch \ref{inf:seq}. The algorithm takes as input a graph $G$ of maximum degree $\Delta$ and a parameter $\ell \in \N$, and returns a proper $(\Delta + 1)$-edge-coloring of $G$.
At each iteration, the algorithm picks a random uncolored edge and colors it by finding a Multi-Step Vizing Chain using Algorithm \ref{alg:multi_viz_chain}.

\begin{algorithm}[h]\algsize
\caption{Sequential Coloring with Multi-Step Vizing Chains}\label{alg:seq}
\begin{flushleft}
\textbf{Input}: A graph $G = (V, E)$ of maximum degree $\Delta$ and a parameter $\ell \in \N$. \\
\textbf{Output}: A proper $(\Delta(G)+1)$-edge-coloring $\phi$ of $G$.
\end{flushleft}
\begin{algorithmic}[1]
    \State $U \gets E$, $\phi(e) \gets \blank$ for each $e \in U$.
    \While{$U \neq \0$}
        \State Pick an edge $e \in U$ and a vertex $x \in e$ uniformly at random.
        \State $C \gets \hyperref[alg:multi_viz_chain]{\mathsf{MSVA}}(\phi, e, x, \ell)$ \Comment{Algorithm \ref{alg:multi_viz_chain}}
        \State $\phi \gets \aug(\phi, C)$
        \State $U \gets U \setminus \set{e}$
    \EndWhile
    \State \Return $\phi$
\end{algorithmic}
\end{algorithm}

The correctness of this algorithm follows from the results in \S \ref{subsec:algo_lemmas}. 
To assist with the time analysis, we define the following variables:
\begin{align*}
    \phi_i &\defeq \text{ the coloring at the start of iteration } i, \\
    T_i&\defeq \text{ the number of iterations of the \textsf{while} loop in the $i$-th call to Algorithm \ref{alg:multi_viz_chain}},  \\
    C_i &\defeq \text{ the chain produced by Algorithm \ref{alg:multi_viz_chain} at the $i$-th iteration}, \\
    U_i &\defeq \set{e\in E\,:\, \phi_i(e) = \blank}.
\end{align*}
Augmenting the chain $C_i$ takes time $O(\length(C_i))$.
From Lemma \ref{lemma:runtime} and since $\length(C_i) = O(\ell \,T_i)$, we see that the runtime of Algorithm \ref{alg:seq} is $O(\ell\,T)$, where $T = \sum_{i = 1}^mT_i$.
It remains to show that $T = O(m)$ with high probability. 
To this end, we prove the following lemma.

\begin{Lemma}\label{lemma:t_i_prob_bound}
    For all $t_1$, \ldots, $t_i \in \N$, we have $\P[T_i > t_i|T_1 > t_1, \ldots, T_{i-1} > t_{i-1}] \leq \frac{2m}{|U_i|}\,\left(\frac{1200\Delta^{15}}{\ell}\right)^{t_i/2}$.
\end{Lemma}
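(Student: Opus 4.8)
The plan is to reduce the conditional bound on $T_i$ to a single unconditional application of Theorem~\ref{theo:entropy_compression_dist}, averaging over the random choice of uncolored edge made at the $i$-th iteration of Algorithm~\ref{alg:seq}. First, I observe that the event $\{T_1 \geq t_1, \ldots, T_{i-1} \geq t_{i-1}\}$ is measurable with respect to the randomness used during the first $i-1$ iterations of Algorithm~\ref{alg:seq}; conditioning on this event (and in fact on the entire history up through iteration $i-1$) determines the coloring $\phi_i$ and the set $U_i$ of uncolored edges, but does not bias the fresh random choices made at iteration $i$. So it suffices to bound, for each fixed value of $\phi_i$ (equivalently each fixed $U_i$ consistent with the conditioning event), the conditional probability $\P[T_i \geq t_i \mid \phi_i]$, and then take expectations. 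Concretely, writing $\P[T_i \geq t_i \mid \phi_i] = \sum_{e \in U_i}\sum_{x \in e}\P[e_i = e,\, x_i = x \mid \phi_i]\,\P[T_i \geq t_i \mid \phi_i,\, e_i = e,\, x_i = x]$, I can use the fact that $e_i$ is uniform over $U_i$ and $x_i$ uniform over the two endpoints, so $\P[e_i = e,\, x_i = x \mid \phi_i] = \frac{1}{2|U_i|}$.

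Next, for each fixed $(\phi_i, e, x)$ with $e \in U_i$ and $x \in e$, the conditional law of $T_i$ is exactly the distribution of the number of \textsf{while}-loop iterations when Algorithm~\ref{alg:multi_viz_chain} is run on input $(\phi_i, e, x)$. Here is where Theorem~\ref{theo:entropy_compression_dist} enters: for $\ell \geq 1200\Delta^{16}$ it gives $\P[T_i \geq t_i \mid \phi_i,\, e_i = e,\, x_i = x] \leq 4m\,(1200\Delta^{15}/\ell)^{t_i/2}$. Summing over the (at most $2|U_i|$) pairs $(e,x)$ and multiplying by $\frac{1}{2|U_i|}$, the factors of $|U_i|$ cancel and I get $\P[T_i \geq t_i \mid \phi_i] \leq 4m\,(1200\Delta^{15}/\ell)^{t_i/2}$. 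This is a factor of roughly $2$ worse than the claimed bound $\frac{2m}{|U_i|}(1200\Delta^{15}/\ell)^{t_i/2}$, so the naive averaging is not quite enough.

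The gap is exactly the point the informal overview in \S\ref{subsec:overview_disjoint} (and the discussion preceding Algorithm~Sketch~\ref{inf:seq}) flags: one should exploit that a \emph{random} uncolored edge has a \emph{short} chain on average, gaining a factor of $|U_i|/m$ over the worst case. So the main obstacle — and the step I expect to require the most care — is to prove a sharper, edge-averaged version of the entropy-compression bound, rather than invoking Theorem~\ref{theo:entropy_compression_dist} as a black box. The fix is to redo the counting argument of Lemmas~\ref{lemma:counting_wD}--\ref{lemma:Dst} while summing over the starting edge. Recall that in the proof of Theorem~\ref{theo:entropy_compression_dist} the bound $|\mathcal{I}^{(t)}| \leq \sum_{e \in E, u \in e, D \in \mathcal{D}^{(t)}} |\mathcal{I}^{(t)}(D,e,u)|$ carried a factor $2m$ precisely because the \emph{terminus} $\tau(I) = (\End(C), \vend(C))$ ranges over all $2m$ choices of an edge with a distinguished endpoint. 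When we instead condition on $\phi_i$ and want a per-edge bound, the relevant quantity is the number of input sequences $I = (e, x, \ell_1, \ldots, \ell_{t_i})$ starting from a \emph{uniformly random} $(e,x)$ — i.e., $\frac{1}{2|U_i|}\sum_{e \in U_i, x \in e} \#\{\text{non-terminating input seqs from }(e,x)\}$ — and the $2m$ factor in the count of $|\mathcal{I}^{(t_i)}(D, \cdot, \cdot)|$ (coming from the terminus) survives, while the sum over starting pairs contributes only $2|U_i|$, which the $\frac{1}{2|U_i|}$ cancels. Tracking Lemma~\ref{lemma:counting_wD}'s bound $|\mathcal{I}^{(t_i)}(D, uv, u)| \leq \wt(D)$ together with Lemmas~\ref{lemma:wD_bound} and \ref{lemma:Dst} then gives $\frac{1}{2|U_i|}\sum_{e \in U_i, x \in e}\#\{\cdots\} \leq \frac{1}{|U_i|} \cdot 2m(1200\Delta^{15}\ell)^{t_i/2}$, hence dividing by $\ell^{t_i}$ yields $\P[T_i \geq t_i \mid \phi_i] \leq \frac{2m}{|U_i|}(1200\Delta^{15}/\ell)^{t_i/2}$. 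Finally, since this bound holds for every admissible $\phi_i$ and is independent of the particular history realizing the conditioning event, averaging over $\phi_i$ given $\{T_1 \geq t_1, \ldots, T_{i-1} \geq t_{i-1}\}$ preserves it, which is exactly the claim. (One should also handle the trivial edge cases $t_i \leq 0$ and the degenerate single-edge chain, where the statement holds vacuously or by the conventions in Definition~\ref{def:record}.)
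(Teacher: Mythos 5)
Your final argument is correct and is essentially the paper's own proof: the paper likewise treats the random starting pair $(e_i,x_i)$ as part of the input sequence, reuses the intermediate count $|\mathcal{I}^{(t_i)}|\leq 4m(1200\Delta^{15}\ell)^{t_i/2}$ from the proof of Theorem~\ref{theo:entropy_compression_dist} (so no genuinely new counting is needed), and divides by the $2|U_i|\ell^{t_i}$ equally likely inputs, invoking the independence of $T_i$ from $T_1,\ldots,T_{i-1}$ given $\phi_i$. Only your side remarks are slightly off --- the naive black-box bound is a factor $2|U_i|$, not $2$, worse than the claim, and the sum over starting pairs does not contribute a $2|U_i|$ factor to the count (the gain comes purely from dividing by the number of equally likely inputs) --- but these slips do not affect the final argument.
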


\begin{proof}
    To analyze the number of iterations of the \textsf{while} loop in the $i$-th call to Algorithm \ref{alg:multi_viz_chain}, we use the definition of input sequences from \S\ref{sec:msva_analysis}. Note that an input sequence $I = (f, z, \ell_1, \ldots, \ell_t)$ contains not only the random integers $\ell_1$, \ldots, $\ell_t$, but also the starting edge $f$ and the vertex $z \in f$. Since at every iteration of Algorithm~\ref{alg:seq}, the starting edge and vertex are chosen uniformly at random, we have
    \[\P[T_i > t_i|\phi_i] \,\leq\, \frac{|\mathcal{I}^{(t_i)}|}{2|U_i|\ell^{t_i}}.\]
    As shown in the proof of Theorem \ref{theo:entropy_compression_dist},
    \[|\mathcal{I}^{(t_i)}| \,\leq\,
    4m(1200\Delta^{15}\ell)^{t_i/2}.\]
    Furthermore, $T_i$ is independent of $T_1$, \ldots, $T_{i-1}$ given $\phi_i$. Therefore, 
    \begin{align*}
        \P[T_i > t_i|T_1 > t_1, \ldots T_{i-1} > t_{i-1}] \,\leq\, \frac{|\mathcal{I}^{(t_i)}|}{2|U_i|\ell^{t_i}} \,\leq\, \frac{2m}{|U_i|}\,\left(\frac{1200\Delta^{15}}{\ell}\right)^{t_i/2},
    \end{align*}
    as desired.
\end{proof}
Let us now bound $\P[T > t]$. 
Note that if $T > t$, then there exist non-negative integers $t_1$, \ldots, $t_m$ such that $\sum_it_i = t$ and $T_i > t_i$ for each $i$.
Using the union bound
we obtain
\begin{align*}
    \P[T > t] \,&\leq\, \sum_{\substack{t_1, \ldots, t_m \\ \sum_it_i = t}}\P[T_1 > t_1, \ldots, T_m > t_m] \\
    &=\, \sum_{\substack{t_1, \ldots, t_m \\ \sum_it_i = t}}\,\prod_{i = 1}^m\P[T_i > t_i|T_1 > t_1, \ldots, T_{i-1} > t_{i-1}] \\
    [\text{by Lemma \ref{lemma:t_i_prob_bound}}]\qquad &\leq\, \sum_{\substack{t_1, \ldots, t_m \\ \sum_it_i = t}}\,\prod_{i = 1}^m\frac{2m}{|U_i|}\,\left(\frac{1200\Delta^{15}}{\ell}\right)^{t_i/2} \\
    &=\, \sum_{\substack{t_1, \ldots, t_m \\ \sum_it_i = t}}\frac{(2m)^m}{m!}\,\left(\frac{1200\Delta^{15}}{\ell}\right)^{t/2} \\
    [m! \geq (m/e)^m] \qquad &\leq\, \binom{t+m-1}{m-1} \,(2e)^{m}\,\left(\frac{1200\Delta^{15}}{\ell}\right)^{t/2} \\
    [{\textstyle {a \choose b} \leq (ea/b)^b}] \qquad &\leq\, \left(e\left(1 + \frac{t}{m-1}\right)\right)^{m-1}\,(2e)^{m}\,\left(\frac{1200\Delta^{15}}{\ell}\right)^{t/2} \\
    [(1 + a)\leq e^a] \qquad &\leq\, (2e^2)^m\,\left(\frac{1200\,e^2\,\Delta^{15}}{\ell}\right)^{t/2}.
\end{align*}
For $t = \Theta(m)$ and $\ell = \Theta(\Delta^{16})$, the above is at most $1/\Delta^n$. It follows that Algorithm \ref{alg:seq} computes a proper $(\Delta+1)$-edge-coloring in $O(\Delta^{17}\,n)$ time with probability at least $1 - 1/\Delta^n$, as desired.

\section{Proof of Theorem \ref{theo:dist}}\label{sec:distributed}

\subsection{Finding many disjoint augmenting subgraphs}\label{subsec:many_aug_subgraphs}

In this subsection we prove the following version of Theorem \ref{theo:disjoint} with explicit dependence on the bounds on $\Delta$:

\begin{theo}[Exact version of Theorem~\ref{theo:disjoint}]\label{theo:disjoint_with_bounds}
    Let $\phi \colon E \pto [\Delta + 1]$ be a proper partial $(\Delta + 1)$-edge-coloring with domain $\dom(\phi) \subset E$ and let $U \defeq E \setminus \dom(\phi)$ be the set of uncolored edges. Then there exists a randomized \LOCAL algorithm that in $O(\Delta^{16} \log n)$ rounds outputs a set $W \subseteq U$ of expected size $\E[|W|] = \Omega(|U|/\Delta^{20})$ and an assignment of 
    connected $e$-augmenting subgraphs ${H_e}$ to the edges $e \in W$ such that:
    \begin{itemize}
        \item the graphs ${H_e}$, $e \in W$ are pairwise vertex-disjoint, and
        \item $|E({H_e})| = O(\Delta^{16}\log n)$ for all $e \in W$.
    \end{itemize}
\end{theo}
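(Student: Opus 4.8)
\textbf{Proof plan for Theorem~\ref{theo:disjoint_with_bounds}.}

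The plan is to run Algorithm~\ref{alg:multi_viz_chain} from every uncolored edge simultaneously, keeping only those chains that turn out to be short and whose vertex sets do not collide with any other surviving chain. First I would fix the parameter $\ell = \Theta(\Delta^{16})$ as in Theorem~\ref{theo:entropy_compression_dist} and a threshold $t = \Theta(\log n)$. For each $e = xy \in U$, independently pick a random endpoint and a random input sequence $(\ell_1, \ell_2, \ldots)$ and simulate Algorithm~\ref{alg:multi_viz_chain} with input $(\phi, e, x)$ for up to $t$ iterations of its \textsf{while} loop. By Theorem~\ref{theo:entropy_compression_dist}, with probability at least $1 - 4m(1200\Delta^{15}/\ell)^{t/2} \geq 1 - 1/\mathrm{poly}(n)$ a \emph{given} run halts within $t$ iterations and produces an $e$-augmenting multi-step Vizing chain $H_e$ of length $O(\ell t) = O(\Delta^{16}\log n)$; by a union bound over the $m \leq \Delta n/2$ uncolored edges, with high probability \emph{all} the simulations that we keep are short. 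Each simulation touches only edges and vertices within distance $O(\ell t) = O(\Delta^{16}\log n)$ of $e$ in $G$, so it can be carried out in $O(\Delta^{16}\log n)$ \LOCAL rounds: each vertex gathers its $O(\Delta^{16}\log n)$-radius ball, and the starting vertex of $e$ (broken symmetrically using the private randomness, or \textsf{ID}s as a tiebreak) performs the whole computation locally.

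The second phase is the conflict resolution. Call a run \emph{good} if it halted within $t$ iterations and the resulting chain $H_e$ has $|E(H_e)| = O(\Delta^{16}\log n)$. Build the auxiliary conflict graph $\mathcal{C}$ on the good edges, where $e \sim e'$ iff $V(H_e) \cap V(H_{e'}) \neq \varnothing$. We then take $W$ to be an independent set in $\mathcal{C}$: the simplest choice is to keep $e$ iff its chain $H_e$ is vertex-disjoint from $H_{e'}$ for every other good $e'$. The key observation, exactly as in \cite[\S2]{VizingChain}, is that the chains are \emph{short} and \emph{short chains cannot concentrate}: a vertex $v$ lies on $H_e$ only if $e$ lies within distance $O(\Delta^{16}\log n)$ of $v$, and there are at most $\Delta \cdot \Delta^{O(\Delta^{16}\log n)}$ such edges — which is far too weak. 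Instead I would argue probabilistically: condition on the graph $G$ and the coloring $\phi$ and use the randomness of the starting edge/endpoint and of the input sequences. A single uncolored edge $e$ has probability $\Omega(1)$ of being good (say at least $1/2$ after adjusting constants), and I claim $\Pr[e \text{ good and } e \text{ survives}] = \Omega(1/\Delta^{20})$. For the survival bound, fix a good run for $e$ and note that the number of \emph{potential} chains through a given vertex $w$ is controlled: by the structure of multi-step Vizing chains (the fan at each step has at most $\Delta$ edges, each path has length $< 2\ell$, and consecutive steps are glued at a prescribed vertex), the analysis in Lemma~\ref{lemma:counting_wD} already shows that the number of $(D, uv, u)$-labelled executions passing through a fixed vertex is $\mathrm{poly}(\Delta)^t n$-ish; combined with the fact that each good chain has $O(\Delta^{16}\log n)$ vertices, an expectation/second-moment computation shows that the expected number of other good edges whose chain hits $H_e$ is $O(\Delta^{c})$ for some absolute constant $c \leq 19$. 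Removing those, the probability $e \in W$ is $\Omega(1/\Delta^{20})$, whence $\mathbb{E}[|W|] = \Omega(|U|/\Delta^{20})$.

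The main obstacle — and the step I would spend the most care on — is exactly this last expectation bound: showing that a random good chain collides with only $\mathrm{poly}(\Delta)$ (and not $\mathrm{poly}(\Delta)\log n$, which is what \cite{Christ} gets, or worse) other random good chains \emph{in expectation}. The improvement over Theorem~\ref{theo:disjoint_Christ} — $|U|/\mathrm{poly}(\Delta)$ rather than $|U|/(\mathrm{poly}(\Delta)\log n)$ — hinges on the fact that although an individual chain has $\Theta(\log n)$ edges, the number of \emph{distinct} chains that any fixed vertex $w$ can lie on is only $\mathrm{poly}(\Delta)$ when we account for the randomness correctly: the alternating-path portions are determined by the coloring $\phi$ (two colors plus one edge determine a maximal bicolored path), the fans have $\leq \Delta$ edges, and the $k$ transition points are severely constrained — this is precisely the counting harnessed in Lemmas~\ref{lemma:counting_wD}–\ref{lemma:Dst}. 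I would set up a charging argument: charge a collision of $H_e$ with $H_{e'}$ at vertex $w$ to the pair consisting of $w$ and the ``index within $H_{e'}$'' at which $w$ appears; summing $\Pr[e' \text{ good}] \cdot \Pr[w \in H_{e'} \mid e' \text{ good}]$ over all $e'$ and all $w \in H_e$, use $|V(H_e)| = O(\Delta^{16}\log n)$ together with $\sum_{e'} \Pr[w \in H_{e'}] = O(\mathrm{poly}(\Delta))$ (the key lemma, obtained from the entropy-compression counting by summing over $t$ and over records), to get the claimed $O(\mathrm{poly}(\Delta))$ expected number of conflicts. Once this is in hand, Markov's inequality or a direct computation of $\mathbb{E}[|W|]$ finishes the argument, and the round complexity $O(\Delta^{16}\log n)$ and the size bound $|E(H_e)| = O(\Delta^{16}\log n)$ are immediate from Theorem~\ref{theo:entropy_compression_dist} and the locality of the simulation.
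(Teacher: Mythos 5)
Your first phase (running Algorithm~\ref{alg:multi_viz_chain} from every uncolored edge in parallel with $\ell = \Theta(\Delta^{16})$, $t = \Theta(\log n)$, and invoking Theorem~\ref{theo:entropy_compression_dist} for the length and locality bounds) matches the paper, but your selection rule for $W$ has a genuine gap. You keep $e$ iff $H_e$ is vertex-disjoint from \emph{every} other good chain, and you infer from ``expected number of conflicts $O(\Delta^c)$, $c \leq 19$'' that $\P[e \in W] = \Omega(1/\Delta^{20})$. This is a non sequitur: an upper bound on the \emph{expected} conflict degree says nothing useful about $\P[\text{degree} = 0]$ when that expectation is $\gg 1$ (and here it genuinely is of order $\ell\Delta^4 = \Theta(\Delta^{20})$, which is what Lemma~\ref{lemma:edges_bound_Gamma_r} proves). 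The conflict degree of $e$ could be concentrated near $\Delta^{19}$, making the survival probability exponentially small in $\Delta$, or zero. The paper avoids this by \emph{not} taking isolated vertices: it runs the random-priority rule of Algorithm~\ref{alg:independent_set} on the conflict graph, so that $\P[e \in W \mid \deg(e) = d] = 1/(d+1)$ exactly, and then Lemma~\ref{lemma:alg_indep_set} together with Jensen's inequality converts the \emph{average}-degree bound $O(\ell\Delta^4)$ into $\E[|W|] = \Omega(|U|/(\ell\Delta^4)) = \Omega(|U|/\Delta^{20})$. Some Tur\'an-type extraction of this kind is indispensable; ``keep the conflict-free chains'' cannot give the stated bound.

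The second gap is in the expectation bound itself, which you correctly identify as the crux but whose proposed proof does not deliver the claimed order. Your charging argument bounds the expected number of conflicts of $H_e$ by $|V(H_e)| \cdot \max_w \sum_{e'} \P[w \in H_{e'}]$ with $|V(H_e)| = O(\Delta^{16}\log n)$ and $\sum_{e'} \P[w \in H_{e'}] = \poly(\Delta)$; that product is $\poly(\Delta)\log n$, which would only reproduce Theorem~\ref{theo:disjoint_Christ} (a set of size $|U|/(\poly(\Delta)\log n)$), not the improvement that Theorem~\ref{theo:disjoint_with_bounds} is about. The paper kills the $\log n$ factor by two devices absent from your sketch: \ep{i} a \emph{step-indexed} count, $\E[c_j(x)] \leq 5\ell\,(4\Delta^4/\ell)^{j}$, for the expected number of chains whose $j$-th Vizing step passes through a fixed vertex $x$, obtained from the entropy-compression counting (Lemmas~\ref{lemma:counting_wD}, \ref{lemma:wD_bound}, \ref{lemma:Dst}) applied to records with partial sum $j-1$, so the bound decays geometrically in $j$; and \ep{ii} an asymmetric accounting of intersecting pairs (the set $N^+(e)$), charging each conflict only from the chain whose step index at the intersection is the smaller one, so that $\E[|N^+(e)|] \leq \sum_{j} \sum_{x \in V^e_j} \sum_{j' \geq j} 5\ell\,(4\Delta^4/\ell)^{j'} = O(\ell\Delta^4)$, independent of $t$. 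Without the geometric decay in the step index and the $j \leq j'$ restriction, the aggregate per-vertex bound you propose cannot absorb the factor $|V(H_e)| = \Theta(\log n)$, so the key quantitative claim of the theorem is exactly the step your outline leaves unproved. The round complexity and the size bound $|E(H_e)| = O(\Delta^{16}\log n)$ in your write-up are fine.
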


    Let $\phi$, $U$ be as in Theorem~\ref{theo:disjoint_with_bounds}. Throughout this subsection, we fix $\ell = \Theta(\Delta^{16})$ and $t = \Theta(\log n)$, where the implicit constant factors are assumed to be sufficiently large. Our algorithm for Theorem~\ref{theo:disjoint_with_bounds} starts with every uncolored edge $e \in U$ running Algorithm~\ref{alg:multi_viz_chain} in parallel for $t$ iterations of the \textsf{while} loop.  Note that the first $t$ iterations of Algorithm \ref{alg:multi_viz_chain} with starting edge $e$ only rely on the information within distance $O(\ell t)$ from $e$, and thus this stage of the algorithm takes $O(\ell t)$ rounds. For each $e \in U$, let $C_e$ be the resulting multi-step Vizing chain. 
Let $S\subseteq U$ be the set of all edges for which Algorithm \ref{alg:multi_viz_chain} succeeds within the first $t$ iterations, and so the chain $C_e$ is $e$-augmenting. By Theorem~\ref{theo:entropy_compression_dist}, for each $e \in U$, we have $\P[e \in S] \geq 1 - 4m(1200\Delta^{15}/\ell)^{t/2}$. Therefore, by taking $\ell = \Theta(\Delta^{16})$ and $t = \Theta(\log n)$, we can ensure that $\E[|S|] \geq 3|U|/4$. 

The problem now is that some of the graphs $C_e$ may intersect each other. To get rid of such intersections, we define an auxiliary graph $\graphrand$ as follows:
\[V(\graphrand) \defeq S, \qquad E(\graphrand) \defeq \{ef\,:\, V(C_e)\cap V(C_f) \neq \0\}.\]
Our goal is to find a large independent set $W$ in $\graphrand$. 
We do this using a simple randomized procedure described in Algorithm \ref{alg:independent_set}. Note that, since each $C_e$ has diameter $O(\ell t)$, running Algorithm~\ref{alg:independent_set} on the graph $\graphrand$ takes $O(\ell t)$ rounds. 
The following well-known lemma gives a lower bound on the expected size of the independent set produced by this algorithm (see, e.g., \cite[100]{AS}):

\begin{algorithm}[t]\algsize
    \caption{Distributed Random Independent Set}\label{alg:independent_set}
    \begin{flushleft}
        \textbf{Input}: A graph $\Gamma$. \\
        \textbf{Output}: An independent set $W$.
    \end{flushleft}
    \begin{algorithmic}[1]
        \State Let $x_v$ be i.i.d. $\mathcal{U}(0,1)$ for each $v \in V(\Gamma)$.
        \If{$x_v > \max_{u \in N_\Gamma(v)}x_u$}
            \State $W \gets W \cup \set{v}$
        \EndIf
        \State \Return $W$
    \end{algorithmic}
\end{algorithm}

\begin{Lemma}\label{lemma:alg_indep_set}
    Let $W$ be the independent set returned by Algorithm \ref{alg:independent_set} on a graph $\Gamma$ of average degree $d(\Gamma)$. 
    Then $\E[|W|] \geq |V(\Gamma)|/(d(\Gamma)+1)$.
\end{Lemma}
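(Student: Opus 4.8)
The statement is the classical Turán-type bound for the independent set produced by the ``random permutation'' (or here, random real labels) heuristic; the plan is to compute the probability that a given vertex joins $W$ and sum over all vertices. First I would fix a vertex $v \in V(\Gamma)$ with degree $d(v) \defeq \deg_\Gamma(v)$. The values $x_u$, $u \in N_\Gamma(v) \cup \set{v}$, are i.i.d.\ continuous random variables, so with probability $1$ they are pairwise distinct, and by symmetry each of the $d(v)+1$ vertices in $N_\Gamma[v] \defeq N_\Gamma(v) \cup \set{v}$ is equally likely to carry the largest label among them. Hence $\P[v \in W] = \P[x_v > x_u \text{ for all } u \in N_\Gamma(v)] = 1/(d(v)+1)$.

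Next I would apply linearity of expectation:
\[
    \E[|W|] \,=\, \sum_{v \in V(\Gamma)} \P[v \in W] \,=\, \sum_{v \in V(\Gamma)} \frac{1}{d(v)+1}.
\]
The final step is to bound this sum from below using convexity. The function $x \mapsto 1/(x+1)$ is convex on $[0, \infty)$, so by Jensen's inequality, writing $N \defeq |V(\Gamma)|$ and recalling that $d(\Gamma) = \frac{1}{N}\sum_v d(v)$ is the average degree,
\[
    \frac{1}{N}\sum_{v \in V(\Gamma)} \frac{1}{d(v)+1} \,\geq\, \frac{1}{\frac{1}{N}\sum_{v}d(v) + 1} \,=\, \frac{1}{d(\Gamma)+1},
\]
which rearranges to $\E[|W|] \geq N/(d(\Gamma)+1) = |V(\Gamma)|/(d(\Gamma)+1)$, as claimed. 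It also remains to note that $W$ is genuinely independent: if $uv \in E(\Gamma)$ then the conditions $x_u > x_v$ and $x_v > x_u$ cannot both hold, so at most one of $u$, $v$ lies in $W$; thus $W$ is a valid independent set and the algorithm is correct.

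There is no real obstacle here — the only point requiring a moment's care is the almost-sure distinctness of the $x_v$ (needed so that the ``$\max$'' is attained uniquely and the symmetry argument is clean), and the direction of Jensen's inequality (convexity, not concavity, of $1/(x+1)$). Everything else is linearity of expectation and a one-line rearrangement, so this lemma should be dispatched quickly; I would present it essentially in the three displays above.
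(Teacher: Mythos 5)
Your proof is correct and follows essentially the same route as the paper: the symmetry argument giving $\P[v \in W] = 1/(\deg_\Gamma(v)+1)$, linearity of expectation, and Jensen's inequality applied to the convex function $x \mapsto 1/(x+1)$, together with the observation that $W$ is independent. No gaps; your added remark about almost-sure distinctness of the labels is a minor point the paper leaves implicit.
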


\begin{proof}
    The set $W$ is independent, since otherwise there would exist an edge $uv$ such that $x_u > x_v$ and $x_v > x_u$. To get a lower bound on $\E[|W|]$, consider any $v \in V(G)$. It is a standard result on the order statistics of i.i.d.~$\mathcal{U}(0,1)$ random variables that
    \[\P\left[x_v = \max_{u\in N_{\Gamma}[v]}x_u\right] \,=\, \frac{1}{\deg_{\Gamma}(v)+1}.\]
    By Jensen's inequality, it follows that
    \[\E[|W|] \,=\, \sum_{v \in V(\Gamma)}\frac{1}{\deg_{\Gamma}(v) + 1} \,\geq\, \frac{|V(\Gamma)|}{d(\Gamma) + 1}. \qedhere\]
\end{proof}

To apply the above result, we must have an upper bound on the average degree $d(\graphrand)$. This is essentially equivalent to bounding the expected number of edges in $\graphrand$.

\begin{Lemma}\label{lemma:edges_bound_Gamma_r}
    $\E[|E(\graphrand)|] \leq 240 \ell\,\Delta^4|U|$ .
\end{Lemma}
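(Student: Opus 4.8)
The plan is to bound $\E[|E(\graphrand)|]$ by a union bound over pairs of uncolored edges $e, f \in U$ and, for each pair, bounding the probability that $V(C_e) \cap V(C_f) \neq \0$. The key observation is that this event forces one of the chains, say $C_f$, to pass through a vertex that lies within distance $O(\ell t)$ of $e$; more usefully, it forces a vertex of $C_f$ to coincide with one of the at most $O(\ell t)$ vertices of $C_e$. However, a naive union bound over all such pairs would give a bound of order $|U| \cdot (\ell t)^2$ or worse, which is too large. The trick — which is standard in this line of work and mirrors the counting in Lemma~\ref{lemma:counting_wD} — is instead to fix a vertex $v \in V(C_e)$ and bound the probability, over the randomness used in building $C_f$, that $v \in V(C_f)$, summing first over $f$ and then over $v$.

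Concretely, I would proceed as follows. First, condition on the coloring $\phi$ (which is fixed) and on $e \in S$; write $C_e = F_0 + P_0 + \cdots + F_{k-1} + P_{k-1}$, so that $|V(C_e)| = O(\ell t)$ by Theorem~\ref{theo:entropy_compression_dist}. The edge $ef$ is present in $\graphrand$ only if both $e, f \in S$ and $V(C_e) \cap V(C_f) \neq \0$. I would bound $\E[|E(\graphrand)|] \leq \sum_{e \in U} \E\big[\,\sum_{f \in U,\, f \neq e} \bbone\{V(C_e) \cap V(C_f) \neq \0\}\,\big]$, and for the inner expectation, condition on $C_e$ and bound, for each vertex $v \in V(C_e)$, the quantity $\sum_{f} \P[v \in V(C_f) \mid C_e]$. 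Now the point is that for a \emph{fixed} target vertex $v$, the number of starting edges $f$ whose chain $C_f$ can reach $v$ is controlled: a chain reaching $v$ must, at the iteration where it first reaches $v$, have $v$ lying on one of its fans or paths, and — reversing the reasoning in the claim inside the proof of Lemma~\ref{lemma:counting_wD} — the starting edge $f$ together with its execution is highly constrained once we know $v$ and the relevant colors. This is where the factors $\Delta^4$ (for the local structure around $v$) and $\ell$ (for the length of the alternating path segment through $v$) enter, matching the stated bound $240\,\ell\,\Delta^4 |U|$ once we sum $|V(C_e)| = O(\ell t)$ over $e \in U$ and absorb constants.

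More carefully, the cleanest route is probably to not reuse the chain-reconstruction machinery directly but rather to use the simpler structural fact: if $v \in V(C_f)$, then there is some step $i$ and some fan $F_i$ or path $P_i$ of $C_f$ containing $v$. Since $C_f$ is non-intersecting and each $P_i$ is a bicolored path under a shifted coloring that agrees with $\phi$ on its internal edges (Lemma~\ref{lemma:non-intersecting_degrees}\ref{item:related_phi}), a vertex $v$ lies on at most $O(\Delta^2)$ fans and on $O(\Delta)$ maximal bicolored $\phi$-paths (one choice of colors, two directions, times a $\Delta$ factor for picking the neighbor realizing each color), each of which contributes at most $2\ell$ candidate vertices; tracing back $O(\ell)$ edges along such a path locates the fan pivot and hence pins down a bounded neighborhood in which $\Start(F_0) = f$ must lie, giving $O(\Delta)$ further choices for $f$. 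Multiplying through, each fixed $v$ is contained in $V(C_f)$ for at most $O(\ell \Delta^4)$ edges $f \in U$. Then $\E[|E(\graphrand)|] \leq \sum_{e \in U} |V(C_e)| \cdot O(\ell \Delta^4) / (\text{something})$ — actually one must be slightly careful: this deterministic counting already bounds the sum without taking expectations over $f$'s randomness, since it counts \emph{all} $f$ that could possibly reach $v$ under \emph{any} input sequence. So in fact $\sum_{f \in U} \bbone\{v \in V(C_f)\} \leq O(\ell \Delta^4)$ holds pointwise (given the realized chains), and summing over $v \in V(C_e)$ and $e \in U$ gives $|E(\graphrand)| \leq \sum_{e \in U} O(\ell t \cdot \ell \Delta^4)$ — but that has an extra $\ell t$ factor, which is wrong.

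The resolution — and the main obstacle — is that one must \emph{not} sum $|V(C_e)|$ crudely; instead one should observe that each edge $ef$ of $\graphrand$ is being counted via a \emph{shared} vertex, and a more symmetric accounting is needed: either bound the number of $f$ reaching a fixed $v$ by $O(\Delta^4)$ (not $O(\ell\Delta^4)$) by noting that a chain $C_f$ has only $O(1)$ fans passing through $v$ relative to a fixed $\phi$ and that the relevant path lengths telescope, or bound $\sum_{e}|V(C_e)|$ more cleverly using that the total edge-length of all chains is $O(m \ell)$ with high probability by the same entropy-compression union bound. I would aim for the first: fix $v$, and argue that across \emph{all} $e \in U$, the number of chains $C_e$ with $v \in V(C_e)$ is $O(\ell \Delta^4)$ — this is the right statement, since it directly bounds $\sum_{e} \bbone\{v \in V(C_e)\}$, and then $|E(\graphrand)| \leq \frac{1}{2}\sum_{v \in V} \big(\sum_{e} \bbone\{v \in V(C_e)\}\big)^2$ is not what we want either. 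The correct identity is $|E(\graphrand)| \leq \sum_{\{e,f\}} \bbone\{\exists v \in V(C_e)\cap V(C_f)\} \leq \sum_{v \in V} \binom{N_v}{2}$ where $N_v \defeq |\{e \in U: v \in V(C_e)\}|$, so I need $\sum_v \binom{N_v}{2} \leq 240 \ell \Delta^4 |U|$, equivalently $\sum_v N_v^2 = O(\ell \Delta^4 |U|)$. Since $\sum_v N_v = \sum_e |V(C_e)| = O(\ell t |U|)$ with high probability and $N_v \leq O(\ell \Delta^4)$ pointwise, we get $\sum_v N_v^2 \leq \max_v N_v \cdot \sum_v N_v = O(\ell^2 t \Delta^4 |U|)$, still off by $\ell t$. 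Thus the genuinely necessary input is the sharper pointwise bound $N_v = O(\Delta^4)$ — independent of $\ell$ — which must come from the fact that, once $\phi$ is fixed, a vertex $v$ can be $\vend$ of at most $O(\Delta)$ fans, each fan determines (via the $\alpha\beta$-path of the \emph{previous} step ending at that fan's start) essentially one predecessor, and iterating back to the root $e$ along $C_e$ is deterministic given the realized chain — but different $e$'s give genuinely different chains, so one instead bounds the number of $(e, \text{position})$ pairs with $v$ at that position. I would therefore model the proof on the reconstruction argument: given $v$ and a few colors and a constant amount of local data near $v$, the entire chain $C_e$ and hence $e$ is determined, yielding $N_v \leq \poly(\Delta)$ with no $\ell$ dependence, and then $\sum_v N_v^2 \leq \poly(\Delta) \sum_v N_v$; combined with $\sum_v N_v \le \sum_e|V(C_e)| = O(\ell\,|U|)$ \emph{in expectation} — here I'd invoke that $\E[|V(C_e)|] = O(\ell)$, which follows from $\P[T_e \geq t] \leq 4m(1200\Delta^{15}/\ell)^{t/2}$ being geometrically decaying, so $\E[\length(C_e)] = O(\ell)$ — this gives the claimed $\E[|E(\graphrand)|] \le 240\ell\Delta^4|U|$. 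The hard part, to be spelled out in the full proof, is precisely establishing $N_v = O(\Delta^4)$ via the reconstruction argument and verifying the constant $240$ survives the bookkeeping.
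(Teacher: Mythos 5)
Your proposal does not close the argument: it ends by resting the entire bound on two claims, neither of which holds as stated. First, the pointwise bound $N_v = O(\Delta^4)$ (with no dependence on $\ell$ or on the number of steps) is not available. The reconstruction argument you want to imitate from Lemma~\ref{lemma:counting_wD} only pins down the chain \emph{given its record}: knowing a vertex $v$ on the $j$-th step and $O(1)$ local data determines the previous step up to $\poly(\Delta)$ choices, so iterating backwards gives roughly $\poly(\Delta)^{j}$ candidate starting edges, not $\poly(\Delta)$. Deterministically, nothing prevents many far-away uncolored edges from all routing their chains through $v$ in a bad realization of the randomness, so $N_v=O(\Delta^4)$ is false as an almost-sure statement; only an \emph{expected} bound with geometric decay in the step index is true, namely $\E[c_j(v)]\leq 5\ell(4\Delta^4/\ell)^{j}$, and that is exactly what the paper extracts from the record/weight machinery (Lemmas~\ref{lemma:counting_wD}, \ref{lemma:wD_bound}, \ref{lemma:Dst}). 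Since your final inequality multiplies a pointwise bound on $\max_v N_v$ by an expectation of $\sum_v N_v$, the argument collapses once the pointwise bound is only an expectation. Second, your input $\E[|V(C_e)|]=O(\ell)$ does not follow from Theorem~\ref{theo:entropy_compression_dist}: the tail bound there is $4m(1200\Delta^{15}/\ell)^{t/2}$, and the factor $4m$ means the decay only takes effect for $t=\Omega(\log n)$, giving $\E[|V(C_e)|]=O(\ell\log n)$ for a fixed starting edge, which reintroduces precisely the logarithmic factor you were trying to avoid.

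The missing idea is the combination the paper uses: (i) bound, for each vertex $x$ and each step index $j$, the \emph{expected} number of chains whose $j$-th Vizing chain contains $x$ by $5\ell(4\Delta^4/\ell)^{j}$, the geometric decay coming from weighing the $\poly(\Delta)^{j}$ reconstruction count against the probability $\ell^{-t'}$ of the required input prefix; and (ii) count intersecting pairs asymmetrically, charging the pair $\{e,f\}$ to the chain whose intersecting step index is smaller (the set $N^+(e)$ with $j\leq j'$), and then, conditioning on $C_e$ and using independence of the randomness of distinct starting edges, sum the geometric series over $j'\geq j$, over the at most $3\ell$ vertices of the $j$-th step of $C_e$, and over $j\leq t$. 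The decay in $j'$ is what absorbs both the factor $\ell$ from $|V^e_j|$ and the sum over $t$ steps, yielding $\E[|N^+(e)|]\leq 240\ell\Delta^4$ and hence the lemma; without it (or some substitute providing decay in the step index), every accounting you tried is, as you yourself observed, off by a factor of order $\ell t$ or $\log n$.
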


\begin{proof}
    We employ the machinery of input sequences and records developed in \S\ref{sec:msva_analysis}. For a vertex $x \in V$ and an integer $1 \leq j \leq t$, 
    let $\mathcal{I}_j(x)$ be the set of all input sequences $I = (f,z, \ell_1, \ldots, \ell_{t_0})$ such that:
    \begin{itemize}
        \item $t_0 \leq t$, and if $t_0 < t$, then Algorithm~\ref{alg:multi_viz_chain} with this input sequence terminates after the $t_0$-th iteration of the \textsf{while} loop,
        \item letting $C(I) = F_0 + P_0 + \cdots + F_{k-1} + P_{k-1}$ be the resulting multi-step Vizing chain after the first $t_0$ iterations of Algorithm \ref{alg:multi_viz_chain}, we have $x \in V(F_{j-1} + P_{j-1})$.
    \end{itemize}
    For an input sequence $I \in \mathcal{I}^{(t_0)}$, we let $|I| \defeq t_0$.
    Let us define $\tilde{\mathcal{I}}_j(x)$ to contain all input sequences $I \in \mathcal{I}^{(t)}$ such that some initial segment $I'$ of $I$ is in $\mathcal{I}_j(x)$.
    Note that since no input sequence in $\mathcal{I}_j(x)$ is an initial segment of another sequence in $\mathcal{I}_j(x)$, $I'$ is unique.
    In particular, for each $I \in \mathcal{I}_j(x)$, there are $\ell^{t - |I|}$ input sequences in $\tilde{\mathcal{I}}_j(x)$.
    It now follows that
    \[\E[c_j(x)] \leq \sum_{I \in \mathcal{I}_j(x)}\frac{1}{\ell^{|I|}} = \frac{1}{\ell^t}\sum_{I \in \mathcal{I}_j(x)}\ell^{t - |I|} = |\tilde{\mathcal{I}}_j(x)|/\ell^t.\]
    Consider any $I \in \mathcal{I}_j(x)$ and let $D(I) = (d_1, \ldots, d_{t_0})$ and $C(I) =  F_0 + P_0 + \cdots + F_{k-1} + P_{k-1}$ be the corresponding record and the resulting multi-step Vizing chain.
    There must be an index $j-1 \leq t' < t_0$ such that after the first $t'$ iterations of the \textsf{while} loop in Algorithm \ref{alg:multi_viz_chain} with input sequence $I$, the current chain is $F_0 + P_0 + \cdots + F_{j-2} + P_{j-2}$ and the candidate chain is $F_{j-1} + P'$, where $P_{j-1}$ is an initial segment of $P'$.
    Letting $I' \defeq (f, z, \ell_1, \ldots, \ell_{t'})$, we have
    \begin{itemize}

        \item $I' \in \mathcal{I}^{(t')}(D(I'), \Start(F_{j-1}), \vstart(F_{j-1}))$, and
        
        \item $D(I') \in \mathcal{D}_{j-1}^{(t')}$.
    \end{itemize}
    Now we consider two cases depending on whether $x$ belongs to $F_{j-1}$ or $P_{j-1}$.
    
    \begin{enumerate}[label=\ep{\textbf{Case \arabic*}}, wide]
        \item $x \in V(F_{j-1})$. Here, $\Pivot(F_{j-1}) \in N_G[x]$ and $\vstart(F_{j-1}) \in N_G(\Pivot(F_{j-1}))$, and so there are at most $(\Delta + 1)\Delta$ choices for $(\Start(F_{j-1}), \vstart(F_{j-1}))$.

        \item $x \in V(P_{j-1}) \setminus V(F_{j-1})$.
        By Lemma \ref{lemma:non-intersecting_degrees} \ref{item:related_phi}, all edges of $P_{j-1}$ are colored with two colors, say $\alpha$ and $\beta$, and there are at most $(\Delta + 1)^2$ choices for them.
        By Lemma \ref{lemma:non-intersecting_degrees} \ref{item:degree_end} and \ref{item:related_phi}, given $\alpha$ and $\beta$, we can locate $\vend(F_{j-1})$ as one of the two endpoints of the maximal $\alpha\beta$-path under the coloring $\phi$ containing $x$.
        Since $\Pivot(F_{j-1}) \in N_G(\vend(F_{j-1}))$ and $v \in N_G(\Pivot(F_{j-1}))$, there are at most $\Delta^2$ choices for $(\Start(F_{j-1}), \vstart(F_{j-1}))$ given $\vend(F_{j-1})$.
        In conclusion, there are at most $2(\Delta + 1)^2\Delta^2$ choices for $(\Start(F_{j-1}), \vstart(F_{j-1}))$ in this case.
    \end{enumerate}

    Therefore, the total number of options for $(\Start(F_{j-1}), \vstart(F_{j-1}))$ is at most
    \[
        (\Delta + 1)\Delta + 2(\Delta + 1)^2\Delta^2 \,\leq\, 10\Delta^4.
    \]
    Note that, given $I'$, we have at most $\ell^{t_0-t'}$ choices for $(\ell_{t'+1}, \ldots, \ell_{t_0})$ and another $\ell^{t - t_0}$ choices for $(\ell_{t_0+1}, \ldots, \ell_{t})$ to define sequences in $\tilde{\mathcal{I}}_j(x)$.
    With this and Lemmas \ref{lemma:counting_wD}, \ref{lemma:wD_bound}, and \ref{lemma:Dst}, we obtain the following chain of inequalities:
    \begin{align}
        |\tilde{\mathcal{I}}_j(x)| \,&\leq\, 10 \Delta^4 \sum_{t' = j-1}^t\sum_{D \in \mathcal{D}_{j-1}^{(t')}}\wt(D)\,\ell^{t-t'} \nonumber\\
        &\leq\, 10\Delta^4\ell^t\sum_{t' = j-1}^t|\mathcal{D}_{j-1}^{(t')}|\left(\frac{75\Delta^{15}}{\ell}\right)^{t'/2}(75\Delta^7\ell)^{-(j-1)/2} \nonumber\\
        &\leq\, 10\Delta^4\ell^t(75\Delta^7\ell)^{-(j-1)/2}\sum_{t' = j-1}^t\left(\frac{1200\Delta^{15}}{\ell}\right)^{t'/2} \nonumber\\
        [\text{assuming $\ell \geq 4800\Delta^{15}$}] \qquad &\leq\, 20\Delta^4\ell^t\left(\frac{4\Delta^{4}}{\ell}\right)^{j-1} \nonumber\\
        &=\, 5\ell^{t+1} \left(\frac{4\Delta^{4}}{\ell}\right)^j. \label{eq:Ij}
    \end{align}
    
    Let $c_j(x)$ be the random variable equal to the number of multi-step Vizing chains $C_e$ for $e \in U$ such that $x$ lies on the $j$-th Vizing chain in $C$. Since an input sequence for Algorithm~\ref{alg:multi_viz_chain} includes the information about the initial uncolored edge, it follows from \eqref{eq:Ij} that 
    \[\E[c_j(x)] \,\leq\, \frac{|\tilde{\mathcal{I}}_j(x)|}{\ell^t} \,\leq\, 5\ell\left(\frac{4\Delta^4}{\ell}\right)^j.\]
    
    For each uncolored edge $e \in U$, let
    \[
        C_e \,=\, F_0^e + P_0^e + \cdots + F_{k_e-1}^e + P_{k_e-1}^e \qquad \text{and} \qquad V^e_j \,\defeq\, V\left(F^e_{j-1} + P^e_{j-1}\right).
    \]
    Let $N^+(e)$ be the set of all edges $f \in U \setminus \set{e}$ such that for some $j \leq j'$, we have
    \[
        V^e_j \cap V^f_{j'} \,\neq\, \0.
    \]
    By definition, if $ef \in E(\graphrand)$, then $V(C_e) \cap V(C_f) \neq \0$, and hence $f \in N^+(e)$ or $e \in N^+(f)$ (or both). Therefore, we have $|E(\graphrand)| \leq \sum_{e \in U} |N^+(e)|$. Note that
    \[
        |N^+(e)| \,\leq\, \sum_{j = 1}^{k_e} \sum_{x \in V^e_j} \sum_{j' = j}^t \sum_{f \in U \setminus \set{e}} \bbone\left\{x \in V^f_{j'}\right\}.
    \]
    Since Algorithm~\ref{alg:multi_viz_chain} is executed independently for every uncolored edge, we have
    \begin{align*}
        \E[|N^+(e)|\big| C_e] \,&\leq\, \sum_{j = 1}^{k_e} \sum_{x \in V^e_j} \sum_{j' = j}^t \sum_{f \in U \setminus \set{e}} \P\left[x \in V^f_{j'} \,\middle\vert\, C_e\right] \\
        &=\, \sum_{j = 1}^{k_e} \sum_{x \in V^e_j} \sum_{j' = j}^t \sum_{f \in U \setminus \set{e}} \P\left[x \in V^f_{j'} \right] \\
        &\leq\, \sum_{j = 1}^{k_e} \sum_{x \in V^e_j} \sum_{j' = j}^t \E[c_{j'}(x)] \\
        &\leq\, \sum_{j = 1}^{k_e} \sum_{x \in V^e_j} \sum_{j' = j}^t 5\ell\left(\frac{4\Delta^4}{\ell}\right)^{j'}.
    \end{align*}
    Using that $k_e \leq t$ and $|V^e_j| \leq \Delta + 1 + 2\ell \leq 3\ell$ for all $e \in U$ and $j \leq k_e$ and assuming that $\ell \geq 8\Delta^4$, we can bound the last expression as follows:
    \begin{align*}
        \sum_{j = 1}^{k_e} \sum_{x \in V^e_j} \sum_{j' = j}^t 5\ell\left(\frac{4\Delta^4}{\ell}\right)^{j'} \,&\leq\, 15 \ell^2 \sum_{j = 1}^{t} \sum_{j' = j}^t \left(\frac{4\Delta^4}{\ell}\right)^{j'} \\
        &\leq\, 30\ell^2 \sum_{j = 1}^{t} \left(\frac{4\Delta^4}{\ell}\right)^{j} \\
        &\leq\, 240 \ell\, \Delta^4.
    \end{align*}
    As this bound does not depend on $C_e$, we conclude that $\E[|N^+(e)|] \leq 240 \ell\, \Delta^4$ unconditionally, and
    \[
        \E[|E(\graphrand)|] \,\leq\, \sum_{e \in U} \E[|N^+(e)|] \,\leq\, 240 \ell\, \Delta^4 |U|. \qedhere
    \]

\end{proof}

    Since $\E[|S|] \geq 3|U|/4$ and $|S| \leq |U|$ always, Markov's inequality implies that $\P[|S| \geq |U|/2] \geq 1/2$. Using Lemma~\ref{lemma:edges_bound_Gamma_r}, we see that
    \begin{align*}
        240 \ell\,\Delta^4 |U| \,&\geq\, \E[|E(\graphrand)|] \\
        &\geq\, \E[|E(\graphrand)| \,\big|\, |S| \geq |U|/2] \,\P[|S| \geq |U|/2] \\
        &\geq\, \frac{\E[|E(\graphrand)| \,\big|\, |S| \geq |U|/2]}{2},
    \end{align*}
    and hence $\E[|E(\graphrand)| \,\big|\, |S| \geq |U|/2] \leq 480 \ell\, \Delta^4 |U|$. Since $d(\graphrand) = 2|E(\graphrand)|/|S|$, we have
    \[
        \E[d(\graphrand) \,\big\vert\, |S| \geq |U|/2] \,\leq\, \frac{2 \cdot 480 \ell\, \Delta^4 |U|}{|U|/2} \,=\, 1920 \ell \,\Delta^4.
    \]
    Therefore, by Lemma~\ref{lemma:alg_indep_set} we have
    \begin{align*}
        \E[|W|] \,\geq\, \E\left[\frac{|S|}{d(\graphrand) + 1}\right] \,&\geq\, \E\left[\frac{|S|}{d(\graphrand) + 1}\,\middle\vert\, |S| \geq |U|/2\right] \,\P[|S| \geq |U|/2] \\
        &\geq\, \frac{|U|}{4} \, \E\left[\frac{1}{d(\graphrand) + 1}\,\middle\vert\, |S| \geq |U|/2\right] \\
        [\text{by Jensen's inequality}]\qquad &\geq\, \frac{|U|}{4} \,\dfrac{1}{\E[d(\graphrand) \,\big|\, |S| \geq |U|/2] + 1} \\
        &\geq\, \frac{|U|}{4} \, \frac{1}{1920 \ell \,\Delta^4 + 1} \\
        &=\, \Theta(|U|/(\ell \, \Delta^4)) \\
        &=\, \Theta (|U|/\Delta^{20}).
    \end{align*}
    It follows that in $O(\ell t) = O(\Delta^{16} \log n)$ rounds we find a set $W$ and $e$-augmenting subgraphs ${H_e} \defeq C_e$ for all $e \in W$ with all the desired properties. This concludes the proof of Theorem~\ref{theo:disjoint_with_bounds}.

\subsection{Randomized distributed algorithm for $(\Delta + 1)$-edge-coloring}\label{subsec:dist_randomised}

With Theorem~\ref{theo:disjoint_with_bounds} in hand, we can establish Theorem~\ref{theo:dist}\ref{item:dist_rand}, i.e., present a randomized \LOCAL algorithm for $(\Delta+1)$-edge-coloring that runs in $\poly(\Delta) \log^2 n$ rounds. The algorithm proceeds in stages. We start by setting $\phi_0(e) \defeq \blank$ for all $e \in E$. At the start of stage $i$, we have a proper partial $(\Delta + 1)$-edge-coloring $\phi_{i-1}$. Let $U_{i-1} \defeq E \setminus \dom(\phi_{i-1})$. We then perform the following steps:
\begin{enumerate}[label=\ep{\arabic*}]
    \item Run the algorithm from Theorem~\ref{theo:disjoint_with_bounds} to compute a set $W_i \subseteq U_{i-1}$ and vertex-disjoint augmenting subgraphs ${H_e}$ for all $e \in W_i$.
    
    \item\label{step:augment} Let $\phi_{i}$ be the coloring obtained by augmenting $\phi_{i-1}$ with ${H_e}$ for all $e \in W_i$ simultaneously.
\end{enumerate}
Note that all the augmentations on step \ref{step:augment} can indeed be performed simultaneously, because the graphs ${H_e}$ are vertex-disjoint. By construction, $U_{i} = U_{i-1} \setminus W_i$ for all $i$. By Theorem~\ref{theo:disjoint_with_bounds}, we have
\[
    \E\left[|W_i|\,\bigg| U_{i-1}\right] \,=\, \Omega\left(\frac{|U_{i-1}|}{\Delta^{20}}\right).
\]
It follows that after $T$ stages, we have
\[\E[|U_T|] \,\leq\, \left(1 - \Omega(\Delta^{-20})\right)^T|U_0| \,=\, m\exp\left(-\Omega(T \Delta^{-20})\right).\]
By Markov's inequality, we conclude that
\[\P\left[|U_T| \geq 1\right] \,\leq\, \E[|U_T|] \,\leq\, m\exp\left(-\Omega(T \Delta^{-20})\right).\]
For $T = \Theta\left(\Delta^{20} \log n\right)$, we have $|U_T| < 1$ with probability at least $1 - 1/\poly(n)$. By Theorem~\ref{theo:disjoint_with_bounds}, each stage takes $O(\Delta^{16} \log n)$ rounds, and hence all edges will be colored after $O(\Delta^{36} \log^2 n)$ rounds with probability at least $1 - 1/\poly(n)$, as desired.

\subsection{Deterministic Distributed Algorithm}\label{subsec:dist_deterministic}

In this subsection we prove Theorem~\ref{theo:dist}\ref{item:dist_det}, i.e., we present a deterministic \LOCAL algorithm for $(\Delta + 1)$-edge-coloring with running time $\poly(\Delta, \log \log n) \log^5 n$. We derive this result from Theorem~\ref{theo:disjoint_with_bounds} by utilizing approximation algorithms for hypergraph maximum matching in exactly the same way as in \cite{VizingChain, Christ}.

Let us first introduce some new definitions and notation. Recall that $\log^*n$ denotes the iterated logarithm of $n$, i.e., the number of iterative applications of the logarithm function to $n$ after which the output becomes less than $1$. The asymptotic notation $\tilde{O}(\cdot)$ hides polylogarithmic factors, i.e., $\tilde{O}(x) = O(x\,\poly(\log x))$.
Let $\mathcal{H}$ be a hypergraph. To avoid potential confusion, we refer to the edges of $\mathcal{H}$ as ``hyperedges.'' The \emphd{rank} of $\mathcal{H}$, denoted by $r(\mathcal{H})$, is the maximum size of a hyperedge of $\mathcal{H}$. The \emphd{degree} of a vertex $x \in V(\mathcal{H})$ is the number of hyperedges containing $x$. The \emphd{maximum degree} of $\mathcal{H}$, denoted by $d(\mathcal{H})$, is the largest degree of a vertex of $\mathcal{H}$. A \emphd{matching} in $\mathcal{H}$ is a set of disjoint hyperedges. We let $\mu(\mathcal{H})$ be the maximum number of edges in a matching in $\mathcal{H}$.

While the usual \LOCAL model is defined for graphs, there is an analogous model operating on a hypergraph $\mathcal{H}$. Namely, in a single communication round of the \LOCAL model on $\mathcal{H}$, each vertex $x \in V(\mathcal{H})$ is allowed to send messages to every vertex $y \in V(\mathcal{H})$ such that $x$ and $y$ belong to a common hyperedge.

We shall use the following result due to Harris (based on earlier work of Ghaffari, Harris, and Kuhn \cite{derandomizing}):

\begin{theo}[{Harris \cite[Theorem 1.1]{harrisdistributed}}]\label{theo:hypergraph_matching}
There exists a deterministic distributed algorithm in the \LOCAL model on an $n$-vertex hypergraph $\mathcal{H}$ that outputs a matching $M \subseteq E(\mathcal{H})$ with $|M| = \Omega(\mu/r)$ in
$\tilde{O}(r \log d + \log^2 d + \log^* n)$
rounds, where $r \defeq r(H)$, $d \defeq d(H)$, and $\mu \defeq \mu(H)$.
\end{theo}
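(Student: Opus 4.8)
The statement is a black-boxed result of Harris, so what follows is a plan for how its proof goes rather than a derivation internal to this paper. The overall strategy has two parts: a randomized \LOCAL algorithm that builds an $\Omega(\mu/r)$-size matching in $\tilde{O}(r\log d + \log^2 d)$ rounds with high probability, followed by a derandomization step that pays only an extra additive $O(\log^\ast n)$ for symmetry breaking.

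First I would reduce the approximation claim to producing a matching that is \emph{maximal} (or nearly so). If $M$ is a maximal matching in $\mathcal{H}$ and $M^\ast$ is a maximum matching, then every hyperedge of $M^\ast$ meets some hyperedge of $M$; since the hyperedges of $M^\ast$ are pairwise disjoint and each hyperedge of $M$ has at most $r$ vertices, a single hyperedge of $M$ meets at most $r$ hyperedges of $M^\ast$. Hence $\mu = |M^\ast| \le r\,|M|$, i.e.\ $|M| \ge \mu/r$. (More robustly: if the hyperedges avoiding $M$ have matching number at most $\mu/2$, then $|M| \ge \mu/(2r) = \Omega(\mu/r)$.) So the goal becomes: in $\tilde{O}(r\log d + \log^2 d + \log^\ast n)$ rounds, construct a matching whose residual hypergraph has tiny matching number.

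For the randomized core I would run a Luby / random-priority procedure on the ``conflict graph'' of $\mathcal{H}$ (vertices are hyperedges of $\mathcal{H}$, adjacent when sharing a vertex; its maximum degree is at most $rd$). Each phase: give every still-active hyperedge an independent random priority, add to $M$ every active hyperedge whose priority beats all active conflicting ones, then deactivate every hyperedge sharing a vertex with a newly added one. The quantitative heart is the claim that, conditioned on a vertex $x$ still lying in many active hyperedges, some hyperedge through $x$ is added with constant probability, so that $x$ becomes saturated; a hyperedge drops out of the residual once \emph{one} of its $\le r$ vertices is saturated, which is exactly why $O(r)$ (to cover the choice of which of $r$ vertices gets hit) times $O(\log d)$ (the standard degree-halving count of a Luby phase) phases suffice to drive the residual matching number below $\mu/2$. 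Each phase is $O(1)$ rounds in the \LOCAL model on $\mathcal{H}$, giving a randomized complexity of $\tilde{O}(r\log d + \log^2 d)$, correct with high probability.

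Finally, derandomization via the Ghaffari--Harris--Kuhn framework \cite{derandomizing}: replace each randomized phase by a deterministic one using the method of conditional expectations against the pessimistic estimator ``number of residual hyperedges not yet killed,'' with the random priorities drawn from a $\poly(rd)$-size palette. The $O(\log^\ast n)$ term enters only once, to compute an initial Linial-type coloring of the conflict graph (max degree $\le rd$) that provides this palette; thereafter the conditional-expectation rounds are discharged locally in $\poly(\log(rd))$ rounds per phase, and summing over the $O(r\log d)$ phases and bookkeeping the various contributions yields the stated bound. The main obstacle is precisely this last step: keeping the $n$-dependence down to $\log^\ast n$ forces the analysis of each phase (its bad events and the estimator) to depend only on $O(1)$-hop neighborhoods in $\mathcal{H}$, so that bounded-degree coloring tools suffice and one never needs a network decomposition (which would reintroduce $\poly(\log n)$ factors); getting the per-phase derandomization cost and the count of phases to combine into $\tilde{O}(r\log d + \log^2 d)$ — rather than, say, $\tilde{O}(r\log^2 d)$ or an $r d$-type bound — is the other delicate point.
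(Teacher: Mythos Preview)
Your identification is correct: this theorem is quoted from Harris's paper and is used as a black box here; the present paper provides no proof of it, so there is nothing to compare your sketch against. Your outline of the Luby-style randomized maximal matching on the conflict graph, followed by Ghaffari--Harris--Kuhn derandomization, is a reasonable high-level summary of how such results are obtained, and it correctly explains the $\Omega(\mu/r)$ approximation ratio via the maximal-matching argument.
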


Let us now describe a deterministic \LOCAL algorithm for $(\Delta + 1)$-edge-coloring a graph $G$ of maximum degree $\Delta$. Suppose we are given a proper partial $(\Delta+1)$-edge-coloring $\phi \colon E \pto [\Delta + 1]$ and let $U \defeq E \setminus \dom(\phi)$ be the set of uncolored edges. Define an auxiliary hypergraph $\graphdet$ as follows: Set $V(\graphdet) \defeq V(G)$ and let $S \subseteq V(G)$ be a hyperedge of $\graphdet$ if and only if there exist an uncolored edge $e \in U$ and a connected $e$-augmenting subgraph $H$ of $G$ with $S = V(H)$ and $|E(H)| = O(\Delta^{16}\log n)$ (the implied constant factors in this bound are the same as in Theorem~\ref{theo:disjoint_with_bounds}). 

By definition, $r(\graphdet) = O(\Delta^{16} \log n)$, and, by Theorem~\ref{theo:disjoint_with_bounds}, $\mu(\graphdet) = \Omega(|U|/\Delta^{20})$. To bound $d(\graphdet)$, consider an edge $S \in E(\graphdet)$. Since the graph $G[S]$ is connected, we can order the vertices of $S$ as $(x_0, \ldots, x_{|S| - 1})$ so that each $x_i$, $i \geq 1$ is adjacent to at least one of $x_0$, \ldots, $x_{i-1}$. This means that once $(x_0, \ldots, x_{i-1})$ are fixed, there are at most $i\Delta \leq r(\graphdet)\Delta$ choices for $x_i$, and hence,
\[d(\graphdet) \,\leq\, (r(\graphdet)\Delta)^{r(\graphdet)} \,\leq\, \exp\left(2r(\graphdet) \log r(\graphdet)\right).\]
By Theorem~\ref{theo:hypergraph_matching}, we can find a matching $M \subseteq E(\mathcal{H})$ of size
\[
    |M| \,=\, \Omega\left(\frac{|U|}{\Delta^{36}\,\log n}\right)
\]
in $\tilde{O}(\Delta^{32} \log^2 n)$ rounds in the \LOCAL model on $\mathcal{H}$. Since a single round of the \LOCAL model on $\mathcal{H}$ can be simulated by $O(\Delta^{16}\log n)$ rounds in the \LOCAL model on $G$, $M$ can be found in $\tilde{O}(\Delta^{48} \log^3 n)$ rounds in the \LOCAL model on $G$. Once such a matching $M$ is found, in $O(\Delta^{16} \log n)$ rounds we may pick one augmenting subgraph $H$ with $V(H) = S$ for each $S \in M$ and simultaneously augment $\phi$ using all these subgraphs. (All the augmentations can be performed simultaneously because the sets in $M$ are disjoint.) This results in a new proper partial coloring with at most
\[
    \left(1 - \frac{1}{O(\Delta^{36} \log n)}\right) |U|
\]
uncolored edges.

Starting with the empty coloring and repeating this process $O(\Delta^{36} \log^2 n)$ times, we get a proper $(\Delta + 1)$-edge-coloring of the entire graph. The resulting deterministic \LOCAL algorithm runs in
\[
    \tilde{O}(\Delta^{48} \log^3 n) \cdot O(\Delta^{36} \log^2 n) \,=\, \tilde{O}(\Delta^{84} \log^5 n),
\] rounds, as desired.

\subsection*{Acknowledgment}

We are very grateful to the anonymous referees for their helpful comments.

\printbibliography

\end{document}